\newcommand{\skipnoindent}{\noindent}
\newcommand{\nat}{\mathbb{N}}
\newcommand{\arity}{\#}
\newcommand{\arityof}[1]{\arity{#1}}
\newcommand{\rankof}[1]{\rho({#1})}
\newcommand{\frof}[1]{\mathrm{fr}({#1})}
\newcommand{\cardof}[1]{\mathrm{card}({#1})}
\newcommand{\width}[1]{\mathrm{wd}({#1})}
\newcommand{\univ}{\mathsf{U}}
\newcommand{\universe}{\mathbb{U}}
\newcommand{\vars}{\mathbb{V}}
\newcommand{\preds}{\mathbb{P}}
\newcommand{\isdef}{\stackrel{\scalebox{.4}{$\mathsf{def}$}}{=}}
\newcommand{\iffdef}{\stackrel{\hspace*{1.5pt}\scalebox{.4}{$\mathsf{def}$}}{\iff}}
\newcommand{\interv}[2]{[{#1}..{#2}]}
\newcommand{\tuple}[1]{\langle {#1} \rangle}
\newcommand{\set}[1]{\{ {#1} \}}
\newcommand{\mset}[1]{\{\!\!\{{#1}\}\!\!\}}
\newcommand{\pow}[1]{\mathrm{pow}({#1})}
\newcommand{\mpow}[1]{\mathrm{mpow}({#1})}
\newcommand{\dom}[1]{\mathrm{dom}({#1})}
\newcommand{\np}{$\mathsf{NP}$}
\newcommand{\conp}{$\mathsf{co}$-$\mathsf{NP}$}
\newcommand{\relations}{\mathbb{R}}
\newcommand{\alphabet}{\mathbb{A}}
\newcommand{\arel}{\mathsf{r}}
\newcommand{\aarel}{\mathsf{a}}
\newcommand{\abrel}{\mathsf{b}}
\newcommand{\acrel}{\mathsf{c}}
\newcommand{\erel}{\mathsf{e}}
\newcommand{\posfunc}{\mathfrak{P}}
\newcommand{\profile}[1]{\posfunc_{\scriptscriptstyle{#1}}}
\newcommand{\states}{\mathcal{Q}}
\newcommand{\initstates}{\mathcal{I}}
\newcommand{\initstate}{\iota}
\newcommand{\arrow}[2]{\xrightarrow{{\scriptscriptstyle #1}}_{{\scriptstyle #2}}}
\newcommand{\arun}{\theta}
\newcommand{\pre}[1]{\leftidx{^\bullet}{\!{#1}}{}}
\newcommand{\post}[1]{\leftidx{}{{#1}}{^\bullet}}
\newcommand{\prepost}[1]{\leftidx{^\bullet}{\!{#1}}{^\bullet}}
\newcommand{\reach}{\leadsto}
\newcommand{\auto}[2]{\mathcal{A}_{
    {#1}
    \ifthenelse{\equal{#2}{}}{}{,{#2}}
}}
\newcommand{\autsat}[2]{\mathcal{A}^{I}_{
    {#1}
    \ifthenelse{\equal{#2}{}}{}{,{#2}}
}}
\newcommand{\autcut}[2]{\mathcal{A}^{\scriptscriptstyle\mathsf{cut}}_{
    {#1}
    \ifthenelse{\equal{#2}{}}{}{,{#2}}
}}
\newcommand{\autcst}[2]{\mathcal{A}^{\scriptscriptstyle\mathsf{cst}}_{
    {#1}
    \ifthenelse{\equal{#2}{}}{}{,{#2}}
  }
}
\newcommand{\expof}[1]{{#1}^{\scriptscriptstyle\mathsf{exp}}}
\newcommand*{\da@rightarrow}{\mathchar"0\hexnumber@\symAMSa 4B }
\newcommand*{\da@leftarrow}{\mathchar"0\hexnumber@\symAMSa 4C }
\newcommand*{\xdashrightarrow}[2][]{%
  \mathrel{%
    \mathpalette{\da@xarrow{#1}{#2}{}\da@rightarrow{\,}{}}{}%
  }%
}
\newcommand{\xdashleftarrow}[2][]{%
  \mathrel{%
    \mathpalette{\da@xarrow{#1}{#2}\da@leftarrow{}{}{\,}}{}%
  }%
}
\newcommand*{\da@xarrow}[7]{%
  \sbox0{$\ifx#7\scriptstyle\scriptscriptstyle\else\scriptstyle\fi#5#1#6\m@th$}%
  \sbox2{$\ifx#7\scriptstyle\scriptscriptstyle\else\scriptstyle\fi#5#2#6\m@th$}%
  \sbox4{$#7\dabar@\m@th$}%
  \dimen@=\wd0 %
  \ifdim\wd2 >\dimen@
    \dimen@=\wd2 %
  \fi
  \count@=2 %
  \def\da@bars{\dabar@\dabar@}%
  \@whiledim\count@\wd4<\dimen@\do{%
    \advance\count@\@ne
    \expandafter\def\expandafter\da@bars\expandafter{%
      \da@bars
      \dabar@
    }%
  }%
  \mathrel{#3}%
  \mathrel{%
    \mathop{\da@bars}\limits
    \ifx\\#1\\%
    \else
      _{\copy0}%
    \fi
    \ifx\\#2\\%
    \else
      ^{\copy2}%
    \fi
  }%
  \mathrel{#4}%
}
\newcommand{\store}{\mathfrak{s}}
\newcommand{\struc}{\sigma}
\newcommand{\allstof}[4]{
  \Omega_{#1}^{{#2}/{#3}}
  \ifthenelse{\equal{#4}{}}{}{({#4})}
}
\newcommand{\substruc}{\sqsubseteq}
\newcommand{\astruc}{\mathsf{S}}
\newcommand{\Bigstar}{\mathop{\Asterisk}}
\newcommand{\comp}{\bullet}
\newcommand{\Comp}{\mathop{\raisebox{-3pt}{\scalebox{1.8}{$\bullet$}}}}
\newcommand{\kcomp}{\comp^{\scriptscriptstyle\sharp k}}
\newcommand{\Kcomp}{\mathop{\Comp\nolimits^{\scriptscriptstyle\sharp k}}}
\newcommand{\predname}[1]{\mathsf{#1}}
\newcommand{\apred}{\predname{A}}
\newcommand{\bpred}{\predname{B}}
\newcommand{\cpred}{\predname{C}}
\newcommand{\ppred}{\predname{P}}
\newcommand{\emp}{\predname{emp}}
\let\Asterisk\undefined
\newcommand{\Asterisk}{\mathop{\scalebox{1.9}{\raisebox{-0.2ex}{$\ast$}}}\hspace*{1pt}}%
\renewcommand{\vec}[1]{\mathbf #1}
\newcommand{\fv}[1]{\mathrm{fv}({#1})}
\newcommand{\fvof}[2]{\mathrm{fv}_{\scriptscriptstyle #1}({#2})}
\newcommand{\connof}[1]{\mathsf{conn}({#1})}
\newcommand{\mso}{\textsf{MSO}}
\newcommand{\Models}{\Vdash}
\newcommand{\seplog}{\textsf{SL}}
\newcommand{\slr}{\textsf{SLR}}
\newcommand{\fol}{\textsf{FO}}
\newcommand{\tbsl}{$\mathsf{TWB^{\scriptscriptstyle{\text{\slr}}}}$}
\newcommand{\gl}{\textsf{GL}}
\newcommand{\abstruc}{\struc^\sharp}
\newcommand{\asid}{\Delta}
\newcommand{\csid}{\Gamma}
\newcommand{\maxvarinruleof}[1]{\mathsf{maxVars}({#1})}
\newcommand{\maxrelatominruleof}[1]{\mathsf{maxRelAtoms}(#1)}
\newcommand{\maxrulearityof}[1]{\mathsf{maxPredAtoms}(#1)}
\newcommand{\relationsno}[1]{\mathsf{relNo}(#1)}
\newcommand{\predsno}[1]{\mathsf{predNo}(#1)}
\newcommand{\maxrelarityof}[1]{\mathsf{maxRelArity}(#1)}
\newcommand{\maxpredarityof}[1]{\maxvarinruleof{#1}}
\newcommand{\arule}{\rho}
\newcommand{\gfp}{\mathrm{gfp}}
\newcommand{\rbr}{{\bf ]\!]}}
\newcommand{\lbr}{{\bf [\![}}
\newcommand{\sem}[1]{{\lbr #1 \rbr}}
\newcommand{\sidsem}[2]{\sem{{#1}}_{#2}}
\newcommand{\csem}[2]{\sem{{#1}}^{\scriptscriptstyle\mathsf{c}}_{#2}}
\newcommand{\rcsem}[2]{\sem{{#1}}^{\scriptscriptstyle\mathsf{r}}_{#2}}
\newcommand{\intfusion}[1]{\mathtt{IF}({#1})}
\newcommand{\sintfusion}[2]{\widetilde{\mathtt{IF}}({#1}\ifthenelse{\equal{#2}{}}{}{,{#2}})}
\newcommand{\extfusion}[2]{\mathtt{EF}({#1},{#2})}
\newcommand{\reachfusion}[1]{\mathtt{EF}^*({#1})}
\newcommand{\ireachfusion}[1]{\mathtt{IEF}^*({#1})}
\newcommand{\reachof}[3]{\mathrm{reach}_{#1}^{#2}({#3})}
\newcommand{\step}[1]{\Rightarrow_{\scriptscriptstyle{#1}}}
\newcommand{\diseq}{\mathfrak{d}}
\newcommand{\langof}[2]{\mathcal{L}_{#1}({#2})}
\newcommand{\runsof}[3]{\mathcal{R}^{#1}_{~{#2}}({#3})}
\newcommand{\proj}[2]{{#1}\!\!\downharpoonleft_{\scriptscriptstyle{#2}}}
\newcommand{\projrel}[2]{\pi_{#2}({#1})}
\newcommand{\kproj}[2]{{#1}\!\!\downharpoonleft_{\scriptscriptstyle{#2}}^{\scriptscriptstyle\sharp k}}
\newcommand{\supp}[1]{\mathrm{supp}({#1})}
\newcommand{\satbasetuples}{\mathsf{SatBase}}
\newcommand{\satbasetuplesof}[1]{\satbasetuples({#1})}
\newcommand{\basepairof}[1]{\mathsf{Base}({#1})}
\newcommand{\basecomp}{\otimes}
\newcounter{index}
\newcommand{\graphof}[1]{\mathfrak{G}_{\scriptscriptstyle{#1}}}
\newcommand{\entryof}[1]{\mathsf{entry}({#1})}
\newcommand{\eqof}[1]{\approx_{\scriptscriptstyle{#1}}}
\newcommand{\atpos}[2]{{#1}^{\scriptscriptstyle[{#2}]}}
\newcommand{\graph}{G}
\newcommand{\labels}{\Omega}
\newcommand{\trans}{\delta}
\newcommand{\nodes}{\mathcal{N}}
\newcommand{\edges}{\mathcal{E}}
\newcommand{\alabel}{\lambda}
\newcommand{\tree}{T}
\newcommand{\subtree}[2]{{#1}|_{{#2}}}
\newcommand{\twof}[1]{\mathrm{tw}({#1})}
\newcommand{\charform}[1]{\Theta({#1})}
\newcommand{\exclof}[1]{{#1}^\exists}
\newcommand{\structures}{\mathcal{S}}
\newcommand{\mcsubstruc}{\substruc^{mc}}
\newcommand{\funsplit}[1]{\mathtt{split}({#1})}
\newcommand{\extfusionone}[2]{\mathtt{EF}_1({#1},{#2})}
\newcommand{\reachfusionone}[1]{\mathtt{EF}_1^*({#1})}
\newcommand{\reachfusiontwo}[1]{\mathtt{EF}_2^*({#1})}
\newcommand{\reachfusionk}[1]{\mathtt{EF}_{#1}^*}
\newcommand{\absextfusionone}[2]{\mathtt{ef}_1^{\scriptscriptstyle{\sharp}}({#1},{#2})}
\newcommand{\absreachfusionone}[1]{\mathtt{ef}_1^{\scriptscriptstyle\sharp*}({#1})}
\newcommand{\kabsextfusionone}[3]{\mathtt{ef}_1^{\scriptscriptstyle\sharp{#1}}({#2},{#3})}
\newcommand{\kabsreachfusionone}[2]{\mathtt{ef}_1^{\scriptscriptstyle\sharp{#1}*}({#2})}
\newcommand{\mcolabs}[1]{{#1}^{\sharp}}
\newcommand{\kmcolabs}[2]{{#2}^{\scriptscriptstyle\sharp{#1}}}
\newcommand{\ucolor}{\mathcal{C}}
\newcommand{\funcol}[1]{\mathcal{C}_{#1}}
\newcommand{\bluecols}{\mathbb{C}^{blue}}
\newcommand{\greencols}{\mathbb{C}^{green}}
\newcommand{\redcols}{\mathbb{C}^{red}}
\newcommand{\allcols}{\mathbb{C}}
\newcommand{\colorof}[1]{\gamma({#1})}
\newcommand{\bluetype}{\mathsf{B}}
\newcommand{\greentype}{\mathsf{G}}
\newcommand{\redtype}{\mathsf{R}}
\newcommand{\kabssem}[3]{\langle\!\!\langle #2 \rangle\!\!\rangle_{#3}^{\scriptscriptstyle\sharp{#1}}}
\newcommand{\upf}{\mathtt{up}}
\newcommand{\rightf}{\mathtt{right}}
\newcommand{\south}{\mathtt{S}}
\newcommand{\east}{\mathtt{E}}
\newcommand{\west}{\mathtt{W}}
\newcommand{\north}{\mathtt{N}}
\newcommand{\internal}{\mathtt{I}}
\newcommand{\tile}{\mathtt{T}}
\keywords{Model Theory, Treewidth, Separation Logic}
\begin{document}

\title[The TWB Problem for an Inductive Separation Logic of Relations]
{The Treewidth Boundedness Problem for an Inductive Separation Logic
  of Relations}

\author[M.~Bozga]{Marius Bozga\lmcsorcid{0000-0003-4412-5684}}[a]
\author[L.~Bueri]{Lucas Bueri\lmcsorcid{0000-0002-8589-6955}}[a]
\author[R.~Iosif]{Radu Iosif\lmcsorcid{0000-0003-3204-3294}}[a]
\author[F.~Zuleger]{Florian Zuleger\lmcsorcid{0000-0003-1468-8398}}[b]

\address{Univ. Grenoble Alpes, CNRS, Grenoble INP, VERIMAG, 38000, France}
\address{Institute of Logic and Computation, Technische Universit\"{a}t Wien, Austria}

\begin{abstract}
  The treewidth boundedness problem for a logic asks for the existence
  of an upper bound on the treewidth of the models of a given formula
  in that logic. This problem is found to be undecidable for first
  order logic. We consider a generalization of Separation Logic over
  relational signatures, interpreted over standard relational
  structures, and describe an algorithm that decides the treewidth
  boundedness problem for this logic. Furthermore, our algorithm can
  give an estimate of the bound of the models of a given formula, in
  case there is a finite such bound.
\end{abstract}

\maketitle


\section{Introduction}

The treewidth of a graph is a positive integer measuring, informally
speaking, how far a graph is from a tree. For instance, trees have
treewidth one, series-parallel graphs (i.e., circuits with one input
and one output that can be either cascaded or overlaid) have treewidth
two, whereas $k \times k$ square grids have treewidth $k$, for any $k
\geq 1$. The treewidth parameter is a cornerstone of algorithmic
tractability. For instance, many \np-complete graph problems such as
Hamiltonicity and 3-Colorability become polynomial-time, when
restricted to inputs whose treewidth is bounded by a constant (see,
e.g., \cite[Chapter 11]{DBLP:series/txtcs/FlumG06} for a survey of
classical treewidth-parameterized tractable problems).

Structures are interpretations of relation symbols that define the
standard semantics of first and second order logic
\cite{DBLP:books/daglib/0080654}. They provide a unifying framework
for reasoning about a multitude of graph types e.g., graphs with
multiple edges, labeled graphs, colored graphs, hypergraphs, etc. The
notion of treewidth is straightforwardly generalized from graphs to
structures. In this context, bounding the treewidth by a constant sets
the frontier between the decidability and undecidability of monadic
second order (\mso) logical theories. A result of Courcelle
\cite{CourcelleI} proves that \mso\ is decidable over bounded
treewidth structures, by reduction to the emptiness problem of tree
automata. A dual result of Seese \cite{Seese91} proves that each class
of structures with a decidable \mso\ theory necessarily has bounded
treewidth. Since \mso\ is the yardstick of graph specification logics
\cite{courcelle_engelfriet_2012}, these results show that
\emph{treewidth bounded} classes of structures are tantamount to the
existence of decision procedures for important classes of properties,
in those areas of computing where graphs are relevant such as, e.g.,
static analysis \cite{10.1145/582153.582161}, databases
\cite{AbitebouldBunemanSuciu00} and concurrency
\cite{DBLP:conf/birthday/2008montanari}.

This paper considers the \emph{treewidth boundedness problem}, which
asks for the existence of a bound on the treewidths of the models of a
formula given in input. We show that for first-order logic (and
implicitly \mso) the problem is already undecidable. This negative
result for classical logics motivates our focus on \emph{substructural
logics} that have, in addition to boolean conjunction, a
conjunction-like connective, for which Gentzen's natural deduction
rules of weakening and contraction do not hold. We prove the
decidability of this problem for a generalization of Separation Logic
to relational signatures, interpreted over structures.

Separation Logic (\seplog)
\cite{Ishtiaq00bias,Reynolds02,Cardelli2002Spatial} is a first order
substructural logic with a \emph{separating conjunction} $*$ that
decomposes structures. For reasons related to its applications to the
deductive verification of pointer-manipulating programs, the models of
\seplog\ are finite partial functions, called \emph{heaps}. In
\seplog, the separating conjunction stands for the union of heaps with
disjoint domains.

When combined with \emph{inductive definitions}~\cite{ACZEL1977739},
\seplog\ gives concise descriptions of the recursive data structures
(singly- and doubly-linked lists, trees, etc.) used in imperative
programming (e.g., C, C++, Java, etc.). The shape of these structures
can be described using only existentially quantified separating
conjunctions of (dis-)equalities and points-to atoms. This subset of
\seplog\ is referred to as the \emph{symbolic heap} fragment.

\seplog\ is a powerful tool for reasoning about low-level pointer
updates. It allows to describe actions \emph{locally}, i.e., only with
respect to the resources (e.g., memory cells, network nodes) involved,
while framing out the part of the state that is irrelevant for the
action. This principle of describing mutations, known as \emph{local
reasoning} \cite{CalcagnoOHearnYan07}, is at the heart of scalable
compositional proof techniques for pointer programs
\cite{CalcagnoDistefanoOHearnYang11,CHIN20121006,10.1145/1449764.1449782,10.1007/11804192_6}.

The \emph{Separation Logic of Relations} (\slr) is the generalization
of \seplog\ to relational signatures, interpreted over
structures. This logic has been first considered for relational
databases and object-oriented languages
\cite{10.1007/978-3-540-27864-1_26}. Here the separating conjunction
splits the interpretation of each relation symbol from the signature
into disjoint parts. For instance, the formula $\arel(x_1, \ldots,
x_n)$ describes a structure in which all relations are empty and
$\arel$ consists of a single tuple of values $x_1, \ldots, x_n$,
whereas $\arel(x_1, \ldots, x_n) * \arel(y_1, \ldots, y_n)$ says that
$\arel$ consists of two distinct tuples, i.e., the values of $x_i$ and
$y_i$ differ for at least one index $1 \leq i \leq n$.  Moreover, when
encoding graphs by structures, \slr\ allows to specify edges that have
no connected vertices, isolated vertices, or both. The same style of
composition is found in other spatial logics interpreted over graphs,
such as the \gl\ logic of Cardelli et al \cite{Cardelli2002Spatial}.

Our motivation for studying the models of \slr\ arose from recent work
on deductive verification of self-adapting distributed systems, where
Hoare-style local reasoning is applied to write correctness proofs for
systems with dynamically reconfigurable network architectures
\cite{AhrensBozgaIosifKatoen21,DBLP:conf/cade/BozgaBI22,DBLP:conf/concur/BozgaBI22}.
The assertion language of these proofs is \slr, with unary relation
symbols used to model nodes (processes) of the network and relation
symbols of arity two or more used to model links (communication
channels) between nodes. Just as user-defined inductive predicates are
used in \seplog\ to describe data structures (lists, trees, etc.),
\slr\ inductive predicates are used to describe common architectural
styles (e.g., pipelines, rings, stars, etc.) that ensure correct and
optimal behavior of many distributed applications.

The decidability result from this paper defines the class of
\slr\ formul{\ae} whose models are treewidth bounded and provides a
reasonable estimate on the bound, in case one exists. On one hand,
this algorithm answers the question \emph{does the set of structures
  defined by a given system of inductive definitions have a decidable
  \mso\ theory?} If this is the case, problems such as, e.g.,
Hamiltonicity, $k$-Colorability, Planarity, etc. are decidable on this
set of structures. Another application is the decidability of the
\emph{entailment problem}
$\sidsem{\phi}{\asid}\subseteq\sidsem{\psi}{\asid}$ asking if each
model of a formula $\phi$ is also a model of another formula $\psi$,
when the predicate symbols in $\phi$ and $\psi$ are interpreted by a
set of inductive definitions $\asid$. In principle, the decidability
of this problem depends on \begin{enumerate*}[(i)]
\item $\phi$ having only treewidth bounded models, for a computable
  upper bound, and
\item both $\phi$ and $\psi$ being \mso-definable
  \cite{DBLP:conf/cade/IosifRS13}.
\end{enumerate*}
The algorithm described in the paper provides a key ingredient for
defining fragments of \slr\ with a decidable entailment problem, which
is tantamount to automating proof generation in Hoare
logic~\cite{DBLP:conf/cade/BozgaBI22,DBLP:conf/concur/BozgaBI22}.

\subsection{Related work}
One of the first fragments of \seplog\ with a decidable entailment
problem relied on an ad-hoc translation into equivalent
\mso\ formul{\ae}, together with a static guarantee of treewidth
boundedness, called
\emph{establishment}~\cite{DBLP:conf/cade/IosifRS13}. More recently,
the entailment problem in this fragment of \seplog\ has been the focus
of an impressive body of
work~\cite{DBLP:conf/concur/CookHOPW11,DBLP:conf/lpar/KatelaanZ20,%
  EchenimIosifPeltier21b,EchenimIosifPeltier21,DBLP:conf/fossacs/LeL23}. In
particular, the \emph{establishment problem} ``is a given set of
inductive definitions established'' has been found to be
\conp-complete~\cite{DBLP:conf/esop/JansenKMNZ17}. Moreover, lifting
the establishment condition leads to the undecidability of
entailments, as showed in~\cite{DBLP:journals/ipl/EchenimIP22}. The
establishment problem can, in fact, be seen as the precursor of the
treewidth boundedness problem studied in the present paper.

The treewidth parameter showed also in a recent comparison between the
expressivity of \slr\ with inductive definitions and that of
\mso~\cite{Concur23}. When restricting the interpretation of the
logics to treewidth bounded graphs, \slr\ strictly subsumes \mso,
i.e., for each \mso\ formula $\phi$ and integer $k\geq1$, there exists
a formula $\psi$ of \slr\ that defines the models of $\phi$ of
treewidth at most $k$. Moreover, the logics are incomparable for
classes of graphs of unbounded treewidth.

\subsection{Motivating examples}
We introduce the reader to \slr\ and the treewidth boundedness problem
by means of examples. \autoref{fig:examples}~(a) shows a chain,
defined by an unfolding of the inductive predicate
$\apred(x_1,x_2)$. The chain starts at $x_1$ and ends at $x_2$. The
elements of the chain are labeled by a unary relation symbol
$\mathsf{a}$ and the neighbours are linked by a binary relation
$\arel$. Each unfolding of the inductive definition
$\apred(x_1,x_2)\leftarrow \exists y ~.~ \mathsf{a}(x_1) *
\arel(x_1,y) * \apred(y,x_2)$ instantiates the existential quantifier
to an element distinct from the existing ones. This is because every
instantiation of an existential quantifier is placed into a set
labeled by $\mathsf{a}$ and the semantics of the separating
conjunction requires that these sets must be disjoint in each
decomposition of a model of $\mathsf{a}(x_1) * \arel(x_1,y) *
\apred(y,x_2)$ into models of $\mathsf{a}(x_1) * \arel(x_1,y)$ and
$\apred(y,x_2)$. Then, each model of $\exists x_1 \exists x_2 ~.~
\apred(x_1,x_2)$ is a possibly cyclic chain, because nothing is
enforced on $x_2$, which can be mapped back to a previous
instantiation of $y$. Hence each model of this sentence has treewidth
two at most.

\autoref{fig:examples}~(b) shows a family of models for a slightly
modified definition of the chain from \autoref{fig:examples}~(a),
given by the recursive rule $\apred(x_1,x_2)\leftarrow \exists y ~.~
\arel(x_1,y) * \apred(y,x_2)$, where the instantiations of the
existential quantifiers are not placed into any particular set. In
this case, one can fold a sufficiently large chain onto itself and
creating a square grid, by using the same element of the structure
more than once to instantiate a quantifier. Then, the sentence
$\exists x_1 \exists x_2 ~.~ \apred(x_1,x_2)$ has an infinite set of
models containing larger and larger square grid minors, thus having
unbounded treewidth.

Since placing every quantifier instance into the same set guarantees
treewidth boundedness, as in, e.g., \autoref{fig:examples}~(a), a
natural question that arises is what happens when these instances are
placed into two (not necessarily disjoint) sets? The inductive
definition of the predicate $\apred$ in \autoref{fig:examples}~(c)
creates an unbounded number of disconnected $\arel$-edges whose
endpoints are arbitrarily labeled with $\mathsf{a}$ and $\mathsf{b}$,
respectively. In this case, one can instantiate a $\mathsf{a}$-labeled
(resp. $\mathsf{b}$-labeled) variable with a new element or a previous
$\mathsf{b}$ (resp. $\mathsf{a}$) element and build chains (or sets of
disconnected chains), of treewidth two at most two. Again, this is
because a simple cycle with more than two elements has treewidth two.

  \begin{figure}[htbp]
    \begin{center}
    \input{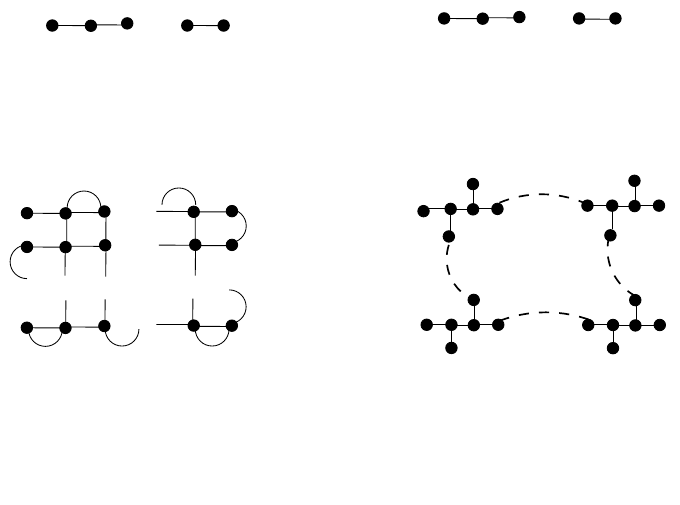_t}
    \caption{\label{fig:examples} Examples of bounded and unbounded
  treewidth models}
    \end{center}
  \end{figure}

Let us now consider three unary relation symbols $\mathsf{a}$,
$\mathsf{b}$ and $\mathsf{c}$ and three types of disconnected
$\arel$-edges (according to the labels of their endpoints) created by
three recursive definitions of \autoref{fig:examples} (d), namely
$\mathsf{a}$-$\mathsf{b}$, $\mathsf{b}$-$\mathsf{c}$ and
$\mathsf{a}$-$\mathsf{c}$ edges. In this case, the sentence
$\apred()$, where $\apred$ is a predicate symbol of zero arity, has
models with unboundedly large square grid minors, obtained by
``glueing'' these edges (i.e., instantiating several quantifiers with
the same element from different sets). The glued pairs are connected
with dotted lines in \autoref{fig:examples} (d). Hence, these
structures form a set of unbounded treewidth.

These examples highlight the main ideas behind an algorithm that
decides the existence of a bound on the treewidths of the models of a
given formula, with predicates interpreted by set of inductive
definitions. First, one needs to identify the definitions that can
iterate any number of times producing building blocks of unboundedly
large grids (modulo edge contractions). Second, these structures must
connect elements from different sets, e.g., $\mathsf{a}$, $\mathsf{b}$
or $\mathsf{c}$ in \autoref{fig:examples}. A complication is that
these sets can be defined not only by monadic relation symbols, but
also by $n$-ary relation atoms where all but one variable have the
same values for any occurrence. For instance, the variable $x_2$ in
\autoref{fig:examples} (a) has the same value in an arbitrarily long
unfolding of $\apred(x_1,x_2)$ and we could have written
$\arel(x_1,x_2)$ instead of $\mathsf{a}(x_1)$ in the first rule, with
the same effect, while avoid using `$\mathsf{a}$' altogether. Last,
the interplay between the connectivity and labeling of the building
blocks is important. For instance, in \autoref{fig:examples}~(d), the
building blocks of the grid are structures consisting of six elements,
that connect three `$\mathsf{a}$' with three `$\mathsf{b}$' elements.
In fact, as we shall prove, a necessary and sufficient condition for
treewidth-boundedness is that any two such ``iterable'' substructures
that connect at least three elements must also place these elements in
at least one common set (e.g., $\mathsf{a}$, $\mathsf{b}$ or
$\mathsf{c}$ in our example).


\section{The Separation Logic of Relations}
\label{sec:preliminaries}

This section defines formally the Separation Logic of Relations (\slr)
and its corresponding treewidth boundedness problem. It also
introduces most of the technical notions used throughout the paper.

Let $\nat$ be the set of positive integers, zero included and $\nat_+
\isdef \nat \setminus\set{0}$. Given integers $i$ and $j$, we write
$\interv{i}{j}$ for the set $\set{i,i+1,\ldots,j}$, assumed to be
empty if $i>j$. For a set $A$, we denote by $\pow{A}$ its
powerset. The cardinality of a finite set $A$ is $\cardof{A}$. By
writing $S = S_1 \uplus S_2$, we mean that $S_1$ and $S_2$ partition
$S$, i.e., that $S = S_1 \cup S_2$ and $S_1 \cap S_2 = \emptyset$.

Multisets are denoted as $\mset{a,b,\ldots}$ and all set operations
(union, intersection, etc.)  are used with multisets as well. In
particular, a binary operation involving a set and a multiset
implicitly lifts the set to a multiset and returns a multiset. The
multi-powerset (i.e., the set of multisets) of $A$ is denoted as
$\mpow{A}$.

For a binary relation $R \subseteq A \times A$, we denote by $R^*$ its
reflexive and transitive closure and by $R^=$ the smallest equivalence
relation that contains $R$, i.e., the closure of $R^*$ by
symmetry. For a set $S \subseteq A$, we denote by $\proj{R}{S}$ the
relation obtained by removing from $R$ all pairs with an element not
in $S$. A binary relation $R \subseteq A \times B$ is an \emph{$A$-$B$
matching} iff $\set{a,b} \cap \set{a',b'} = \emptyset$, for all
distinct pairs $(a,b),(a',b') \in R$.

\subsection{Structures}
Let $\relations$ be a finite and fixed set of \emph{relation symbols},
of arities $\arityof{r}\geq1$, for all $r \in \relations$. A relation
symbol of arity one (resp. two) is called \emph{unary}
(resp. \emph{binary}).

A \emph{structure} is a pair $\astruc=(\univ,\struc)$, where $\univ$
is an \emph{infinite} set called the \emph{universe} and $\struc :
\relations \rightarrow \pow{\univ^+}$ is an \emph{interpretation}
mapping each relation symbol $\arel$ into a \emph{finite} subset of
$\univ^{\arityof{\arel}}$. We consider only structures with finite
interpretations, because the logic under consideration (defined below)
can only describe sets of finite structures. The \emph{support}
$\supp{\struc}\isdef\{u_i \mid \tuple{u_1,\ldots,u_{\arityof{\arel}}}
\in \struc(\arel),~ i \in \interv{1}{\arityof{\arel}}\}$ of an
interpretation is the (necessarily finite) set of elements that occur
in a tuple from the interpretation of a relation symbol. The support
of a structure is the support of its interpretation.

Two structures $(\univ_1,\struc_1)$ and $(\univ_2,\struc_2)$ are
\emph{locally disjoint} iff $\struc_1(\arel) \cap \struc_2(\arel) =
\emptyset$, for all $\arel\in\relations$ and \emph{disjoint} iff
$\supp{\struc_1} \cap \supp{\struc_2} = \emptyset$. Two structures are
\emph{isomorphic} iff they differ only by a renaming of their elements
(see, e.g., \cite[Section A3]{DBLP:books/daglib/0082516} for a formal
definition of isomorphism between structures).

We consider the \emph{composition} as a partial binary operation
between structures, defined as pointwise disjoint union of the
interpretations of relation symbols:

\begin{defi}
  The composition of two locally disjoint structures $(U_1,\struc_1)$
  and $(U_2,\struc_2)$ is $(\univ_1,\struc_1) \comp (\univ_2,\struc_2)
  \isdef (\univ_1\cup\univ_2,\struc_1\uplus\struc_2)$, where
  $(\struc_1\uplus\struc_2)(\arel)\isdef\struc_1(\arel)\uplus\struc_2(\arel)$,
  for all $\arel\in\relations$. The composition is undefined if
  $(U_1,\struc_1)$ and $(U_2,\struc_2)$ are not locally disjoint.
\end{defi}

For example, \autoref{fig:composition} shows the composition of two
structures $\astruc_1$ and $\astruc_2$, whose interpretations are
represented as hyper-graphs with edges denoting tuples from the
interpretation of relation symbols $\mathsf{a}$, $\mathsf{b}$ and
$\mathsf{c}$, of arities $3$, $2$ and $2$, respectively. Note that
$\astruc_1$ and $\astruc_2$ are locally disjoint but not disjoint, for
instance the elements $u_2$ and $u_3$ are present in the support of
both structures.

  \begin{figure}[htbp]
    \begin{center}
    \input{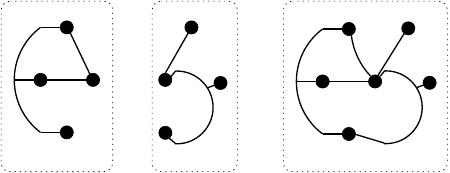_t}
    \caption{\label{fig:composition} Composition of structures}
    \end{center}
  \end{figure}

\subsection{Treewdith}
A graph is a pair $\graph = (\nodes,\edges)$, such that $\nodes$ is a
finite set of \emph{nodes} and $\edges \subseteq \nodes \times \nodes$
is a set of \emph{edges}. A (simple) \emph{path} in $\graph$ is a
sequence of (pairwise distinct) nodes $v_1, \dots, v_n$, such that
$(v_i,v_{i+1}) \in \edges$, for all $i \in \interv{1}{n-1}$. We say
that $v_1, \dots, v_n$ is an \emph{undirected path} if
$\set{(v_i,v_{i+1}),(v_{i+1},v_i)} \cap \edges \neq \emptyset$
instead, for all $i \in \interv{1}{n-1}$. A set of nodes $S \subseteq
\nodes$ is \emph{connected in $\graph$} iff between any two nodes in
$S$ there is an undirected path in $\graph$ that involves only nodes
from $S$. A graph $\graph$ is \emph{connected} iff $\nodes$ is
connected in $\graph$.

Given a set $\labels$ of labels, a \emph{$\labels$-labeled unranked
tree} is a tuple $\tree = (\nodes,\edges,r,\alabel)$, where
$(\nodes,\edges)$ is a graph, $r \in \nodes$ is a designated node
called the \emph{root}, such that there exists a unique simple path
from $r$ to any other node $n \in \nodes\setminus\set{r}$ and no path
from $r$ to $r$ in $(\nodes,\edges)$. The mapping $\alabel : \nodes
\rightarrow \labels$ associates each node of the tree a label from
$\labels$.

\begin{defi}\label{def:treewidth}
  A \emph{tree decomposition} of a structure $\astruc=(\univ,\struc)$
  is a $\pow{\univ}$-labeled unranked tree
  $\tree=(\nodes,\edges,r,\alabel)$, such that the following
  hold: \begin{enumerate}
  \item\label{it1:treewidth} for each relation symbol $\arel \in
    \relations$ and each tuple $\tuple{u_1, \ldots,
      u_{\arityof{\arel}}} \in \struc(\arel)$ there exists a node $n
    \in \nodes$, such that $\set{u_1, \ldots, u_{\arityof{\arel}}}
    \subseteq \alabel(n)$,
  \item\label{it2:treewidth} for each element $u \in \supp{\struc}$,
    the set of nodes $\set{n \in \nodes \mid u \in \alabel(n)}$ is
    nonempty and connected in $(\nodes,\edges)$.
  \end{enumerate}
  The \emph{width} of the tree decomposition is $\width{\tree} \isdef
  \max_{n \in \nodes} \cardof{\alabel(n)}-1$. The \emph{treewidth} of
  the structure $\struc$ is $\twof{\struc} \isdef \min
  \set{\width{\tree} \mid \tree \text{ is a tree decomposition of }
    \struc}$.
\end{defi}

Note that, since we consider only structures with finite support, tree
decompositions are finite trees with finite sets as labels, hence the
treewidth of a structure is a well-defined integer. A set of
structures is \emph{treewidth-bounded} iff the set of corresponding
treewidths is finite and \emph{treewidth-unbounded} otherwise.  We
assume basic acquaintance with the notions of grid and minor. It is
known that a set of structures having infinitely many minors
isomorphic to some $n \times n$ grid is treewidth-unbounded
\cite{DBLP:journals/tcs/Bodlaender98}.

\subsection{Separation Logic of Relations}
The \emph{Separation Logic of Relations} (\slr) uses a set of
\emph{variables} $\vars = \set{x,y,\ldots}$ and a set of
\emph{predicates} $\preds = \set{\apred, \bpred, \ldots}$ with given
arities $\arityof{\apred}\geq0$. A predicate of zero arity is called
\emph{nullary}.

The formul{\ae} of \slr\ are defined by the syntax in
\autoref{fig:slr}~(a). A variable is \emph{free} if it does not occur
within the scope of an existential quantifier and $\fv{\phi}$ denotes
the set of free variables of $\phi$. A \emph{sentence} is a formula
with no free variables. For a formula $\phi$, we denote by
$\exclof{\phi}$ the sentence obtained by existentially quantifying its
free variables.  A formula without quantifiers is called
\emph{quantifier-free}.

Instead of the standard boolean conjunction, \slr\ has a
\emph{separating conjunction} $*$. The formul{\ae} $x\neq y$ and
$\apred(x_1,\ldots,x_{\arityof{\apred}})$ are called
\emph{disequalities} and \emph{predicate atoms}, respectively. To
alleviate notation, we denote by $\apred$ the predicate atom
$\apred()$, whenever $\apred$ is nullary. A formula without predicate
atoms is called \emph{predicate-free}. A \emph{qpf} formula is both
quantifier- and predicate-free.

\begin{figure}[htbp]
  \begin{center}
    \[\phi := \emp \mid x=y \mid x\neq y \mid \arel(x_1, \ldots,
    x_{\arityof{\arel}}) \mid \apred(x_1, \ldots,
    x_{\arityof{\apred}}) \mid \phi * \phi \mid \exists x ~.~ \phi\]
    (a) %
    \[\begin{array}{rclcl}
    (\univ,\struc) & \models^\store_\asid & \emp & \iffdef &
    \struc(\arel) = \emptyset \text{, for all } \arel\in\relations \\
    (\univ,\struc) & \models^\store_\asid & x \sim y & \iffdef &
    (\univ,\struc) \models^\store_\asid \emp \text{ and } \store(x)
    \sim \store(y) \text{, for } \sim \ \in\!\set{=,\neq} \\
    (\univ,\struc) & \models^\store_\asid & \arel(x_1, \ldots, x_{k})
    & \iffdef & \struc(\arel) = \set{\tuple{\store(x_1), \ldots,
        \store(x_{k})}} \text{ and } \struc(\arel') = \emptyset
    \text{, for } \arel' \in \relations \setminus\set{\arel} \\
    (\univ,\struc) & \models^\store_\asid & \apred(y_1, \ldots, y_{n})
    & \iffdef & \struc \models^\store_\asid \phi[x_1/y_1, \ldots,
      x_{n}/y_{n}] \text{, for some } \apred(x_1, \ldots, x_{n})
    \leftarrow \phi \in \asid \\
    (\univ,\struc) & \models^\store_\asid & \phi_1 * \phi_2 & \iffdef
    & \text{exist structures } (\univ_i,\struc_i) \text{, where }
    (\univ,\struc) = (\univ_1,\struc_1) \comp (\univ_2,\struc_2) \\
    &&&&\text{and } (\univ_i,\struc_i) \models^\store_\asid \phi_i
    \text{, for both } i = 1,2 \\
    (\univ,\struc) & \models^\store_\asid & \exists x ~.~ \phi &
    \iffdef & \struc \models^{\store[x\leftarrow u]}_\asid \phi
    \text{, for some } u \in \univ
    \end{array}\] (b)
  \end{center}
  \caption{\label{fig:slr} The syntax (a) and semantics (b) of the
    Separation Logic of Relations}
\end{figure}

\begin{defi}\label{def:sid}
  A \emph{set of inductive definitions (SID)} is a \emph{finite} set $\asid$
  of \emph{rules} of the form $\apred(x_1, \ldots,
  x_{\arityof{\apred}}) \leftarrow \phi$, where $x_1, \ldots,
  x_{\arityof{\apred}}$ are pairwise distinct variables, called
  \emph{parameters}, such that $\fv{\phi} \subseteq \set{x_1, \ldots,
    x_{\arityof{\apred}}}$.  
\end{defi}

The semantics of \slr\ is given by the satisfaction relation
$(\univ,\struc) \models^\store_\asid \phi$ between structures and
formul{\ae}, parameterized by a store $\store$ and a SID $\asid$.  We
write $\store[x \leftarrow u]$ for the store that maps $x$ into $u$
and agrees with $\store$ on all variables other than $x$.  By
$[x_1/y_1, \ldots, x_n/y_n]$ we denote the substitution that replaces
each free variable $x_i$ by $y_i$ in a formula $\phi$. The result of
applying the substitution $[x_1/y_1, \ldots, x_n/y_n]$ to the formula
$\phi$ is denoted as $\phi[x_1/y_1, \ldots, x_n/y_n]$, where, by
convention, the existentially quantified variables from $\phi$ are
renamed to avoid clashes with $y_1, \ldots, y_n$. Then
$\models^\store_\asid$ is the least relation that satisfies the
constraints in \autoref{fig:slr}~(b).

Note that the interpretation of equalities and relation atoms differs
in \slr\ from first-order logic, namely $x = y$ requires that the
structure is empty and $\arel(x_1,\ldots,x_{\arityof{\arel}})$ denotes
the structure in which all relations symbols are interpreted by empty
sets, except for $\arel$, which contains the tuple of store values of
$x_1,\ldots,x_{\arityof{\arel}}$ only. Moreover, every structure
$(\univ,\struc)$, such that $(\univ,\struc) \models^\store_\asid
\phi$, interprets each relation symbol as a finite set of tuples,
defined by a finite least fixpoint iteration over the rules from
$\asid$. The assumption that each structure has an infinite universe
excludes the cases in which a formula becomes unsatisfiable because
there are not enough elements to instantiate the quantifiers
introduced by the unfolding of the rules, thus simplifying the
definitions.

If $\phi$ is a sentence (resp. a predicate-free formula), we omit the
store $\store$ (resp. the SID $\asid$) from $\astruc
\models^\store_\asid \phi$. For a \slr\ sentence $\phi$, let
$\sidsem{\phi}{\asid}\isdef\set{\astruc\mid\astruc\models_\asid\phi}$
be the set of \emph{$\asid$-models} of $\phi$. If $\phi$ is, moreover,
predicate-free we say that $\phi$ is \emph{satisfiable} iff
$\sem{\phi}\neq\emptyset$.

For a qpf formula $\phi$, we write $x \eqof{\phi} y$ (resp. $x
\not\eqof{\phi} y$) iff $x=y$ is (resp. is not) a logical consequence
of $\phi$, i.e., $\store(x)=\store(y)$ for each store $\store$ and
structure $\astruc$, such that $\astruc \models^\store \phi$. Note
that $x \not\eqof{\phi} y$ is the negation of $x \eqof{\phi} y$, which
is different from that $x \neq y$ is implied by $\phi$. We define
several quantitative measures relative to SIDs: 

\begin{defi}\label{def:sid-measures}
Let $\asid$ be a SID. We denote by:
\begin{itemize}[label=$\triangleright$]
\item $\maxvarinruleof{\asid}$ the maximum number of variables that
  occur, either free or existentially quantified, in a rule from
  $\asid$,
\item $\maxrelatominruleof{\asid}$ the maximum number of relation
  atoms that occur in a rule from $\asid$,
\item $\maxrulearityof{\asid}$ the maximum number of predicates that
  occur in a rule from $\asid$,
\item $\maxrelarityof{\asid}$ the maximum arity of relation symbols
  occurring in $\asid$,
\item $\predsno{\asid}$ the number of predicate symbols occurring in
  $\asid$,
\item $\relationsno{\asid}$ the number of relation symbols occurring
  in $\asid$.
\end{itemize}
\end{defi}

\subsection{Simplifying assumptions}
In the rest of this paper, we simplify the technical development by
two assumptions, that lose no generality. The first assumption is that
no equalities occur in the given SID (\autoref{lemma:eq-free}).

\begin{defi}\label{def:eq-free}
  A formula is \emph{equality-free} iff it contains no equalities nor
  predicate atoms in which the same variable occurs twice. A rule
  $\apred(x_1,\ldots,x_n) \leftarrow \phi$ is equality-free iff $\phi$
  is equality-free. A SID is equality-free iff it consists of
  equality-free rules.
\end{defi}

\begin{lem}\label{lemma:eq-free}
  Given a SID $\asid$, one can build an equality-free SID $\asid'$,
  such that $\sidsem{\apred}{\asid} = \sidsem{\apred}{\asid'}$, for
  each nullary predicate $\apred$. Moreover, all quantitative measures
  (\autoref{def:sid-measures}) of $\asid'$ are the same as for
  $\asid$, except for $\predsno{\asid'} \leq \predsno{\asid} \cdot
  {\maxpredarityof{\asid}}^{\maxpredarityof{\asid}}$. 
\end{lem}
\begin{proof}
  See \cite[Lemma 9]{Concur23}. The construction of $\asid'$ considers
  predicates $\apred_{I_1, \ldots, I_n}$, where $\apred$ is a
  predicate symbol that occurs in $\asid$ and $I_1 \uplus \ldots
  \uplus I_n = \interv{1}{\arityof{\apred}}$ is a partition. Since the
  number of partitions of $\interv{1}{\arityof{\apred}}$ is
  asymptotically bounded by $\arityof{\apred}^{\arityof{\apred}} \leq
  \maxpredarityof{\asid}^{\maxpredarityof{\asid}}$, we obtain the bound
  on $\predsno{\asid'}$.
\end{proof}

The following notion of \emph{unfolding} is used to define several
technical notions and state the second simplifying assumption. Let
$\phi$ and $\psi$ be formul{\ae} and $\asid$ be a SID. We denote by
$\phi \step{\asid} \psi$ the fact that $\psi$ is obtained by replacing
a predicate atom $\apred(y_1,\ldots,y_n)$ in $\phi$ by a formula
$\rho[x_1/y_1,\ldots,x_n/y_n]$, where
$\apred(x_1,\ldots,x_n)\leftarrow\rho$ is a rule from $\asid$. A
\emph{$\asid$-unfolding} is a sequence of formul{\ae} $\phi_1
\step{\asid} \ldots \step{\asid} \phi_n$. The $\asid$-unfolding is
\emph{complete} if the last formula is predicate-free. The following
statement is a direct consequence of the semantics of \slr:

\begin{prop}\label{prop:unfolding}
  Let $\phi$ be a sentence, $\asid$ a SID and $\astruc$ a structure.
  Then $\astruc\in\sidsem{\phi}{\asid}$ iff $\astruc \models^\store
  \psi$, for a store $\store$ and complete $\asid$-unfolding
  $\phi\step{\asid}^*\exists x_1 \ldots \exists x_n ~.~ \psi$, where
  $\psi$ is a qpf formula.
\end{prop}
\begin{proof}
  ``$\Leftarrow$'' By induction on the definition of the satisfaction
  relation $\models^\store_\asid$. ``$\Rightarrow$'' By induction on
  the length of the $\asid$-unfolding.
\end{proof}

The second assumption is that any $\asid$-unfolding of a nullary
predicate by the given SID yields a predicate-free formula that is
satisfiable. Again, this assumption loses no generality
(\autoref{lemma:all-sat}).

\begin{defi}\label{def:all-sat}
  A SID $\asid$ is \emph{all-satisfiable} for a nullary predicate
  $\apred$ iff each predicate-free formula $\phi$ which is the outcome
  of a complete $\asid$-unfolding $\apred \step{\asid}^* \phi$ is
  satisfiable.
\end{defi}

\begin{lem}\label{lemma:all-sat}
  Given a SID $\asid$ and a nullary predicate $\apred$, one can build
  a SID $\overline{\asid}$ all-satisfiable for $\apred$, such that
  $\sidsem{\apred}{\asid}=\sidsem{\apred}{\overline{\asid}}$. Moreover,
  all quantitative measures (\autoref{def:sid-measures}) of $\asid'$
  are the same as for $\asid$, except for $\predsno{\overline{\asid}}
  \leq \predsno{\asid} \cdot \relationsno{\asid} \cdot
       {\maxpredarityof{\asid}}^{\maxrelarityof{\asid}}$. 
\end{lem}
\begin{proof}
  For space reasons, this proof is given in \autoref{app:all-sat}.
\end{proof}

\subsection{The treewidth boundedness problem}
We are ready to state the main problem addressed in this
paper. Before, we state two technical lemmas that state several
relations between qpf formul{\ae} and the upper bounds on the
treewidth of their models:

\begin{lem}\label{lemma:qpf-treewidth}
  Let $\phi$, $\psi$ be qpf formul{\ae}, $x_0,x_1,x_2,\ldots,x_k$
  variables and $\arel$ a relation symbol of arity $k$. Then, the
  following hold: \begin{enumerate}
  \item\label{it1:qpf-treewidth} $\twof{\sem{\exclof{(\phi *
        \Bigstar_{i=1}^k x_0 = x_i)}}} \le
    \twof{\sem{\exclof{\phi}}}$,
  \item\label{it2:qpf-treewidth} $\twof{\sem{\exclof{\phi}}} - 1 \le
    \twof{\sem{\exclof{(\phi * \Bigstar_{i=1}^k x_0 \neq x_i)}}} \le
    \twof{\sem{\exclof{\phi}}}$ if $\phi * \Bigstar_{i=1}^k x_0 \neq
    x_i$ satisfiable,
  \item\label{it3:qpf-treewidth} $\twof{\sem{\exclof{\phi}}} - 1 \le
    \twof{\sem{\exclof{(\phi * \arel(x_1,\ldots,x_k))}}} \le
    \twof{\sem{\exclof{\phi}}} + k$ if $\phi * \arel(x_1,\ldots,x_k)$
    satisfiable.
  \item\label{it4:qpf-treewidth} $\twof{\sem{\exclof{(\phi * \psi)}}}
    \le \twof{\sem{\exclof{\phi}}} + \cardof{\fv{\psi}}$ if $\psi$
    contains only relation atoms.
  \end{enumerate}
\end{lem}
\begin{proof}
  For space reasons, this proof is given in
  \autoref{app:qpf-treewidth}.
\end{proof}

\begin{lem}\label{lemma:sep-treewidth}
  Let $\phi$ and $\psi$ be qpf formul{\ae} and $F \isdef \fv{\phi}
  \cap \fv{\psi}$, such that $\phi * \psi$ is satisfiable and $x
  \not\eqof{\phi} y$, for all $x,y \in F$. Let $\eta \isdef
  \Bigstar_{x,y\in F,~ x \eqof{\psi} y} x = y$. Then,
  $\twof{\sem{\exclof{(\phi * \eta)}}} \le \twof{\sem{\exclof{(\phi *
        \psi)}}} + \cardof{F}$.
\end{lem}
\begin{proof}
  For space reasons, this proof is given in
  \autoref{app:sep-treewidth}.
\end{proof}

The main result of this paper is a decidability proof for the
following decision problem:

\begin{defi}\label{def:btw}
  The \tbsl\ problem asks whether the set $\sidsem{\phi}{\asid}$ is
  treewidth-bounded, for an SID $\asid$ and \slr\ sentence $\phi$
  given as input.
\end{defi}

This result is tightened by a proof of the undecidability of the
treewidth-boundedness problem for first-order logic
(\autoref{sec:undecidability}), that further improves our
understanding of the relation between the expressivity of classical
and substructural logics. The decidability proof proceeds in two
steps. First, we show the decidability of the problem for sentences of
the form $\apred$, where $\apred$ is a nullary predicate symbol, and a
class of SIDs having a particular property, called
\emph{expandability}, defined below
(\autoref{sec:expandable-tb}). Second, we show how to reduce the
treewidth-boundedness problem for arbitrary sentences and SIDs to the
problem for nullary predicate atoms and expandable SIDs
(\autoref{sec:general-tb}).


\section{Expandable Sets of Inductive Definitions}
\label{sec:expandable-tb}

This section introduces the formal definitions of canonical models and
expandable SIDs, needed for the first part of the proof of
decidability of the \tbsl\ problem. The main result of this section is
that the treewidth boundedness problem is decidable for the sets of
models of a nullary predicate defined by an expandable SID.

For simplicity, in the rest of this paper we shall represent sentences
$\phi$ by nullary predicate atoms $\apred$. This loses no generality
since $\sidsem{\phi}{\asid} = \sidsem{\apred}{\asid \cup \set{\apred
    \leftarrow \phi}}$ provided that $\apred$ is not defined by any
other rule in $\asid$.  In the rest of this section we fix an
arbitrary SID $\asid$ and nullary predicate $\apred$.

\subsection{Canonical models}

Intuitively, a $\asid$-model of $\apred$ is \emph{canonical} if it can
be defined using a store that matches only those variables that are
equated in the outcome of the complete $\asid$-unfolding of $\apred$
that ``produced'' the model, in the sense of
\autoref{prop:unfolding}. A \emph{rich canonical model}
records, moreover, the disequalities introduced during the unfolding.

\begin{defi}\label{def:canonical-model}
  A store $\store$ is \emph{canonical for $\phi$} iff
  $\store(x)=\store(y)$ only if $x \eqof{\phi} y$, for all
  $x,y\in\fv{\phi}$. A \emph{rich canonical $\asid$-model} of a
  sentence $\phi$ is a pair $(\astruc,\diseq)$, where
  $\astruc=(\univ,\struc)$ is a structure and $\diseq \subseteq
  \univ\times\univ$ is a symmetric relation, such that there exists a
  complete $\asid$-unfolding $\phi \step{\asid}^* \exists x_1 \ldots
  \exists x_n ~.~ \psi$, where $\psi$ is qpf, and a store $\store$
  canonical for $\psi$, such that $\astruc \models^\store \psi$ and
  $\diseq(u,v)$ iff there exist variables $x \in \store^{-1}(u)$, $y
  \in \store^{-1}(v)$ and the disequality $x \neq y$ occurs in
  $\psi$. We denote by $\rcsem{\phi}{\asid}$ the set of rich canonical
  $\asid$-models of $\phi$ and $\csem{\phi}{\asid}\isdef\set{\astruc
    \mid (\astruc,\diseq)\in\rcsem{\phi}{\asid}}$ the set of
  \emph{canonical} $\asid$-models of $\phi$. If $\phi$ is
  predicate-free, we write $\csem{\phi}{}$ (resp. $\rcsem{\phi}{}$)
  instead of $\csem{\phi}{\asid}$ (resp. $\rcsem{\phi}{\asid}$).
\end{defi}

A store $\store$ is \emph{injective} over a set of variables $x_1,
\ldots, x_n$ iff $\store(x_i)=\store(x_j)$ implies $i=j$, for all
$i,j\in\interv{1}{n}$. Note that the canonical $\asid$-models of an
equality free SID $\asid$ can be defined considering injective,
instead of canonical stores. Nevertheless, this more general
definition of canonical models using canonical stores will become
useful later on, when predicate-free formul{\ae} with equalities will
be considered.

Canonical models are important for two reasons. First, their treewidth
is bounded:

\begin{lem}\label{lemma:canonical-btw}
  $\twof{\astruc} \le \maxvarinruleof{\asid}-1$, for any
  $\astruc \in \csem{\apred}{\asid}$.
\end{lem}
\begin{proof}
  Let $\astruc = (\univ,\struc) \in \csem{\apred}{\asid}$ be a
  canonical $\asid$-model of $\apred$. We define a tree decomposition
  $\tree=(\nodes,\edges,r,\alabel)$ of $\astruc$ as follows. The graph
  of $\tree$ is any derivation tree of $\asid$ whose outcome is
  $\astruc$. This is a tree labeled with rules from $\asid$, whose
  parent-child relation is defined as follows: if $n$ is a node
  labeled with a rule $\arule$, for each predicate atom
  $\bpred(z_1,\ldots,z_{\arityof{\bpred}})$ that occurs in $\arule$,
  there is exactly one child $m$ of $n$ whose label is a rule that
  defines $\bpred$. The bag $\alabel(n)$ contains exactly those
  elements that are the store values of the variables occurring free
  or bound in $\arule$. We check that $\tree$ is a tree decomposition
  of $\astruc$ by proving the two points of \autoref{def:treewidth}:
  \begin{itemize}[left=.5\parindent]
  \item[{(\ref{it1:treewidth})}] each tuple
  $\tuple{u_1, \ldots, u_{\arityof{\arel}}} \in \struc(\arel)$ occurs
  in $\astruc$ because of a relation atom
  $\arel(z_1,\ldots,z_{\arityof{\arel}})$ that occurs in the label of
  a node $n$ from the parse tree. Then $u_1, \ldots,
  u_{\arityof{\arel}} \in \alabel(n)$, by the definition of $\tree$.
  \item[{(\ref{it2:treewidth})}] let $n, m
  \in \nodes$ be nodes of $\tree$ and $u \in \alabel(n) \cap
  \alabel(m)$ be an element. Then the label of each node on the path
  between $n$ and $m$ in the parse tree contains a variable whose
  store value is $u$, hence the set $\set{p \in \nodes \mid u \in
    \alabel(p)}$ is non-empty and connected in $\tree$. \qedhere
  \end{itemize}
\end{proof}

Second, any model is obtained via an \emph{internal fusion} of a rich
canonical model. The internal fusion is a unary operation that takes
as input a structure and outputs a set of structures obtained by
joining certain elements from its support. This operation is formally
defined as quotienting with respect to certain equivalence relations:

\begin{defi}\label{def:quotient}
  Let $\astruc=(\univ,\struc)$ be a structure and $\approx \ \subseteq
  \univ \times \univ$ be an equivalence relation, where
  $[u]_{\approx}$ is the equivalence class of $u\in\univ$. The
  \emph{quotient} $\astruc_{/\approx} =
  (\univ_{/\approx},\struc_{/\approx})$ is $\univ_{/\approx} \isdef
  \set{[u]_\approx \mid u \in \univ}$ and $\struc_{/\approx}(\arel)
  \isdef \set{\tuple{[u_1]_{\approx}, \ldots,
      [u_{\arityof{\arel}}]_{\approx}} \mid \tuple{u_1, \ldots,
      u_{\arityof{\arel}}} \in \struc(\arel)}$, for all
  $\arel\in\relations$.
\end{defi}
For example, \autoref{fig:internal-fusion}~(a) shows the outcome of
quotienting a structure with respect to an equivalence relation, whose
equivalence classes are encircled with dashed lines.

A fusion operation glues elements without losing tuples from the
interpretation of a relation symbol. For this reason, we consider only
equivalence relations that are \emph{compatible} with a given
structure and define internal fusion as the following unary operation:

\begin{defi}\label{def:internal-fusion}
  An equivalence relation $\approx \ \subseteq \univ \times \univ$ is
  \emph{compatible} with a structure $\astruc=(\univ,\struc)$ iff for
  all $\arel\in\relations$ and any two tuples $\tuple{u_1, \ldots,
    u_{\arityof{\arel}}}, \tuple{v_1, \ldots, v_{\arityof{\arel}}} \in
  \struc(\arel)$, there exists $i \in \interv{1}{\arityof{\arel}}$
  such that $u_i \not\approx v_i$.  An \emph{internal fusion} of
  $\astruc$ is a structure isomorphic to $\astruc_{/\approx}$, for an
  equivalence relation $\approx$ compatible with $\astruc$. Let
  $\intfusion{\astruc}$ be the set of internal fusions of $\astruc$
  and $\intfusion{\structures} \isdef \bigcup_{\astruc\in\structures}
  \intfusion{\astruc}$, for a set $\structures$ of structures.
\end{defi}
For example, \autoref{fig:internal-fusion}~(b) shows a possible
internal fusion of a structure. Note that the equivalence relation
from \autoref{fig:internal-fusion}~(a) is not compatible with the
structure and cannot be used in a fusion.

  \begin{figure}[htbp]
    \begin{center}
    \input{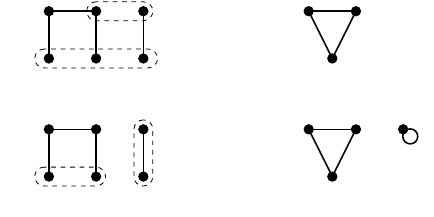_t}
    \caption{\label{fig:internal-fusion} Quotient (a) and internal fusion (b)}
    \end{center}
  \end{figure}

For technical reasons, we introduce also the internal fusion of a rich
canonical model, as quotienting with respect to an equivalence
relation that does not violate the disequality relation:

\begin{defi}\label{def:strong-internal-fusion}
  An equivalence relation $\approx \ \subseteq \univ\times\univ$ is
  \emph{compatible} with a rich canonical model $(\astruc,\diseq)$ iff
  it is compatible with $\astruc=(\univ,\struc)$ and $\diseq(u,v)$
  only if $u \not\approx v$. We denote by
  $\sintfusion{\astruc}{\diseq}$ the set of structures isomorphic to
  $\astruc_{/\approx}$, where $\approx$ is some equivalence relation
  compatible with $(\astruc,\diseq)$.
\end{defi}

The following lemma relates the sets of models, canonical and rich
canonical models of a sentence, via the two types of internal fusion:

\begin{lem}\label{lemma:strong-internal-fusion}
  $\sidsem{\apred}{\asid} =
  \sintfusion{\rcsem{\apred}{\asid}}{} \subseteq
  \intfusion{\csem{\apred}{\asid}}$.
\end{lem}
\begin{proof}
  It is sufficient to prove $\sidsem{\apred}{\asid} =
  \sintfusion{\rcsem{\apred}{\asid}}{}$, since
  $\sintfusion{\rcsem{\apred}{\asid}}{} \subseteq
  \intfusion{\csem{\apred}{\asid}}$ is immediate, by
  \autoref{def:strong-internal-fusion}, because any equivalence
  relation that is compatible with a rich canonical $\asid$-model
  $(\astruc,\diseq)$ is also compatible with the canonical
  $\asid$-model $\astruc$.

  \skipnoindent ``$\subseteq$'' Let $\astruc \in \sidsem{\apred}{\asid}$ be
  a structure. By \autoref{prop:unfolding}, we have $\astruc \models
  \exists y_1 \ldots \exists y_m ~.~ \psi$, where $\psi$ is a qpf
  formula, such that $\fv{\psi} = \set{y_1, \ldots, y_m}$ and $\apred
  \step{\asid}^* \exists y_1 \ldots \exists y_m ~.~ \psi$ is a
  complete $\asid$-unfolding. Then there exists a store $\store$, such
  that $\astruc \models^\store \psi$. Let
  $\overline{\astruc}=(\overline{\univ},\overline{\struc})$ be a
  structure and $\overline{\store}$ be an injective store over
  $y_1,\ldots,y_m$. Since $\asid$ is equality-free, there are no
  equality atoms in $\psi$, hence such a structure and injective store
  exist. We consider $\approx
  \ \subseteq\overline{\univ}\times\overline{\univ}$ to be the least
  equivalence relation such that $\overline{\store}(y_i) \approx
  \overline{\store}(y_j) \iffdef \store(y_i) = \store(y_j)$, for all
  $1 \le i < j \le m$. To prove that $\approx$ is compatible with
  $\overline{\astruc}$, consider two tuples
  $\tuple{\overline{\store}(z_1),\ldots,\overline{\store}(z_{\arityof{\arel}})},
  \tuple{\overline{\store}(z'_1),\ldots,\overline{\store}(z'_{\arityof{\arel}})}
  \in \overline{\struc}(\arel)$, for some $\arel\in\relations$ and
  suppose, for a contradiction, that $\overline{\store}(z_i) \approx
  \overline{\store}(z'_i)$, for all $i \in
  \interv{1}{\arityof{\arel}}$. Then
  $\arel(z_1,\ldots,z_{\arityof{\arel}}) *
  \arel(z'_1,\ldots,z'_{\arityof{\arel}})$ is a subformula of $\psi$,
  modulo a reordering of atoms. By the definition of $\approx$, we
  have $\store(z_i)=\store(z'_i)$, for all $i \in
  \interv{1}{\arityof{\arel}}$, in contradiction with $\astruc
  \models^\store \psi$ and the semantics of the separating
  conjunction. Since $\overline{\astruc} \models^{\overline{\store}}
  \psi$ and $\overline{\store}$ is injective over $y_1, \ldots, y_m$,
  we obtain that $\proj{\overline{\store}}{\set{y_1,\ldots,y_m}}$ is a
  bijection between $\set{y_1,\ldots,y_m}$ and
  $\supp{\overline{\struc}}$ hence $\overline{\store}^{-1}(u)$ is a
  singleton, for each $u \in \supp{\overline{\struc}}$. Let $\diseq
  \subseteq \overline{\univ} \times \overline{\univ}$ be the relation
  defined as $\diseq(u,v)$ iff the disequality
  $\overline{\store}^{-1}(u) \neq \overline{\store}^{-1}(u)$ occurs in
  $\psi$. Then $\approx$ is compatible with
  $(\overline{\astruc},\diseq) \in \rcsem{\apred}{\asid}$ hence
  $\overline{\astruc}_{/\approx} \in
  \sintfusion{\rcsem{\apred}{\asid}}{}$. Finally, the mapping $h :
  \supp{\struc} \rightarrow \supp{\overline{\struc}}$ defined as
  $h(\store(y_i)) \isdef [\overline{\store}(y_i)]_{\approx}$, for all
  $i \in \interv{1}{m}$ is shown to be an isomorphism between
  $\astruc$ and $\overline{\astruc}_{/\approx}$, leading to $\astruc
  \in \sintfusion{\rcsem{\apred}{\asid}}{}$, by the fact that the set
  $\sintfusion{\rcsem{\apred}{\asid}}{}$ is closed under isomorphism
  (\autoref{def:strong-internal-fusion}).

  \skipnoindent ``$\supseteq$'' Let $\astruc \in
  \sintfusion{\rcsem{\apred}{\asid}}{}$ be a structure. Then there
  exists a rich canonical $\asid$-model $(\overline{\astruc},\diseq)
  \in \rcsem{\apred}{\asid}$, where
  $\overline{\astruc}=(\overline{\univ},\overline{\struc})$ and an
  equivalence relation $\approx \ \subseteq \overline{\univ} \times
  \overline{\univ}$ such that $\approx$ is compatible with
  $(\overline{\astruc},\diseq)$ and $\astruc$ is isomorphic to
  $\overline{\astruc}_{/\approx}$. Since $(\overline{\astruc},\diseq)
  \in \rcsem{\apred}{\asid}$, there exists a complete $\asid$-unfolding
  $\apred \step{\asid}^* \exists y_1 \ldots \exists y_m ~.~ \psi$, such
  that $\psi$ is qpf and a store $\store$, injective over $y_1,
  \ldots, y_m$, such that $\overline{\astruc} \models^\store \psi$ and
  $\diseq(\store(z),\store(z'))$ for each disequality $z\neq z'$ from
  $\psi$. Let $\overline{\store}$ be the store defined as
  $\overline{\store}(y_i) = [\store(y_i)]_{\approx}$, for all $i \in
  \interv{1}{m}$. We prove $\overline{\astruc}_{/\approx}
  \models^{\overline{\store}} \psi$ by induction on the structure of
  $\psi$, considering the following cases: \begin{itemize}[label=$\triangleright$]
  \item $\psi = y_i\neq y_j$: because $\approx$ is compatible with
    $(\overline{\astruc}, \diseq)$, we have $[\store(y_i)]_\approx
    \neq [\store(y_j)]_\approx$, hence $\overline{\store}(y_i) \neq
    \overline{\store}(y_j)$.
  \item $\psi = \arel(y_{i_1}, \ldots, y_{i_{\arityof{\arel}}})$:
    because $\overline{\astruc} \models^\store \arel(y_{i_1}, \ldots,
    y_{i_{\arityof{\arel}}})$, we have $\overline{\struc}(\arel) =
    \set{\tuple{\store(y_{i_1}), \ldots,
        \store(y_{i_{\arityof{\arel}}})}}$ and
    $\overline{\struc}_{/\approx}(\arel) =
    \set{\tuple{[\store(y_{i_1})]_\approx, \ldots,
        [\store(y_{i_{\arityof{\arel}}})]_\approx}}$, by
    \autoref{def:quotient}.
  \item $\psi = \psi_1 * \psi_2$: because $\overline{\astruc}
    \models^\store \psi_1 * \psi_2$, there exist locally disjoint
    structures $\overline{\astruc}_1 \comp \overline{\astruc}_2 =
    \overline{\astruc}$, such that $\overline{\astruc}_i
    \models^\store \psi_i$, for $i = 1,2$. Since $\approx$ is
    compatible with $\astruc$, the structures
    ${\overline{\astruc}_1}_{/\approx}$ and
    ${\overline{\astruc}_2}_{/\approx}$ are locally disjoint, by
    \autoref{def:internal-fusion}. Then their composition is defined
    and we have $\overline{\astruc}_{/\approx} =
    {\overline{\astruc}_1}_{/\approx} \comp
    {\overline{\astruc}_2}_{/\approx}$. By the inductive hypothesis,
    we have ${\overline{\astruc}_i}_{/\approx}
    \models^{\overline{\struc}} \psi_i$, for $i=1,2$, thus
    $\overline{\astruc}_{/\approx} \models^{\overline{\struc}} \psi_1 *
    \psi_2$.
  \end{itemize}
  Hence $\overline{\astruc}_{/\approx} \in \sidsem{\apred}{\asid}$ and
  $\astruc \in \sidsem{\apred}{\asid}$ follows, since the set
  $\sidsem{\apred}{\asid}$ is closed under isomorphism, see, e.g.,
  \cite[Proposition 7]{Concur23} for a proof.
\end{proof}

\subsection{Expandable sets of inductive definitions}

We introduce the notion of \emph{expandable} SID, a key ingredient of
our proof of decidability for the \tbsl\ problem. A structure is a
\emph{substructure} of another if the former is obtained from the
latter by removing elements from its support:

\begin{defi}\label{def:substructure}
  Let $\astruc_i=(\univ_i,\struc_i)$ be structures, for $i=1,2$.
  $\astruc_1$ is \emph{included} in $\astruc_2$ iff $\univ_1 \subseteq
  \univ_2$ and $\struc_1(\arel) \subseteq \struc_2(\arel)$, for all
  $\arel\in\relations$. $\astruc_1$ is a \emph{substructure} of
  $\astruc_2$, denoted $\astruc_1 \substruc \astruc_2$, iff $\astruc_1
  \subseteq \astruc_2$ and $\struc_1(\arel) = \set{ \tuple{u_1,
      \ldots, u_{\arityof{\arel}}} \in \struc_2(\arel) \mid u_1,
    \ldots, u_{\arityof{\arel}} \in \supp{\struc_1}}$, for all
  $\arel\in\relations$.
\end{defi}

A SID is \emph{expandable} if any set of canonical models of a
sentence are all substructures of the same canonical model of that
sentence, that can be, moreover, placed ``sufficiently far away'' one
from another.

\begin{defi}\label{def:expandable}
  A SID $\csid$ is \emph{expandable} for a nullary predicate $\apred$
  iff for each sequence of pairwise disjoint canonical models
  $\astruc_1=(\univ_1,\struc_1), \ldots, \astruc_n=(\univ_n,\struc_n)
  \in \csem{\apred}{\csid}$, there exists a rich canonical model
  $(\astruc,\diseq) \in \rcsem{\apred}{\csid}$, where
  $\astruc=(\univ,\struc)$, such that: \begin{enumerate}
  \item\label{it1:def:expandable} $\astruc_1 \comp \ldots \comp
    \astruc_n \substruc \astruc$,
  \item\label{it2:def:expandable} $\diseq(u,v)$ holds for no $u \in
    \supp{\struc_i}$ and $v \in \supp{\struc_j}$, where $1 \le i < j
    \le n$, and

  \item\label{it3:def:expandable} for no relation symbol
    $\arel\in\relations$ and tuples
    $\tuple{u_1,\ldots,u_{\arityof{\arel}}},
    \tuple{v_1,\ldots,v_{\arityof{\arel}}} \in\struc(\arel)$ there
    exist $1 \le i < j \le n$, such that
    $\set{u_1,\ldots,u_{\arityof{\arel}}} \cap \supp{\struc_i} \neq
    \emptyset$, $\set{v_1,\ldots,v_{\arityof{\arel}}} \cap
    \supp{\struc_j} \neq \emptyset$ and
    $\set{u_1,\ldots,u_{\arityof{\arel}}} \cap
    \set{v_1,\ldots,v_{\arityof{\arel}}} \neq \emptyset$.
  \end{enumerate}
\end{defi}

\begin{exa}\label{ex:expandable}
The SID $\asid$ from \autoref{fig:expandable} is expandable
for $\apred$, because any choice of pairwise disjoint canonical models
of $\apred$ (here $\astruc_1$, $\astruc_2$ and $\astruc_3$) can be
embedded in a canonical model of $\apred$, such that there are no
pairs, in the interpretation of the binary relation symbol
$\mathsf{e}$, that stretch from one such model to another (the pairs
not entirely inside the support of either $\astruc_1$, $\astruc_2$ or
$\astruc_3$ are depicted in dashed lines).
\end{exa}

\begin{figure}[htbp]
  \begin{center}
    \begin{minipage}{.42\textwidth}
      {\[ \begin{array}{rcl}
        \apred() & \leftarrow & \exists y.~\bpred(y) \\
        \bpred(x_1) & \leftarrow &
        \exists y.~ \mathsf{a}(x_1) * \mathsf{e}(x_1,y) * \bpred(y) \\
        \bpred(x_1) & \leftarrow & \mathsf{a}(x_1)
      \end{array} \]}
    \end{minipage}
    \begin{minipage}{.57\textwidth}
      \centerline{\input{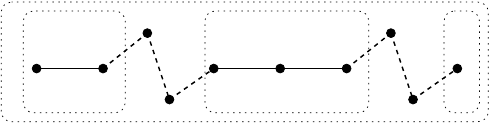_t}}
    \end{minipage}
  \end{center}
  \caption{\label{fig:expandable}An expandable SID}
\end{figure}

\begin{exa}\label{ex:non-expandable}(continued from \autoref{ex:expandable})
  Consider the SID $\asid'$ obtained from $\asid$ by changing its last
  rule into $\bpred(x_1) \leftarrow \emp$. $\asid'$ is not expandable
  for $\apred$, because the canonical models of $\apred$ are acyclic
  chains of elements, whose neighbours are related by $\mathsf{e}$,
  such that all but the last element is labeled by $\mathsf{a}$. Then,
  two such structures $\astruc_1$ and $\astruc_2$ cannot be embedded
  in a third structure $\astruc$ as substructures, because of the last
  non-labeled element of $\astruc_1$ that occurs in the middle of
  $\astruc$ and must be labeled by $\mathsf{a}$. This violates the
  definition of substructures (see \autoref{def:substructure}),
  in which the labeling of an element is the same in a substructure
  and in its enclosing structure.
\end{exa}

The \emph{external fusion} is a binary operation that glues
elements from disjoint structures:

\begin{defi}\label{def:external-fusion}
  An \emph{external fusion} of the structures $\astruc_1 =
  (\univ_1,\struc_1)$ and $\astruc_2 = (\univ_2,\struc_2)$ is a
  structure isomorphic to $(\astruc'_1 \comp \astruc'_2)_{/\approx}$,
  where $\astruc'_i=(\univ'_i,\struc'_i)$ are disjoint isomorphic
  copies of $\astruc_i$ and $\approx \subseteq \univ'_1 \times
  \univ'_2$ is the smallest equivalence relation containing a nonempty
  $\supp{\struc'_1}$-$\supp{\struc'_2}$ matching that is compatible
  with $\astruc'_1 \comp \astruc'_2$. Let
  $\extfusion{\astruc_1}{\astruc_2}$ be the set of external fusions of
  $\astruc_1$ and $\astruc_2$. For a set of structures $\structures$,
  let $\reachfusion{\structures}$ (resp. $\ireachfusion{\structures}$)
  be the closure of $\structures$ under taking external (resp. both
  internal and external) fusions.
\end{defi}
For example, \autoref{fig:external-fusion} shows the external fusion
of two disjoint structures via a matching relation (the equivalence
classes of the matching relation are encircled with dashed lines).
Note that the conditions (\ref{it2:def:expandable}) and
(\ref{it3:def:expandable}) of \autoref{def:expandable} ensure
that the external fusion of these substructures is not hindered by
their position inside the larger structure. For instance, any matching
relation between the supports of the substructures $\astruc_1$,
$\astruc_2$ and $\astruc_3$ from \autoref{fig:expandable} can be used
to define an external fusion of these structures. This is because
there are no pairs, from the interpretation of the $\mathsf{e}$
relation symbol in $\astruc$, that have an element in common and the
other non-common elements in the support of two different
substructures. If such pairs existed, the non-common elements could
not be fused by an equivalence relation compatible with $\astruc_i
\comp \astruc_j$, for any $1 \le i < j \le 3$.

  \begin{figure}[htbp]
    \begin{center}
    \input{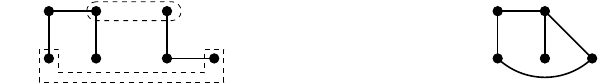_t}
    \caption{\label{fig:external-fusion} External fusion}
    \end{center}
  \end{figure}

The following lemma proves one direction of the equivalence between
the treewidth boundedness of $\sidsem{\apred}{\asid}$ and that of the
set of structures obtained by applying both internal and external
fusion to the canonical models from $\csem{\apred}{\asid}$.

\begin{lem}\label{lemma:external-fusion}
 Let $\asid$ be an expandable SID for a nullary predicate
 $\apred$. Then, \begin{enumerate*}
  \item\label{it1:lemma:external-fusion}
    $\ireachfusion{\csem{\apred}{\asid}}$ is treewidth-bounded only if
  \item\label{it2:lemma:external-fusion} $\sidsem{\apred}{\asid}$ is
    treewidth-bounded only if
  \item\label{it3:lemma:external-fusion}
    $\reachfusion{\csem{\apred}{\asid}}$ is treewidth-bounded.
  \end{enumerate*}
\end{lem}
\begin{proof}
``(\ref{it1:lemma:external-fusion}) $\Rightarrow$
  (\ref{it2:lemma:external-fusion})''
  $\intfusion{\csem{\apred}{\asid}} \subseteq
  \ireachfusion{\csem{\apred}{\asid}}$ holds trivially, by
  \autoref{def:external-fusion}, leading to
  $\sidsem{\apred}{\asid} \subseteq
  \ireachfusion{\csem{\apred}{\asid}}$, by
  \autoref{lemma:strong-internal-fusion}.
  ``(\ref{it2:lemma:external-fusion}) $\Rightarrow$
  (\ref{it3:lemma:external-fusion})'' Let $\astruc=(\univ,\struc) \in
  \reachfusion{\csem{\apred}{\asid}}$ be a structure. It is sufficient
  to prove that $\astruc \substruc \astruc'$ for another structure
  $\astruc' \in \sidsem{\apred}{\asid}$, because $\twof{\astruc} \le
  \twof{\astruc'}$, in this case. Then there exist pairwise disjoint
  structures $\astruc_1=(\univ_1,\struc_1), \ldots,
  \astruc_n=(\univ_n,\struc_n) \in \csem{\apred}{\asid}$ and an
  equivalence relation $\approx \ \subseteq \big(\bigcup_{i=1}^n
  \univ_i\big) \times \big(\bigcup_{i=1}^n \univ_i\big)$, that is
  compatible with $\astruc_1 \comp \ldots \comp \astruc_n$, matches
  only elements from different structures and is not the identity,
  such that $\astruc$ is isomorphic to $(\astruc_1 \comp \ldots \comp
  \astruc_n)_{/\approx}$. By \autoref{def:expandable}, there
  exists a rich canonical model $(\astruc'',\diseq) \in
  \rcsem{\apred}{\asid}$, such that \begin{enumerate*}
\item $\astruc \substruc \astruc''$,
\item $\diseq(u,v)$ holds for no $u \in \supp{\struc_i}$ and $v \in
  \supp{\struc_j}$, where $1 \le i < j \le n$, and
\item for no relation symbol $\arel\in\relations$ and tuples
  $\tuple{u_1,\ldots,u_{\arityof{\arel}}},
  \tuple{v_1,\ldots,v_{\arityof{\arel}}} \in\struc(\arel)$, there
  exist $1 \le i < j \le n$, such that
  $\set{u_1,\ldots,u_{\arityof{\arel}}} \cap \supp{\struc_i} \neq
  \emptyset$, $\set{v_1,\ldots,v_{\arityof{\arel}}} \cap
  \supp{\struc_j} \neq \emptyset$ and
  $\set{u_1,\ldots,u_{\arityof{\arel}}} \cap
  \set{v_1,\ldots,v_{\arityof{\arel}}} \neq \emptyset$.
\end{enumerate*}
By the last two conditions, $\approx$ is compatible with
$(\astruc'',\diseq)$, leading to $\astruc''_{/\approx} \in
\sintfusion{\rcsem{\apred}{\asid}}{} = \sidsem{\apred}{\asid}$ by
\autoref{lemma:strong-internal-fusion}. We conclude by taking
$\astruc' = \astruc''_{/\approx}$.
\end{proof}

The missing direction $\reachfusion{\csem{\apred}{\asid}} \Rightarrow
\ireachfusion{\csem{\apred}{\asid}}$, that allows to establish the
equivalence of the three points of \autoref{lemma:external-fusion},
requires the introduction of further technical notions. The proof of
the main result of this section relies on an algorithm for the
treewidth boundedness of sets $\reachfusion{\csem{\apred}{\asid}}$,
obtained by external fusion of disjoint canonical $\asid$-models of
$\apred$. By the equivalence of the treewidth boundedness of the sets
$\sidsem{\apred}{\asid}$ and $\reachfusion{\csem{\apred}{\asid}}$
(\autoref{lemma:external-fusion} and
\autoref{lemma:internal-external-fusion}), this is also an algorithm for
the \tbsl\ problem for expandable SIDs.

\subsection{Color schemes}

Intuitively, the \emph{color} of an element from the support of a
structure is the set of relation symbols labeling solely that
element. For the given set $\relations$ of relation symbols, we define
the set of \emph{colors} as $\allcols \isdef \pow{\relations}$. The
elements of a structure are labeled with colors as follows:

\begin{defi}\label{def:color-extraction-function}
  The \emph{coloring} of a structure $\astruc = (\univ,\struc)$ is the
  mapping $\funcol{\astruc} : \univ \rightarrow \allcols$ defined as
  $\funcol{\astruc}(u) \isdef \set{ \arel\in \relations ~|~ \tuple{u,
      \ldots, u} \in \struc(\arel) }$.
\end{defi}
Moreover, we define an abstraction of structures as finite multisets
of colors:

\begin{defi}\label{def:multiset-color-abstraction}
  The \emph{multiset color abstraction} $\mcolabs{\astruc} \in
  \mpow{\allcols}$ of a structure $\astruc = (\univ,\struc)$ is
  $\mcolabs{\astruc} \isdef \mset{ \funcol{\astruc}(u) ~|~ u \in
    \supp{\struc}}$.  For an integer $k\geq0$, the \emph{$k$-multiset
    color abstraction} $\kmcolabs{k}{\astruc} \subseteq
  \mpow{\allcols}$ is $\kmcolabs{k}{\astruc} \isdef \set{ M \subseteq
    \mcolabs{\astruc} ~|~ \cardof{M} \le k}$. These abstractions are
  lifted to sets $\structures$ of structures, yielding the sets of
  multisets $\mcolabs{\structures} \isdef \set{ \mcolabs{\astruc} ~|~
    \astruc \in {\structures}}$ and $\kmcolabs{k}{\structures} \isdef
  \bigcup_{\astruc \in \structures} \kmcolabs{k}{\astruc}$.
\end{defi}
Colors are organized in \emph{RGB color schemes}, defined below:

\begin{defi}\label{def:rgb-color-scheme}
  A partition $(\redcols,\greencols,\bluecols)$ of $\allcols$ is an
  \emph{RGB-color scheme} iff: \begin{enumerate}
  \item $\ucolor_1 \cap \ucolor_2 \not= \emptyset$, for all
    $\ucolor_1,\ucolor_2 \in \bluecols$,
  \item $\ucolor_1 \cap \ucolor_2 \not= \emptyset$, for all $\ucolor_1
    \in \greencols$ and $\ucolor_2 \in \bluecols$,
  \item for all $\ucolor_1 \in \redcols$ there exists $\ucolor_2 \in
    \bluecols$ such that $\ucolor_1 \cap \ucolor_2 = \emptyset$.
  \end{enumerate}
\end{defi}
Note that an RGB-color scheme is fully specified by the set
$\bluecols$. Indeed, any color not in $\bluecols$ is unambiguously
placed within $\redcols$ or $\greencols$, depending on whether it is
disjoint from some color in $\bluecols$.  For example,
\autoref{fig:rgb-color-schemes} shows several RGB-color schemes
for the relational signature $\relations=\set{\aarel,\abrel,\acrel}$.

  \begin{figure}[htbp]
    \begin{center}
    \input{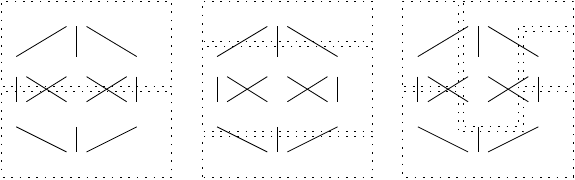_t}
    \caption{\label{fig:rgb-color-schemes} Examples of RGB color schemes}
    \end{center}
  \end{figure}

Because a fusion operation only joins element with disjoint colors,
blue elements can only be joined with red elements, green elements can
be joined with green or red elements, whereas red elements can be
joined with elements of any other color, provided that they are
disjoint subsets of $\relations$. Moreover, a fusion operation can
always join a pair of elements with disjoint colors:

\begin{lem}\label{lemma:single-pair-equivalence-compatible}
  Let $\astruc_1=(\univ_1,\struc_1)$ and
  $\astruc_2=(\univ_2,\struc_2)$ be disjoint structures. Let $u_1 \in
  \supp{\struc_1}$, $u_2 \in \supp{\struc_2}$ be elements such that
  $\funcol{\astruc_1}(u_1) \cap \funcol{\astruc_2}(u_2) = \emptyset$.
  Then, the equivalence relation on $\univ_1 \cup \univ_2$ generated
  by $(u_1,u_2)$ is compatible with $\astruc_1 \comp \astruc_2$.
\end{lem}
\begin{proof}
    We denote by $\approx$ the relation $\set{(u_1, u_2)}^=$ in the
  following.  Let $\arel \in \relations$ be a relation and let
  $\tuple{u_{1,1}, u_{1,2}, \ldots, u_{1,\arityof{\arel}}} \in
  \struc_1(\arel)$, $\tuple{u_{2,2}, u_{2,2}, \ldots,
    u_{2,\arityof{\arel}}} \in \struc_2(\arel)$ be distinct tuples. If
  for some index $i\in \interv{1}{\arityof{\arel}}$ either $u_{1,i}
  \neq u_1$ or $u_{2,i} \neq u_2$ then $u_{1,i} \not\approx u_{2,i}$,
  by the definition of $\approx$. Otherwise, if for all indices $i\in
  \interv{1}{\arityof{\arel}}$ both $u_{1,i}=u_1$ and $u_{2,i}=u_2$
  then $r \in \funcol{\astruc_1}(u_1)$ and $r \in
  \funcol{\astruc_2}(u_2)$.  This implies $\funcol{\astruc_1}(u_1)
  \cap \funcol{\astruc_2}(u_2) \not= \emptyset$ and contradicts the
  hypothesis about the choice of $u_1$, $u_2$.  Therefore, no tuples
  from $\astruc_1$ and $\astruc_2$ respectively are merged by the
  fusion.  Finally, it is also an easy check that no tuples from
  $\astruc_1$ (resp. $\astruc_2$) are merged, because when restricted
  to $\astruc_1$ (resp. $\astruc_2$) the equivalence $\approx$ becomes
  the identity.
\end{proof}

The first ingredient of a decidable condition, equivalent to the
treewidth boundedness of a set $\reachfusion{\csem{\apred}{\asid}}$,
is conformance with an RGB color scheme, defined below:

\begin{defi}\label{def:conforming-structures}
  A set $\structures$ of structures \emph{conforms to
  $(\redcols,\greencols,\bluecols)$} if and only if: \begin{enumerate}
  \item\label{it1:def:conforming-structures} for all structures
    $\astruc=(\univ,\struc)\in\structures$, if
    $\funcol{\astruc}(u)\in\redcols$, for some element $u \in
    \supp{\struc}$, then $\funcol{\astruc}(u')\in\bluecols$, for all
    other elements $u'\in\supp{\struc}\setminus\set{u}$, and
  \item\label{it2:def:conforming-structures} $\mcolabs{\astruc} \cap
    \greencols \subseteq \mset{ \ucolor, \ucolor \mid \ucolor \in
      \greencols}$, for all structures $\astruc \in
    \reachfusion{\structures}$.
  \end{enumerate}
  Moreover, a structure $\astruc\in\structures$ is said to be of
  either type: \begin{itemize}[label=$\triangleright$]
  \item $\redtype$ if~ $\mcolabs{\astruc} \in \mpow{\bluecols \cup \redcols}$
    and $\cardof{\mcolabs{\astruc} \sqcap \redcols} = 1$,
  \item $\greentype$ if~ $\mcolabs{\astruc} \in \mpow{\bluecols \cup \greencols}$ and
    $\cardof{\mcolabs{\astruc} \sqcap \greencols} > 0$, and
  \item $\bluetype$ if~ $\mcolabs{\astruc} \in \mpow{\bluecols}$.
  \end{itemize}
\end{defi}

Conformance to some RGB color scheme is the key to bounding the
treewidth of the sets of structures obtained by external fusion of a
treewidth bounded set of structures.

An equivalence relation $\approx$ is said to be \emph{generated} by a
set of pairs $(u_1,v_1), \ldots, (u_k,v_k)$ if it is the least
equivalence relation, such that $u_i \approx v_i$, for all $i \in
\interv{1}{k}$. Furthermore, we say that $\approx$ is $k$-generated if
$k$ is the minimal cardinality of a set of pairs that generates
$\approx$.

\begin{lem}\label{lemma:twb-conforming-structures}
  Let $\structures$ be a treewidth bounded set of structures
  conforming to an RGB color scheme.  Then, for any structure $\astruc
  \in \reachfusion{\structures}$, the following hold:
  \begin{enumerate}
  \item\label{it1:lemma:twb-conforming-structures} $\astruc$ is
    of type either $\redtype$, $\greentype$ or
    $\bluetype$,
  \item\label{it2:lemma:twb-conforming-structures} if $\astruc =
    (\astruc_1 \comp \astruc_2)_{/\approx}$ for some $\astruc_1,
    \astruc_2 \in \reachfusion{\structures}$ then exactly one of the
    following hold: \begin{enumerate}
    \item\label{it21:lemma:twb-conforming-structures} $\approx$ is
      $1$-generated, or
    \item $\approx$ is $2$-generated and either~ \begin{enumerate*}
    \item\label{it221:lemma:twb-conforming-structures} $\astruc_1$,
      $\astruc_2$ are of type $\redtype$, or
    \item\label{it222:lemma:twb-conforming-structures} $\astruc_1$,
      $\astruc_2$ are of type $\greentype$ and
      $\cardof{\mcolabs{\astruc_1} \sqcap \greencols} =
      \cardof{\mcolabs{\astruc_2} \sqcap \greencols} = 2$
    \end{enumerate*}
    \end{enumerate}

  \item\label{it3:lemma:twb-conforming-structures} $\twof{\astruc} \le \twof{\structures} + 1$.
  \end{enumerate}
\end{lem}
\begin{proof}
  (\ref{it1:lemma:twb-conforming-structures}) By induction on the
  derivation of $\astruc \in \reachfusion{\structures}$ from
  $\structures$. \autoref{table:extfusion-types} summarizes the
  possible types of $\extfusion{\astruc_1}{\astruc_2}$ on
  structures $\astruc_1$ and $\astruc_2$ of types $\redtype$,
  $\greentype$ or $\bluetype$, respectively.
\begin{table}[htbp]
\centering
\begin{tabular}{ c  c  c  c }
\toprule
$\extfusion{\astruc_1}{\astruc_2}$ &
$\astruc_2$ of $\redtype$ type &
$\astruc_2$ of $\greentype$ type &
$\astruc_2$ of $\bluetype$ type \\ \midrule
$\astruc_1$ of $\redtype$ type &
$\redtype, \greentype, \bluetype$ &
$\greentype, \bluetype$ &
$\bluetype$ \\ \midrule
$\astruc_1$ of $\greentype$ type &
$\greentype, \bluetype$ &
$\greentype, \bluetype$ & $\bot$ \\ \midrule
$\astruc_1$ of $\bluetype$ type &
$\bluetype$ & $\bot$ & $\bot$ \\ \bottomrule
\end{tabular}
\caption{The types of structures obtained by external fusion ($\bot$
  means none)}\label{table:extfusion-types}
\end{table}

\skipnoindent (\ref{it2:lemma:twb-conforming-structures})
We distinguish two cases: \begin{itemize}[label=$\triangleright$]
\item $\astruc_1$ is of type $\redtype$: If $\astruc_2$ is of type
  $\bluetype$ or $\greentype$ then $\astruc_1$ and $\astruc_2$ can be
  fused only by equivalences $\approx$ generated by a single pair,
  that contains the element from the support of $\astruc_1$ with color
  in $\redcols$, thus matching the case
  (\ref{it1:lemma:twb-conforming-structures}) from the
  statement. Else, if $\astruc_2$ is of type $\redtype$ then
  $\astruc_1$ and $\astruc_2$ can be fused by equivalences generated
  by at most two pairs, each containing an element with color from
  $\redcols$, from either $\astruc_1$ or $\astruc_2$, thus matching
  the case (\ref{it221:lemma:twb-conforming-structures}) from the
  statement.
\item $\astruc_1$, $\astruc_2$ are both of type $\greentype$: By
  contradiction, assume they can be fused by an equivalence $\approx$
  generated by three pairs of elements $(u_{1i},u_{2i})_{i=1,2,3}$.
  Let $G_{1i}=\funcol{\astruc_1}(u_{1i})$, $G_{2i}
  =\funcol{\astruc_2}(u_{2i})$ be the colors from $\greencols$ of the
  matching elements in the two structures, for $i=1,2,3$.  Then, we
  can construct structures using $\astruc_1$ and $\astruc_2$ where any
  of these colors repeat strictly more than twice, henceforth,
  contradicting the conformance property to the RGB color scheme.  The
  principle of the construction is depicted in \autoref{fig:green}.
  Finally, note that the construction depicted in \autoref{fig:green}
  fuse actually only pairs of colors $(G_{1i},G_{2i})$ for $i=1,2$.
  Henceforth, the conformance property is also contradicted if
  $\astruc_1$ and $\astruc_2$ can be fused by a $2$-generated
  equivalence relation $\approx$, such that the support of either
  $\astruc_1$ or $\astruc_2$ contains more than three elements with
  colors in $\greencols$.
\end{itemize}

  \begin{figure}[htbp]
    \begin{center}
    \input{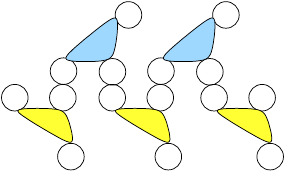_t}
    \caption{\label{fig:green} External fusion of $\greentype$
    structures by $3$-generated matchings}
    \end{center}
  \end{figure}

\skipnoindent (\ref{it3:lemma:twb-conforming-structures})
The previous point shows that, under the hypotheses of the lemma,
every structure in $\reachfusion{\structures}$ is constructed by
external fusion with matchings generated by one or two pairs of
elements. This result can be actually refined, i.e., we can consider
only external fusions where the $1$-generated matchings are applied
before the $2$-generated matchings. That is, assume $\astruc =
((\astruc_1 \comp \astruc_2)_{/\approx_2} \comp
\astruc_3)_{/\approx_1}$ where $\approx_1$, $\approx_2$ are $1$-,
resp. $2$-generated and $\astruc_1$, $\astruc_2$, $\astruc_3 \in
\reachfusion{\structures}$.  Without loss of generality, assume
moreover, $\approx_1$ is matching some element of $\astruc_3$ with an
element of $\astruc_1$ (the other case is symmetric). Then, we can
find $1$-, resp. $2$-generated matchings $\approx_1'$, $\approx_2'$
such that $\astruc = ((\astruc_1 \comp \astruc_3)_{/\approx_1'} \comp
\astruc_2)_{/\approx_2'}$. That is, first fuse $\astruc_1$ and
$\astruc_3$ by a single matching pair, then fuse the result with
$\astruc_2$ by two matching pairs. Therefore, we can w.l.o.g. assume
in the following that $\reachfusion{\structures} = \reachfusiontwo{
  \reachfusionone{ \structures} }$, where $\reachfusionk{k}$ denotes
the external fusion using only $k$-generated matchings.

Given a tree decomposition $T$ for a structure
$\astruc=(\univ,\struc)$ and an equivalence relation $\approx
\ \subseteq\univ\times\univ$, we denote by $T_{/\approx}$ the tree
decomposition of the quotient structure $S_{/\approx}$, obtained by
the relabeling of elements $u$ in the bags of $T$ by their
representatives $[u]_{\approx}$. We prove the following facts:

\begin{fact}\label{fact1:twb-conforming-structures}
  $\twof{ \reachfusionone{\structures} } \le \twof{\structures}$.
\end{fact}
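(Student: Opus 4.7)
The plan is to prove Fact~\ref{fact1:twb-conforming-structures} by induction on the number of external fusion steps used to build a structure in $\reachfusionone{\structures}$. The base case, where $\astruc\in\structures$, is immediate from the definition of $\twof{\structures}$.

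For the inductive step, consider $\astruc = (\astruc_1 \comp \astruc_2)_{/\approx}$ with $\astruc_1,\astruc_2 \in \reachfusionone{\structures}$ disjoint and $\approx$ generated by a single pair $(u_1,u_2)$, where $u_i\in\supp{\struc_i}$. By the inductive hypothesis, $\astruc_1$ and $\astruc_2$ have tree decompositions $T_1$ and $T_2$ of width at most $\twof{\structures}$. I would then construct a tree decomposition $T$ for $\astruc$ as follows: first, pick bags $b_1$ in $T_1$ and $b_2$ in $T_2$ that contain $u_1$ and $u_2$, respectively, which exist by the second condition of \autoref{def:treewidth}; second, take the disjoint union of $T_1$ and $T_2$ and add a single tree edge between $b_1$ and $b_2$; third, relabel every occurrence of $u_2$ in the bags of $T_2$ as $u_1$, i.e.\ as the chosen representative of the class $[u_1]_\approx=[u_2]_\approx$.

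The verification that $T$ is a tree decomposition of $\astruc$ should be routine: tuple coverage follows from \autoref{def:quotient}, since every tuple of $\astruc$ descends from a tuple of $\astruc_1$ or $\astruc_2$ and is therefore covered, after relabeling, by the corresponding bag of $T_1$ or $T_2$; the connectivity condition for any element other than the representative of $[u_1]_\approx$ is inherited directly from $T_1$ or $T_2$, while for the representative itself the two sets of bags containing it (inherited from $T_1$ and from the relabeled $T_2$) are each connected in their original tree and are linked across the new edge between $b_1$ and $b_2$. Crucially, no bag is enlarged by the construction, so $\width{T}\le\max(\width{T_1},\width{T_2})\le\twof{\structures}$.

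There is no real obstacle here: the point of the argument is that a $1$-generated matching requires only a single hinge connecting the two decompositions, so the width bound is preserved without any slack. This is also what makes the subsequent case of $2$-generated matchings genuinely different, since there one cannot stitch two tree decompositions together along a single shared bag without either widening an existing bag or paying an additive $+1$, which is exactly the loss appearing in item~(\ref{it3:lemma:twb-conforming-structures}) of the enclosing lemma.
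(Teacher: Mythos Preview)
Your proof is correct and follows essentially the same route as the paper: induction on the derivation, then in the inductive step pick bags containing $u_1$ and $u_2$ in the two tree decompositions, join them by a single edge, and quotient the labels by $\approx$. The paper phrases the join as ``re-root $T_2$ at a bag containing $u_2$ and attach it under a bag of $T_1$ containing $u_1$'' (since tree decompositions here are rooted by \autoref{def:treewidth}), but this is the same construction modulo the cosmetic choice of root.
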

\begin{proof} By induction on the derivation of
  $\astruc\in\reachfusionone{{\structures}}$ from $\structures$.

  \skipnoindent \underline{Base case:} Immediate,
  as for any $\astruc \in \structures$ we have $\twof{\astruc} \le
  \twof{\structures}$.

  \skipnoindent \underline{Induction step:}
  Consider $\astruc = (\astruc_1 \comp \astruc_2)_{/\approx}$ where
  $\approx \ = \set{(u_1, u_2)}^=$.  Let $T_1$, $T_2$ be tree
  decompositions of respectively $\astruc_1$, $\astruc_2$ such that
  $\width{T_1}, \width{T_2} \le \twof{\structures}$. We first build a
  tree decomposition $T_{12}$ of $\astruc_1 \comp \astruc_2$ by
  \begin{itemize}[label=$\triangleright$]
  \item transforming $T_2$ into $T_2'$ by reversing edges such that a
    node $n_2$ containing $u_2$ in $T_2$ becomes the root of $T_2'$ and
  \item linking the root of $T_2'$ to a node $n_1$ of $T_1$ containing
    $u_1$.
  \end{itemize}
  This ensures that $T_{12/\approx}$ is a valid
  tree decomposition for $\astruc$, and moreover $\width{T_{12/\approx}}
  = \width{T_{12}} = \max(\width{T_1},\width{T_2}) \le
  \twof{\structures}$.
\end{proof}

\begin{fact}\label{fact2:twb-conforming-structures}
  Let  $\astruc = (\univ,\struc) \in \reachfusiontwo{ \reachfusionone{\structures}}$
  be a structure. Then, one of the following holds: \begin{enumerate}[(A)]
  \item\label{it1:fact2:twb-conforming-structures}
    $\twof{\astruc} \le \twof{\structures}$,
  \item\label{it2:fact2:twb-conforming-structures} $\astruc
    \mbox{ is of type } \bluetype,~ \twof{\astruc} \le
    \twof{\structures} + 1$,
  \item\label{it3:fact2:twb-conforming-structures} $\astruc
    \mbox{ is of type } \greentype$ and there exists
    $T=(\nodes,\edges,r,\alabel)$ a tree decomposition of $\astruc$
    such that $\width{T} \le \twof{\structures} + 1$ and
    $\funcol{\astruc}(u) \in \greencols$ implies $u \in \alabel(r)$,
    for all $u \in \supp{\struc}$.
  \end{enumerate}
\end{fact}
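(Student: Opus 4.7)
The plan is induction on the derivation of $\astruc \in \reachfusiontwo{\reachfusionone{\structures}}$. The base case $\astruc \in \reachfusionone{\structures}$ follows immediately from Fact~\ref{fact1:twb-conforming-structures}, which provides (A). For the inductive step $\astruc = (\astruc_1 \comp \astruc_2)_{/\approx}$ with $\approx$ a $2$-generated matching, Lemma~\ref{lemma:twb-conforming-structures}(\ref{it2:lemma:twb-conforming-structures}) splits into two subcases: (b.i) both $\astruc_i$ of type $\redtype$, or (b.ii) both of type $\greentype$ with exactly two green elements each.

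In subcase (b.i), since neither (B) nor (C) applies to $\redtype$ structures, the IH forces (A) for both $\astruc_i$, so tree decompositions $T_i$ of width at most $\twof{\structures}$ exist. Two blue elements cannot be fused (blue colors pairwise intersect by Def.~\ref{def:rgb-color-scheme}), and each $\astruc_i$ has a unique red element $r_i$, so an injectivity argument on the matching pins the two pairs to $(r_1,v)$ and $(u,r_2)$, with $u$ blue in $\astruc_1$ and $v$ blue in $\astruc_2$. I then augment each $T_i$ by adding one element to every bag on the path between a bag containing $r_i$ and a bag containing the other locally matched element, raising the width by at most one and creating a bridge node $n_i^\ast$ that jointly contains both. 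Linking $n_1^\ast$ and $n_2^\ast$ by an edge and quotienting by $\approx$ yields a decomposition of $\astruc$ of width at most $\twof{\structures}+1$. Since both reds fuse with blues, no red element survives; $\astruc$ is then $\bluetype$, giving (B), or $\greentype$, in which case the only candidate green elements are the two merged classes $[r_1]$ and $[u]$, both inside $n_1^\ast$, so rooting there gives (C).

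In subcase (b.ii), the IH yields (A) or (C) for each $\astruc_i$. When only (A) is available I upgrade to (C) at cost $+1$ in width by taking a path between bags containing the two green elements $g_1^i,g_2^i$ of $\astruc_i$, adding $g_2^i$ to every bag along it, and attaching a new root bag $\{g_1^i,g_2^i\}$ adjacent to a bag containing $g_1^i$; the subtree condition of Def.~\ref{def:treewidth} survives because $g_1^i$'s original subtree is extended only by the new root, while $g_2^i$'s is extended along the added path. With both $\astruc_i$ then satisfying (C), I join $T_1$ and $T_2$ under a fresh root bag holding the (at most two) $\approx$-representatives of the four greens, linked to each original root, and quotient by $\approx$. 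The resulting decomposition has width at most $\twof{\structures}+1$, with all green elements in the fresh root; this yields (B) when $\astruc$ is $\bluetype$ and (C) when $\astruc$ is $\greentype$.

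The main obstacle is the tight $+1$ width budget: pushing a single element through every bag of a path costs at most one in width, but two cannot be pushed independently without paying $+2$. The constructions dodge this because in (b.i) the two identified classes meet in the common bridge bag $n_1^\ast$, and in (b.ii) the four green elements collapse pairwise under $\approx$ into at most two classes that fit inside a single fresh root bag. Verifying the subtree condition of Def.~\ref{def:treewidth} for each merged class is then routine, as each class's bag set is the concatenation of the two original subtrees through the bridge or the fresh root.
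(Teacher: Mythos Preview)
Your proof is correct and follows the same inductive skeleton as the paper, differing only cosmetically in the tree-decomposition surgery (you propagate one element along a path and introduce fresh root bags where the paper propagates to all bags and re-roots). Two minor remarks: the $\greentype$ outcome you allow in case (b.i) is in fact vacuous --- by part~(\ref{it1:lemma:twb-conforming-structures}) of the lemma, the $1$-fusion along $(r_1,b_2)$ alone lies in $\reachfusion{\structures}$ and still contains the red $r_2$, so it must be $\redtype$, forcing $R_1\cup B_2\in\bluecols$ (and symmetrically $B_1\cup R_2\in\bluecols$), which is why the paper jumps straight to~(B); and in your (A)$\to$(C) upgrade the new root $\{g_1^i,g_2^i\}$ must be attached specifically to the $g_1^i$-end of the augmented path (which now also carries $g_2^i$), since otherwise the new root --- which itself contains $g_2^i$ --- disconnects $g_2^i$'s bag-set, a point your connectivity justification glosses over.
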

\begin{proof}
  By induction on the derivation of
  $\astruc\in\reachfusiontwo{\reachfusionone{\structures}}$ from
  $\reachfusionone{\structures}$.

  \skipnoindent \underline{Base case:} Immediate,
  as we already shown $\twof{\astruc} \le \twof{\structures}$, that
  is, \ref{it1:fact2:twb-conforming-structures} for any $\astruc \in
  \reachfusionone{\structures}$.

  \skipnoindent \underline{Induction step:}
  Consider $\astruc = (\astruc_1 \comp \astruc_2)_{/\approx}$ where
  $\approx \ = \set{(u_{11}, u_{21}),(u_{12},u_{22})}^=$.  Since
  $\approx$ is $2$-generated, we know from the previous point
  (\ref{it2:lemma:twb-conforming-structures}) that
  either: \begin{itemize}[label=$\triangleright$]
  \item $\astruc_1$, $\astruc_2$ are of type $\redtype$: From the
    induction hypothesis on $\astruc_1$ and $\astruc_2$ it follows
    that both must satisfy
    \ref{it1:fact2:twb-conforming-structures}, hence
    $\twof{\astruc_1} \le \twof{\structures}$, $\twof{\astruc_2} \le
    \twof{\structures}$ respectively.  We are therefore in the
    situation of composing two structures of type $\redtype$ by a
    $2$-generated matching, hence obtaining the structure $\astruc$ of
    type $\bluetype$. Without loss of generality consider
    $\funcol{\astruc_1}(u_{11}) \in \redcols$,
    $\funcol{\astruc_1}(u_{12}) \in \bluecols$,
    $\funcol{\astruc_2}(u_{21}) \in \bluecols$,
    $\funcol{\astruc_2}(u_{22}) \in \redcols$.  Let $T_1$, $T_2$ be
    tree decompositions of $\astruc_1$, $\astruc_2$ respectively.
    First, we construct a tree decomposition $T_{12}$ for $\astruc_1
    \comp \astruc_2$ as follows: \begin{itemize}
  \item construct $T_1'$ from $T_1$ by propagating the element
    $u_{11}$ to all the nodes,
  \item construct $T_2'$ from $T_2$ by propagating the element
    $u_{22}$ to all the nodes and then reverting the edges such that a
    node $n_2$ containing the element $u_{21}$ becomes the root,
  \item link the root node of $T_2'$ to a node $n_1$ of $T_1'$
    containing the element $u_{12}$.
  \end{itemize}
  This ensures that $T_{12/\approx}$ is a valid tree decomposition for
  $\astruc$, and moreover $\width{T_{12/\approx}} = \width{T_{12}} =
  \max(\width{T_1'},\width{T_2'}) = \max (\width{T_1} + 1, \width{T_2} +
  1) \le \twof{\structures} + 1$.  This completes the proof that
  $\astruc$ satisfies condition \ref{it2:fact2:twb-conforming-structures}.
\item $\astruc_1$, $\astruc_2$ are of type $\greentype$ and
  $\cardof{\mcolabs{\astruc_1} \sqcap \greencols} =
  \cardof{\mcolabs{\astruc_2} \sqcap \greencols} = 2$: First, let us
  observe that the structure $\astruc$ is either of type $\bluetype$
  or $\greentype$.  According to the induction hypothesis, we consider
  the following two cases: \begin{itemize}
  \item both $\astruc_1$ and $\astruc_2$ satisfy the condition
    \ref{it3:fact2:twb-conforming-structures}, namely there exist the
    tree decompositions $T_1$, $T_2$ of width at most
    $\twof{\structures} + 1$ such that moreover all the elements with
    colors in $\greencols$ are located at their root nodes.  Then, we
    can construct a tree decomposition $T_{12}$ for $\astruc_1 \comp
    \astruc_2$ by simply linking the root of $T_2$ as a child to the
    root of $T_1$.  We obtain that $T_{12/\approx}$ is a valid
    decomposition for $\astruc$ and satisfies $\width{T_{12/\approx}}
    = \width{T_{12}} = \max(\width{T_1}, \width{T_2}) \le
    \twof{\structures} + 1$.  Moreover, if $\astruc$ is of type
    $\greentype$ observe that all the elements with colors in
    $\greencols$ are located at the root node of $T_{12/\approx}$.
    Therefore, in any case, the structure $\astruc$ satisfies either
    \ref{it2:fact2:twb-conforming-structures} or
    \ref{it3:fact2:twb-conforming-structures}.
  \item either one or both of $\astruc_1$ or $\astruc_2$ satisfy the
    condition \ref{it1:fact2:twb-conforming-structures}.  Without loss
    of generality consider $\twof{\astruc_1} \le \twof{\structures}$.
    We know however that $\astruc_1$ satisfies
    $\cardof{\mcolabs{\astruc_1} \sqcap \greencols} = 2$, that is, it
    has exactly two elements with colors in $\greencols$.  But then,
    we can show that $\astruc_1$ satisfies the condition
    \ref{it3:fact2:twb-conforming-structures} as well.  That is,
    consider a tree decomposition $T_1$ for $\astruc_1$ such that
    $\width{T_1} \le \twof{\structures}$.  Let $u_{11}$, $u_{12}$ be
    the two elements with colors in $\greencols$.  We can build a tree
    decomposition $T_1'$ fulfilling
    \ref{it3:fact2:twb-conforming-structures} by first propagating the
    element $u_{11}$ to all the nodes of $T_1$ and then reverting the
    edges such that some node containing the element $u_{12}$ becomes
    the root.  Obviously, $\width{T_1'} = \width{T_1} + 1 \le
    \twof{\structures} + 1$ and all the elements with color in
    $\greencols$ are located at the root node.  We can proceed
    similarly with $\astruc_2$.  Then, the proof is completed as in
    the first case. \qedhere
  \end{itemize}
  \end{itemize}
\end{proof}
This completes the proof of the point
(\ref{it3:lemma:twb-conforming-structures}) from the statement.
\end{proof}

Moreover, conformance with RGB schemes allow us to
infer a bound on the set obtained by applying external fusion to a
treewidth bounded set of structures:

\begin{lem}\label{cor:twb-conforming-structures}
  Let $\structures$ be a treewidth bounded set of structures, that
  conforms to an RGB color scheme. Then, we have \(
  \twof{\intfusion{\reachfusion{\structures}}} \le \twof{\structures}
  + 1 \).
\end{lem}
\begin{proof}
  $\reachfusion{\structures}$ is treewidth bounded as a direct
  consequence of \autoref{lemma:twb-conforming-structures}, that
  establishes the bounds for every type of structure from
  $\reachfusion{\structures}$. Moreover,
  $\intfusion{\reachfusion{\structures}}$ is treewidth-bounded
  because, using the tree decompositions $T$ constructed for
  structures $\astruc$ in $\reachfusion{\structures}$ one obtains tree
  decomposition $T'$ and treewidth bounds for any structures $\astruc'
  = \astruc_{/\approx}$ obtained by internal fusion, as follows:
  \begin{itemize}[label=$\triangleright$]
  \item if $\astruc$ is of type $\redtype$ then by internal fusion one
    glues the unique element $u_1$ with color in $\redcols$ to some
    other element in the structure. We know, from the inductive
    property used in the proof of
    \autoref{lemma:twb-conforming-structures}
    (\ref{it3:lemma:twb-conforming-structures}) that if $\astruc$ is
    of type $\redtype$ then it must satisfies condition
    \ref{it1:fact2:twb-conforming-structures} of Fact
    \ref{fact2:twb-conforming-structures} that is, $\twof{\astruc} \le
    \twof{\structures}$.  Therefore, one can construct $T'$ from $T$
    by replicating $u_1$ in all nodes and then $T'_{/\approx}$ is a
    valid tree decomposition for $\astruc'$.  Obviously $\width{T'}
    \le \width{T} + 1 \le \twof{\structures} + 1$.
  \item if $\astruc$ is of type $\greentype$ then by internal fusion
    one glue elements with color in $\greencols$. As before, using the
    same inductive property, we know that $\astruc$ satisfies either
    condition \ref{it1:fact2:twb-conforming-structures} or
    \ref{it3:fact2:twb-conforming-structures} of
    \autoref{fact2:twb-conforming-structures}.  If $\astruc$
    satisfies condition \ref{it3:fact2:twb-conforming-structures}, as
    all elements that could be glued are already present in the root
    node, $T_{/\approx}$ is a valid tree decomposition for $\astruc'$.
    Obviously, the treewidth bound remains unchanged, that is, at most
    $\twof{\structures} + 1$.  If $\astruc$ satisfies condition
    \ref{it1:fact2:twb-conforming-structures} and two of its nodes
    $u_1$, $u_2$ with colors in $\greencols$ can be fused, then two
    copies of $\astruc$, respectively $\astruc'$, $\astruc''$ can also be fused via the two-pair matching $(u_1',u_2'')$, $(u_1'',u_2')$.
    Henceforth, $u_1$ and $u_2$ must be the unique elements with
    colors in $\greencols$, otherwise contradicting the point
    (\ref{it2:lemma:twb-conforming-structures}) of
    \autoref{lemma:twb-conforming-structures}.  But then, from the
    tree decomposition $T$ of $\astruc$ such that $\width{T} \le
    \twof{\structures}$ we can construct the tree decomposition $T'$
    be simply propagating $u_2$ to all other nodes in $T$ and hence,
    preserving the bound of $\twof{\structures} + 1$.
  \item if $\astruc$ is of type $\bluetype$ then no non-trivial
    internal fusion exists, and obviously, the treewidth bound remains
    unchanged. \qedhere
  \end{itemize}
\end{proof}

\subsection{Connected structures}

We shall check conformance with RGB schemes for sets of
\emph{maximally connected} structures, defined below:

\begin{defi}\label{def:connected}
A \emph{path} from $u$ to $v$ in a structure $\astruc=(\univ,\struc)$
is a finite sequence of tuples:
\[\tuple{u_{1,1}, \ldots, u_{1,n_1}} \in \struc(\arel_1), \ldots, \tuple{u_{k,1}, \ldots, u_{k,n_k}} \in \struc(\arel_k) \text{, for some }
\arel_1, \ldots, \arel_k \in \relations\]
where $u \in \set{u_{1,1},
  \ldots, u_{1,n_1}}$, $v \in \set{u_{k,1}, \ldots, u_{k,n_k}}$ and
$\{u_{i,1}, \ldots, u_{i,n_i}\} \cap \{u_{i+1,1}, \ldots,
u_{i+1,n_{i+1}}\} \neq \emptyset$, for all $i \in
\interv{1}{k-1}$. The structure $\astruc$ is \emph{connected} iff
there exists a path from $u$ to $v$, for all $u,v\in\supp{\struc}$.
\end{defi}

\begin{defi}\label{def:split-connected}
  A structure $\astruc_1$ is a \emph{maximal connected substructure}
  of another structure $\astruc_2$, denoted $\astruc_1 \mcsubstruc
  \astruc_2$, iff \begin{enumerate*}[(i)]
  \item $\astruc_1 \substruc \astruc_2$ (see \autoref{def:substructure}),
  \item $\astruc_1$ is connected, and
  \item for any connected substructure $\astruc_1' \substruc
    \astruc_2$, we have $\astruc_1 \substruc \astruc_1'$ only if
    $\astruc_1 = \astruc_1'$.
  \end{enumerate*}
  For a structure $\astruc$ we denote by $\funsplit{\astruc} \isdef
  \set{\astruc' ~|~ \astruc' \mcsubstruc \astruc}$ the set of
  maximally connected substructures, lifted to sets of structures
  $\structures$ as $\funsplit{\structures} \isdef \cup_{\astruc \in
    \structures} \funsplit{\astruc}$.
\end{defi}
Note that $\twof{\structures} = \twof{\funsplit{\structures}}$ for any
set of structures $\structures$. The next lemma shows that both
internal and external fusions preserve maximally connected
substructures:

\begin{lem}\label{lemma:split-fusion}
  For each set $\structures$ of structures, the following
  hold: \begin{enumerate}
  \item\label{it1:lemma:split-fusion}
    $\funsplit{\reachfusion{\structures}} =
    \reachfusion{\funsplit{\structures}}$, and
  \item\label{it2:lemma:split-fusion}
    $\funsplit{\ireachfusion{\structures}} =
    \ireachfusion{\funsplit{\structures}}$.
    \end{enumerate}
\end{lem}
\begin{proof}
  For space reasons, this proof is given in \autoref{app:split-fusion}.
\end{proof}

The core of our algorithm is a decidable equivalent condition for the
treewidth boundedness of a set obtained by applying external fusion to
a set of connected structures. This condition is that, in any of the
structures produced by external fusion, there is no way of connecting
six elements $u_1, v_1, w_1$ and $u_2, v_2, w_2$, labeled with
non-disjoint colors $\ucolor_1$ and $\ucolor_2$, respectively. Assume
that this condition is violated by some structures $\astruc_1$ and
$\astruc_2$ with elements $u_1, v_1, w_1$ and $u_2, v_2, w_2$, such
that $\ucolor_1 \cap \ucolor_2 = \emptyset$.  In this case,
\autoref{fig:grids} depicts the construction of a structure with an
$n\times n$ square grid minor, of treewidth at least $n$, for any
$n\geq 1$. Intuitively, $\ucolor_1 \cap \ucolor_2 = \emptyset$ allows
to glue the elements $u_1$ with $u_2$, $v_1$ with $v_2$ and $w_1$ with
$w_2$, respectively.

\begin{lem}\label{lemma:btw-external-fusion}
  The following are equivalent, for any treewidth-bounded set
  $\structures$ of structures: \begin{enumerate}
  \item\label{it1:lemma:btw-external-fusion}
    $\reachfusion{\structures}$ is treewidth bounded,
  \item\label{it2:lemma:btw-external-fusion}
    $\mset{\ucolor_1,\ucolor_1,\ucolor_1},\mset{\ucolor_2,\ucolor_2,\ucolor_2}
    \in \kmcolabs{3}{(\reachfusion{\funsplit{\structures}})}$ implies
    $\ucolor_1 \cap \ucolor_2 \neq \emptyset$, for all
    $\ucolor_1,\ucolor_2\in\allcols$,
  \item\label{it3:lemma:btw-external-fusion} $\funsplit{\structures}$
    conforms to some RGB color scheme.
  \end{enumerate}
\end{lem}
\begin{proof}
  ``(\ref{it1:lemma:btw-external-fusion}) $\Rightarrow$
  (\ref{it2:lemma:btw-external-fusion})'' If
  $\reachfusion{\structures}$ is treewidth-bounded then
  $\funsplit{\reachfusion{\structures}}$ is treewidth-bounded. Using
  \autoref{lemma:split-fusion} the later set is equal to
  $\reachfusion{\funsplit{\structures}}$ and henceforth treewidth
  bounded as well. By contradiction, assume that
  (\ref{it2:lemma:btw-external-fusion}) does not hold. Then, there
  exist colors $\ucolor_1,\ucolor_2 \in \allcols$, connected
  structures $\astruc_1,\astruc_2 \in
  \reachfusion{\funsplit{\structures}})$ such that
  $\mset{\ucolor_1,\ucolor_1,\ucolor_1} \in \mcolabs{\astruc}_1$,
  $\mset{\ucolor_2,\ucolor_2,\ucolor_2} \in \mcolabs{\astruc}_2$ and
  moreover $\ucolor_1 \cap \ucolor_2 = \emptyset$. We shall use
  $\astruc_1$ and $\astruc_2$ to build infinitely many connected
  structures containing arbitrarily large square grid minors. First,
  construct the connected structure ${\astruc_{12}} \in
  \reachfusion{\funsplit{\structures}}$ by fusing one pair $(u_1,u_2)$
  with colors $\ucolor_1$, $\ucolor_2$.  Let $v_1$, $w_1$ resp. $v_2$,
  $w_2$ be the remaining distinct elements of $\astruc_{12}$ with
  color $\ucolor_1$, $\ucolor_2$ from respectively $\astruc_1$,
  $\astruc_2$.  For arbitrarily positive $n$, consider $n\times n$
  disjoint copies $(\astruc_{12}^{i,j})_{i,j=1,n}$ of $\astruc_{12}$.
  Let $\approx^{1,j}$ be $\set{(v_1^{1,j}, v_2^{1,j-1})}^=$,
  $\approx^{i,1}$ be $\set{(w_2^{i,1},w_1^{i-1,1})}^=$,
  $\approx^{i,j}$ be $\set{(v_1^{i,j}, v_2^{i,j-1}),
    (w_2^{i,j},w_1^{i-1,j})}^=$ for all $i,j=2,n$. Second, construct
  the grid-like connected structure $X^{n,n} \in
  \reachfusion{\funsplit{\structures}}$:
  \[
    X^{n,n} = (...( ... ((\astruc_{12}^{1,1}
    \comp \astruc^{1,2}_{12})_{/\approx^{1,2}} \comp
    \astruc^{2,1}_{12})_{/\approx^{2,1}} \comp...  \comp
    \astruc^{i,j}_{12})_{/\approx^{i,j}} \comp ...  \comp
    \astruc^{n,n}_{12})_{/\approx^{n,n}}
  \]
  where structures $S^{i,j}_{12}$ are added to the fusion in
  increasing order of $i+j$. The construction is illustrated in
  \autoref{fig:grids}. We can show that $X^{n,n}$ contains an $n
  \times n$ square grid minor. Finally, as $n$ can be taken
  arbitrarily large, we conclude that
  $\reachfusion{\funsplit{\structures}}$ contains structures with
  arbitrarily large square grid minors, it is not treewidth-bounded,
  contradicting (\ref{it1:lemma:btw-external-fusion}).

  \begin{figure}[htbp]
    \begin{center}
    \input{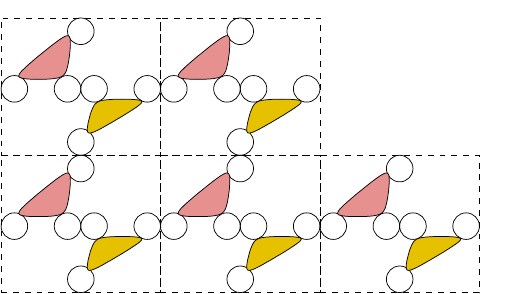_t}
    \caption{\label{fig:grids} The principle of grid construction}
    \end{center}
  \end{figure}

  \skipnoindent ``(\ref{it2:lemma:btw-external-fusion}) $\Rightarrow$
    (\ref{it3:lemma:btw-external-fusion})'' We define a RGB color scheme
  by selecting:
  \[
  \bluecols = \set{ \ucolor \in \allcols ~|~ \mset{\ucolor,\ucolor,\ucolor}
    \in \kmcolabs{3}{(\reachfusion{\funsplit{\structures}})}}
  \]
  Since (\ref{it2:lemma:btw-external-fusion}) holds, this is a valid
  definition for $\bluecols$, which induces a partitioning of the remaining colors into $\greencols$ and
  $\redcols$.  
  We show that $\funsplit{\structures}$ is conforming to
  this RGB partitioning, by checking the two points of
  \autoref{def:conforming-structures}: \begin{itemize}[left=.5\parindent]
  \item[(\ref{it1:def:conforming-structures})] Let $\astruc \in
    \funsplit{\structures}$ and prove that for any two colors
    $\ucolor_1,\ucolor_2\in\allcols$, if $\mset{\ucolor_1, \ucolor_2}
    \subseteq \mcolabs{S}$ and $\ucolor_1 \in \redcols$ then
    $\ucolor_2 \in \bluecols$. Since $\ucolor_1 \in \redcols$, there
    must exists a color $\ucolor_1' \in \bluecols$, such that
    $\ucolor_1 \cap \ucolor_1' = \emptyset$, by
    \autoref{def:rgb-color-scheme}. By the definition of $\bluecols$
    in our RGB-color scheme, this further implies $\mset{\ucolor_1',
      \ucolor_1',\ucolor_1'} \in
    \kmcolabs{3}{(\reachfusion{\funsplit{\structures}})}$.
    Henceforth, there exists a structure $\astruc' \in
    \reachfusion{\funsplit{\structures}}$ such that $\mset{\ucolor_1',
      \ucolor_1',\ucolor_1'} \subseteq \mcolabs{\astruc'}$.  We can
    now use $\astruc'$ and three disjoint copies of $\astruc$ to build
    a new structure $\astruc''$ by gluing progressively, each one of
    the three elements of color $\ucolor_1'$ in $\astruc'$ to the
    element of color $\ucolor_1$ of $\astruc$.  Then, by construction,
    the structure $\astruc''$ will also contain three elements of
    color $\ucolor_2$, one from each disjoint copy of $\astruc$.
    Therefore, $\mset{\ucolor_2, \ucolor_2, \ucolor_2} \in
    \mcolabs{\astruc''}$ and because $\astruc'' \in
    \reachfusion{\funsplit{\structures}}$ this implies
    $\mset{\ucolor_2, \ucolor_2, \ucolor_2} \in
    \kmcolabs{3}{(\reachfusion{\funsplit{\structures}})}$ and
    therefore $\ucolor_2 \in \bluecols$.
  \item[(\ref{it2:def:conforming-structures})] By contradiction, let $\astruc
    \in \reachfusion{\funsplit{\structures}}$ be such that $
    \mcolabs{\astruc} \sqcap \greencols \not\subseteq \mset{ \ucolor,
      \ucolor ~|~ \ucolor \in \greencols}$.  Then there exists
    $\ucolor' \in (\mcolabs{\astruc} \sqcap \greencols) \setminus
    \mset{ \ucolor, \ucolor ~|~ \ucolor \in \greencols}$, i.e.,
    $\ucolor' \in \greencols$ and $\mset{\ucolor',\ucolor',\ucolor'}
    \subseteq \mcolabs{\astruc}$. The latter implies
    $\mset{\ucolor',\ucolor',\ucolor'} \in \kmcolabs{3}{\astruc}
    \subseteq \kmcolabs{3}{(\reachfusion{\funsplit{\structures}})}$.
    But this implies $\ucolor' \in \bluecols$ according to the
    definition of the RGB color scheme, contradicting $\ucolor' \in
    \greencols$.
\end{itemize}

  \skipnoindent ``(\ref{it3:lemma:btw-external-fusion}) $\Rightarrow$
    (\ref{it1:lemma:btw-external-fusion})'' By
  \autoref{cor:twb-conforming-structures},
  $\reachfusion{\funsplit{\structures}}$ is treewidth bounded.  Then, by
  \autoref{lemma:split-fusion}, $\funsplit{\reachfusion{\structures}}$
  is treewidth bounded, thus $\reachfusion{\structures}$ is treewidth
  bounded.
\end{proof}

A first consequence of this result is the equivalence between the
treewidth boundedness of the sets $\sidsem{\apred}{\asid}$ and
$\reachfusion{\csem{\apred}{\asid}}$. The following lemma establishes
this equivalence, by providing the missing direction to
\autoref{lemma:external-fusion}:

\begin{lem}\label{lemma:internal-external-fusion}
  Given a SID $\asid$ and a nullary predicate symbol $\apred$,
  $\reachfusion{\csem{\apred}{\asid}}$ is treewidth bounded only if
  $\ireachfusion{\csem{\apred}{\asid}}$ is treewidth bounded.
\end{lem}
\begin{proof}
  For any set of structures $\structures$, we have
  $\ireachfusion{{\structures}} =
  \intfusion{\reachfusion{\structures}}$, because the operations of
  internal and external fusion commute, namely
  $\extfusion{\intfusion{\astruc_1}}{\astruc_2} \subseteq
  \intfusion{\extfusion{\astruc_1}{\astruc_2}}$, for any structures
  $\astruc_1, \astruc_2$. By \autoref{lemma:btw-external-fusion},
  $\reachfusion{\csem{\apred}{\asid}}$ is treewidth-bounded only if
  $\funsplit{\csem{\apred}{\asid}}$ conforms to an RGB color
  scheme. Then,
  $\intfusion{\reachfusion{\funsplit{\csem{\apred}{\asid}}}} =
  \ireachfusion{\funsplit{\csem{\apred}{\asid}}} =
  \funsplit{\ireachfusion{\csem{\apred}{\asid}}}$ is
  treewidth-bounded, by \autoref{cor:twb-conforming-structures} and
  \autoref{lemma:split-fusion}. Thus,
  $\ireachfusion{\csem{\apred}{\asid}}$ is treewidth bounded.
\end{proof}

Our algorithm that decides the treewidth boundedness of a set
$\reachfusion{\csem{\apred}{\asid}}$ checks whether the set
$\kmcolabs{3}{(\reachfusion{\funsplit{\csem{\apred}{\asid}}})}$ meets
condition (\ref{it2:lemma:btw-external-fusion}) of
\autoref{lemma:btw-external-fusion}. For this check to be effective,
the latter set must be constructed in finite time from the description
of $\asid$ and $\apred$, provided as input. This construction proceeds
in three consecutive stages. First, we show that, for any set
$\structures$ of structures, the $k$-color abstraction
$\kmcolabs{k}(\reachfusion{\structures})$ can be built from
$\kmcolabs{k}{\structures}$ by an effectively computable abstract
operator. Second, we build a SID $\csid$ and a nullary predicate
$\ppred$, such that $\csem{\ppred}{\csid} =
\funsplit{\csem{\apred}{\asid}}$, i.e., it encodes the set of
maximally connected substructures from some canonical $\asid$-model of
$\apred$
(\autoref{subsec:maximally-connected-substructures}). Finally, we
compute the $k$-multiset color abstraction of $\csem{\ppred}{\csid}$
(\autoref{subsec:external-fusion-k-multiset}). We end this section
with a proof of decidability for the treewidth boundedness problem for
expandable SIDs (\autoref{thm:expandable-tb}).

\subsection{Color abstractions of externally fused sets}
\label{subsec:k-multiset-color-abstraction}

We describe now the effective construction of a $k$-multiset
abstraction $\kmcolabs{k}{(\reachfusion{\structures})}$ from the
abstraction $\kmcolabs{k}{\structures}$ of a set $\structures$ of
structures, for a given integer $k\geq1$. First, as we are interested
only in $k$-multisets color abstractions, we can restrict external
fusion to $1$-generated matchings, with no loss of generality.

\begin{defi}\label{def:external-fusion-one}
  The \emph{single-pair external fusion} of disjoint structures
  $\astruc_1=(\univ_1,\struc_1)$ and $\astruc_2=(\univ_2,\struc_2)$ is
  the external fusion (\autoref{def:external-fusion}) induced by
  $1$-generated matchings. We denote by
  $\extfusionone{\astruc_1}{\astruc_2}$ the set of structures obtained
  by single-pair external fusion of $\astruc_1$ and $\astruc_2$.  For
  a set of structures $\structures$, we denote by
  $\reachfusionone{\structures}$ the closure of $\structures$ under
  single-pair external fusions.
\end{defi}
In general, the single-pair external fusion is strictly less
expressive than external fusion, yet it produces the same $k$-multiset
color abstractions:

\begin{lem}\label{lemma:reach-fusion-one}
  $\kmcolabs{k}{(\reachfusion{\structures})} =
  \kmcolabs{k}{(\reachfusionone{\structures})}$ for any set
  $\structures$ of structures and integer $k\geq1$.
\end{lem}
\begin{proof}
  ``$\kmcolabs{k}{(\reachfusionone{\structures})} \subseteq
  \kmcolabs{k}{(\reachfusion{\structures})}$'' This direction follows
  directly from $\reachfusionone{\structures} \subseteq
  \reachfusion{\structures}$. ``$\kmcolabs{k}{(\reachfusion{\structures})}
  \subseteq \kmcolabs{k}{(\reachfusionone{\structures})}$'' We prove
  the stronger property:
\[
  \forall \astruc \in \reachfusion{\structures}.~
  \exists \astruc' \in \reachfusionone{\structures}.~
  \mcolabs{\astruc} \subseteq \mcolabs{\astruc'}
\]
By induction on the derivation of
$\astruc\in\reachfusion{\structures}$ from $\structures$.

\skipnoindent \underline{Base case:} Assume $\astruc \in \structures$.
Then $\astruc' = \astruc$ satisfies the property.

\skipnoindent \underline{Induction step:} Assume $\astruc = (\astruc_1
\comp \astruc_2)_{/\approx}$ for some $\astruc_1, \astruc_2 \in
\reachfusion{\structures}$ and some equivalence relation $\approx$,
defined as $\set{(u_{1i}, u_{2i}) \mid i \in \interv{1}{n}}^=$, that
conforms to the requirements of external fusion for $\astruc_1,
\astruc_2$.  Let $\ucolor_{1i} = \funcol{\astruc_1}(u_{1i})$,
$\ucolor_{2i} = \funcol{\astruc_2}(u_{2i})$, for all
$i\in\interv{1}{n}$. According to the definition of external fusion,
$\astruc = (\astruc_1 \comp \astruc_2)_{/\approx}$ implies
$\ucolor_{1i} \cap \ucolor_{2i} = \emptyset$ and moreover:
\[
  \mcolabs{\astruc} = \mset{(\ucolor_{1i} \cup \ucolor_{2i}) \mid i\in\interv{1}{n}} \cup
  (\mcolabs{\astruc_1} \setminus \mset{(\ucolor_{1i}) \mid i\in\interv{1}{n}}) \cup
  (\mcolabs{\astruc_2} \setminus \mset{(\ucolor_{2i}) \mid i\in\interv{1}{n}})
\]
By induction hypothesis, for $\astruc_1, \astruc_2$ there exists
$\astruc_1', \astruc_2' \in \reachfusionone{\structures}$ such that
$\mcolabs{\astruc_1} \subseteq \mcolabs{\astruc_1'}$,
$\mcolabs{\astruc_2} \subseteq \mcolabs{\astruc_2'}$. We use
$\astruc_1'$ and $n$ disjoint copies $\astruc_{2,1}', ...,
\astruc_{2,n}'$ of $\astruc_2'$ to construct $\astruc'$ with the
required property. The idea is that, for every pair $u_{1i} \approx
u_{2i}$, we fuse some element $u'_{1i}$ with color $\ucolor_{1i}$ from
$\astruc_1'$ with some element $u'_{2i}$ with color $\ucolor_{2i}$
from $\astruc_{2,i}'$. Such elements always exist, because
$\mcolabs{\astruc_1} \subseteq \mcolabs{\astruc_1'}$,
$\mcolabs{\astruc_2} \subseteq \mcolabs{\astruc_2'}$. Therefore,
consider the equivalence relations $\approx_i' = (u_{1i}', u_{2i}')^=$
for some pair of elements as above, for all $i\in\interv{1}{n}$ and
define:
\[
  \astruc' = ( \ldots ((\astruc_1' \comp \astruc_{2,1}')_{/\approx'_1} \comp
  \astruc_{2,2}')_{/\approx'_2} \comp \ldots \comp \astruc_{2,n}')_{\approx'_n}
\]
Then $\astruc' \in \reachfusionone{\structures}$ and, moreover, we have
$\mcolabs{\astruc} \subseteq \mcolabs{\astruc'}$, because:
\[
  \mcolabs{\astruc'} = \mset{(\ucolor_{1i} \cup \ucolor_{2i}) \mid i\in\interv{1}{n}} \cup
  (\mcolabs{\astruc_1'} \setminus \mset{(\ucolor_{1i}) \mid i\in\interv{1}{n}}) \cup
  \bigcup\nolimits_{i\in\interv{1}{n}} (\mcolabs{\astruc_2'} \setminus \mset{\ucolor_{2i}}) \hspace{1cm} \qedhere
\]
\end{proof}

Second, the closure $\kmcolabs{k}{(\reachfusionone{\structures})}$ can
be computed by a least fixpoint iteration of an abstract operation on
the domain of $k$-multiset color abstractions. As the later domain is
finite, this fixpoint computation is guaranteed to terminate.

\begin{defi}\label{def:multiset-color-fusion-abstraction}
  The \emph{single-pair multiset fusion} is defined below, for $M_1,
  M_2 \in \mpow{\allcols}$:
  \begin{align*}
    \absextfusionone{M_1}{M_2} \isdef
    \big\{M \in \mpow{\allcols} ~|&~ \exists \ucolor_1 \in M_1.~ \exists \ucolor_2\in M_2.~
    \ucolor_1 \cap \ucolor_2 = \emptyset, \\
    & ~~~ M = \mset{\ucolor_1 \cup \ucolor_2} \cup \bigcup\nolimits_{i=1,2} (M_i \setminus \mset{\ucolor_i}) \big\}
  \end{align*}
  Given an integer $k\geq1$, the \emph{single-pair $k$-multiset
    fusion} is defined for $M_1$, $M_2 \in \mpow{\allcols}$, such that
  $\cardof{M_1} \le k$ and $\cardof{M_2}\le k$:
  \[
  \kabsextfusionone{k}{M_1}{M_2} \isdef \set{M ~|~ \exists M' \in
    \absextfusionone{M_1}{M_2}.~ M \subseteq M',~ \cardof{M} \le k}
  \]
  For a set $\mathcal{M}$ of multisets (resp. $k$-multisets) of
  colors, let $\absreachfusionone{{\mathcal M}}$
  (resp. $\kabsreachfusionone{k}{\mathcal M}$) be the closure of
  $\mathcal{M}$ under taking single-pair fusion on multisets
  (resp. $k$-multisets).
\end{defi}

\begin{lem}\label{lemma:reach-fusion-one-abstraction}
  $\kmcolabs{k}{(\reachfusionone{\structures})} =
  \kabsreachfusionone{k}{\kmcolabs{k}{\structures}}$, for any set
  $\structures$ of structures and integer $k\geq1$.
\end{lem}
\begin{proof}
    Abusing notation, we write $\kmcolabs{k}{M} \isdef \set{M' ~|~ M'
    \subseteq M,~ \cardof{M'} \le k}$. Then, we have
  $\kmcolabs{k}{(\reachfusionone{\structures})} =
  \kmcolabs{k}{(\mcolabs{(\reachfusionone{\structures})})}$, by
  \autoref{def:multiset-color-abstraction}. Using
  \autoref{def:external-fusion-one} of single pair external fusion
  and \autoref{def:multiset-color-fusion-abstraction} of single pair
  fusion of multisets, we can prove that for all structures
  $\astruc_1, \astruc_2$ it holds
  $\mcolabs{(\extfusionone{\astruc_1}{\astruc_2})} =
  \absextfusionone{\mcolabs{\astruc_1}}{\mcolabs{\astruc_2}}$.  This
  immediately extends to their respective closure, henceforth,
  $\mcolabs{(\reachfusionone{\structures})} =
  \absreachfusionone{\mcolabs{\structures}}$. Henceforth, we are left
  with proving that
  $\kmcolabs{k}{(\absreachfusionone{\mcolabs{\structures}})} =
  \kabsreachfusionone{k}{\kmcolabs{k}{\structures}}$.

  \skipnoindent ``$\kmcolabs{k}{(\absreachfusionone{\mcolabs{\structures}})}
  \subseteq \kabsreachfusionone{k}{\kmcolabs{k}{\structures}}$'' We
  prove that, for all $M \in
  \absreachfusionone{\mcolabs{\structures}}$, we have $\kmcolabs{k}{M}
  \subseteq \kabsreachfusionone{k}{\kmcolabs{k}{\structures}}$. The
  proof goes by induction on the derivation of $M\in
  \absreachfusionone{\mcolabs{\structures}}$ from
  $\mcolabs{\structures}$.

  \skipnoindent \underline{Base case:} Assume $M
  \in \mcolabs{\structures}$.  Then $\kmcolabs{k}{M} \subseteq
  \kmcolabs{k}{(\mcolabs{\structures})} = \kmcolabs{k}{\structures}
  \subseteq \kabsreachfusionone{k}{\kmcolabs{k}{\structures}}$.

  \skipnoindent \underline{Induction step:} Assume $M \in
  \absextfusionone{M_1}{M_2}$ for some multisets of colors $M_1$,
  $M_2$ such that $\kmcolabs{k}{M_1}, \kmcolabs{k}{M_2} \subseteq
  \kabsreachfusionone{k}{\kmcolabs{k}{\structures}}$. Then, there
  exists $\ucolor_1 \in M_1$, $\ucolor_2 \in M_2$ such that $\ucolor_1
  \cap \ucolor_2 = \emptyset$ and $M = (M_1 \setminus
  \mset{\ucolor_1}) \cup (M_2 \setminus \mset{\ucolor_2}) \cup
  \mset{\ucolor_1 \cup \ucolor_2}$, by
  \autoref{def:multiset-color-fusion-abstraction}. Let $M' \in
  \kmcolabs{k}{M}$, that is, $M' \subseteq M$, $\cardof{M'} \le k$.
  We distinguish several cases: \begin{itemize}[label=$\triangleright$]
  \item $M' \subseteq M_1$ (the case $M' \subseteq M_2$ is symmetric):
    $M' \in \kmcolabs{k}{M_1}$, thus $M' \in
    \kabsreachfusionone{k}{\kmcolabs{k}{\structures}}$.
  \item $M' \not\subseteq M_i$, for $i=1,2$ and $\ucolor_1 \cup
    \ucolor_2 \not\in M$': $M'$ can be partitioned in two nonempty
    parts $M_1' \subseteq M_1$, $M_2' \subseteq M_2$ such that $M =
    M_1' \uplus M_2'$. As both parts are not empty, we have $M_1' \in
    \kmcolabs{k-1}{M_1}$, $M_2' \in \kmcolabs{k-1}{M_2}$, thus $(M_1'
    \cup \mset{\ucolor_1}) \in \kmcolabs{k}{M_1}$, $(M_2' \cup
    \mset{\ucolor_2}) \in \kmcolabs{k}{M_2}$. It is an easy check that
    $M' \in \kabsextfusionone{k}{(M_1' \cup \mset{\ucolor_1})}{(M_2'
      \cup \mset{\ucolor_2})}$. This implies $M' \in
    \kabsreachfusionone{k}{\kmcolabs{k}{\structures}}$ as both
    subterms belong to
    $\kabsreachfusionone{k}{\kmcolabs{k}{\structures}}$.
  \item $M' \not\subseteq M_i$, for $i=1,2$ and $\ucolor_1 \cup
    \ucolor_2 \in M'$: we proceed as in the previous case but
    considering a partitioning of $M' \setminus \mset{\ucolor_1 \cup
      \ucolor_2}$.  We obtain $M' \in
    \kabsreachfusionone{k}{\kmcolabs{k}{\structures}}$, as well.
  \end{itemize}

  \skipnoindent ``$\kabsreachfusionone{k}{\kmcolabs{k}{\structures}} \subseteq
    \kmcolabs{k}{(\absreachfusionone{\mcolabs{\structures}})}$'' We
  prove that, for all $k$-multiset $M'\in
  \kabsreachfusionone{k}{\kmcolabs{k}{\structures}}$, there exists $M \in
  \absreachfusionone{\mcolabs{\structures}}$, such that $M' \subseteq M$, by
  induction on the derivation of $M'$ from $\kmcolabs{k}{\structures}$.

  \skipnoindent \underline{Base case:} Assume $M'
  \in \kmcolabs{k}{\structures} =
  \kmcolabs{k}{(\mcolabs{\structures})}$. Then, there exists $M\in
  \mcolabs{S}$ such that $M' \subseteq M$.  Obviously, $M \in
  \absreachfusionone{\mcolabs{\structures}}$.

  \skipnoindent \underline{Induction step:} Assume $M' \in
  \kabsextfusionone{k}{M_1'}{M_2'}$ for some $k$-multisets of colors
  $M_1', M_2' \in \kabsreachfusionone{k}{\kmcolabs{k}{\structures}}$.
  By the inductive hypothesis, there exists multisets $M_1, M_2 \in
  \absreachfusionone{\mcolabs{\structures}}$ such that $M_1' \subseteq
  M_1$, $M_2' \subseteq M_2$.  Since $M_1', M_2'$ can be composed such
  that to obtain (a superset of) the multiset $M'$, one can use
  precisely the same pairs of colors to compose $M_1, M_2$ and
  henceforth to obtain the multiset $M \in
  \absreachfusionone{\mcolabs{\structures}}$, which is the superset of
  $M'$.
\end{proof}

\subsection{Maximally connected substructures}
\label{subsec:maximally-connected-substructures}

Since we consider canonical models, we can assume w.l.o.g. that the
given SID $\asid$ contains no disequalities (such atoms are trivially
unsatisfiable or valid). We represent the set of maximally connected
structures $\funsplit{\csem{\apred}{\asid}}$ as a set of canonical
models $\csem{\ppred}{\csid}$, for a fresh nullary predicate $\ppred$
and a SID $\csid$, whose construction is described next.

Given a qpf formula $\psi$, we define $\connof{\psi} \subseteq
\fv{\psi} \times \fv{\psi}$ to be the least equivalence relation such
that $(y,z) \in \connof{\psi}$ if
$\arel(x_1,\ldots,x_{\arityof{\arel}})$ occurs in $\psi$ and
$y,z\in\set{x_1,\ldots,x_{\arityof{\arel}}}$, for some
$\arel\in\relations$. Intuitively, $\connof{\psi}$ consists of the
pairs of free variables of $\psi$ that are connected by a path (see
\autoref{def:connected}) in each canonical model of $\psi$.

Let $\bpred(y_1, \ldots, y_{\arityof{\bpred}})$ be a predicate atom,
$J = \set{j_1,\ldots,j_p} \subseteq \interv{1}{\arityof{\bpred}}$ be a
set of indices ordered as $j_1 \le \ldots \le j_p$, $\xi \subseteq J
\times J$ be an equivalence relation and $\bpred^\xi$ be a fresh
predicate of arity $p$. In particular, $\arityof{\bpred^\xi}=0$ if
$\xi=\emptyset$ is the empty relation. We define the shorthands:
\begin{align*}
  \fvof{J}{\bpred(y_1, \ldots, y_{\arityof{\bpred}})} \isdef \set{y_j ~|~ j \in J} \hspace*{8mm}
  \xi(\bpred(y_1, \ldots, y_{\arityof{\bpred}})) \isdef & \set{(y_j,y_k) \mid (j,k)\in\xi} \\
  \bpred(y_1, \ldots, y_{\arityof{\bpred}})_{/\xi} \isdef & \bpred^\xi(y_{j_1}, \ldots, y_{j_p})
\end{align*}
We build definitions for the predicate atoms $\bpred^\xi(x_{j_1},
\ldots, x_{j_p})$, by ``narrowing'' the definitions of $\bpred(x_1,
\ldots, x_{\arityof{\bpred}})$, respectively.  More precisely, every
structure $\astruc \in \csem{\exclof{\bpred^\xi(y_{j_1}, \ldots,
    y_{j_p})}}{\asid}$ will correspond to a set of maximally
connected substructures in a structure $\astruc' \in
\csem{\exclof{\bpred(y_1,\ldots,y_{\arityof{\bpred}})}}{\asid}$ such that,
moreover \begin{enumerate*}[(i)]
\item for every such substructure, there exists an element
  associated with $y_j$, for some $j \in J$, and
\item $y_j$ and $y_k$ are mapped to elements from the same connected
  substructure of $\astruc'$ if and only if $(j,k)\in\xi$.
\end{enumerate*}
In other words, the equivalence relation $\xi$ is used to summarize
the information about the maximally connected structures from any
canonical model of $\bpred^\xi$.

We describe the construction of $\csid$ next. Consider a rule of
$\asid$ of the form:
\begin{equation} \label{eq:asidrule}
  \bpred_0(x_1,\ldots,x_{\arityof{\bpred_0}}) \leftarrow \exists y_1
  \ldots \exists y_m ~.~ \psi * \Bigstar\nolimits_{i=1}^\ell
  \bpred_i(z_{i,1},\ldots,z_{i,\arityof{\bpred_i}})
\end{equation}
formul{\ae} $\psi'$, $\psi''$, sets $J_{i}
\uplus\overline{J}_{i}=\interv{1}{\arityof{\bpred_{i}}}$, equivalence
relations $\xi_{i} \subseteq J_{i} \times J_{i}$, for all $i \in
\interv{1}{\ell}$, an equivalence relation
$\Xi\subseteq\big(\set{x_1,\ldots,x_{\arityof{\bpred_0}}}\cup\set{y_1,\ldots,y_m}\big)
\times
\big(\set{x_1,\ldots,x_{\arityof{\bpred_0}}}\cup\set{y_1,\ldots,y_m}\big)$,
such that the following hold: \begin{enumerate}
\item\label{it1:mcsid} $\psi = \psi' * \psi''$ modulo a reordering of
  atoms, such that $\fv{\psi'} \cap \fv{\psi''} = \emptyset$,
\item\label{it2:mcsid} $\fvof{J_{i}}{\bpred_i(z_{i,1},\ldots,z_{i,\arityof{\bpred_{i}}})} \cap
  \fv{\psi''} = \emptyset$ and \\
  $\fvof{\overline{J}_{i}}{\bpred_i(z_{i,1},\ldots,z_{i,\arityof{\bpred_{i}}})}
  \cap \fv{\psi'} = \emptyset$, for all $i \in \interv{1}{\ell}$,
\item\label{it3:mcsid} $\Xi = \big(\connof{\psi'} \cup \bigcup_{i=1}^\ell
  \xi_{i}(\bpred_{i}(z_{i,1},\ldots,z_{i,\arityof{\bpred_{i}}})) \big)^=$.
\end{enumerate}
Intuitively, the conditions (\ref{it1:mcsid})--(\ref{it3:mcsid}) above
guarantee that the models of
$\bpred_{i}(z_{i,1},\ldots,z_{i,\arityof{\bpred_{i}}})_{/\xi_{i}}$
(recall, these are sets of maximally connected structures) compose
with a model of $\psi'$ without losing neither connectivity nor
maximality, in the context of the rule (\ref{eq:asidrule}). At this
point, we distinguish two cases:
\begin{itemize}[label=$\triangleright$]
\item If there exist sets
  $J_0\uplus\overline{J}_0=\interv{1}{\arityof{\bpred_0}}$, $J_0 \neq
  \emptyset$ and an equivalence relation $\xi_0 \subseteq J_0 \times
  J_0$, such that:\begin{enumerate} \setcounter{enumi}{3}
  \item\label{it4:mcsid} $\fvof{J_0}{\bpred_0(x_1, \ldots, x_{\arityof{\bpred_0}})} \cap \fv{\psi''} = \emptyset$ and
    $\fvof{\overline{J}_0}{\bpred_0(x_1, \ldots, x_{\arityof{\bpred_0}})} \cap \fv{\psi'} = \emptyset$,
  \item\label{it5:mcsid} $\fvof{J_0}{\bpred_0(x_1, \ldots, x_{\arityof{\bpred_0}})}
    \cap
    \fvof{\overline{J}_{i}}{\bpred_{i}(z_{i,1},\ldots,z_{i,\arityof{\bpred_{i}}})}
    = \emptyset$ and \\
    $\fvof{\overline{J}_0}{\bpred_0(x_1, \ldots,
      x_{\arityof{\bpred_0}})} \cap
    \fvof{J_{i}}{\bpred_{i}(z_{i,1},\ldots,z_{i,\arityof{\bpred_{i}}})} =
    \emptyset$, for all $i\in\interv{1}{\ell}$,
  \item\label{it6:mcsid} for all $y \in \big(\fv{\psi'} \cup \bigcup_{i=1}^\ell
    \fvof{J_{i}}{\bpred_{i}(z_{i,1},\ldots,z_{i,\arityof{\bpred_{i}}})}\big) \cap
    \set{y_1,\ldots, y_m}$ \\ there exists $x\in\fvof{J_0}{\bpred_0(x_1,
      \ldots, x_{\arityof{\bpred_0}})}$, such that $(x,y) \in \Xi$,
    %
  \item \label{it7:mcsid}$\xi_0(\bpred_0(x_1, \ldots, x_{\arityof{\bpred_0}})) =
    \proj{\Xi}{\set{x_1,\ldots,x_{\arityof{\bpred_0}}}} \cup \set{(x,x)~|~ x\in
      \fvof{J_0}{\bpred_0(x_1, \ldots, x_{\arityof{\bpred_0}})}}$
  \end{enumerate}
  then we add to $\csid$ the following rule:
  \begin{equation}\label{eq:mcsid-middle}
    \bpred_0(x_1, \ldots, x_{\arityof{\bpred_0}})_{/\xi_0} \leftarrow \exists y_1 \ldots
    \exists y_m ~.~ \psi' * \Bigstar\nolimits_{i \in \interv{1}{\ell},J_i \neq \emptyset}
    \bpred_{i}(z_{i,1},\ldots,z_{i,\arityof{\bpred_{i}}})_{/\xi_{i}}
  \end{equation}
  Intuitively, the conditions (\ref{it4:mcsid})--(\ref{it7:mcsid})
  identify the set $J_0$ and the equivalence relation $\xi_0$ for
  which the result of the composition becomes a model of
  $\bpred_0(x_1, \ldots, x_{\arityof{\bpred_0}})_{/\xi_0}$.
  Altogether, these lead to the definition of the rules of the form
  \ref{eq:mcsid-middle} which propagate the construction of maximally
  connected structures in $\csid$.

\item If \begin{enumerate*}
\setcounter{enumi}{7}
\item\label{it8:mcsid} $\Xi$ defines an unique equivalence class, and
\item\label{it9:mcsid} $(x,x) \not\in\Xi$ for all $x \in \set{x_1,\ldots,x_{\arityof{\bpred_0}}}$
\end{enumerate*}
  then we add to $\csid$ the following rule:
  \begin{equation}\label{eq:mcsid-begin}
    \ppred \leftarrow \exists y_1 \ldots
  \exists y_m ~.~ \psi' * \Bigstar\nolimits_{i \in \interv{1}{\ell},J_i \neq \emptyset}
  \bpred_i(z_{i,1},\ldots,z_{i,\arityof{\bpred_i}})_{/\xi_i}
  \end{equation}
  Intuitively, conditions (\ref{it8:mcsid})--(\ref{it9:mcsid}) in
  addition to (\ref{it1:mcsid})--(\ref{it3:mcsid}), ensure that by
  composing the models of
  $\bpred_{i}(z_{i,1},\ldots,z_{i,\arityof{\bpred_{i}}})_{/\xi_{i}}$
  with a model of $\psi'$ we obtain a single maximally connected
  structure in the context of the rule (\ref{eq:asidrule}), which is
  moreover not referred by any of the parameters of $\bpred_0$.
  Henceforth, the result of this composition is actually a model of
  $\funsplit{\csem{\apred}{\asid}}$ and consequently is added as a
  model of $\ppred$ by the rules of the form \ref{eq:mcsid-begin}.
\end{itemize}
Recall that $\asid$ was assumed to be equality-free
(\autoref{def:eq-free}) and all-satisfiable for $\apred$
(\autoref{def:all-sat}). Moreover, we assume that every predicate
defined by a rule of $\asid$ occurs on some complete $\asid$-unfolding
of $\apred$. Obviously, the rules that do not meet this requirement
can be removed from $\asid$ without changing
$\csem{\apred}{\asid}$. The following lemma shows that the set of
canonical $\csid$-models of $\ppred$ is a correct representation of
the set of canonical $\asid$-models of $\apred$:

\begin{lem}\label{lemma:sid-connected}
  For each equality-free SID $\asid$, which is all-satisfiable for a
  nullary predicate symbol $\apred$, one can effectively build a SID
  $\csid$ and a nullary predicate $\ppred$, such that
  $\funsplit{\csem{\apred}{\asid}} = \csem{\ppred}{\csid}$.
\end{lem}
\begin{proof}
  For space reasons, this proof is given in \autoref{app:maximally-connected-substructures}.
\end{proof}

\subsection{Color abstractions of canonical models}
\label{subsec:external-fusion-k-multiset}

We compute the $k$-multiset color abstraction
$\kmcolabs{k}{(\csem{\ppred}{\csid})}$ by a least fixpoint iteration
in a finite abstract domain, defined directly from the rules in the
SID. The elements of the domain are composed of the colors of
parameter values and the $k$-multiset color abstraction of the
elements not referenced by parameters.

A \emph{$k$-bounded color triple} $\tuple{X, c, M}$ consists of a
finite set of variables $X \subseteq \vars$, a mapping $c : X
\rightarrow \allcols$, and a multiset $M \in \mpow{\allcols}$, such
that $\cardof{M} \le k$.  Since $X$ and $\relations$ are finite,
there are finitely many color triples. The following operations on
color triples are lifted to sets, as usual: \begin{description}
\item[\emph{$k$-composition}] $\tuple{X_1, c_1, M_1} \kcomp \tuple{X_2, c_2, M_2} \isdef$
  \begin{align*}
    \{ \tuple{X_1 \cup X_2, c_{12}, M_{12}} ~|~
    & c_{12}(x) = c_1(x) \uplus c_2(x) \mbox{, for all } x \in X_1 \cap X_2,~ \\
    & c_{12}(x) = c_i(x) \mbox{ for all } x\in X_i \setminus X_{3-i} \mbox{, for all } i\in\set{1,2},\\
    & M_{12} \subseteq M_1 \cup M_2,~ \cardof{M_{12}} \le k \}
  \end{align*}
  This operation is undefined, if $c_1(x)\cap c_2(x)\neq\emptyset$,
  for some $x \in X_1 \cap X_2$.
\item[\emph{substitution}] $\tuple{X, c, M}[s] \isdef \tuple{Y, c
  \circ s, M}$, for any bijection $s : Y \rightarrow X$
\item[\emph{$k$-projection}] $\kproj{(X, c, M)}{Y} \isdef
  \set{\tuple{Y, \proj{c}{Y}, M'} ~|~ M' \subseteq M \cup
    \mset{c(x)~|~ x\in X \setminus Y},~ \cardof{M'} \le k}$, for $Y
  \subseteq X$.
\end{description}
For a qpf formula $\psi$, let
$\colorof{\psi}\isdef\tuple{\fv{\psi},\lambda x \in \fv{\psi} ~.~
  \set{\arel \in \relations ~|~ \arel(x,\ldots,x) \mbox{ occurs in }
    \psi},\emptyset}$.
Given a predicate $\bpred$, we denote by $\kabssem{k}{\bpred}{\csid}$
the least sets of $k$-bounded color triples over the variables
$x_1,\ldots,x_{\arityof{\bpred}}$, the satisfies the following constraints:
\begin{align} \label{eq:asidconstraint}
  \kabssem{k}{\bpred_0}{\csid} \supseteq \kproj{\big(\colorof{\psi} \kcomp
    \Kcomp\nolimits_{i\in\interv{1}{\ell}} \kabssem{k}{\bpred_i}{\csid}
         [x_1/z_{i,1},\ldots,x_{\arityof{\bpred_i}}/z_{i,\arityof{\bpred_i}}]\big)}
          {\set{x_1,\ldots,x_{\arityof{\bpred_0}}}}
\end{align}
one for each rule of $\csid$ of the form (\ref{eq:asidrule}). Note
that the operations on sets of color triples are monotonic and the
sets thereof are finite, since the arity of predicates is finite and
$k$ is fixed. Henceforth, the least solution can be computed in finite
time by an ascending Kleene iteration. For a $n$-ary relation $R$, we
denote by $\projrel{R}{k}$ the set of elements that occur on the
$k$-th position in a tuple from $R$.

\begin{lem}\label{lemma:sid-k-multiset-abstraction}
  $\kmcolabs{k}{(\csem{\ppred}{\csid})}=\projrel{\kabssem{k}{\ppred}{\csid}}{3}$,
  for any $k\geq1$, SID $\csid$ and nullary predicate $\ppred$.
\end{lem}
\begin{proof}
  For space reasons, this proof is given in \autoref{app:sid-k-multiset-abstraction}.
\end{proof}

\subsection{The expandable treewidth boundedness problem}

We end this section with a proof of decidability for the treewidth
boundedness problem of the sets of $\asid$-models of a nullary
predicate $\apred$, provided that $\asid$ is an expandable SID (see
\autoref{def:expandable}). In case such a bound exists, we provide
the optimal upper bound, in terms of the input SID $\asid$.  We recall
that $\maxvarinruleof{\asid}$ is the maximum number of variables that occur,
either free or bound by an existential quantifier, in some rule from
$\asid$ (\autoref{lemma:canonical-btw}).

\begin{thm}\label{thm:expandable-tb}
  There exists an algorithm that decides, for each expandable SID
  $\asid$ and nullary predicate $\apred$, whether the set
  $\sidsem{\apred}{\asid}$ has bounded treewidth. If, moreover, this
  is the case then $\twof{\sidsem{\apred}{\asid}} \le
  \maxvarinruleof{\asid}$.
\end{thm}
\begin{proof}
  Because $\asid$ is expandable, by \autoref{lemma:external-fusion}
  and \autoref{lemma:internal-external-fusion}, $\sidsem{\apred}{\asid}$
  is treewidth bounded iff $\reachfusion{\csem{\apred}{\asid}}$ is
  treewidth bounded. By \autoref{lemma:btw-external-fusion},
  $\reachfusion{\csem{\apred}{\asid}}$ is treewidth bounded iff
  $\mset{\ucolor_1,\ucolor_1,\ucolor_1},\mset{\ucolor_2,\ucolor_2,\ucolor_2}
  \in \kmcolabs{3}{(\reachfusion{\funsplit{\csem{\apred}{\asid}}})}
  \Rightarrow \ucolor_1 \cap \ucolor_2 \neq \emptyset$, for all
  $\ucolor_1,\ucolor_2\in\allcols$. The latter condition can be
  effectively checked by computing the finite set
  $\kmcolabs{3}{(\reachfusion{\funsplit{\csem{\apred}{\asid}}})}$. By
  \autoref{lemma:reach-fusion-one} and
  \autoref{lemma:reach-fusion-one-abstraction}, we have
  $\kmcolabs{3}{(\reachfusion{\funsplit{\csem{\apred}{\asid}}})} =
  \kabsreachfusionone{k}{\kmcolabs{k}{\funsplit{\csem{\apred}{\asid}}}}$.
  By \autoref{lemma:sid-connected}, one can effectively build a SID
  $\csid$ and a nullary predicate $\ppred$, such that
  $\funsplit{\csem{\apred}{\asid}} = \csem{\ppred}{\csid}$. Moreover,
  by \autoref{lemma:sid-k-multiset-abstraction}, we obtain
  $\kmcolabs{3}{(\csem{\ppred}{\csid})}=\projrel{\kabssem{3}{\ppred}{\csid}}{3}$,
  hence $\kmcolabs{3}{(\reachfusion{\funsplit{\csem{\apred}{\asid}}})}
  = \projrel{\kabssem{3}{\ppred}{\csid}}{3}$, which is effectively
  computable by a ascending Kleene iteration in the finite domain of
  $3$-bounded color triples.

  For the upper bound, since $\sidsem{\apred}{\asid} \subseteq
  \ireachfusion{\csem{\apred}{\asid}}$
  (\autoref{lemma:strong-internal-fusion}), we have
  $\twof{\sidsem{\apred}{\asid}} \le
  \twof{\ireachfusion{\csem{\apred}{\asid}}}$. By
  \autoref{cor:twb-conforming-structures},
  $\twof{\sidsem{\apred}{\asid}} \le \twof{\csem{\apred}{\asid}}+1$
  and, by \autoref{lemma:canonical-btw}, we obtain
  $\twof{\sidsem{\apred}{\asid}} \le \maxvarinruleof{\asid}$.
\end{proof}

The bound given by \autoref{thm:expandable-tb} is optimal, as shown by
the following example:

\begin{exa}\label{ex:expandable-tb}
  Let us consider the following SID:
  \[\asid = \left\{\begin{array}{rl}
  \apred \leftarrow & \exists y_1 \exists y_2 ~.~ \mathsf{a}(y_1) * \mathsf{e}(y_1,y_2) * \apred \\
  \apred \leftarrow & \emp
  \end{array}\right.\]
  This SID is expandable for $\apred$, because any canonical
  $\asid$-model $\astruc=(\univ,\struc)$ of $\apred$ consists of a set
  of pairs $(u_1,u_2) \in \struc(\mathsf{e})$, such that $u_1 \in
  \struc(\mathsf{a})$ and $u_2 \not\in \struc(\mathsf{a})$. Hence, any
  sequence of canonical $\asid$-models of $\apred$ can be embedded as
  substructures in a canonical $\asid$-model of $\apred$. Moreover,
  $\maxvarinruleof{\asid}=2$ and any cyclic list of $\mathsf{e}$-related
  adjacent elements labeled by $\mathsf{a}$ is a $\asid$-model of
  $\apred$ of treewidth $2$.
\end{exa}


\section{The Reduction to Expandable Sets of Inductive Definitions}
\label{sec:general-tb}

This section completes the proof of decidability of the treewidth
boundedness problem \tbsl, by showing a reduction to the decidable
treewidth boundedness problem for expandable SIDs
(\autoref{thm:expandable-tb}). Moreover, an analysis of this reduction
allows to compute upper bounds on the treewidth of the set of models
of an \slr\ sentence, provided that such a bound exists. The core of
the reduction is the following lemma:

\begin{lem}\label{lemma:expansion}
  Let $\asid$ be a SID and $\apred$ be a nullary predicate. Then, one
  can build finitely many SIDs $\csid_1, \ldots, \csid_n$, that are
  expandable for a nullary predicate $\bpred$, such that
  $\sidsem{\apred}{\asid}$ is treewidth bounded iff each
  $\sidsem{\bpred}{\csid_i}$ is treewdith bounded, for $i \in
  \interv{1}{n}$.
\end{lem}

The rest of this section is concerned with the proof of this lemma.
For technical reasons, the construction of expandable SIDs with an
equivalent treewidth boundedness problem uses a representation of the
SID as a tree automaton (\autoref{sec:automata}). This representation
allows to distinguish the purely structural aspects, related to the
dependencies between rules, from details related to the flow of
parameters. An important class of automata distinguish between the
so-called \emph{$1$-transitions}, that occur exactly once, from the
\emph{$\infty$-transitions}, that may occur any number of times on an
accepting run. These automata are called \emph{choice-free}
(\autoref{def:choice-free}). Each automaton admits a finite
choice-free decomposition that preserves its language
(\autoref{lemma:choice-free}).

Next, we consider tree automata whose alphabets are finite sets of qpf
formul{\ae} (\autoref{sec:formulae-alphabets}). The trees recognized
by these automata are representations of the predicate-free
formul{\ae} produced by the complete $\asid$-unfoldings of
$\apred$. Since we assume that $\asid$ is all-satisfiable for $\apred$
(\autoref{def:all-sat}), each accepting run of an automaton
``produces'' a canonical $\asid$-model of $\apred$
(\autoref{lemma:sid-ta}).

Furthermore, we define \emph{persistent} variables, whose values are
carried along each sequence of $\infty$-transitions of a choice-free
automaton (\autoref{def:profile}). Identifying and removing the
persistent variables from a choice-free automaton with an alphabet of
qpf formul{\ae} constitutes an important ingredient of the
construction, because of point (\ref{it3:def:expandable}) of
\autoref{def:expandable}, that requires the embedded canonical models
of an expandable SID to be placed sufficiently far away one from
another. In particular, this guarantees that the coloring
(\autoref{def:color-extraction-function}) of an element from an
embedded substructure does not change in the larger structure. The
effective transformation of a choice-free automaton over an alphabet
of qpf formul{\ae} into an automaton without persistent variables is
described in \autoref{sec:persistent-variables-elimination}. This
transformation does not preserve the language, nor the set of models
corresponding to the trees recognized by the automaton, but is shown
to preserve the existence of a (computable) bound on the treewidths of
these models.

\subsection{Tree Automata}
\label{sec:automata}

Let $\alphabet$ be a \emph{ranked alphabet}, each symbol $a \in
\alphabet$ having an associated integer \emph{rank}
$\rankof{a}\geq0$. The elements of $\nat_+^*$ are finite sequences of
strictly positive natural numbers, called \emph{positions}. We write
$pq$ for the concatenation of $p,q\in\nat^*$ and $q\cdot
P\isdef\set{qp \mid p \in P}$, for $P \subseteq \nat^*$. A
\emph{ranked tree} is a finite partial function $t : \nat^*
\rightarrow \alphabet$, such that the set $\dom{t}$ is
\emph{prefix-closed}, i.e., for each $p \in \dom{t}$, if $q$ is a
prefix of $p$, then $q \in \dom{t}$, and \emph{sibling-closed}, i.e.,
$\set{i \in \nat \mid pi \in \dom{t}} = \set{1, \ldots,
  \rankof{t(p)}}$, for all $p \in \dom{t}$. The \emph{frontier} of $t$
is the set $\frof{t} \isdef \set{p \in \dom{t} \mid p1 \not\in
  \dom{t}}$. We denote by $\subtree{t}{p}$ the subtree of $t$ at
position $p\in\dom{t}$ i.e., $\subtree{t}{p}$ is the tree such that
$\dom{\subtree{t}{p}}=\set{q\in\nat^* \mid pq\in\dom{t}}$ and
$\subtree{t}{p}(q)=t(pq)$, for each $q\in\dom{\subtree{t}{p}}$. A tree
$u$ is \emph{embedded in $t$ at position $p \in \dom{t}$} iff
$pq\in\dom{t}$ and $u(q)=t(pq)$, for each $q \in \dom{u}$.

\begin{defi}\label{def:tree-automaton}
An \emph{($\alphabet$-labeled tree) automaton} is $\mathcal{A} =
(\alphabet,\states,\initstates,\trans)$, where $\states$ is a finite
set of states, $\initstates \subseteq \states$ is a set of initial
states (if $\initstates$ is a singleton, we denote it by $\initstate
\in \states$), $\trans$ is a finite set of transitions $\tau : q_0
\arrow{a}{} (q_1,\ldots,q_{\rankof{a}})$. For a transition $\tau : q_0
\arrow{a}{} (q_1,\ldots,q_\ell) \in \trans$, let $\pre{\tau}\isdef
q_0$ be the source and $\post{\tau} \isdef \mset{q_1,\ldots,q_\ell}$
the multiset of targets of $\tau$. For a set of transitions
$T\subseteq\trans$, let $\pre{T}\isdef \set{\pre{\tau}\mid\tau\in T}$
and $\post{T}\isdef\bigcup_{\tau\in T} \post{\tau}$. For a set of
states $S\subseteq\states$, let $\pre{S} \isdef \set{\tau \mid
  \pre{\tau} \not\in S,~ \post{\tau} \cap S \neq \emptyset}$,
$\post{S} \isdef \set{\tau \mid \pre{\tau} \in S,~ \post{\tau} \cap S
  = \emptyset}$ and $\prepost{S} \isdef \set{\tau \mid \pre{\tau} \in
  S,~ \post{\tau} \cap S \neq \emptyset}$.
\end{defi}

The following notions concern the structure of automata. The relation
$\reach \ \subseteq \states \times \states$ is defined as $q \reach
q'$ iff there exists $\tau \in \trans$ such that $q = \pre{\tau}$ and
$q' \in \post{\tau}$. A \emph{strongly connected component} (SCC) is a
maximal set $S \subseteq \states$, such that $q \reach^* q'$, for all
$q,q' \in S$. An SCC $S$ is \emph{nonlinear} iff there exists a
transition $\tau\in\prepost{S}$ such that $\cardof{\post{\tau}\cap S}
\geq 2$ and \emph{linear} otherwise. The \emph{SCC graph} of
$\mathcal{A}$ is the directed graph $\graphof{\mathcal{A}} \isdef
(\nodes,\edges)$, where $\nodes$ is the set of SCCs of $\mathcal{A}$
and $(S,S') \in \edges$ iff $S \neq S'$ and there exists $q \in S$ and
$q' \in S'$, such that $q \reach q'$, for all $S,S' \in \nodes$. We
write $\graphof{\mathcal{A}}=(\nodes,\edges,S)$ if
$\graphof{\mathcal{A}}$ is a tree with root $S\in\nodes$.

The execution of automata is defined next. A \emph{run} $\arun$ of
$\mathcal{A}$ over a ranked tree $t$ is a tree $\arun : \dom{t}
\rightarrow Q$ such that $\arun(p) \arrow{t(p)}{} (\arun(p1), \ldots,
\arun(p\ell)) \in \trans$, for all $p \in \dom{t}$, where
$\ell=\rankof{t(p)}$. Note that the frontier of a run is labeled by
states $q$ such that there exists a transition $q \arrow{\alpha}{} ()
\in \trans$, in analogy to the final states of a word automaton. A
weaker notion is that of \emph{partial} runs, where the previous
condition holds for $\dom{t}\setminus\frof{t}$, instead of the entire
$\dom{t}$. A run $\arun$ is \emph{accepting} if
$\arun(\epsilon)\in\initstates$. The \emph{language} of $\mathcal{A}$
is $\langof{}{\mathcal{A}} \isdef \bigcup_{q \in \initstates}
\langof{q}{\mathcal{A}}$, where $\langof{q}{\mathcal{A}} \isdef \{t
\mid \mathcal{A} \text{ has a run } \arun \text{ over } t \text{ and }
\arun(\epsilon)=q\}$.

An automaton is \emph{rooted} iff $\initstates = \set{\initstate}$ and
$\initstate \not\in \post{\trans}$. For an automaton $\mathcal{A}$ one
can build finitely many rooted automata $\mathcal{A}_1, \ldots,
\mathcal{A}_n$ such that $\langof{}{\mathcal{A}} = \bigcup_{i=1}^n
\langof{}{\mathcal{A}_i}$. A rooted automaton $\mathcal{A}$ is
\emph{trim} iff $\initstate \reach^* q$ and $\langof{q}{\mathcal{A}}
\neq\emptyset$, for each state $q\in\states$. Each automaton with
non-empty language can be transformed into a trim one with the same
language, by a simple marking algorithm. We use the following notions
of \emph{simulation} and \emph{refinement} between automata:

\begin{defi}\label{def:refinement}
  Let
  $\mathcal{A}=(\alphabet,\states_{\mathcal{A}},\initstate_{\mathcal{A}},\trans_{\mathcal{A}})$
  and
  $\mathcal{B}=(\alphabet,\states_{\mathcal{B}},\initstate_{\mathcal{B}},\trans_{\mathcal{B}})$
  be automata. A mapping $h : \states_{\mathcal{A}} \rightarrow
  \states_{\mathcal{B}}$ is a \emph{simulation} if and only if the following hold: \begin{enumerate}
  \item\label{it1:def:refinement} $h(\initstate_{\mathcal{A}})=\initstate_{\mathcal{B}}$ and
  \item\label{it2:def:refinement} $q_0\arrow{a}{}
    (q_1,\ldots,q_\ell)\in\trans_{\mathcal{A}}$ only if $h(q_0)
    \arrow{a}{} (h(q_1),\ldots,h(q_\ell))\in\trans_{\mathcal{B}}$, for
    all $q_0, \ldots, q_\ell \in \states_{\mathcal{A}}$.
  \end{enumerate}
  A simulation $h$ is a \emph{refinement} if and only if,
  moreover: \begin{enumerate}\setcounter{enumi}{2}
  \item\label{it3:def:refinement} $q'_0 \arrow{a}{}
    (q'_1,\ldots,q'_\ell)\in\trans_{\mathcal{B}}$ only if there exist
    $q_0\in h^{-1}(q'_0), \ldots, q_\ell\in h^{-1}(q'_\ell)$, such
    that $q_0\arrow{a}{} (q_1,\ldots,q_\ell)\in\trans_{\mathcal{A}}$,
    for all $q_0 \in \states_{\mathcal{A}}$ and $q'_1, \ldots,
    q'_\ell\in\states_{\mathcal{B}}$.
  \end{enumerate}
  If $h : \states_{\mathcal{A}} \rightarrow \states_{\mathcal{B}}$ is
  a simulation then $\mathcal{B}$ \emph{simulates} $\mathcal{A}$. If
  $h$ is a refinement then $\mathcal{A}$ \emph{refines} $\mathcal{B}$.
\end{defi}

The key properties of simulations and refinements are stated and proved below:

\begin{lem}\label{lemma:refinement}
  If $\mathcal{B}$ simulates $\mathcal{A}$ then
  $\langof{}{\mathcal{A}} \subseteq \langof{}{\mathcal{B}}$. If
  $\mathcal{A}$ is a refinement of $\mathcal{B}$ then
  $\langof{}{\mathcal{A}} = \langof{}{\mathcal{B}}$.
\end{lem}
\begin{proof}
  Let
  $\mathcal{A}=(\alphabet,\states_{\mathcal{A}},\initstate_{\mathcal{A}},\trans_{\mathcal{A}})$,
  $\mathcal{B}=(\alphabet,\states_{\mathcal{B}},\initstate_{\mathcal{B}},\trans_{\mathcal{B}})$
  and $h : \states_{\mathcal{A}} \rightarrow \states_{\mathcal{B}}$ be
  a mapping. ``$\subseteq$'' Assume that $h$ is a simulation. Let $t
  \in \langof{}{\mathcal{A}}$ be a tree and $\arun$ be an accepting
  run of $\mathcal{A}$ over $t$. Then one shows that $h \circ \arun$
  is an accepting run of $\mathcal{B}$ over $t$, by induction on $t$,
  using points (\ref{it1:def:refinement}) and
  (\ref{it2:def:refinement}) of \autoref{def:refinement}.
  ``$\supseteq$'' Assume that $h$ is a refinement. Let $t \in
  \langof{}{\mathcal{B}}$ be a tree and $\arun$ be an accepting run of
  $\mathcal{B}$ over $t$. We build an accepting run of $\mathcal{A}$
  over $t$ by induction on $t$, using points
  (\ref{it1:def:refinement}) and (\ref{it3:def:refinement}) of
  \autoref{def:refinement}.
\end{proof}

\begin{figure}[htbp]
  \begin{minipage}{.74\textwidth}
    \[ \begin{array}{ll ll}
      q_0 \arrow{a}{} (q_0,q_1) & (\infty)\hspace{1cm} & q_2 \arrow{a}{} (q_2,q_2) & (\infty) \\
      q_1 \arrow{b}{} (q_1) & (\infty) & q_2 \arrow{b}{} (q_3) & (\infty) \\
      q_1 \arrow{c}{} () & (\infty) & q_2 \arrow{b}{} (q_4) & (\infty) \\
      q_0 \arrow{b}{} (q_2) & (1) & q_3 \arrow{c}{} () & (\infty) \\
      \textcolor{red}{q_0 \arrow{b}{} (q_5)} & & q_4 \arrow{c}{} () & (\infty) \\
      \textcolor{red}{q_5 \arrow{c}{} ()} & & &
    \end{array} \]
  \end{minipage}
  \begin{minipage}{.25\textwidth}
    \input{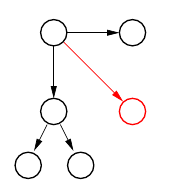_t}
  \end{minipage}
  \caption{\label{fig:choice-free} A choice-free tree automaton and its non-choice-free extension}
\end{figure}

The following structural property of automata is key for building
expandable SIDs:
\begin{defi}\label{def:choice-free}
  An automaton $\mathcal{A} = (\alphabet,\states,\initstate,\trans)$
  is \emph{choice-free} iff the following hold:\begin{enumerate}
  \item\label{it1:def:choice-free} the SCC graph of $\mathcal{A}$ is a
    tree $\graphof{\mathcal{A}} = (\nodes,\edges,S_0)$, where $\pre{S}
    = \set{\tau}$ and $\cardof{\post{\tau}\cap S}=1$, for all
    $S\in\nodes \setminus\set{S_0}$, i.e., any non-root SCC is entered
    by one branch of a single transition,
  \item\label{it2:def:choice-free} there exists a mapping $\Lambda :
    \nodes \cup \trans \rightarrow \set{1,\infty}$ such
    that: \begin{enumerate}
    \item\label{it21:def:choice-free} for all $S \in \nodes$, if $S$
      is linear and $\Lambda(S)=1$ then $\cardof{\post{S}}=1$.
    \item\label{it22:def:choice-free} for all $\tau \in \trans$,
      $\Lambda(\tau)=1$ iff $\tau\in\post{S}$, for some linear $S \in
      \nodes$ such that $\Lambda(S)=1$,
    \item\label{it23:def:choice-free} for all $S \in \nodes$,
      $\Lambda(S)=1$ iff $S=S_0$ or $\pre{S}=\set{\tau}$, for some
      $\tau\in\trans$ such that $\Lambda(\tau)=1$.
    \end{enumerate}
  \end{enumerate}
  Let $\trans=\trans^1\uplus\trans^\infty$, where $\trans^k \isdef
  \set{\tau\in\trans\mid\Lambda(\tau)=k}$ and $k\in\set{1,\infty}$, be
  the partition of the set of transitions induced by the mapping
  $\Lambda$. A state $q \in \post{(\trans^1)} \cap
  \pre{(\trans^\infty)}$ is called a \emph{pivot state}. Let
  $\runsof{\infty}{q}{\mathcal{A}}$ denote the set of partial runs
  $\theta$ of $\mathcal{A}$, such that $\theta(\epsilon)=q$ and for
  all $p \in \dom{\theta}\setminus\frof{\theta}$, there exists $a \in
  \alphabet$ such that $\arun(p) \arrow{a}{} (\arun(p1), \ldots,
  \arun(pn)) \in \trans^\infty$.
\end{defi}
Intuitively, the structure of choice-free automata allows them to
traverse a unique sequence of linear SCCs, before entering a
non-linear SCC. Note that the labeling $\Lambda$ of SCCs and
transitions from Definition \ref{def:choice-free} is unique, because
the SCC graph of a choice-free automaton is a tree whose root is a
$1$-SCC, and $\Lambda$ is determined by the linearity of the SCCs in
this tree. Hence, the partition of the transitions of a choice-free
automaton into $1$- and $\infty$-transitions is unambiguous.

\begin{exa}\label{ex:choice-free}
The automaton from \autoref{fig:choice-free} is choice free.  The
linear (resp. nonlinear) SCCs are labeled by L (resp. NL). The
labeling of transitions and SCCs, resp. the SCC graph by $\Lambda$ are
represented in \autoref{fig:choice-free} (resp. right). The
choice-freeness is violated by adding the transitions $q_0 \arrow{b}{}
(q_5)$ and $q_5 \arrow{c}{} ()$ (in red), because the linear SCC
$\set{q_0}$ is labeled with $1$ and has two outgoing transitions, thus
contradicting point (\ref{it21:def:choice-free}) of
\autoref{def:choice-free} (the additional transitions are not
labeled).
\end{exa}

The transitions from $\trans^1$, called \emph{$1$-transitions}, are
used to move from one linear SCC to another, hence all of them occur
exactly once on each accepting run:

\begin{lem}\label{lemma:one-transitions}
  Let $\mathcal{A}=(\alphabet,\states,\initstate,\trans)$ be a
  choice-free automaton, such that
  $\trans=\trans^1\uplus\trans^\infty$ (\autoref{def:choice-free})
  and let $\arun$ be an accepting run of $\mathcal{A}$ over a tree
  $t$. Then, for each $1$-transition $q_0 \arrow{a}{}
  (q_1,\ldots,q_\ell)\in\trans^1$ there exists exactly one position $p
  \in \dom{\arun}$, such that $\arun(p)=q_0$, $t(p)=a$ and
  $\arun(pi)=q_i$, for all $i \in \interv{1}{\ell}$.
\end{lem}
\begin{proof}
  For space reasons, the proof of this lemma is given in
  \autoref{app:one-transitions}.
\end{proof}
The transitions from $\trans^\infty$, called
\emph{$\infty$-transitions}, can be applied any number of times on
some accepting run. This fact occurs as an easy consequence of the
lemma below:

\begin{lem}\label{lemma:choice-free-property}
  Let $\mathcal{A} = (\alphabet,\states,\initstate,\trans)$ be a
  choice-free automaton, where $\trans=\trans^1\uplus\trans^\infty$
  (\autoref{def:choice-free}). Then, for any state
  $q\in\pre{(\trans^\infty)}$ there exists a pivot state
  $q_0\in\post{(\trans^1)}\cap\pre{(\trans^\infty)}$ and a partial run
  $\theta_0 \in \runsof{\infty}{q_0}{\mathcal{A}}$ consisting only of
  $\infty$-transitions, such that $\theta_0(p)=q$ for some $p \in
  \frof{\theta_0}$ and either: \begin{enumerate}
   \item\label{it1:lemma:choice-free-property} $\mset{q,q_0} \subseteq
     \mset{\theta_0(p) \mid p \in \frof{\theta_0}}$, i.e., if $q=q_0$
     then $q$ occurs twice on $\frof{\theta_0}$, or
   \item\label{it2:lemma:choice-free-property} each partial run
     $\theta \in \runsof{\infty}{q}{\mathcal{A}}$ can be extended to a
     partial run $\theta' \in \runsof{\infty}{q}{\mathcal{A}}$ such
     that $q_0$ occurs on the frontier of $\theta'$.
  \end{enumerate}
\end{lem}
\begin{proof}
  For space reasons, the proof of this lemma is given in
  \autoref{app:choice-free-property}.
\end{proof}
Moreover, any automaton can be decomposed into finitely many
choice-free automata:

\begin{lem}\label{lemma:choice-free}
  Given an automaton $\mathcal{A} =
  (\alphabet,\states,\initstate,\trans)$, one can build finitely many
  choice-free automata $\mathcal{A}_i =
  (\alphabet,\states_i,\initstate_i,\trans_i)$, for $i \in
  \interv{1}{n}$, such that $\langof{}{\mathcal{A}} = \bigcup_{i=1}^n
  \langof{}{\mathcal{A}_n}$ and, moreover, $\cardof{\trans^1_i} \leq
  \max(\cardof{\states},\max\set{\rankof{a}\mid a\in
    \alphabet}^{\cardof{\states}})$, for all $i \in \interv{1}{n}$.
\end{lem}
\begin{proof}
  For space reasons, the proof of this lemma is given in
  \autoref{app:choice-free}.
\end{proof}

\begin{exa}\label{ex:choice-free-decomp}
The choice-free decomposition of the automaton $\mathcal{A}$ from
\autoref{fig:choice-free} consists of the following choice-free
automata, with transitions labeled according to
\autoref{def:choice-free}:
\[
\mathcal{A}_1 = \left\{\begin{array}{ll}
q_0 \arrow{a}{} (q_0,q_1) & (\infty) \\
q_1 \arrow{b}{} (q_1) & (\infty) \\
q_1 \arrow{c}{} () & (\infty) \\
q_0 \arrow{b}{} (q_2) & (1) \\
q_2 \arrow{a}{} (q_2,q_2) & (\infty) \\
q_2 \arrow{b}{} (q_3) & (\infty) \\
q_2 \arrow{b}{} (q_4) & (\infty)  \\
q_3 \arrow{c}{} () & (\infty)  \\
q_4 \arrow{c}{} () & (\infty)
\end{array}\right.
\hspace*{1cm}
\mathcal{A}_2 = \left\{\begin{array}{ll}
q_0 \arrow{a}{} (q_0,q_1) &  (\infty) \\
q_1 \arrow{b}{} (q_1) & (\infty) \\
q_1 \arrow{c}{} () &(\infty)  \\
q_0 \arrow{b}{} (q_5) & (1) \\
q_5 \arrow{c}{} () & (1)
\end{array}\right.
\]
It can be seen that $\langof{q_0}{\mathcal{A}} =
\langof{q_0}{\mathcal{A}_1} \cup \langof{q_0}{\mathcal{A}_2}$ because
$\mathcal{A}$ has the choice in $q_0$ between taking the transition
$q_0 \arrow{b}{} (q_2)$ or $q_0 \arrow{b}{} (q_5)$. Since these
transitions occur at most once in each accepting run of $\mathcal{A}$,
the choice-free decomposition of $\mathcal{A}$ produces two automata
in which each such transition occurs exactly once on each accepting
run.
\end{exa}

\subsection{Automata with Alphabets of Formul{\ae}}
\label{sec:formulae-alphabets}

The construction of the expandable SIDs from
\autoref{lemma:expansion} uses automata that recognize trees labeled
with qpf formul{\ae} taken from a finite set. We recall that every
model of a sentence is defined by a complete unfolding that replaces
the predicate atoms with corresponding definitions, recursively. The
steps of these unfoldings can be placed into a tree labeled with
predicate-free formul{\ae} from an alphabet $\Sigma$, reflecting the
partial order in which the rules from the SID are applied. These
unfolding trees form the language of an automaton defined directly
from the syntax of the SID. Dually, from any $\Sigma$-labeled
automaton one can build a SID whose unfolding trees form the language
of the automaton.

\begin{defi}\label{def:alpha-sid}
  Let $\Sigma$ be the set of qpf formul{\ae} $\alpha$ of rank
  $\rankof{\alpha} = \ell$, such that:~\begin{enumerate}
  \item\label{it1:alpha-sid} $\fv{\alpha} \subseteq
    \set{\atpos{x}{\epsilon}_1, \ldots, \atpos{x}{\epsilon}_{n_0}}
    \cup \set{\atpos{y}{\epsilon}_1, \ldots, \atpos{y}{\epsilon}_{m}}
    \cup \bigcup_{i=1}^{\ell} \set{\atpos{x}{i}_1, \ldots,
      \atpos{x}{i}_{n_i}}$, for some $m, n_0, \ldots, n_{\ell} \in
    \nat$; a variable $\atpos{x}{i}_j$ is called a
    \emph{$i$-variable}, for all $i \in
    \set{\epsilon}\cup\interv{1}{\ell}$,
  \item\label{it2:alpha-sid} $\atpos{x}{i}_j \not\eqof{\alpha}
    \atpos{x}{i}_k$, for all $i \in \interv{1}{\ell}$ and $1 \le j <
    k \le n_i$.
  \end{enumerate}
  The \emph{characteristic formula} of a $\Sigma$-labeled tree $t$ is
  the qpf formula $\charform{t} \isdef \Bigstar_{\!\!p \in \dom{t}}
  ~\atpos{t(p)}{p}$, where the formul{\ae} $\atpos{t(p)}{p}$ are
  obtained from $t(p)\in\Sigma$ by replacing each occurrence of a
  variable $\atpos{x}{q}$ by $\atpos{x}{pq}$, for all $p\in\dom{t}$.
\end{defi}

Given a SID $\asid$, the $\Sigma$-labeled automaton
$\auto{\asid}{\apred} \isdef (\Sigma, \states_\asid, q_{\apred},
\trans_\asid)$ is defined as follows:
\begin{itemize}[label=$\triangleright$]
\item $\states_\asid$ contains states $q_\bpred$, where $\bpred$ is a
  predicate occurring in $\asid$; each state has an associated arity
  $\arityof{q_\bpred}\isdef\arityof{\bpred}$,
\item $\trans_\asid$ contains a transition $q_{\apred_0}
  \arrow{\alpha_\arule}{} (q_{\apred_1}, \ldots, q_{\apred_\ell})$,
  where $\alpha_\arule$ is the symbol:
  \[
  \alpha_\arule \isdef \psi[x_1/\atpos{x}{\epsilon}_1, \ldots,
    x_{n_0}/\atpos{x}{\epsilon}_{n_0},~ y_1/\atpos{y}{\epsilon}_1,
    \ldots, y_m/\atpos{y}{\epsilon}_m] * \Bigstar\nolimits_{i=1}^\ell
  \Bigstar\nolimits_{j=1}^{n_i} ~\atpos{z}{\epsilon}_{i,j} = \atpos{x}{i}_j \label{rule:alpha}
  \]
  of rank $\rankof{\alpha_\arule} = \ell$ that corresponds to the rule
  $\arule\in\asid$, where $\psi$ is a qpf formula:
  \[
    \arule ~:~ \apred_0(x_1, \ldots, x_{n_0}) \leftarrow \exists y_1 \ldots
    \exists y_m ~.~ \psi * \Bigstar\nolimits_{i=1}^\ell
    \apred_i(z_{i,1},\ldots,z_{i,n_i}) \label{rule:sid}
  \]
\end{itemize}

\begin{exa}\label{ex:sid-ta}
  Let us consider the following SID:
  \[\asid = \left\{\begin{array}{rcl}
    \apred & \leftarrow & \exists y_1 \exists y_2 \exists y_3 ~.~ \bpred(y_1,y_2,y_3) \\
    \bpred(x_1,x_2,x_3) & \leftarrow & \exists y_4 ~.~ \mathsf{e}(x_1,x_3) * \mathsf{e}(x_1,y_4) * \bpred(y_4,x_2,x_3) \\
    \bpred(x_1,x_2,x_3) & \leftarrow & \mathsf{e}(x_1,x_3) * \mathsf{e}(x_1,x_2) * \mathsf{e}(x_2,x_3)
  \end{array}\right.\]
  The automaton $\auto{\asid}{\apred}$ has the following transitions:
  \[\auto{\asid}{\apred} = \left\{\begin{array}{ll}
  q_{\apred} & \arrow{\atpos{y}{\epsilon}_1 = \atpos{x}{1}_1 *~ \atpos{y}{\epsilon}_2 = \atpos{x}{1}_2 *~ \atpos{y}{\epsilon}_3 = \atpos{x}{1}_3}{} (q_\bpred) \\
  q_{\bpred} & \arrow{\mathsf{e}(\atpos{x}{\epsilon}_1,~\atpos{x}{\epsilon}_3) ~*~ \mathsf{e}(\atpos{x}{\epsilon}_1,~\atpos{y}{\epsilon}_4) ~*~
  \atpos{y}{\epsilon}_4=\atpos{x}{1}_1 *~ \atpos{x}{\epsilon}_2 = \atpos{x}{1}_2 *~ \atpos{x}{\epsilon}_3 = \atpos{x}{1}_3}{} (q_\bpred) \\
  q_{\bpred} & \arrow{\mathsf{e}(\atpos{x}{\epsilon}_1,~\atpos{x}{\epsilon}_3) ~*~
    \mathsf{e}(\atpos{x}{\epsilon}_1,~\atpos{x}{\epsilon}_2) ~*~
    \mathsf{e}(\atpos{x}{\epsilon}_2,~\atpos{x}{\epsilon}_3)}{} ()
  \end{array}\right.
  \]
\end{exa}



Dually, given an automaton $\mathcal{A} =
(\Sigma,\states,\initstate,\trans)$, the SID $\asid_{\mathcal{A}}$
consists of the following rules, one for each transition $q_0
\arrow{\alpha}{} (q_1,\ldots,q_\ell) \in \trans$:
\begin{align}
  \hspace*{-3mm}\apred_{q_0}(x_1,\ldots,x_{\arityof{q_0}}) \leftarrow & ~\exists y_1
  \ldots \exists y_m ~.~ \alpha[\atpos{x}{\epsilon}_1/x_1, \ldots,
    \atpos{x}{\epsilon}_{\arityof{q_0}}/x_{\arityof{q_0}}] *
  \Bigstar\nolimits_{j=1}^\ell
  \apred_{q_j}(\atpos{x}{j}_1,\ldots,\atpos{x}{j}_{\arityof{q_j}}) \label{rule:ta-sid} \\
  \text{where } & \set{y_1, \ldots, y_m} \isdef \fv{\alpha}
  \setminus \big(\set{\atpos{x}{\epsilon}_1, \ldots,
    \atpos{x}{\epsilon}_{\arityof{q_0}}} \cup \bigcup\nolimits_{j=1}^\ell
  \set{\atpos{x}{j}_1, \ldots, \atpos{x}{j}_{\arityof{q_j}}}\big) \nonumber
\end{align}

The similarity between SIDs and $\Sigma$-labeled automata
(\autoref{lemma:sid-ta}) motivates the use of similar terminology.
For a $\Sigma$-labeled automaton $\mathcal{A}$, we define
$\sem{\mathcal{A}} \isdef \bigcup_{t\in\langof{}{\mathcal{A}}}
\sem{\exclof{\charform{t}}}$, $\csem{\mathcal{A}}{} \isdef
\bigcup_{t\in\langof{}{\mathcal{A}}} \csem{\exclof{\charform{t}}}{}$
and $\rcsem{\mathcal{A}}{} \isdef \bigcup_{t\in\langof{}{\mathcal{A}}}
\rcsem{\exclof{\charform{t}}}{}$. Moreover, a $\Sigma$-labeled
automaton $\mathcal{A}$ is all-satisfiable if the formula
$\charform{t}$ is satisfiable, for all $t \in
\langof{}{\mathcal{A}}$. The relation between SIDs and
$\Sigma$-labeled automata is formally stated below:

\begin{lem}\label{lemma:sid-ta} \hfill
  \begin{enumerate}
  \item\label{it1:lemma:sid-ta} Given a SID $\asid$ and a nullary
    predicate $\apred$, one can build an automaton
    $\auto{\asid}{\apred}$ such that $\sidsem{\apred}{\asid} =
    \sem{\auto{\asid}{\apred}}$. Moreover, if $\asid$ is
    all-satisfiable for $\apred$, then $\auto{\asid}{\apred}$ is
    all-satisfiable.
  \item\label{it2:lemma:sid-ta} Given an automaton
    $\mathcal{A}=(\Sigma,\states,\initstate,\trans)$, one can build a
    SID $\asid_{\mathcal{A}}$, such that $\sem{\mathcal{A}} =
    \sidsem{\apred_\initstate}{\asid_{\mathcal{A}}}$ and
    $\rcsem{\mathcal{A}}{} =
    \rcsem{\apred_\initstate}{\asid_{\mathcal{A}}}$, for a nullary
    predicate $\apred$.
  \end{enumerate}
\end{lem}
\begin{proof}
  For space reasons, the proof of this lemma is given in
  \autoref{app:sid-ta}.
\end{proof}

\subsection{Persistent Variables}
\label{sec:persistent-variables}

The second ingredient of the construction of the expandable SIDs from
\autoref{lemma:expansion} are the \emph{persistent} variables of a
$\Sigma$-labeled choice-free automaton. These are variables introduced
by the $1$-transitions of the automaton, whose values propagate via
equalities throughout each run of the choice-free automaton. We define
persistent variables using the notion of \emph{profile}:

\begin{defi}\label{def:profile}
  Let $\mathcal{A}=(\Sigma,\states,\initstate,\trans)$ be a
  choice-free automaton, where $\trans=\trans^1\uplus\trans^\infty$
  (\autoref{def:choice-free}). A \emph{positional function} $\posfunc
  : \states \rightarrow \pow{\nat}$ associates each state $q$ with a
  set $\posfunc(q)\subseteq\interv{1}{\arityof{q}}$. The \emph{profile
    of $\mathcal{A}$} is the pointwise largest positional function
  $\profile{\mathcal{A}}$ such that, for each transition $q_0
  \arrow{\alpha}{} (q_1, \ldots, q_\ell) \in \trans^\infty$, each $k \in
  \interv{1}{\ell}$ and each $r \in \profile{\mathcal{A}}(q_k)$,
  there exists $s \in \profile{\mathcal{A}}(q_0)$, such that
  $\atpos{x}{\epsilon}_s\eqof{\alpha}\atpos{x}{k}_r$. A variable
  $\atpos{x}{i}_j$ that occurs within the label of a transition
  $q_0 \arrow{\alpha}{} (q_1,\ldots,q_\ell)\in\trans$ is said
  to be \emph{persistent} iff $j \in
  \profile{\mathcal{A}}(q_i)$, for all $i \in \interv{0}{\ell}$.
\end{defi}
Intuitively, $\profile{\mathcal{A}}(q)$ is the set of indices of those
variables, associated with a state, that will be equated, through a
chain of equalities in the characteristic formula $\charform{t}$, to
the same variable associated with the corresponding pivot state (Lemma
\ref{lemma:choice-free-property}) in every run of $\infty$-transitions
of $\mathcal{A}$ over $t$. Note that the profile is computable by a
finite greatest fixpoint Kleene iteration over each SCC of the
automaton (see the proof of \autoref{lemma:reset} for an explicit
statement of that fixed point).

Without loss of generality, we assume that a $\Sigma$-labeled
automaton does not have trivial SCCs, i.e., consisting of a single
state, with no transitions that are both outgoing and incoming to that
state. The profile associated with such state $q$ would be the
interval $\interv{1}{\arityof{q}}$ and any variable in the label of an
incoming or outgoing transition would be unnecessarily considered persistent. The
trivial SCCs of a $\Sigma$-labeled automaton can be eliminated by a
pre-processing step which combines the labels of the incoming and
outgoing transitions and renames the variables according to the
convention (\autoref{def:alpha-sid}).

\begin{exa}\label{ex:profile}(continued from \autoref{ex:sid-ta})
  The profile of the automaton $\auto{\asid}{\apred}$, that
  corresponds to the SID $\asid$ from \autoref{ex:sid-ta}
  associates $q_\apred$ with the empty set and $q_\bpred$ with the set
  $\set{2,3}$. Note that $\arityof{q_\apred}=0$ and
  $\arityof{q_\bpred}=3$. The first and third transitions in
  \autoref{ex:sid-ta} are $1$-transitions, whereas the second
  transition is an $\infty$-transition. The variables $\atpos{x}{1}_2$
  and $\atpos{x}{1}_3$ from the label of the second transition
  ($\infty$) are persistent.
\end{exa}

A \emph{context} $\arun_{p \leftarrow q}$ is a partial run over a tree
$t$ such that $p \in \frof{\arun_{p \leftarrow q}}$, $\arun_{p
  \leftarrow q}(p)=q$ and $\arun_{p \leftarrow q}(r) \arrow{t(r)}{}
()$, for all $r \in \frof{\arun_{p \leftarrow q}}\setminus{p}$, i.e.,
the partial run has exactly one ``open'' frontier position $p$ that is
labeled with a state $q$. A key property of automata is that
equalities between non-persistent variables vanish in contexts
consisting of $\infty$-transitions only (\autoref{lemma:reset}). These contexts, called \emph{resets}, are formally
defined below:

\begin{defi}\label{def:reset}
  A context $\arun_{p \leftarrow q} \in
  \runsof{\infty}{q}{\mathcal{A}}$ over a tree $t$ is a
  $q$-\emph{reset} iff~\begin{enumerate*}
  \item\label{it1:def:reset} $\atpos{x}{\epsilon}_j
    \eqof{\charform{t}} \atpos{x}{p}_j$, for all $j \in
    \profile{\mathcal{A}}(q)$, and
  \item\label{it2:def:reset} $\atpos{x}{\epsilon}_j
    \not\eqof{\charform{t}} \atpos{x}{p}_k$, for all $j,k \in
    \interv{1}{\arityof{q}}$, such that $k \not\in
    \profile{\mathcal{A}}(q)$.
  \end{enumerate*}
  The path between $\epsilon$ and $p$ in $\arun_{p\leftarrow q}$ is
  called a \emph{reset path}.
\end{defi}

\begin{exa}\label{ex:reset}(continued from \autoref{ex:sid-ta})
For instance, in the context $\arun_{1 \leftarrow q_\bpred}$, that
consists of the second transition $q_\bpred \arrow{\alpha}{} q_\bpred$
of the automaton in \autoref{ex:sid-ta}, we have
$\atpos{x}{\epsilon}_1 \not\eqof{\alpha} \atpos{x}{1}_1$. Then, the
value of $\atpos{x}{\epsilon}_1$ is ``forgotten'' along any run that
iterates this transition at least twice.
\end{exa}

\begin{lem}\label{lemma:reset}
  Let $\mathcal{A}=(\Sigma,\states,\initstate,\trans)$ be a trim
  automaton. Then, there exists a $q$-reset for \begin{enumerate*}
  \item\label{it1:lemma:reset} each pivot state $q \in
    \post{(\trans^1)} \cap \pre{(\trans^\infty)}$, and
  \item\label{it2:lemma:reset} each state $q \in \pre{(\trans^1)} \cap
    \pre{(\trans^\infty)}$, i.e., that is the origin of both a
    $1$-transition and a $\infty$-transition of $\mathcal{A}$.
  \end{enumerate*}
\end{lem}
\begin{proof}
  By \autoref{def:profile}, $\profile{\mathcal{A}}$ is the greatest
  fixpoint of the monotone function $\mathcal{F}$ on the domain of
  positional functions  $\posfunc : \states \rightarrow \pow{\nat}$ , defined below:
  \[
    \mathcal{F}(\posfunc) \isdef \lambda q ~.~ \bigcap_{\begin{array}{c}
         \scriptstyle{q_0 \arrow{\alpha}{} (q_1,\ldots,q_\ell) \in \trans^\infty} \\
         \scriptstyle{q = q_k \in \set{q_1,\ldots,q_\ell}}
     \end{array}} \set{r\in\interv{1}{\arityof{q_k}} \mid \exists s \in \posfunc(q_0) ~.~
       \atpos{x_s}{\epsilon} \eqof{\alpha} \atpos{x_r}{k}}
  \]
  Namely, we have $\profile{\mathcal{A}} = \mathcal{F}^i(\top) =
  \mathcal{F}^j(\top)$, for a sufficiently large $i\geq1$ and any $j
  \geq i$, where $\top$ is the positional function $\lambda q ~.~
  \interv{1}{\arityof{q}}$. Now consider the following ``big-step''
  function $\mathcal{G}$ on the domain of positional functions:
  \[
    \mathcal{G}(\posfunc) \isdef \lambda q ~.~ \bigcap_{\begin{array}{c}
        \scriptstyle{\arun \in \runsof{\infty}{q}{\mathcal{A}} \text{ partial run over } t} \\
        \scriptstyle{p \in \frof{\arun} \text{, such that } \arun(p)=q}
    \end{array}} \set{r \in \interv{1}{\arityof{q}} \mid \exists s \in \posfunc(q) ~.~
      \atpos{x_s}{\epsilon} \eqof{\charform{t}} \atpos{x_r}{p}}
  \]
  We prove the following:
  \begin{fact}\label{fact:reset}
    $\gfp(\mathcal{F})(q) = \gfp(\mathcal{G})(q)$, for any pivot state
    $q$ of $\mathcal{A}$.
  \end{fact}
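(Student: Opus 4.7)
My plan is to prove the two inclusions $\gfp(\mathcal{F})(q) \subseteq \gfp(\mathcal{G})(q)$ and $\gfp(\mathcal{G})(q) \subseteq \gfp(\mathcal{F})(q)$ separately, with the pivot hypothesis used crucially only in the second direction. Both functions are monotone on the finite lattice of positional functions, so their greatest fixed points coincide with their stationary Kleene iterates from $\top = \lambda q.\,\interv{1}{\arityof{q}}$, which I will use freely.

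For the easy inclusion $\gfp(\mathcal{F})(q) \subseteq \gfp(\mathcal{G})(q)$, it suffices to show that the component $\gfp(\mathcal{F})(q)$ is contained in $\mathcal{G}(\gfp(\mathcal{F}))(q)$, i.e., it is a post-fixed point of $\mathcal{G}$ at $q$. Take $r \in \gfp(\mathcal{F})(q)$, any partial $\infty$-run $\arun \in \runsof{\infty}{q}{\mathcal{A}}$ over a tree $t$ and any frontier position $p \in \frof{\arun}$ with $\arun(p)=q$. I would perform backward induction along the path from $p$ up to $\epsilon$, proving the invariant: for every position $p'$ on this path there exists $s' \in \gfp(\mathcal{F})(\arun(p'))$ with $\atpos{x_{s'}}{p'} \eqof{\charform{t}} \atpos{x_r}{p}$. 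The base case $p'=p$ is $s' = r$; the inductive step uses the defining equation $\gfp(\mathcal{F}) = \mathcal{F}(\gfp(\mathcal{F}))$ at the $\infty$-transition applied at the parent $p''$ of $p'$, which supplies a witness $s''\in\gfp(\mathcal{F})(\arun(p''))$ with $\atpos{x_{s''}}{\epsilon}\eqof{\alpha}\atpos{x_{s'}}{k}$, lifting via $\charform{t}$ and composing by transitivity of $\eqof{\charform{t}}$. Instantiating at $p' = \epsilon$ yields the required $s \in \gfp(\mathcal{F})(q)$.

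For the hard inclusion $\gfp(\mathcal{G})(q) \subseteq \gfp(\mathcal{F})(q)$, I would argue by contraposition using the descending Kleene sequence $\top \supseteq \mathcal{F}(\top) \supseteq \mathcal{F}^2(\top) \supseteq \ldots$ that stabilizes at $\gfp(\mathcal{F})$. Assume $r \notin \gfp(\mathcal{F})(q)$ and let $n\geq 1$ be minimal with $r \notin \mathcal{F}^n(\top)(q)$. Unrolling the definition of $\mathcal{F}$ exactly $n$ times builds a partial $\infty$-run in which some frontier position $p$ is labeled by $q$ and along the path from $\epsilon$ to $p$ no chain of $\eqof{\alpha}$-equalities can connect $\atpos{x_r}{p}$ to any $\atpos{x_s}{\epsilon}$ with $s \in \gfp(\mathcal{F})(q)$; by the first inclusion combined with the completeness of the chain argument, this implies $\atpos{x_r}{p} \not\eqof{\charform{t}} \atpos{x_s}{\epsilon}$ for all such $s$, showing $r \notin \gfp(\mathcal{G})(q)$. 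The delicate point is guaranteeing that the unrolled partial run genuinely belongs to $\runsof{\infty}{q}{\mathcal{A}}$ with a frontier occurrence of $q$: this is precisely where I invoke Lemma \ref{lemma:choice-free-property}, whose conclusion for pivot states $q$ ensures that any partial $\infty$-run from $q$ can be completed so that $q$ reappears on the frontier.

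The main obstacle will be the second direction, and specifically the step of extracting, from $n$ iterations of $\mathcal{F}$, a single partial $\infty$-run whose frontier exposes the breakage at the intended state $q$. Without pivotness one could at best place the breakage at some intermediate state along an $\infty$-path, which would not contradict membership in $\gfp(\mathcal{G})(q)$; the choice-free structure of $\mathcal{A}$ and the loop-back property of pivot states close this gap. Once the Fact is established, the resets of Definition \ref{def:reset} follow by applying the $\mathcal{G}$-characterization to the $q$-returning partial $\infty$-run guaranteed by Lemma \ref{lemma:choice-free-property}.
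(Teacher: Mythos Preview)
Your easy direction is fine and matches the paper's one-line justification.

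The hard direction has a genuine gap. You build a run witnessing that $\atpos{x_r}{p}$ has no $\eqof{\charform{t}}$-chain to any $\atpos{x_s}{\epsilon}$ with $s\in\gfp(\mathcal{F})(q)$, and then write ``by the first inclusion \ldots\ showing $r\notin\gfp(\mathcal{G})(q)$''. That step is circular: membership in $\gfp(\mathcal{G})(q)$ is characterized by the existence of a witness $s\in\gfp(\mathcal{G})(q)$, not $s\in\gfp(\mathcal{F})(q)$, and the first inclusion only gives $\gfp(\mathcal{F})(q)\subseteq\gfp(\mathcal{G})(q)$, the wrong direction to shrink the witness set. What you actually need, and what the construction does deliver, is the stronger conclusion that $\atpos{x_r}{p}$ is not chained to \emph{any} $\atpos{x_s}{\epsilon}$ whatsoever, i.e., $r\notin\mathcal{G}(\top)(q)\supseteq\gfp(\mathcal{G})(q)$. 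This holds because the backward unrolling of $\mathcal{F}$ tracks a \emph{single} index at each step (in an equality-free SID every $\atpos{x_r}{k}$ is $\eqof{\alpha}$-related to at most one $\atpos{x_s}{\epsilon}$), and necessarily terminates at a step where the tracked index has no $\epsilon$-predecessor at all; at that point the equality chain is severed for every $s$, not just for $s\in\gfp(\mathcal{F})(q)$. You should make this explicit rather than appealing to the first inclusion.

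A second imprecision: your invocation of Lemma~\ref{lemma:choice-free-property} is phrased as ``any partial $\infty$-run from $q$ can be completed so that $q$ reappears on the frontier'', but the unrolled path \emph{ends} at $q$; it does not start there. What you actually need is that the source state $q^{(j)}$ of the unrolled path lies in the same SCC as the pivot $q$ (since the only transition entering that SCC is a $1$-transition), so a prefix path $q\reach^* q^{(j)}$ of $\infty$-transitions exists. This is the content behind the paper's one-line ``$\supseteq$'' argument (``every $\infty$-transition incoming to $q$ belongs to a partial run $\arun\in\runsof{\infty}{q}{\mathcal{A}}$ with $q$ on the frontier''), which is the same SCC observation stated without the explicit unrolling.
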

  \begin{proof}``$\subseteq$'' Each partial run
    $\arun\in\runsof{\infty}{q}{\mathcal{A}}$ such that $\arun(p)=q$,
    for some $p \in \frof{\arun}$ corresponds to a finite sequence of
    transitions from $\trans^\infty$. ``$\supseteq$'' Since $q$ is a
    pivot state we have $q\in\pre{(\trans^\infty)}$, thus necessarily
    $q=q_0$, where $q_0\in\post{(\trans^1)}$ is the state whose
    existence is stated by \autoref{lemma:choice-free-property}. Then
    every $\infty$-transition incoming to $q$ belongs to a partial run
    $\arun\in\runsof{\infty}{q}{\mathcal{A}}$, such that $q$ occurs on
    the frontier of $\arun$.
  \end{proof}

  \skipnoindent Back to the proof, we prove the two points of the statement below:

  \skipnoindent (\ref{it1:lemma:reset}) Let $q$ be a pivot state of
  $\mathcal{A}$. By \autoref{fact:reset}, we have
  $\profile{\mathcal{A}}(q)=\mathcal{G}^i(\top)(q)$ for a sufficiently
  large finite integer $i\geq0$. We show that the latter condition is
  equivalent to the existence of a $q$-reset $\arun_{p \leftarrow
    q}\in\runsof{\infty}{q}{\mathcal{A}}$.

  \skipnoindent``$\Leftarrow$'' Assume that there exists a $q$-reset
  $\arun\in\runsof{\infty}{q}{\mathcal{A}}$ over some tree $t$. Then
  $\mathcal{G}(\top)(q) = \set{j \in \interv{1}{\arityof{q}} \mid
    \exists k \in \interv{1}{\arityof{q}} ~.~ \atpos{x_j}{\epsilon}
    \eqof{\charform{t}} \atpos{x_k}{p}} = \profile{\mathcal{A}}(q)$.

  \skipnoindent``$\Rightarrow$'' Assume there exists $i\geq0$, such
  that $\mathcal{G}^i(\top)(q) = \profile{\mathcal{A}}(q)$ and let $i$
  be the smallest such integer.  Then $\top, \mathcal{G}(\top),
  \mathcal{G}^2(\top), \ldots, \mathcal{G}^{i}(\top)$ is a strictly
  decreasing sequence hence, for each $j\in\interv{1}{i}$, there
  exists a partial run $\arun_j \in \runsof{\infty}{q}{\mathcal{A}}$
  over some tree $t_j$ and a position $p_j \in \frof{t_j}$, such that
  $\arun_j(p_j)=q$ and \(\set{r \in \interv{1}{\arityof{q}} \mid
    \exists s \in \mathcal{G}^{j-1}(\top)(q) ~.~ \atpos{x_s}{\epsilon}
    \eqof{\charform{t_j}} \atpos{x_r}{p_j}} \subsetneq
  \mathcal{G}^{j-1}(\top)(q)\). We compose these partial runs
  $\arun_1, \ldots, \arun_i$ by appending each $\arun_j$ to
  $\arun_{j-1}$ at position $p_{j-1} \in \frof{\arun_{j-1}}$, for all
  $j \in \interv{2}{i}$ into a partial $\arun'' \in
  \runsof{\infty}{q}{\mathcal{A}}$. We define a context
  $\overline{\arun}_{p_i\leftarrow q}$ by appending to each position
  $r\in\frof{\arun}\setminus\set{p_i}$ a complete run starting in
  $\arun(r)$. By the fact that $\mathcal{A}$ is trim, such a run
  exists. The context $\overline{\arun}_{p_i\leftarrow q}$ satisfies
  condition (\ref{it2:def:reset}) of \autoref{def:reset}, but not
  necessarily (\ref{it1:def:reset}). Let $\pi :
  \profile{\mathcal{A}}(q) \rightarrow \profile{\mathcal{A}}(q)$ be a
  permutation defined as $\pi(i)=j$ iff $x_i
  \eqof{\charform{\overline{t}}} x_j$, there $\overline{t}$ is the
  tree recognized by the partial run $\overline{\arun}_{p_i\leftarrow
    q}$ of $\mathcal{A}$. Note that the choice of $j$ is not unique,
  but one exists, by \autoref{def:profile}. Then we define the
  $q$-reset $\arun_{p_i \leftarrow q}$ by appending
  $\overline{\arun}_{p_i\leftarrow q}$ to itself at position $p_i$ a
  number of times equal to the order of $\pi$. Then, one can check
  that $\arun_{p_i \leftarrow q}$ satisfies both conditions of
  \autoref{def:reset}.

  \skipnoindent (\ref{it2:lemma:reset}) Let $S$ be the SCC of $q$ in
  $\mathcal{A}$. Since $q = \pre{\tau}$, for some transition $\tau \in
  \trans^1$, it must be the case that $S$ is a linear SCC, by
  \autoref{def:choice-free}.  Also $q \in \pre{(\trans^\infty)}$ thus,
  by \autoref{lemma:choice-free-property}, there exists a pivot state
  $q_0$ in $S$ and let $\arun^0 \in \runsof{\infty}{q}{\mathcal{A}}$
  be a partial run from $q$ to $q_0$ with transitions from
  $\prepost{S} \subseteq \trans^\infty$. From point
  (\ref{it1:lemma:reset}) above we obtain a $q_0$-reset $\arun^1_{p_1
    \leftarrow q_0} \in \runsof{\infty}{q_0}{\mathcal{A}}$ such that
  $\atpos{x}{\epsilon}_j \eqof{\charform{t}} \atpos{y}{p_1}_k$, for
  all $j,k\in\profile{\mathcal{A}}(q_0)$ and $\atpos{x}{\epsilon}_j
  \not\eqof{\charform{t}} \atpos{y}{p_1}_k$, for all $j,k \in
  \interv{1}{\arityof{q_0}}$, $k \not\in
  \profile{\mathcal{A}}(q_0)$. Moreover, there exists another context
  $\arun^2_{p_2 \leftarrow q} \in \runsof{\infty}{q_0}{\mathcal{A}}$.
  By the choice of the pivot state $q_0$, there exists a position $p_0
  \in \frof{\arun^0}$ such that $\arun^0(p_0)=q_0$. Let $p \isdef
  p_0p_1p_2$ and $\overline{\arun}_{p \leftarrow q}$ be the context
  consisting of $\arun^0$ to which we append, in this order:
  \begin{itemize}[label=$\triangleright$]
  \item $\arun^1$ on position $p_0$,
  \item $\arun^2$ on the position $p_0p_2$,
  \item to any other position $r \in (\frof{\arun^0} \setminus
    \set{p_0}) \cup (p_0 \cdot \frof{\arun^1} \setminus \set{p_1})
    \cup (p_0p_1 \cdot \frof{\arun^2} \setminus \set{p_2})$ a complete
    run starting in:
    \begin{itemize}
    \item $\arun^0(r)$ if $r \in \frof{\arun^0} \setminus \set{p_0}$,
    \item $\arun^1(r')$ if $r = p_0 r'$ and $r' \in \frof{\arun^1}
      \setminus \set{p_1}$, and
    \item $\arun^2(r'')$ if $r = p_0 p_1 r''$ and $r'' \in
      \frof{\arun^2} \setminus \set{p_2}$.
    \end{itemize}
    Such runs exist by the assumption that $\mathcal{A}$ is
    trim. Moreover, these runs use only $\infty$-transitions, because
    their states are from $\infty$-SCCs (\autoref{def:choice-free}).
  \end{itemize}
  It is easy to check that $\overline{\arun}$ satisfies condition
  (\ref{it2:def:reset}) of \autoref{def:reset}. In order to satisfy
  condition (\ref{it1:def:reset}), in addition to
  (\ref{it2:def:reset}), we append $\overline{\arun}$ to itself at
  position $p$, using the same idea as in the construction at point
  (\ref{it1:lemma:reset}).
\end{proof}

The purpose of introducing resets is proving that any sequence of
partial runs consisting of $\infty$-transitions can be embedded in a
complete run, such that each two such partial runs are separated by
any number of resets. This is a key ingredient for the proof of the
``embedding'' property of the canonical models of expandables SIDs
(\autoref{def:expandable}).

  \begin{figure}[htbp]
    \begin{center}
    \input{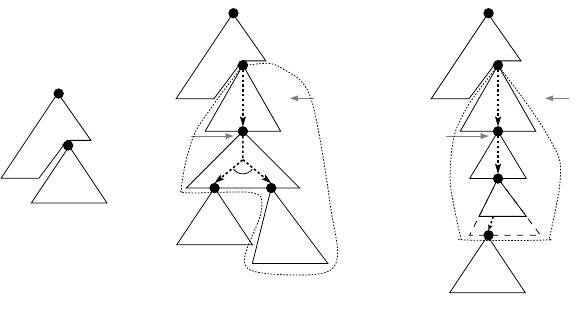_t}
    \caption{\label{fig:runs-embedding} Embedding of a partial run $\arun_i$ in $\arun$}
    \end{center}
  \end{figure}

\begin{lem}\label{lemma:runs-embedding}
  Let $\mathcal{A}$ be a trim choice-free automaton. Given partial
  runs $\arun_1\in\runsof{\infty}{q_1}{\mathcal{A}}$, $\ldots$,
  $\arun_n \in \runsof{\infty}{q_n}{\mathcal{A}}$ and an integer
  $k\geq1$, there exists an accepting run $\arun$ of $\mathcal{A}$
  such that: \begin{enumerate}
  \item\label{it1:runs-embedding} $\arun_i$ is embedded in $\arun$ at
    some position $p_i\in\dom{\arun}$, for each $i\in\interv{1}{n}$,
  \item\label{it2:runs-embedding} $p_i\cdot\dom{\arun_i} \cap
    p_j\cdot\dom{\arun_j} = \emptyset$, for all $1 \le i < j \le n$,
  \item\label{it3:runs-embedding} the path between $p_i$ and $p_j$ in
    $\arun$ traverses $k$ times some reset path disjoint from
    $\bigcup_{\ell=1}^n p_\ell\cdot\dom{\arun_\ell}$, for all $1 \le
    i < j \le n$.
  \end{enumerate}
\end{lem}
\begin{proof}
  Let $\arun$ be an arbitrary accepting run of $\mathcal{A}$. By
  \autoref{lemma:one-transitions}, each $1$-transition occurs exactly
  once on $\arun$, hence $\arun$ visits each pivot state at least
  once. The partial runs $\arun_1, \ldots, \arun_n$ will be inserted
  into $\arun$ one by one, as described next. First, for each $\arun_i
  \in \runsof{\infty}{q_i}{\mathcal{A}}$, we have a pivot state
  $q^0_i$ and a partial run $\arun^0_i \in
  \runsof{\infty}{q^0_i}{\mathcal{A}}$, satisfying condition
  \ref{it1:lemma:choice-free-property} or
  \ref{it2:lemma:choice-free-property} of
  \autoref{lemma:choice-free-property}. Since $q^0_i$ occurs on
  $\arun$, we can insert in $\arun$ a new partial run
  $\arun'_i\in\runsof{\infty}{q^i_0}{\mathcal{A}}$ defined next.  By
  \autoref{lemma:reset} (\ref{it1:lemma:reset}), there exists a
  $q^i_0$-reset sequence $\arun_r^i \in
  \runsof{\infty}{q^i_0}{\mathcal{A}}$. The partial run $\arun'_i$ is
  obtained by composing $\arun_r^i$ with itself $k$ times, followed by
  $\arun^i_0$.  These compositions are possible, because $q^i_0$
  occurs at the root of $\arun^i_r$ and $\arun^i_0$, as well as the
  frontier of $\arun^i_r$. Depending on which condition of
  \autoref{lemma:choice-free-property} is satisfied by $q^i_0$ and
  $\arun^i_0$, we distinguish the following cases (see
  \autoref{fig:runs-embedding} for an illustration):
  \begin{itemize}[label=$\triangleright$]
  \item \underline{condition (\ref{it1:lemma:choice-free-property}) of
    \autoref{lemma:choice-free-property} holds}: in this case $q_i$
    and $q^i_0$ occur on different positions on the frontier of
    $\arun^i_0$, thus we append $\arun_i$ on the position where $q_i$
    occurs and the rest of $\arun$ on the position where $q^i_0$
    occurs.
  \item \underline{condition (\ref{it2:lemma:choice-free-property}) of
    \autoref{lemma:choice-free-property} holds}: in this case only
    $q^i_0$ occurs on the frontier of $\arun^i_0$, thus we continue
    with $\arun_i$, which can be extended to reach $q^i_0$ again, by
    \autoref{lemma:choice-free-property}. From this second
    occurrence of $q^i_0$, we continue with $\arun$.
  \end{itemize}

  \noindent We prove the points from the statement of the Lemma
  below: \begin{itemize}[left=.5\parindent]
  \item[(\ref{it1:runs-embedding})] The runs $\arun_1, \ldots,
    \arun_n$ are inserted into $\arun$ at positions $p_1, \ldots,
    p_n$, respectively.
  \item[(\ref{it2:runs-embedding})] Since $\arun_1, \ldots, \arun_n$
    are inserted one after the other (the order is not important), we
    have $p_i \cdot \dom{\arun_i} \cap p_j \cdot \dom{\arun_j} =
    \emptyset$, for all $1 \le i < j \le n$.
  \item[(\ref{it3:runs-embedding})] By the definition of $\arun'_1,
    \ldots, \arun'_n$, the path between $p_i$ and $p_j$ traverses $k$
    times the $\arun^i_r$ or $\arun^j_r$ reset sequences that are,
    moreover, disjoint from each $p_k \cdot \dom{\arun_k}$, for $k \in
    \interv{1}{n}$. \qedhere
  \end{itemize}
\end{proof}

\subsection{Eliminating Persistent Variables}
\label{sec:persistent-variables-elimination}

For the rest of this section, let $\auto{\asid}{\apred} =
(\Sigma, \states_\asid, q_\apred, \trans_\asid)$ be the automaton built
for the given SID $\asid$ and nullary predicate $\apred$, by the
construction of \autoref{lemma:sid-ta}. Since $\asid$ was assumed to
be all-satisfiable, the same can be assumed about
$\auto{\asid}{\apred}$, by \autoref{lemma:sid-ta}
(\ref{it1:lemma:sid-ta}).

Moreover, we can assume, without loss of generality, that
$\auto{\asid}{\apred}$ is choice-free and let $\trans_\asid =
\trans^1_\asid \uplus \trans^\infty_\asid$ be the partition of the
transitions of $\auto{\asid}{\apred}$
(\autoref{def:choice-free}). If this is not the case, we consider one
of the finitely many automata in the language-preserving choice-free
decomposition of $\auto{\asid}{\apred}$ (\autoref{lemma:choice-free}).

The transformation proceeds in three stages, denoted (I), (II) and
(III) below. The result of each stage is one or more choice-free
automata that are treewidth bounded if and only if the set
$\sem{\auto{\asid}{\apred}}$ is treewidth bounded.

\begin{exa}\label{ex:persistent-variables-elimination}
  We shall illustrate each stage of the construction on the following SID:
  \[\asid \left\{\begin{array}{rcl}
  \apred & \leftarrow & \exists y_1 \exists y_2 \exists y_3 ~.~ \cpred_1(y_1,y_2,y_3) \\
  \cpred_1(x_1,x_2,x_3) & \leftarrow & \exists y_4 ~.~ \erel(x_1,y_4) * \erel(x_3,y_4) * \cpred_1(y_4,x_2,x_3) \\
  \cpred_1(x_1,x_2,x_3) & \leftarrow & \exists y_5 ~.~ \erel(x_1,x_2) * \cpred_2(x_2,y_5,x_3) \\
  \cpred_2(x_1,x_2,x_3) & \leftarrow & \exists y_6 ~.~ \erel(x_1,y_6) * \erel(x_3,y_6) * \cpred_2(y_6,x_2,x_3) \\
  \cpred_2(x_1,x_2,x_3) & \leftarrow & \erel(x_1,x_2)
  \end{array}\right.\]
  For simplicity, the existentially quantified variables are given pairwise distinct
  names.  The automaton $\auto{\asid}{\apred}$ has the following transitions:
  \[\auto{\asid}{\apred} \left\{ \begin{array}{lll}
    \tau_1: & q_\apred \arrow{\atpos{y}{\epsilon}_1 = \atpos{x}{1}_1 ~*~ \atpos{y}{\epsilon}_2 = \atpos{x}{1}_2 ~*~\atpos{y}{\epsilon}_3 = \atpos{x}{1}_3}{} (q_{\cpred_1}) & (1) \\
    \tau_2 & q_{\cpred_1} \arrow{ \erel(\atpos{x}{\epsilon}_1,~ \atpos{y}{\epsilon}_4) ~*~ \erel(\atpos{x}{\epsilon}_3,~ \atpos{y}{\epsilon}_4) ~*~
        \atpos{y}{\epsilon}_4 = \atpos{x}{1}_1 ~*~ \atpos{x}{\epsilon}_2 = \atpos{x}{1}_2 ~*~\atpos{x}{\epsilon}_3 = \atpos{x}{1}_3
    }{} (q_{\cpred_1}) & (\infty) \\
    \tau_3: & q_{\cpred_1} \arrow{ \erel(\atpos{x}{\epsilon}_1,~ \atpos{x}{\epsilon}_2) ~*~
        \atpos{x}{\epsilon}_2 = \atpos{x}{1}_1 ~*~ \atpos{y}{\epsilon}_5 = \atpos{x}{1}_2 ~*~\atpos{x}{\epsilon}_3 = \atpos{x}{1}_3
    }{} (q_{\cpred_2}) & (1) \\
    \tau_4: & q_{\cpred_2} \arrow{ \erel(\atpos{x}{\epsilon}_1,~ \atpos{y}{\epsilon}_6) ~*~ \erel(\atpos{x}{\epsilon}_3,~ \atpos{y}{\epsilon}_6) ~*~
        \atpos{y}{\epsilon}_6 = \atpos{x}{1}_1 ~*~ \atpos{x}{\epsilon}_2 = \atpos{x}{1}_2 ~*~\atpos{x}{\epsilon}_3 = \atpos{x}{1}_3
    }{} (q_{\cpred_2}) & (\infty) \\
    \tau_5: & q_{\cpred_2} \arrow{ \erel(\atpos{x}{\epsilon}_1,~ \atpos{x}{\epsilon}_2) }{} () & (1)
  \end{array} \right. \]
  The $\asid$-models of
  $\apred$ have the structure depicted in \autoref{fig:star-chains}~(a), with elements that are values of
  persistent variables annotated by the name of the first occurrence
  of the persistent variable during the run.
\end{exa}

\begin{figure}[htbp]
  \begin{center}
    \begin{minipage}{.49\textwidth}
      \begin{center}
        \input{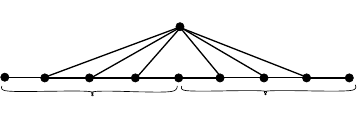_t}

        \vspace*{-.3\baselineskip}
        \footnotesize{(a)}
      \end{center}
    \end{minipage}
    \begin{minipage}{.49\textwidth}
      \begin{center}
        \vspace*{1.5\baselineskip}
        \input{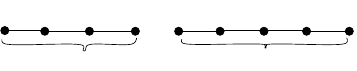_t}

        \vspace*{-.3\baselineskip}
        \footnotesize{(b)}
      \end{center}
    \end{minipage}
  \end{center}
  \caption{\label{fig:star-chains} Models of the Running Example for the Elimination of Persistent Variables}
\end{figure}

\paragraph{I. Removing relation and disequality atoms from $1$-transitions}
This step replaces each symbol $\alpha$ that labels a $1$-transition
$q_0 \arrow{\alpha}{} (q_1,\ldots,q_\ell)$ of $\autsat{\asid}{\apred}$
with the symbol obtained by removing all relation and disequality atoms from
$\alpha$. The outcome of this transformation is denoted
$\auto{\asid}{\apred}^{I} \isdef (\Sigma,\states_\asid,q_\apred,\trans^I_\asid)$.

\begin{exa}\label{ex:removing-relation-atoms}
  (continued from \autoref{ex:persistent-variables-elimination})
  The result of removing the relation and disequality atoms from the
  $1$-transitions of the choice-free automaton given in
  \autoref{ex:persistent-variables-elimination} is shown below:
  \[\auto{\asid}{\apred}^{I} \left\{\begin{array}{rll}
      \tau_1: & q_\apred \arrow{\atpos{y}{\epsilon}_1 = \atpos{x}{1}_1 ~*~ \atpos{y}{\epsilon}_2 = \atpos{x}{1}_2 ~*~\atpos{y}{\epsilon}_3 = \atpos{x}{1}_3}{} (q_{\cpred_1}) & (1) \\
      \tau_2: & q_{\cpred_1} \arrow{ \erel(\atpos{x}{\epsilon}_1,~ \atpos{y}{\epsilon}_4) ~*~ \erel(\atpos{x}{\epsilon}_3,~ \atpos{y}{\epsilon}_4) ~*~
        \atpos{y}{\epsilon}_4 = \atpos{x}{1}_1 ~*~ \atpos{x}{\epsilon}_2 = \atpos{x}{1}_2 ~*~\atpos{x}{\epsilon}_3 = \atpos{x}{1}_3
      }{} (q_{\cpred_1}) & (\infty) \\
      \tau_3: & q_{\cpred_1} \arrow{
        \atpos{x}{\epsilon}_2 = \atpos{x}{1}_1 ~*~ \atpos{y}{\epsilon}_5 = \atpos{x}{1}_2 ~*~\atpos{x}{\epsilon}_3 = \atpos{x}{1}_3
      }{} (q_{\cpred_2}) & (1) \\
      \tau_4: & q_{\cpred_2} \arrow{ \erel(\atpos{x}{\epsilon}_1,~ \atpos{y}{\epsilon}_6) ~*~ \erel(\atpos{x}{\epsilon}_3,~ \atpos{y}{\epsilon}_6) ~*~
        \atpos{y}{\epsilon}_6 = \atpos{x}{1}_1 ~*~ \atpos{x}{\epsilon}_2 = \atpos{x}{1}_2 ~*~\atpos{x}{\epsilon}_3 = \atpos{x}{1}_3
      }{} (q_{\cpred_2}) & (\infty) \\
      \tau_5: & q_{\cpred_2} \arrow{ \emp }{} () & (1)
      \end{array}\right.\]
\end{exa}

The formal properties of $\auto{\asid}{\apred}^{I}$ are stated and
proved below. Note that $\auto{\asid}{\apred}^{I}$ is choice-free,
because $\auto{\asid}{\apred}$ is choice-free, i.e., the re-labeling
of the transitions of $\auto{\asid}{\apred}$ does not change the
structure of its SCC graph. In the statement of this and the upcoming
lemmas, we assume addition and order within the set of natural numbers
with infinity, i.e., $n+\infty=\infty+n=\infty$, $n\leq\infty$ and
$\infty\leq\infty$, for all $n \in \nat$.

\begin{lem}\label{lemma:remove-relations} \hfill
  \begin{enumerate}
  \item\label{it1:lemma:remove-relations} $\auto{\asid}{\apred}^{I}$ is all-satisfiable
  \item\label{it2:lemma:remove-relations} $\twof{\sem{\auto{\asid}{\apred}^{I}}} \le
    \twof{\sem{\auto{\asid}{\apred}}} + \cardof{\trans^1_\asid} \cdot
    (\maxrelatominruleof{\asid} + \maxvarinruleof{\asid})$
  \item\label{it3:lemma:remove-relations} $\twof{\sem{\auto{\asid}{\apred}}} \le
    \twof{\sem{\auto{\asid}{\apred}^{I}}} +  \cardof{\trans^1_\asid} \cdot \maxvarinruleof{\asid}$
  \end{enumerate}
\end{lem}
\begin{proof}
  For space reasons, the proof of the lemma is given in \autoref{app:remove-relations}.
\end{proof}
  
\paragraph{II. Removing equalities involving non-persistent variables}
At this point, the labels of the $1$-transitions of
$\auto{\asid}{\apred}^{I}$ consist of equalities only. We now remove
the equalities that would be lost when adding resets before and after
1-transitions that is, we forget equalities involving non-persistent
variables while keeping equalities between persistent ones. To this
end, we modify the label of each $1$-transition $q_0 \arrow{\alpha}{}
(q_1,\ldots,q_\ell)$ of $\auto{\asid}{\apred}^{I}$ in two
steps: \begin{enumerate}
\item\label{step1:remove-equalities} for each non-persistent
  $\epsilon$-variable $\atpos{x}{\epsilon}_j$, i.e., $j \in
  \interv{1}{\arityof{q_0}} \setminus
  \profile{\auto{\asid}{\apred}^{I}}(q_0)$, occurring in $\alpha$ in
  some equality with a persistent $i$-variable, i.e.,
  $\atpos{x}{\epsilon}_j = \atpos{x}{i}_k$, $k \in
  \profile{\auto{\asid}{\apred}^{I}}(q_i)$, we substitute
  $\atpos{x}{\epsilon}_j$ in $\alpha$ with a fresh variable
  $\atpos{y}{\epsilon} \not\in \fv{\alpha}$,
\item\label{step2:remove-equalities} remove each equality involving a
  non-persistent variable $\atpos{x}{i}_j$, for $i = \epsilon$ and $j
  \in \interv{1}{\arityof{q_0}}$, or $i\in \interv{1}{\ell}$ and $j
  \in \interv{1}{\arityof{q_i}}$.
  %
\end{enumerate}

\begin{exa}\label{ex:remove-equalities}
  (continued from \autoref{ex:removing-relation-atoms}) The profile of
  the automaton $\auto{\asid}{\apred}^I$ from
  \autoref{ex:removing-relation-atoms} is
  \(\profile{\auto{\asid}{\apred}^I}(q_\apred) = \emptyset\) and
  \(\profile{\auto{\asid}{\apred}^I}(q_{\cpred_1}) =
  \profile{\auto{\asid}{\apred}^I}(q_{\cpred_2}) = \set{2,3}\). By
  removing the equalities involving non-persistent variables, we
  obtain $\auto{\asid}{\apred}^{II}$ from $\auto{\asid}{\apred}^I$ by
  transforming only the 1-transition $\tau_3$ into:
    \[\tau_3': ~~ q_{\cpred_1} \arrow{
        \atpos{y}{\epsilon}_5 = \atpos{x}{1}_2
        ~*~\atpos{x}{\epsilon}_3 = \atpos{x}{1}_3 }{}
    (q_{\cpred_2}) \hspace{2cm} (1) \] Since $\atpos{x}{1}_1$ is
    non-persistent, the equality $\atpos{x}{\epsilon}_2 =
    \atpos{x}{1}_1$ is removed from the label of the original
    transition. Note that the equalities $\atpos{y}{\epsilon}_5 =
    \atpos{x}{1}_2$ and $\atpos{x}{\epsilon}_3 = \atpos{x}{1}_3$ are
    kept, because $\atpos{x}{1}_2$, $\atpos{x}{\epsilon}_3$ and
    $\atpos{x}{1}_3$ are persistent variables. Note that
    $\atpos{y}{\epsilon}_5$ is not persistent, because it is not
    associated with a state in $\tau'_3$.
\end{exa}

The result is the choice-free automaton
$\auto{\asid}{\apred}^{II}\isdef(\Sigma,\states_\asid,q_\apred,\trans_\asid^{II})$,
whose properties are stated and proved below:

\begin{lem}\label{lemma:remove-equalities} \hfill
  \begin{enumerate}
  \item\label{it1:lemma:remove-equalities} $\auto{\asid}{\apred}^{II}$ is all-satisfiable,
  \item\label{it2:lemma:remove-equalities}
    $\twof{\sem{\auto{\asid}{\apred}^{II}}} \le
    \twof{\sem{\auto{\asid}{\apred}^{I}}} +
    \cardof{{(\trans^{I}_\asid)}^1}\cdot \maxvarinruleof{\asid}$,
  \item\label{it3:lemma:remove-equalities}
    $\twof{\sem{\auto{\asid}{\apred}^{I}}} \le
    \twof{\sem{\auto{\asid}{\apred}^{II}}}$.
  \end{enumerate}
\end{lem}
\begin{proof}
  For space reasons, the proof of this lemma is given in \autoref{app:remove-equalities}.
\end{proof}
  
\paragraph{III. Removing persistent variables}
We build from the choice-free automaton $\auto{\asid}{\apred}^{II}$ a set of
choice-free automata $\overline{\mathcal{B}}_1, \ldots,
\overline{\mathcal{B}}_m$, \emph{having no persistent variables within
  the transition labels}, such that $\sem{\auto{\asid}{\apred}^{II}}$
is treewidth bounded if and only if $\sem{\overline{\mathcal{B}}_i}$
is treewidth bounded, for each $i \in \interv{1}{m}$.

We recall that each $1$-transition of a choice free automata occurs
exactly once in each accepting run over a $\Sigma$-labeled tree $t$
and each such occurrence corresponds to one subformula $t(p)^p$ of
$\charform{t}$, for a position $p \in \dom{t}$
(\autoref{lemma:one-transitions}). Using renaming, if necessary, we
can assume that the $\epsilon$-variables that are not associated with
the states of the transition have distinct names between the labels of
the $1$-transitions of $\auto{\asid}{\apred}^{II}$ and let
$\mathcal{Y} \isdef \set{\atpos{y}{\epsilon}_1, \ldots,
  \atpos{y}{\epsilon}_{\mathcal{M}}}$ denote their set, in the
following. For instance, we have $\mathcal{Y} =
\set{\atpos{y}{\epsilon}_1, \atpos{y}{\epsilon}_2,
  \atpos{y}{\epsilon}_3, \atpos{y}{\epsilon}_5}$ in Example
\ref{ex:removing-relation-atoms}. The transformation is done in three
steps: \begin{enumerate}[(A)]
\item\label{it1:remove-persistent} We annotate each state $q$ of
  $\auto{\asid}{\apred}^{II}$ with an injective partial function $a :
  \interv{1}{\arityof{q}} \rightarrow \interv{1}{\mathcal{M}}$ that
  maps each persistent variable $\atpos{x}{i}_j$, associated with $q$,
  to a variable $\atpos{y}{\epsilon}_{a(j)} \in \mathcal{Y}$, such
  that $\atpos{x}{i}_j \eqof{\charform{t}} \atpos{y}{\epsilon}_{a(j)}$
  holds for each tree $t\in\langof{}{\auto{\asid}{\apred}^{II}}$.
\item\label{it2:remove-persistent} We split the automaton obtained
  from the annotation of $\auto{\asid}{\apred}^{II}$ into several
  choice-free automata $\widetilde{\mathcal{B}}_1, \ldots,
  \widetilde{\mathcal{B}}_m$ such that
  $\langof{}{\auto{\asid}{\apred}^{II}} = \bigcup_{i=1}^m
  \langof{}{\widetilde{\mathcal{B}}_i}$.
\item\label{it3:remove-persistent} The annotation of the states in
  each $\widetilde{\mathcal{B}}_i$ is used to replace each occurrence
  of a relation atom $\arel(\atpos{z}{\epsilon}_1, \ldots,
  \atpos{z}{\epsilon}_{\arityof{\arel}})$, occurring within
  the label of an annotated transition $(q_0,a_0) \arrow{\alpha}{}
  ((q_1,a_1), \ldots, (q_\ell,a_\ell))$, with a fresh relation atom
  $\arel_g(\atpos{z}{\epsilon}_{i_1}, \ldots,
  \atpos{z}{\epsilon}_{i_k})$, where $g : \interv{1}{\arityof{\arel}}
  \rightarrow \interv{1}{\mathcal{M}}$ maps each persistent variable
  from the set $\set{\atpos{z}{\epsilon}_1, \ldots,
    \atpos{z}{\epsilon}_{\arityof{\arel}}}$ to its
  corresponding variable from $\mathcal{Y}$ and
  $\set{\atpos{z}{\epsilon}_{i_1}, \ldots, \atpos{z}{\epsilon}_{i_k}}$
  are the remaining non-persistent variables. The persistent variables
  are subsequently removed from $\alpha$ and the remaining variables
  are renamed according to the conventions from
  \autoref{def:alpha-sid}. Consequently, the arities of the
  states $(q_i,a_i)$, $i \in \interv{0}{\ell}$ are changed as well.
\end{enumerate}

\begin{exa}\label{ex:remove-persistent-variables}
  (continued from \autoref{ex:remove-equalities}) Let us consider the
  automaton $\auto{\asid}{\apred}^{II}$ from
  \autoref{ex:remove-equalities}. The $\epsilon$-variables from the
  labels of $1$-transitions, that are not associated with states
  thereof, are $\atpos{y}{\epsilon}_1, \atpos{y}{\epsilon}_2,
  \atpos{y}{\epsilon}_3, \atpos{y}{\epsilon}_5$, renamed as
  $\mathcal{Y}=\set{\atpos{y}{\epsilon}_1, \atpos{y}{\epsilon}_2,
    \atpos{y}{\epsilon}_3, \atpos{y}{\epsilon}_4}$, respectively. We
  recall that $\profile{\auto{\asid}{\apred}^{II}}(q_\apred) =
  \emptyset$ and $\profile{\auto{\asid}{\apred}^{II}}(q_{\cpred_1}) =
  \profile{\auto{\asid}{\apred}^{II}}(q_{\cpred_2}) = \set{2,3}$ is
  the profile of $\auto{\asid}{\apred}^{II}$.  The automaton obtained
  by annotating the states of $\auto{\asid}{\apred}^{II}$ with
  assignments is already choice-free and the result of the elimination
  of persistent variables is shown below:
  \[\widetilde{\mathcal{B}} \left\{\begin{array}{lll}
  \tau_1: & (q_\apred,\emptyset) \arrow{\emp}{} ((q_{\cpred_1},a)) & (1) \\
  \tau_2: & (q_{\cpred_1},a) \arrow{ \erel(\atpos{x}{\epsilon}_1,~ \atpos{y}{\epsilon}_4) ~*~
    \textcolor{red}{\erel_g(\atpos{y}{\epsilon}_4)} ~*~
    \atpos{y}{\epsilon}_4 = \atpos{x}{1}_1
  }{} ((q_{\cpred_1},a)) & (\infty) \\
  \tau_3: & (q_{\cpred_1},a) \arrow{
    \emp
  }{} ((q_{\cpred_2},a)) & (1) \\
  \tau_4: & (q_{\cpred_2},a) \arrow{ \erel(\atpos{x}{\epsilon}_1,~ \atpos{y}{\epsilon}_6) ~*~
    \textcolor{red}{\erel_g(\atpos{y}{\epsilon}_6)} ~*~
    \atpos{y}{\epsilon}_6 = \atpos{x}{1}_1
  }{} ((q_{\cpred_2},a)) & (\infty) \\
  \tau_5: & (q_{\cpred_2},a) \arrow{ \emp }{} () & (1)
  \end{array}\right.\]
  where: \begin{itemize}[label=$\triangleright$]
  \item $a : \interv{1}{3} \rightarrow \interv{1}{4}$ is the partial
    mapping defined as $a(2)=2$, $a(3)=3$ and undefined at $1$,
  \item $g : \interv{1}{2} \rightarrow \interv{1}{4}$ is the partial
    mapping defined as $g(1)=3$ and undefined at $2$.
  \end{itemize}
  Note that, because the equality between the persistent variables
  $\atpos{x}{\epsilon}_3$ and $\atpos{x}{1}_3$ has been kept in
  $\auto{\asid}{\apred}^{II}$ (see
  \autoref{ex:remove-equalities}), both variables are mapped by
  $g$ to the same variable $\atpos{y}{\epsilon}_3$, hence the same
  relation symbol $\erel_g$ replaces both
  $\erel(\atpos{x}{\epsilon}_3,\atpos{y}{\epsilon}_4)$ in $\tau_2$ and
  $\erel(\atpos{x}{\epsilon}_3,\atpos{y}{\epsilon}_6)$ in
  $\tau_4$. \autoref{fig:star-chains}~(b) shows the shape of the
  structures from $\sem{\widetilde{\mathcal{B}}}$. Since all but the
  first elements in both the $\cpred_1$ and $\cpred_2$ chains are now
  labeled with the same unary relation symbol $\erel_g$, these
  structures are of treewidth at most two.
\end{exa}

We recall that $\auto{\asid}{\apred}^{II} =
(\Sigma,\states_\asid^I,q_\apred,\trans_\asid^{II})$ and describe the
transformation formally:

\skipnoindent {\textbf{\ref{it1:remove-persistent}}} Let
$\widetilde{\auto{\asid}{\apred}}^{II} \isdef (\Sigma,
\widetilde{\states}^{I}_\asid, (q_\apred,\emptyset),
\widetilde{\trans}^{II}_\asid)$ be the automaton, whose set of states
is: \[\widetilde{\states}_\asid^{I} \isdef \set{(q,a) \mid q \in
  \states_\asid^I,~ a : \interv{1}{\arityof{q}} \rightarrow
  \interv{1}{\mathcal{M}} \text{ is a partial injective mapping}}\]
The initial state of $\widetilde{\auto{\asid}{\apred}}^{II}$ consists
of the initial state $q_\apred$ of $\auto{\asid}{\apred}^{II}$
annotated with the empty mapping, because we have considered
$\arityof{q_\apred}=0$. The set $\widetilde{\trans}^{II}_\asid$
contains a transition \((q_0,a_0) \arrow{\alpha}{} ((q_1,a_1), \ldots,
(q_\ell,a_\ell))\) if and only if either one of the following holds
(by \autoref{def:choice-free}, these conditions are
exclusive): \begin{itemize}[label=$\triangleright$]
\item \underline{$q_0 \arrow{\alpha}{} (q_1,\ldots,q_\ell) \in
  {(\trans_\asid^{II})}^1$}: in this case, for all $k \in
  \interv{1}{\ell}$ and $i \in
  \profile{\auto{\asid}{\apred}^{II}}(q_k)$, we define:
  \[a_k(i) \isdef \left\{\begin{array}{ll}
  a_0(j) & \text{if there exists } j \text{ such that }
  \atpos{x}{k}_i \eqof{\alpha} \atpos{x}{\epsilon}_j, \\
  m & \text{else, if } m \text{ is such that } \atpos{x}{k}_i
  \eqof{\alpha} \atpos{y}{\epsilon}_m
  \end{array}\right.\]
  Note that $a_k$ is well defined, because each $i$-variable is
  equated to a unique $\epsilon$-variable in the definition of
  $\auto{\asid}{\apred}$ and this fact is unchanged by the
  constructions of $\auto{\asid}{\apred}^{I}$ and
  $\auto{\asid}{\apred}^{II}$.
  %
\item \underline{$q_0 \arrow{\alpha}{} (q_1,\ldots,q_\ell) \in
  (\trans_\asid^{II})^\infty$}: in this case, for all $k \in
  \interv{1}{\ell}$ and $i \in \interv{1}{\arityof{q_k}}$, we define
  $a_k(i) \isdef a_0(j)$ if there exists $j$, such that
  $\atpos{x}{k}_i \eqof{\alpha} \atpos{x}{\epsilon}_j$ and undefined,
  otherwise.
\end{itemize}
The property of the first step of persistent variable elimination is
summarized below:
\begin{lem}\label{lemma:remove-persistent:A}
  $\langof{}{\widetilde{\auto{\asid}{\apred}}^{II}} = \langof{}{\auto{\asid}{\apred}^{II}}$.
\end{lem}
\begin{proof}
  Let $h : \widetilde{\states}_\asid^{I} \rightarrow
  \states_\asid^{I}$ be the function defined as $h((q,a))\isdef q$,
  for all $(q,a) \in \widetilde{\states}_\asid^{I}$. We show that $h$
  is a refinement. By \autoref{lemma:refinement}, we obtain
  $\langof{}{\widetilde{\auto{\asid}{\apred}}^{II}} =
  \langof{}{\auto{\asid}{\apred}^{II}}$. We prove the three points of
  \autoref{def:refinement}:
  \begin{itemize}[left=.5\parindent]
  \item[(\ref{it1:def:refinement})] $h((q_\apred,\emptyset)) =
    q_\apred$, by the definition of $h$.
  \item[(\ref{it2:def:refinement})] $(q_0,a_0) \arrow{\alpha}{}
    ((q_1,a_1), \ldots, (q_\ell,a_\ell)) \in
    \widetilde{\trans}^{II}_\asid$ only if $q_0 \arrow{\alpha}{}
    (q_1,\ldots,q_\ell) \in \trans^{II}_\asid$, by the definition of
    $\widetilde{\auto{\asid}{\apred}}^{II}$.
  \item[(\ref{it3:def:refinement})] Let $q_0 \arrow{\alpha}{}
  (q_1,\ldots,q_\ell) \in \trans^{II}_\asid$ be a transition and
  assume w.l.o.g. that $\auto{\asid}{\apred}^{II}$ is trim (if this is
  not the case, a trim automaton with the same language can be
  considered instead). Then, there exists an accepting run $\theta$ of
  $\auto{\asid}{\apred}^{II}$, such that the transition $q_0
  \arrow{\alpha}{} (q_1,\ldots,q_\ell)$ occurs on $\theta$. Each state
  $q$ on $\theta$ can be annotated with an injective partial mapping
  $a : \interv{1}{\arityof{q}} \rightarrow \interv{1}{\mathcal{M}}$
  and the result is an accepting run of
  $\widetilde{\auto{\asid}{\apred}}^{II}$. Hence there exist injective
  partial functions $a_i : \interv{1}{\arityof{q_i}} \rightarrow
  \interv{1}{\mathcal{M}}$, for $i \in \interv{0}{\ell}$ such that
  $(q_0,a_0) \arrow{\alpha}{} ((q_1,a_1), \ldots, (q_\ell,a_\ell)) \in
  \widetilde{\trans}^{II}_\asid$. Moreover, $(q_i,a_i) \in
  h^{-1}(q_i)$, for each $i \in \interv{0}{\ell}$. \qedhere
  \end{itemize}
\end{proof}

\skipnoindent \textbf{\ref{it2:remove-persistent}}
The problem, at this point, is that
$\widetilde{\auto{\asid}{\apred}}^{II}$ is not necessarily
choice-free, because annotating the states of
$\auto{\asid}{\apred}^{II}$ may cause several transitions to occur
between different linear SCCs. These transitions originate from the
same $1$-transition of $\auto{\asid}{\apred}^{II}$ and differ only in
the annotations added at step \ref{it1:remove-persistent}. We
circumvent this problem by decomposing
$\widetilde{\auto{\asid}{\apred}}^{II}$ into choice-free automata
$\widetilde{\mathcal{B}}_1, \ldots, \widetilde{\mathcal{B}}_m$, such
that $\langof{}{\widetilde{\auto{\asid}{\apred}}^{II}} =
\bigcup_{i=1}^m \langof{}{\widetilde{\mathcal{B}}_i}$. To this end, we
choose sets ${\widetilde{\trans}}^1_1, \ldots,
{\widetilde{\trans}}^1_m$, such that:
\begin{itemize}[label=$\triangleright$]
\item $(\widetilde{\trans}^{II})^1 = \bigcup_{i=1}^m
  {\widetilde{\trans}}^1_i$, and
\item for each $i \in \interv{1}{m}$, the set
  ${\widetilde{\trans}}^1_i$ contains exactly one transition
  $(q_0,a_0) \arrow{\alpha}{} ((q_1,a_1), \ldots, (q_\ell,a_\ell))$
  from $\widetilde{\trans}_\asid^{II}$, for each transition $q_0
  \arrow{\alpha}{} (q_1,\ldots,q_\ell)$ from $(\trans_\asid^{II})^1$.
\end{itemize}
Moreover, we define the set:
\[{(\overline{\trans})}^\infty \isdef \set{(q_0,a_0)
  \arrow{\alpha}{} ((q_1,a_1), \ldots, (q_\ell,a_\ell))
  \in \widetilde{\trans}_\asid^{II} \mid q_0 \arrow{\alpha}{}
  (q_1,\ldots,q_\ell) \in {(\trans_\asid^{II})}^\infty}\]
For each $i \in \interv{1}{m}$, let $\widetilde{\mathcal{B}}_i \isdef
(\Sigma, \widetilde{\states}_\asid^{I}, (q_\apred,\emptyset),
{\widetilde{\trans}}^1_i \uplus {\widetilde{\trans}}^\infty)$.
We prove below that $\widetilde{\mathcal{B}}_1, \ldots,
\widetilde{\mathcal{B}}_m$ is indeed a choice-free decomposition of
$\widetilde{\auto{\asid}{\apred}}^{II}$:

\begin{lem}\label{lemma:remove-persistent:B} \hfill
  \begin{enumerate}
  \item\label{it1:lemma:remove-persistent:B} $\widetilde{\mathcal{B}}_i$
    is all-satisfiable and choice-free, for $i \in \interv{1}{m}$.
  \item\label{it2:lemma:remove-persistent:B} $\langof{}{\widetilde{\auto{\asid}{\apred}}^{II}} =
    \bigcup_{i=1}^m \langof{}{\widetilde{\mathcal{B}}_i}$.
  \end{enumerate}
\end{lem}
\begin{proof}
  For space reasons, the proof of this lemma is given in \autoref{app:remove-persistent:B}.
\end{proof}

\skipnoindent \textbf{\ref{it3:remove-persistent}} Let us fix a
choice-free automaton $\widetilde{\mathcal{B}} =
(\Sigma,\widetilde{\states}_\asid^{I},
(q_\apred,\emptyset),{\widetilde{\trans}}^1 \uplus
{\widetilde{\trans}}^\infty)$ among $\widetilde{\mathcal{B}_1},\ldots
\widetilde{\mathcal{B}_m}$. Consider an arbitrary transition $\tau ~:~
(q_0,a_0) \arrow{\alpha}{} ((q_1,a_1), \ldots, (q_\ell,a_\ell))$ $\in$
${\widetilde{\trans}}^1 \uplus {\widetilde{\trans}}^\infty$.  We
denote by $P_0 \isdef \set{\atpos{x}{\epsilon}_j \mid j \in \dom{a_0}}
\cup \set{\atpos{y}{\epsilon}_j \in \fv{\alpha} \mid \tau \in
  {\widetilde{\trans}}^1}$, $P_i \isdef \set{\atpos{x}{i}_j \mid j \in
  \dom{a_i}}$, for $i \in \interv{1}{\ell}$ and $P \isdef
\bigcup_{i=0}^\ell P_i$ the set of persistent variables occurring in
$\alpha$. The goal of this step is to remove from $\alpha$ the
variables from $P$. The outcome of this transformation of $\alpha$
will be denoted as $\overline{\alpha}$. In order to guarantee the
preservation of the naming conventions from \autoref{def:alpha-sid},
the remaining (non-persistent) variables from $\fv{\alpha} \setminus
P$ are renamed using the injective mapping $\zeta_\tau :
\fv{\alpha}\setminus P \rightarrow \fv{\alpha}$ defined as follows:
\begin{itemize}[label=$\triangleright$]
\item $\zeta_\tau(\atpos{x}{i}_k) \isdef \atpos{x}{i}_{k - p}$ where
  $p \isdef \cardof{ \set{\atpos{x}{i}_j \mid j < k} \cap P_i}$ for all $i
  \in \interv{0}{\ell}$, $k \in \interv{1}{n_i}$,
  $\atpos{x}{i}_k \not\in P_i$,
\item $\zeta_\tau(\atpos{y}{\epsilon}_k) = \atpos{y}{\epsilon}_k$, for
  all $k \in \interv{1}{m}$, $\atpos{y}{\epsilon}_k \not\in P_0$
\end{itemize}
where $m, n_0, \ldots, n_\ell$ are the numbers of
$\atpos{y}{\epsilon}_j$, $\atpos{x}{\epsilon}_j, \ldots,
\atpos{x}{i}_j$, respectively (see \autoref{def:alpha-sid}).
Intuitively, the renaming \emph{shifts to the left} the $j$ indices of
the $\atpos{x}{i}_{j}$ variables so that the persistent variables
indexed according to the assignments $a_i$ are ignored. The
transformation of $\alpha$ to $\overline{\alpha}$ is described for
each atom of $\alpha$, as follows:
\begin{itemize}[label=$\triangleright$]
\item every relation atom $\arel(\atpos{z}{\epsilon}_1, \ldots,
  \atpos{z}{\epsilon}_{\arityof{\arel}})$ is replaced by
  $\arel_g(\zeta_\tau(\atpos{z}{\epsilon}_{i_1}), \ldots,
  \zeta_\tau(\atpos{z}{\epsilon}_{i_k}))$, where $\arel_g$ is a fresh
  relation symbol of arity $k$ such that $\set{z_{i_1}, \ldots,
    z_{i_k}} \isdef \set{z_1,\ldots,z_{\arityof{\arel}}} \setminus
  P_0$ and $g : \interv{1}{\arityof{\arel}} \rightarrow
  \interv{1}{\mathcal{M}}$ is the partial function:
  \[ g(j) \isdef \left\{ \begin{array}{l}
    a_0(j) \text{, if } \atpos{z}{\epsilon}_j \in P_0 \\
    \text{undefined, otherwise}
  \end{array} \right.
  \]
  Note that the arity of $\arel_g$ is at least one, for the following
  reason. Since relation atoms occur only in $\infty$-transitions,
  they can repeat arbitrary many times in characteristic formul{\ae}
  over runs. Hence, if such an atom has only persistent variables,
  these characteristic formul{\ae} will become unsatisfiable if the
  atom repeat more than twice. This contradicts, however, the
  hypothesis that $\widetilde{\mathcal{B}}_i$ is all-satisfiable.

\item every (dis-)equality atom $x \sim y$ is replaced by
  $\zeta_\tau(x) \sim \zeta_\tau(y)$ for $\sim \in \set{=,\neq}$ if
  $\set{x,y} \cap P = \emptyset$ and removed otherwise.  In
  particular, note that there is no equality in $\alpha$ between a
  variable in $P$ and another one not in $P$ due to elimination of
  equalities with non-persistent variables in 1-transitions and to the
  rule of propagation through $\infty$-transitions.  Moreover, this
  transformation turns the label of each $1$-transition into $\emp$,
  because the labels of $1$-transitions contain only equalities
  involving persistent variables.
\end{itemize}

The result of this transformation of $\widetilde{\mathcal{B}}_i$ is
denoted $\overline{\mathcal{B}}_i$, for each $i \in \interv{1}{m}$.
Let $\overline{\mathcal{B}}$ be any of $\overline{\mathcal{B}}_1,
\ldots, \overline{\mathcal{B}}_m$ and $\widetilde{\mathcal{B}}$ be the
corresponding automaton before the removal of persistent
variables. The properties of this transformation are stated and proved
below:

\begin{lem}\label{lemma:remove-persistent} \hfill
  \begin{enumerate}
  \item\label{it1:lemma:remove-persistent} $\overline{\mathcal{B}}$
    is all-satisfiable and choice-free,
  \item\label{it2:lemma:remove-persistent} $\twof{\sem{\overline{\mathcal{B}}}} \le
    \twof{\sem{\widetilde{\mathcal{B}}}}$,
  \item\label{it3:lemma:remove-persistent}
    $\twof{\sem{\widetilde{\mathcal{B}}}} \le
    \twof{\sem{\overline{\mathcal{B}}}} +
    \cardof{\widetilde{\trans}^1} \cdot \maxvarinruleof{\asid}$.
\end{enumerate}
\end{lem}
\begin{proof}
  For space reasons, the proof of this lemma is given in \autoref{app:remove-persistent}.
\end{proof}
  
\subsection{Wrapping $1$-transitions}

In \autoref{sec:persistent-variables-elimination} we have transformed
a given automaton $\auto{\asid}{\apred}$ into choice-free automata
$\overline{\mathcal{B}}_1, \ldots, \overline{\mathcal{B}}_m$ without
persistent variables. We can assume w.l.o.g. that the $1$-transitions
of these automata are labeled by $\emp$, because the only remaining
equalities are between variables $\atpos{y}{\epsilon}$ not associated
with states and non-persistent variables $\atpos{x}{i}_j$. In the
following, let $\overline{\mathcal{B}} = (\Sigma, \states, \initstate,
\overline{\trans})$ be any of $\overline{\mathcal{B}}_1, \ldots,
\overline{\mathcal{B}}_m$ and $\overline{\trans}= \overline{\trans}^1
\uplus \overline{\trans}^\infty$ be the partition of its transitions
into $1$- and $\infty$-transitions (\autoref{def:choice-free}). In
order to obtain an expandable SID from $\overline{\mathcal{B}}$, i.e.,
using \autoref{lemma:sid-ta} (\ref{it2:lemma:sid-ta}), we must be able
to embed any sequence of runs of $\overline{\mathcal{B}}$ into a
single run of $\overline{\mathcal{B}}$. However, this is currently not
possible, because the labels of the $1$-transitions cannot be viewed
as labels of $\infty$-transitions. The problem is shown by the
following example:

\begin{exa}\label{ex:embedding}
  Consider two disjoint structures $\astruc_1 = (\univ_1,\struc_1),
  \astruc_2 = (\univ_2,\struc_2) \in \csem{\overline{\mathcal{B}}}{}$,
  having the shape depicted in \autoref{fig:star-chains}~(b). Each
  $\astruc_i$ has all but two elements, call them $e^1_i$ and $e^2_i$,
  labeled with a unary relation symbol $\erel_g$. Then, the
  composition $\astruc_1 \comp \astruc_2$ will have four unlabeled
  elements $e^1_1,e^2_1,e^1_2$ and $e^2_2$. Since any structure
  $\astruc=(\univ,\struc) \in \csem{\overline{\mathcal{B}}}{}$ has
  exactly two unlabeled elements, the structure $\astruc_1 \comp
  \astruc_2$ is not a substructure of $\astruc$, according to
  \autoref{def:substructure}. This is because
  $e^1_1,e^2_1,e^1_2,e^2_2 \in (\supp{\struc_1} \cup \supp{\struc_2})
  \setminus (\struc_1(\erel_g) \uplus \struc_2(\erel_g))$ are pairwise
  distinct, hence $\cardof{(\supp{\struc_1} \cup \supp{\struc_2})
    \setminus (\struc_1(\erel_g) \uplus \struc_2(\erel_g))} = 4$,
  whereas $\cardof{\supp{\struc}\setminus\struc(\erel_g)}=2$,
  independently of the choice of $\astruc$.
\end{exa}

For a $\Sigma$-labeled tree $t$, two positions $p$ and $s$, such that
$p \in \dom{t}$ (nothing is required about $s$), and a sequence of
variables $x_1, \ldots, x_k$, we define the formula:
\[
  \allstof{t}{p}{s}{x_1,\ldots,x_k} \isdef  \Bigstar\set{
    \arel(\atpos{x}{s}_1, \ldots, \atpos{x}{s}_k) \mid \arel(\atpos{x}{p_1}_1, \ldots, \atpos{x}{p_k}_k)
    \text{ occurs in } \charform{t},~ \atpos{x}{p_i}_i \eqof{\charform{t}} \atpos{x}{p}_i,~
    \forall i \in \interv{1}{k}}
\]
For simplicity, assume, for each transition $q_0 \arrow{\emp}{} (q_1,
\ldots, q_\ell) \in \overline{\trans}^1$ of $\overline{\mathcal{B}}$,
that $q_0, \ldots, q_\ell$ belong to non-trivial SCCs. If some $q_i$
belongs to a trivial SCC, i.e., with no transitions from $q_i$ to
itself or other states in the same SCC, then $\arity{q_i}$ must be
zero, because all parameters from its profile have been removed by the
previous transformation. In this case, the construction below is
adapted by replacing the formul{\ae}
$\allstof{t_0}{\epsilon}{\epsilon}{}$ and $\allstof{t_j}{p_j}{j}{}$ by
$\emp$, for the corresponding states that belong to trivial SCCs. The
automaton $\mathcal{B}$ is obtained from $\overline{\mathcal{B}}$ by
replacing the label of each such $1$-transition with the following
formula, for some trees $t_i$ corresponding to resets $\theta^i_{p_i
  \leftarrow q_i}$ of $\overline{\mathcal{B}}$, for $i \in
\interv{0}{\ell}$:
\[
  q_0 \arrow{
    \Bigstar_{i_1,\ldots,i_k \in \interv{1}{\arityof{q_0}}} \allstof{t_0}{\epsilon}{\epsilon}{x_{i_1}, \ldots, x_{i_k}} ~*~
    \Bigstar_{j \in \interv{1}{\ell},~ i_1, \ldots, i_k \in \interv{1}{\arityof{q_j}}} \allstof{t_j}{p_j}{j}{x_{i_1}, \ldots, x_{i_k}}
  }{} (q_1, \ldots, q_\ell)
\]
Note that the existence of such resets is guaranteed by \autoref{lemma:reset} and the previous assumption. This construction is
illustrated by the following example:

\begin{exa}\label{ex:wrapping}
  (continued from \autoref{ex:remove-persistent-variables})
  The outcome of applying the above transformation to the automaton
  $\overline{\mathcal{B}}$ from \autoref{ex:remove-persistent-variables} is the automaton $\mathcal{B}$
  given below:
    \[\mathcal{B} = \left\{\begin{array}{lll}
    \tau_1: & (q_\apred,\emptyset) \arrow{
      \textcolor{red}{\erel_g(\atpos{x}{1}_1)}
    }{} ((q_{\cpred_1},a)) & (1) \\
    \tau_2: & (q_{\cpred_1},a) \arrow{ \erel(\atpos{x}{\epsilon}_1,~ \atpos{y}{\epsilon}_4) ~*~ \erel_g(\atpos{y}{\epsilon}_4) ~*~
      \atpos{y}{\epsilon}_4 = \atpos{x}{1}_1
    }{} ((q_{\cpred_1},a)) & (\infty) \\
    \tau_3: & (q_{\cpred_1},a) \arrow{
      \textcolor{red}{\erel_g(\atpos{x}{1}_1)}
  }{} ((q_{\cpred_2},a)) & (1) \\
    \tau_4: & (q_{\cpred_2},a) \arrow{ \erel(\atpos{x}{\epsilon}_1,~ \atpos{y}{\epsilon}_6) ~*~ \erel_g(\atpos{y}{\epsilon}_6) ~*~
      \atpos{y}{\epsilon}_6 = \atpos{x}{1}_1
    }{} ((q_{\cpred_2},a)) & (\infty) \\
    \tau_5: & (q_{\cpred_2},a) \arrow{ \emp }{} () & (1)
    \end{array}\right.\]
    The transformation adds the relation atoms
    $\erel_g(\atpos{x}{1}_1)$ to the labels of $\tau_1$ and $\tau_3$,
    respectively. Note that $\tau_2$ and $\tau_4$ are
    $(q_{\cpred_1},a)$- and $(q_{\cpred_2},a)$-resets,
    respectively. The added relation atoms correspond to the relation
    atoms that label the $\atpos{x}{1}_1$ variable within the labels
    of these resets, taken backwards. All elements from a structure
    $\astruc \in \csem{\mathcal{B}}{}$ are now labeled by $\erel_g$,
    hence the composition of any sequence of structures from
    $\csem{\mathcal{B}}{}$ can be embedded as a substructure of a
    structure from the same set. In other words, the SID obtained from
    $\mathcal{B}$ using \autoref{lemma:sid-ta}
    (\ref{it2:lemma:sid-ta}) is expandable
    (\autoref{def:expandable}).
\end{exa}

The formal properties of $\mathcal{B}$ are stated and proved below:

\begin{lem}\label{lemma:exp} \hfill
  \begin{enumerate}
  \item\label{it1:lemma:exp} $\mathcal{B}$ is all-satisfiable,
  \item\label{it2:lemma:exp} $\twof{\sem{\mathcal{B}}} \le
    \twof{\sem{\overline{\mathcal{B}}}} + \cardof{\overline{\trans}^1}
    \cdot \maxvarinruleof{\asid}$,
  \item\label{it3:lemma:exp} $\twof{\sem{\overline{\mathcal{B}}}} \le
    \twof{\sem{\mathcal{B}}} + \cardof{\overline{\trans}^1} \cdot (1 +
    \maxrulearityof{\asid}) \cdot \relationsno{\asid} \cdot
    \maxpredarityof{\asid}^{\maxrelarityof{\asid}}$.
  \end{enumerate}
\end{lem}
\begin{proof}
  For space reasons, the proof of this lemma is given in \autoref{app:exp}.
\end{proof}
  
\subsection{The Proof of \autoref{lemma:expansion}}

We have collected all the ingredients needed to prove the decidability
of the treewdith boundedness problem, for \slr\ formul{\ae}
interpreted over general SIDs. The key point is the proof of
\autoref{lemma:expansion}, the main result being an immediate
consequence of this lemma and \autoref{thm:expandable-tb}. To prove
\autoref{lemma:expansion}, we rely on \autoref{lemma:runs-embedding},
which states that any sequence of $\infty$-runs of a choice-free
automaton can be disjointly embedded into an accepting run of the same
automaton, such that any two of the embedded runs are separated by a
given number of reset paths in the enclosing run.

Before proving \autoref{lemma:expansion}, we state a property of the
outcome of the above transformation of automata
(\autoref{fig:automata-expansion}). This property uses the following
notion:

\begin{defi}\label{def:view}
  Let $\mathcal{A}$ be a choice-free automaton. A \emph{view} for
  $\mathcal{A}$ is a tuple $\tuple{\arun,t,\store,\astruc}$, such that
  $\arun \in \runsof{\infty}{q}{\mathcal{A}}$ is a partial
  $\infty$-run over a $\Sigma$-labeled tree $t$, $\store$ is a
  canonical store for $\charform{t}$ and $\astruc$ is a structure,
  such that $\astruc \models^\store \charform{t}$. A structure
  $\astruc'=(\univ',\struc')$ is \emph{encapsulated} by the view
  $\tuple{\arun,t,\store,\astruc}$ if \begin{enumerate*}[(i)]
  \item\label{it1:def:view} $\astruc' \substruc \astruc$,
  \item\label{it2:def:view} $\supp{\struc'} \cap
    \store(\set{\atpos{x}{\epsilon}_1, \ldots,
      \atpos{x}{\epsilon}_{\arityof{\arun(\epsilon)}}}) = \emptyset$,
    and
  \item\label{it3:def:view} $\supp{\struc'} \cap
    \store(\set{\atpos{x}{p}_1, \ldots, \atpos{x}{p}_{\arityof{q}}}) =
    \emptyset$ if, moreover, the partial run $\theta$ is a
    $\theta_{p\leftarrow q}$ context.
  \end{enumerate*}
\end{defi}
Intuitively, a structure is encapsulated by a view if it is a
substructure of the structure in the view and it does not ``touch''
the values of the variables from the root (resp. frontier) point of
the partial run in the view.

Let $\mathcal{B}$ be any choice-free automaton resulting from the
transformation in \autoref{fig:automata-expansion}. The following
lemma shows that each canonical model from $\csem{\mathcal{B}}{}$ can
be decomposed into pairwise disjoint structures, each of which being
encapsulated by a separate view for $\mathcal{B}$:

\begin{lem}\label{lemma:exp-decomposition}
  For each structure $\astruc=(\univ,\struc)\in\csem{\mathcal{B}}{}$,
  there exist pairwise disjoint structures $\astruc_1, \ldots,
  \astruc_n$ and views $\tuple{\arun_1,t_1,\store_1,\astruc'_1},
  \ldots, \tuple{\arun_n,t_n,\store_n,\astruc'_n}$ for $\mathcal{B}$
  such that $\astruc'_1,\ldots,\astruc'_n$ are pairwise disjoint,
  $\astruc=\astruc_1\comp\ldots\comp\astruc_n$ and $\astruc_i$ is
  encapsulated by $\tuple{\arun_i,t_i,\store_i,\astruc'_i}$, for each
  $i \in \interv{1}{n}$.
\end{lem}
\begin{proof}
  Let $\mathcal{B}\isdef(\Sigma,\states,\initstate,\trans)$. Since
  $\astruc = (\univ,\struc)\in\csem{\mathcal{B}}{}$, there exists a
  tree $t_0 \in \langof{}{\mathcal{B}}$ such that $\astruc
  \models^{\store_0} \charform{t_0}$ for a store $\store_0$, that is
  canonical for $\charform{t_0}$.  Let $\arun_0$ be an accepting run
  of $\mathcal{B}$ over $t_0$. Because $\mathcal{B}$ is choice-free,
  $\arun_0$ can be decomposed into \begin{enumerate*}[(i)]
  \item maximal partial runs $\arun_{01} \in
    \runsof{\infty}{r_1}{\mathcal{B}}, \ldots, \arun_{0n} \in
    \runsof{\infty}{r_n}{\mathcal{B}}$ consisting of (arbitrarily
    many) connected $\infty$-transitions, and
  \item partial runs $\tau_1, \ldots, \tau_m$ consisting of
    a single 1-transition each,
  \end{enumerate*}
  such that $r_i$ is the state at the root of $\arun_{0i}$ and $q_i$
  is the left-hand side of a $1$-transition, for each $i \in
  \interv{1}{n}$. We refer to \autoref{fig:run-decomposition}~(a)
  for an illustration of the decomposition.

  For every $i\in\interv{1}{n}$ we define $\astruc_i$ as the
  substructure of $\astruc$ constructed along the maximal partial run
  $\arun_{0i}$.  That is, $\astruc_i$ contains all the relation atoms
  defined on $\infty$-transitions in $\arun_{0i}$ and the relation
  atoms defined on the entering (and possibly exiting) 1-transition(s)
  involving common variables for entering (resp. exiting) state(s).
  Intuitively, all these relation atoms occur in the gray part in
  \autoref{fig:run-decomposition}~(a).  Note that, since the
  $1$-transitions of $\mathcal{B}$ do not equate variables
  $\atpos{x}{i}_j$, for $i \in \nat\cup\set{\epsilon}$, the structures
  $\astruc_i$ are pairwise disjoint, and $\astruc = \astruc_1 \comp
  \ldots \comp \astruc_n$.

  \begin{figure}[htbp]
    \begin{center}
    \input{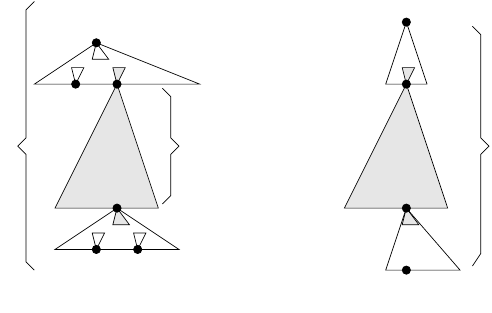_t}
    \caption{\label{fig:run-decomposition} Run decomposition}
    \end{center}
  \end{figure}

  We build the views
  $\set{\tuple{\arun_i,t_i,\store_i,\astruc'_i}}_{i\in\interv{1}{n}}$
  as follows. For every $i\in\interv{1}{n}$, we define the partial run
  $\arun_i$ by extending the partial run $\arun_{0i}$ by the same
  $r_i$- and $q_i$-resets used in the transformation of
  $\overline{\mathcal{B}}$ into $\mathcal{B}$ (note that the
  $q_i$-reset is needed only if $\arun_{0i}$ reaches a
  $1$-transition). This construction is illustrated in
  \autoref{fig:run-decomposition}~(b). Let $t_i$ be the
  $\Sigma$-labeled tree corresponding to the run $\theta_i$.  Let
  $\store_i$ be a canonical store for $\charform{t_i}$ constructed by
  extending $\store_0$ such that $\store_0(\atpos{x}{p}) =
  \store_i(\atpos{x}{p'})$ whenever $p$ and $p'$ correspond to the
  same relative position within $\arun_{0i}$. In other words,
  $\store_i$ preserves the elements occurring in $\astruc_i$ and
  associates new distinct elements to all other variables introduced
  by the resets (in particular, all variables that are equated in
  $\charform{t_i}$ are mapped to the same element). The store
  $\store_i$ defines a structure $\astruc_i' = (\univ,\struc'_i)$,
  such that $\astruc_i' \models^{\store_i} \charform{t_i}$, as
  \(\struc'_i(\arel) \isdef \set{\tuple{\store_i(\atpos{z}{p_1}_1),
      \ldots, \store_i(\atpos{z}{p_k}_k)} \mid \arel(\atpos{z}{p_1}_1,
    \ldots, \atpos{z}{p_k}_k) \text{ occurs in } \charform{t_i}}\),
  for each relation symbol $\arel\in\relations$. We prove below that
  $\astruc_i$ is encapsulated by the view
  $\tuple{\arun_i,t_i,\store_i,\astruc'_i}$:
  \begin{itemize}[label=$\triangleright$]
  \item By the choice of $\store_i$, that extends $\store_0$ as
    explained above, we have $\astruc_i \subseteq \astruc_i'$.
    Moreover, no tuples are added to $\astruc_i$ by the inserted
    resets, because the labeling of $1$-transitions in $\mathcal{B}$
    is constructed precisely from the resets that guarantee this
    property, i.e., the construction of $\mathcal{B}$ guarantees that
    the set of relation atoms occurring in the reset was used to label
    the $1$-transitions. Thus, we have $\astruc_i \substruc
    \astruc'_i$.
  \item By the construction of the partial run $\arun_i$, the set of
    variables $\set{ \atpos{x_1}{\epsilon}, \ldots,
      \atpos{x_{\arityof{r_i}}}{\epsilon}}$ in $\arun_i$ are not
    related by equalities to any of the variables at the root of
    $\arun_{0i}$ in $\charform{t_i}$. In particular, this is ensured
    by the fact that $\mathcal{B}$ has no persistent variables. Since,
    $\store_i$ is canonical for $\charform{t_i}$, we obtain that
    $\supp{\struc_i} \cap \set{\store_i(\set{\atpos{x}{\epsilon}_1,
        \ldots, \atpos{x}{\epsilon}_{\arityof{r_i}}})} = \emptyset$.
    The same argument applies to the set of variables occurring at the
    frontier position of $\arun_i$, i.e., when $\arun_{0i}$ reaches
    the left-hand side $q_i$ of a $1$-transition, as in
    \autoref{fig:run-decomposition}~(b). \qedhere
  \end{itemize}
\end{proof}

Dually, the following lemma gathers pairwise disjoint structures into
a single rich canonical model, that meets the conditions of
expandability (\autoref{def:expandable}):

\begin{lem}\label{lemma:exp-composition}
  Let $\astruc_1, \ldots, \astruc_n$ be pairwise disjoint structures
  encapsulated by the views $\tuple{\arun_1,t_1,\store_1,\astruc'_1}$,
  $\ldots$, $\tuple{\arun_n,t_n,\store_n,\astruc'_n}$ for
  $\mathcal{B}$, where $\astruc'_1, \ldots, \astruc'_n$ are also
  pairwise disjoint. Then, there exists a rich canonical model
  $(\astruc,\diseq)\in\rcsem{\mathcal{B}}{}$, such that the conditions
  (\ref{it1:def:expandable}), (\ref{it2:def:expandable}) and
  (\ref{it3:def:expandable}) from \autoref{def:expandable} hold for
  $\astruc_1, \ldots, \astruc_n$ and $(\astruc,\diseq)$.
\end{lem}
\begin{proof}
  Since $\mathcal{B}$ is a choice-free automaton, by
  \autoref{lemma:runs-embedding}, there exists an accepting run
  $\arun$ of $\mathcal{B}$ over a tree $t$, such that the following
  hold: \begin{enumerate}[(a)]
  \item\label{it1:proof:exp-composition} each partial run
    $\arun_{i}$ is embedded in $\arun$ at some position $r_{i} \in
    \dom{\arun}$, for all $i \in \interv{1}{n}$,
  \item\label{it2:proof:exp-composition} $r_{i} \cdot
    \dom{\arun_{i}} \cap r_{j} \cdot \dom{\arun_{j}} =
    \emptyset$, for all $1 \le i < j \le n$,
  \item\label{it3:proof:exp-composition} each path between the
    positions $r_{i}$ and $r_{j}$ traverses at least once some reset
    path, disjoint from $\bigcup_{k=1}^n r_k \cdot \dom{\arun_k}$, for
    all $1 \le i < j \le n$.
  \end{enumerate}
  For each $i \in \interv{1}{n}$, since
  $\tuple{\arun_i,t_i,\store_i,\astruc'_i}$ is a view, we have
  $\astruc'_i \models^{\store_i} \charform{t_i}$, hence
  $\store_i(\fv{\charform{t_i}}) \subseteq \supp{\struc'_i}$, where we
  assume w.l.o.g. that $\astruc'_i \isdef (\univ,\struc'_i)$.  By
  point \ref{it2:proof:exp-composition} above, the subformul{\ae}
  corresponding to the subtrees of $t$ with domains $r_{i} \cdot
  \dom{\arun_{i}}$, for $i \in \interv{1}{n}$, have disjoint sets of
  free variables. We define a store $\store$ as follows, for each
  variable $\atpos{x}{p} \in \fv{\charform{t}}$:
  \begin{itemize}[label=$\triangleright$]
  \item if $p = r_ip'$ and $p'\in \dom{t_i}$, for some
    $i\in\interv{1}{n}$, then we set $\store(\atpos{x}{p}) \isdef
    \store_i(\atpos{x}{p'})$,
  \item otherwise, we chose a fresh value $\store(\atpos{x}{p})$, such that $\store(\atpos{x}{p}) \not\in
    \bigcup_{i=1}^n \supp{\struc'_i}$ and
    $\store(\atpos{x}{p})\neq\store(\atpos{z}{r})$, for each variable
    $\atpos{z}{r}$, such that $\atpos{x}{p} \not\eqof{\charform{t}}
    \atpos{z}{r}$.
  \end{itemize}
  By the fact that $\supp{\struc'_i} \cap \supp{\struc'_j} =
  \emptyset$, i.e., $\store_i(\fv{\charform{t_i}}) \cap
  \store_j(\fv{\charform{t_j}}) = \emptyset$, for all $1 \le i < j
  \le n$, and the construction of $\store$, we obtain that $\store$
  is canonical for $\charform{t}$.  The store $\store$ defines the
  structure $\astruc=(\univ,\struc)$, as follows:
  \[\struc(\arel)\isdef\set{\tuple{\store(z_1), \ldots, \store(z_{\arityof{\arel}})}
    \mid \arel(z_1, \ldots,z_{\arityof{\arel}}) \text{ occurs in }
    \charform{t}} \text{, for all } \arel\in\relations\] By the
  definition of $\astruc$, we have $\astruc \models^\store
  \charform{t}$. Moreover, we define the relation: \[\diseq \isdef
  \set{(\store(x),\store(y)) \mid x \neq y \text{ or } y \neq x \text{
      occurs in } \charform{t}}\] We have obtained a rich canonical
  model $(\astruc,\diseq)\in\rcsem{\exclof{\charform{t}}}{}$ and, since $t
  \in \langof{}{\mathcal{B}}$, we have $(\astruc,\diseq)\in
  \rcsem{\mathcal{B}}{}$. We prove below the three conditions from
  \autoref{def:expandable}: \begin{itemize}[left=.5\parindent]
  \item[(\ref{it1:def:expandable})] By the construction of $\astruc$,
    we have $\astruc'_i \subseteq \astruc$, for all $i \in
    \interv{1}{n}$. Since $\astruc_1, \ldots, \astruc_n$ are pairwise
    disjoint, their composition is defined, hence $\astruc_1 \comp
    \ldots \comp \astruc_n \subseteq \astruc'_1 \comp \ldots \comp
    \astruc'_n \subseteq \astruc$. W.l.o.g., let $\astruc_i =
    (\univ,\struc_i)$, for all $i\in\interv{1}{n}$. To prove
    $\astruc_1 \comp \ldots \comp \astruc_n \substruc \astruc$, by
    \autoref{def:substructure}, we must prove that:
    \begin{equation}\label{eq1:def:expandable}
      \struc_1(\arel) \uplus \ldots \uplus \struc_n(\arel) =
      \set{\tuple{u_1, \ldots, u_{\arityof{\arel}}} \in \struc(\arel)
        \mid u_1, \ldots, u_{\arityof{\arel}} \in \supp{\struc_1} \cup
        \ldots \cup \supp{\struc_n}}
    \end{equation}
    The ``$\subseteq$'' direction follows from $\astruc_1 \comp \ldots
    \comp \astruc_n \subseteq \astruc$, hence we are left with proving
    the dual ``$\supseteq$'' direction. Let $\tuple{u_1, \ldots,
      u_{\arityof{\arel}}} \in \struc(\arel)$ be a tuple, such that
    $u_1, \ldots, u_{\arityof{\arel}} \in \bigcup_{i=1}^n
    \supp{\struc_i}$. By the definition of $\struc$, there exists a
    relation atom $\arel(\atpos{z}{p}_1, \ldots,
    \atpos{z}{p}_{\arityof{\arel}})$ in $\charform{t}$, such that
    $\store(z_i)=u_i$, for all $i \in \interv{1}{\arityof{\arel}}$. To
    simplify matters, we assume that the position of each variable in
    the relation atom is the same, the case where these positions are
    either $p$ and $pi$, or $pi$ and $pj$, for some $p\in\nat^*$ and
    $i \neq j \in \nat$ is treated in a similar way and left to the
    reader. Moreover, for each $i \in \interv{1}{\arityof{\arel}}$,
    there exists a unique $k_i \in \interv{1}{n}$, such that $u_i \in
    \supp{\struc_{k_i}}$. Suppose, for a contradiction, that $k_i \neq
    k_j$, for some $1 \le i < j \le n$. Then, there exist paths
    between $p$ and some positions $s_i \in r_i \cdot \dom{t_i}$ and
    $s_j \in r_j \cdot \dom{t_j}$, such that $\atpos{z}{p}_i
    \eqof{\charform{t}} \atpos{\xi}{s_i}_i$ and $\atpos{z}{p}_j
    \eqof{\charform{t}} \atpos{\xi}{s_j}_j$. Consider the case where
    $\arun_i$ and $\arun_j$ are runs (the case where one of them is a
    context uses a similar argument and is left to the reader). Since
    $t_i$ and $t_j$ are embedded in $t$ at positions $r_i$ and $r_j$,
    respectively, at least one of these paths, say the one from $p$ to
    $s_i$, contains the position $r_i$. Then, there exists a variable
    $\atpos{x}{r_i}_{\ell_i}$, for $\ell_i \in
    \interv{1}{\arityof{\arun_i(\epsilon)}}$, such that
    $\store(\atpos{x}{r_i}_{\ell_i}) = u_i$. Hence, $\supp{\struc_i}
    \cap \store(\set{\atpos{x}{r_i}_1, \ldots,
      \atpos{x}{r_i}_{\arityof{\arun_i(\epsilon)}}}) = \supp{\struc_i}
    \cap \store_i(\set{\atpos{x}{\epsilon}_1, \ldots,
      \atpos{x}{\epsilon}_{\arityof{\arun_i(\epsilon)}}}) \neq
    \emptyset$, in contradiction with the fact that $\astruc_i$ is
    encapsulated by $\tuple{\arun_i,t_i,\store_i,\astruc'_i}$, by
    condition \ref{it2:def:view} of \autoref{def:view}. We obtained
    that $k_1 = \ldots = k_{\arityof{\arel}}$, hence $u_1, \ldots,
    u_{\arityof{\arel}} \in \supp{\struc_k}$, leading to $\tuple{u_1,
      \ldots, u_{\arityof{\arel}}} \in \struc_k(\arel)$, for some
    index $k \in \interv{1}{n}$. This proves
    (\ref{eq1:def:expandable}). Since the choice of $\arel$ was
    arbitrary, we obtain that $\astruc_1 \comp \ldots \comp \astruc_n
    \substruc \astruc$, by \autoref{def:substructure}.
  \item[(\ref{it2:def:expandable})] Suppose, for a contradiction, that
    there exists a pair $(u,v)\in\diseq$, such that $u \in
    \supp{\struc_i}$ and $v \in \supp{\struc_j}$, for some indices $1
    \le i < j \le n$. Then, there exists a disequality $\atpos{x}{p}
    \neq \atpos{y}{p}$ (or $y \neq x$, this case being symmetric) in
    $\charform{t}$, such that $\store(\atpos{x}{p})=u$ and
    $\store(\atpos{y}{p})=v$. Since $u \in \supp{\struc_i}$, there
    exists a variable $\atpos{\xi}{p_i}_i$ such that $p_i \in r_i
    \cdot \dom{t_i}$ and $\store(\atpos{\xi}{p_i}_i)=u$. Since
    $\store$ is canonical for $\charform{t}$, we have $x
    \eqof{\charform{t}} \atpos{\xi}{p_i}_i$. Suppose, for a
    contradiction, that $p \not\in r_i \cdot \dom{t_i}$. Then, by a
    similar argument as the one used in the proof of point
    (\ref{it1:def:expandable}), we obtain a contradiction with
    condition \ref{it2:def:view} of \autoref{def:view}, hence $p \in
    r_i \cdot \dom{t_i}$. Symmetrically, we obtain $p \in r_j \cdot
    \dom{t_j}$, hence $r_i \cdot \dom{t_i} \cap r_j \cdot \dom{t_j}
    \neq \emptyset$, which contradicts point
    \ref{it2:proof:exp-composition} above.
  \item[(\ref{it3:def:expandable})] Suppose, for a contradiction, that
    there exists a relation symbol $\arel\in\relations$ and tuples
    $\tuple{u_1, \ldots, u_{\arityof{\arel}}}$, $\tuple{v_1, \ldots,
      v_{\arityof{\arel}}} \in \struc(\arel)$, such that
    $\set{u_1,\ldots,u_{\arityof{\arel}}} \cap \supp{\struc_i} \neq
    \emptyset$, $\set{v_1,\ldots,v_{\arityof{\arel}}} \cap
    \supp{\struc_j} \neq \emptyset$ and
    $\set{u_1,\ldots,u_{\arityof{\arel}}} \cap
    \set{v_1,\ldots,v_{\arityof{\arel}}} \neq \emptyset$, for some
    indices $1 \le i < j \le n$. Then, there exists two distinct
    relation atoms $\arel(\atpos{z}{p_1}_1, \ldots,
    \atpos{z}{p_1}_{\arityof{\arel}})$ and $\arel(\atpos{z}{p_2}_1,
    \ldots, \atpos{z}{p_2}_{\arityof{\arel}})$ in $\charform{t}$ and
    variables $\atpos{\xi}{s_i}_i$, $s_i \in r_i \cdot \dom{t_i}$ and
    $\atpos{\xi}{s_j}_j$, $s_j \in r_j \cdot \dom{t_j}$, such that
    $\atpos{z}{p_1}_k \eqof{\charform{t}} \atpos{\xi}{s_i}_i$,
    $\atpos{z}{p_2}_\ell \eqof{\charform{t}} \atpos{\xi}{s_j}_j$, for
    some indices $k,\ell \in \interv{1}{\arityof{\arel}}$. For
    simplicity, we consider that the position of the variables is the
    same in the above relation symbols, i.e., $p_1$ and $p_2$,
    respectively. By an argument similar to the one used in the proof
    of point (\ref{it2:def:expandable}), we obtain that $p_1 \in r_i
    \cdot \dom{t_i}$ and $p_2 \in r_j \cdot \dom{t_j}$. However, since
    $\set{u_1,\ldots,u_{\arityof{\arel}}} \cap
    \set{v_1,\ldots,v_{\arityof{\arel}}} \neq \emptyset$, there exist
    indices $g,h \in \interv{1}{\arityof{\arel}}$ such that
    $\atpos{z}{p_1}_g \eqof{\charform{t}} \atpos{z}{p_2}_h$.  For
    simplicity, we consider the case where $\arun_i$ and $\arun_j$ are
    runs, the case where at least one of them is a context uses a
    similar argument being left to the reader. Then the path between
    $r_i$ and $r_j$ is contained with the path between $p_1$ and
    $p_2$. By point \ref{it3:proof:exp-composition} above, this path
    contains a reset path disjoint from $\bigcup_{k=1}^n r_k \cdot
    \dom{\arun_k}$. Since, moreover, $\mathcal{B}$ has no persistent
    variables, by construction, we obtain $\atpos{z}{p_1}_g
    \not\eqof{\charform{t}} \atpos{z}{p_2}_h$, contradiction. \qedhere
  \end{itemize}
\end{proof}

  \begin{figure}[htbp]
    \begin{center}
    \input{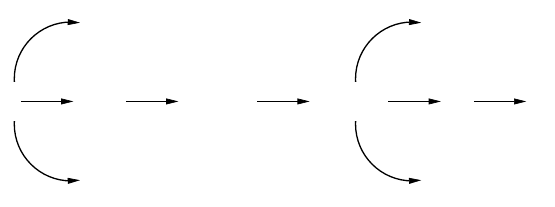_t}
    \caption{\label{fig:automata-expansion} The Chain of Automata Transformations}
    \end{center}
  \end{figure}

We are now ready to prove \autoref{lemma:expansion}:

\begin{proof}Let $\auto{\asid}{\apred}$ be the $\Sigma$-labeled automaton
  corresponding to $\asid$ and $\apred$, such that
  $\sidsem{\apred}{\asid}=\sem{\auto{\asid}{\apred}}$, by
  \autoref{lemma:sid-ta} (\ref{it1:lemma:sid-ta}). The chain of
  transformations depicted in \autoref{fig:automata-expansion}
  produces the set $\set{\mathcal{B}_j^i}_{i\in\interv{1}{n},~
    j\in\interv{1}{m_i}}$ of choice-free and all-satisfiable automata,
  such that $\sem{\auto{\asid}{\apred}}$ is treewidth bounded if and
  only if $\sem{\mathcal{B}^i_j}$ is treewidth bounded, for each $i
  \in \interv{1}{n}$ and $j \in \interv{1}{m_i}$. In particular,
  $\auto{\asid}{\apred}^1, \ldots, \auto{\asid}{\apred}^n$ is the
  language-preserving choice-free decomposition of
  $\auto{\asid}{\apred}$ (\autoref{lemma:choice-free}) and
  $\overline{\mathcal{B}}^i_1, \ldots, \overline{\mathcal{B}}^i_{m_i}$
  are obtained by removing the persistent variables from each
  automaton $\widetilde{\mathcal{B}}^i$ in the language-preserving
  choice-free decomposition of
  $(\widetilde{\auto{\asid}{\apred}}^i)^{II}$
  (\autoref{lemma:remove-persistent:B}). We assume, without loss of
  generality, that the initial state of each $\mathcal{B}^i_j$ is
  $q_\bpred$ and let $\csid^i_j$ be the SID, such that
  $\sem{\mathcal{B}^i_j}=\sidsem{\bpred}{\csid^i_j}$ and
  $\rcsem{\mathcal{B}^i_j}{}=\rcsem{\bpred}{\csid^i_j}$, by
  \autoref{lemma:sid-ta} (\ref{it2:lemma:sid-ta}). We are left with
  proving that $\csid^i_j$ is expandable for $\bpred$, for each $i \in
  \interv{1}{n}$ and $j \in \interv{1}{m_i}$. Let $\csid$ be any of
  the SIDs $\csid^1_1, \ldots, \csid^n_{m_n}$ and $\mathcal{B}$ be the
  automaton such that $\csem{\bpred}{\csid}=\csem{\mathcal{B}}{}$, by
  \autoref{lemma:sid-ta} (\ref{it2:lemma:sid-ta}). Let $\astruc_1,
  \ldots, \astruc_k \in \csem{\bpred}{\csid}$ be pairwise disjoint
  structures. By \autoref{lemma:exp-decomposition}, for each $i \in
  \interv{1}{k}$ there exists a decomposition $\astruc_i = \astruc^i_1
  \comp \ldots \comp \astruc^i_{\ell_i}$ into pairwise disjoint
  structures, such that the structure $\astruc^i_j$ is encapsulated by
  a view $\tuple{\arun^i_j, t^i_j, \store^i_j,
    {\overline{\astruc}}^i_j}$ for $\mathcal{B}$, for each $i \in
  \interv{1}{k}$ and $j \in \interv{1}{\ell_i}$. Without loss of
  generality, we assume that the structures
  $\set{{\overline{\astruc}}^i_j}_{i\in\interv{1}{k},~ j \in
    \interv{1}{\ell_i}}$ are pairwise disjoint, hence the composition
  of $\astruc^1_1, \ldots, \astruc^n_{\ell_n}$ is defined, thus we
  obtain $\astruc_1 \comp \ldots \comp \astruc_k = \astruc^1_1 \comp
  \ldots \comp \astruc^n_{\ell_n}$. By
  \autoref{lemma:exp-composition}, there exists a rich canonical
  model $(\astruc,\diseq)\in\rcsem{\mathcal{B}}{}$, such that
  conditions (\ref{it1:def:expandable}), (\ref{it2:def:expandable})
  and (\ref{it3:def:expandable}) from \autoref{def:expandable}
  hold for $\astruc_1, \ldots, \astruc^k_{\ell_k}$ and
  $(\astruc,\diseq)$. Since
  $\rcsem{\mathcal{B}}{}=\rcsem{\bpred}{\csid}$, by
  \autoref{lemma:sid-ta} (\ref{it2:lemma:sid-ta}), $\csid$ is
  expandable for $\bpred$.
\end{proof}

\subsection{The Decidability of the Treewidth Boundedness Problem for \slr}

The proof of the main result (\autoref{thm:main}) uses a reduction of
an arbitrary SID to a set of expandable SIDs, having an equivalent
treewidth boundedness status (\autoref{lemma:expansion}). Since
treewidth boundedness is decidable for expandable SIDs
(\autoref{thm:expandable-tb}), this proves the decidability of the
problem, in the general case. We conclude this section with the proof
of the main result, that is the decidability of the \tbsl\ problem,
for unrestricted SIDs:

\begin{thm}\label{thm:main}
  There exists an algorithm that decides, for each SID $\asid$ and
  nullary predicate $\apred$, whether the set $\sidsem{\apred}{\asid}$
  has bounded treewidth. If, moreover, this is the case, then
  $\twof{\sidsem{\apred}{\asid}} \leq \maxvarinruleof{\asid} + N \cdot
  M$, where:
  \begin{align*}
    M \isdef & 2\cdot\maxvarinruleof{\asid} + (1 + \maxrulearityof{\asid}) \cdot
    \relationsno{\asid} \cdot \maxpredarityof{\asid}^{\maxrelarityof{\asid}} \\
    N \isdef & \max(K, \maxrulearityof{\asid}^K) \\
    K \isdef & \predsno{\asid} \cdot \relationsno{\asid} \cdot {\maxpredarityof{\asid}}^{\maxpredarityof{\asid}+\maxrelarityof{\asid}}
  \end{align*}
\end{thm}
\begin{proof}
  By \autoref{lemma:expansion}, the treewidth boundedness problem for
  the set $\sidsem{\apred}{\asid}$ can be effectively reduced to
  finitely many treewidth boundedness problems for sets
  $\sidsem{\bpred}{\csid_1}, \ldots, \sidsem{\bpred}{\csid_k}$, where
  $\csid_1, \ldots, \csid_k$ are expandable for $\bpred$. The latter
  problem is decidable, by \autoref{thm:expandable-tb}. The upper
  bound follows from the sequence of transformations given in
  \autoref{fig:automata-expansion}. Let $\csid$ be any of the
  expandable SIDs $\csid_1, \ldots, \csid_k$. By
  \autoref{thm:expandable-tb}, we have $\twof{\sidsem{\bpred}{\csid}}
  \leq \maxvarinruleof{\asid}$. Note that the maximum number of
  variables occurring in a rule is not increased by the construction
  of $\csid$ from $\asid$. By \autoref{lemma:remove-relations}
  (\ref{it3:lemma:remove-relations}),
  \autoref{lemma:remove-equalities}
  (\ref{it3:lemma:remove-equalities}),
  \autoref{lemma:remove-persistent}
  (\ref{it3:lemma:remove-persistent}) and \autoref{lemma:exp}
  (\ref{it3:lemma:exp}), we obtain $\twof{\sidsem{\apred}{\asid}} \leq
  \twof{\sidsem{\bpred}{\csid}} + \cardof{\trans^1} \cdot M$, where
  $\trans^1$ is the set of $1$-transitions of the choice-free
  automaton used to define $\csid$ (point (\ref{it2:lemma:sid-ta}) of
  \autoref{lemma:sid-ta}) and $M = 2\cdot\maxvarinruleof{\asid} + (1 +
  \maxrulearityof{\asid}) \cdot \relationsno{\asid} \cdot
  \maxpredarityof{\asid}^{\maxrelarityof{\asid}}$. Note that
  $\cardof{\trans^1} \cdot M$ is the sum of the increases in the upper
  bounds of the treewidth along the transformation. By
  \autoref{lemma:choice-free}, we obtain $\cardof{\trans^1} \leq
  \max(\cardof{\states_{\asid}},
  \maxrulearityof{\asid}^\cardof{\states_{\asid}})$, where
  $\states_{\asid}$ denotes the set of states of
  $\auto{\asid}{\apred}$ (\autoref{lemma:sid-ta}). Since the entire
  construction was done assuming that $\asid$ is equality-free and
  all-satisfiable, by lifting these assumptions, we obtain
  $\cardof{\states_\asid} \leq \predsno{\asid} \cdot
  \relationsno{\asid} \cdot
              {\maxpredarityof{\asid}}^{\maxpredarityof{\asid}+\maxrelarityof{\asid}}$
              (\autoref{lemma:eq-free} and \autoref{lemma:all-sat}).
\end{proof}
Note that, unlike \autoref{thm:expandable-tb}, that gives an optimal
upper bound for the treewidths of the structures described by an
expandable SID, \autoref{thm:main} does not provide such an optimal
upper bound. In particular, the upper bound of \autoref{thm:main}
grows doubly exponential in the $\maxpredarityof{\asid}$ and
$\maxrelarityof{\asid}$ parameters and simply exponential in the
$\predsno{\asid}$ and $\relationsno{\asid}$ parameters. Finding an
optimal bound for the general case is considered as subject for future
work.


\section{The Treewidth Boundedness Problem for First Order Logic}
\label{sec:undecidability}

This section proves the undecidability of the treewidth boundedness
problem for first-order logic. This result places the frontier of
decidability for this problem between the classical first-order logic
and substructural \slr\ with simple inductive definitions, i.e., using
only existentially quantified separating conjunctions of atoms.

We recall that the \emph{first-order logic} (\fol) is the set of
formul{\ae} consisting of equalities and relation atoms, connected by
boolean conjunction, negation and existential quantification. The
semantics of first order logic is given by the satisfaction relation
$(\univ,\struc) \Models^\store \phi$ between structures and
formul{\ae}, parameterized by a \emph{store} $\store : \vars
\rightarrow \univ$ such that $(\univ,\struc) \Models^\store
\arel(x_1,\ldots,x_{\arityof{\arel}})$ iff $\tuple{\store(x_1),
  \ldots, \store(x_{\arityof{\arel}})} \in \struc(\arel)$. If $\phi$
is a sentence the store is not important, thus we omit the superscript
and write $\astruc \Models \phi$ instead. The set of \emph{models} of
a \fol\ sentence $\phi$ is denoted as $\sem{\phi}\isdef\set{\astruc
  \mid \astruc \Models \phi}$. Although we use the same notation for
the sets of models of \fol\ and \slr\ formul{\ae}, the underlying
logic is clear from the context.

This section is concerned with the proof of the following theorem:

\begin{thm}\label{thm:fol}
  The problem is $\sem{\phi}$ treewidth-bounded, for a given
  \fol\ sentence $\phi$ with at least two binary relation symbols and
  several unary relation symbols, is undecidable.
\end{thm}
\begin{proof}
We will reduce from the undecidability of the \emph{Tiling
Problem}~\cite{berger1966undecidability}.  We first recall its
definition. Given a finite set of tiles $S = \{t_1,\ldots,t_n\}$ is
there a tiling of the plane such that the colors of neighbouring tiles
match?  (We note that rotating or reflecting the tiles is not
allowed.)  In more detail: We assume the plane is given by integer
coordinates $(x,y)$ with $x,y \in \mathbb{Z}$.  We want to put a copy
of a tile at every coordinate.  We will require that neighbouring
tiles match. For this we assume to be given a relation $H \subseteq S
\times S$ -- specifying which tiles match can be placed next to each
other horizontally -- and $V \subseteq S \times S$ -- specifying which
tiles match can be placed next to each other vertically.  We now
require for every tiling that $(t_i,t_j) \in H$, for all tiles $t_i$
and $t_j$ placed at coordinates $(x,y)$ and $(x+1,y)$, and $(t_i,t_j)
\in V$, for all tiles $t_i$ and $t_j$ placed at coordinates $(x,y)$
and $(x,y+1)$. It is well-known that it is undecidable whether such a
tiling exists~\cite{berger1966undecidability}. In fact, it is known
that is already undecidable whether such a tiling exists for the
upper-right quadrant of the plane, i.e., when coordinates $(x,y)$ are
restricted to $x,y \in \mathbb{N}$.

We will now reduce the tiling problem to deciding whether a given
first-order formula has infinitely many non-isomorphic models of
unbounded treewidth. We consider some instance of the tiling problem.
For encoding this problem, we define the signature $\relations =
\{\upf,\rightf,\north,\south,\east,\west,\internal,\tile_1,\dots,\tile_n\}$
to consist of the binary relations $\upf$ and $\rightf$, the unary
relations $\south,\east,\west,\north,\internal$ and the unary
relations $\tile_1,\dots,\tile_n$ (one for each tile in the tiling
instance).  We then consider the following formula:
\[
\phi \isdef \bigwedge\nolimits_{i=1}^{13} \psi_i \land \bigwedge\nolimits_{j=1}^4 \phi_j
\]
where:
\begin{itemize}[label=$\triangleright$]
\item $\psi_1 \isdef \forall x \exists^{\le 1} y ~.~ \rightf(x,y)
  \land \exists^{\le 1} y ~.~ \rightf(y,x)$ states that $\rightf$ and
  $\rightf^{-1}$ are partial functions, $\psi_2$ states that that
  $\upf$ and $\upf^{-1}$ are partial functions,
  %
  %
\item $\psi_3 \isdef \forall x,y,z ~.~ \upf(x,y) \land \rightf(x,z)
  \rightarrow \exists w. \upf(z,w) \land \rightf(y,w) $ states that
  $\rightf$ and $\upf$ commute, $\psi_4$ states that $\rightf^{-1}$
  and $\upf$ commute, $\psi_5$ states that $\rightf$ and $\upf^{-1}$
  commute and $\psi_6$ states that $\rightf^{-1}$ and $\upf^{-1}$
  commute,
\item $\psi_7 \isdef \forall x ~.~ \south(x) \leftrightarrow \neg \exists
  y. \upf(y,x)$ states that south-labelled nodes are exactly the ones
  that do not have incoming $\upf$ edges, and $\psi_8$, $\psi_9$,
  $\psi_{10}$ define the analogous property for the west, east, and
  north labels,
\item $\psi_{11} \isdef \forall x ~.~ \internal(x) \leftrightarrow
  \neg(\south(x) \lor \east(x) \lor \north(x) \lor \west(x))$ states
  that internal nodes are exactly the ones not labelled by south,
  east, north or west,
\item $\psi_{12} \isdef \forall x ~.~ \internal(x) \rightarrow \exists
  y ~.~ \rightf(x,y) \land \exists y ~.~ \rightf(y,x)$ states that
  internal nodes have exactly one outgoing and exactly one incoming
  $\rightf$ edge, $\psi_{13}$ states that internal nodes have exactly
  one outgoing and exactly one incoming $\upf$ edge,
  \item $\phi_1 \isdef \forall x ~.~ \bigvee_{i=1}^n \tile_i(x)$
    states that every coordinate holds at least one tile,
  \item $\phi_2 \isdef \forall x ~.~ \bigwedge_{i\neq j} \neg
    \tile_i(x) \lor \neg \tile_j(x)$ states that every coordinate
    holds at most one tile,
  \item $\phi_3 \isdef \forall x \forall y ~.~ \rightf(x,y) \rightarrow
    \bigvee_{(t_i,t_j) \in H} \tile_i(x) \land \tile_j(y)$ states that tiles,
    that are next to each other horizontally, satisfy the horizontal
    matching constraint, and
  \item $\phi_4 \isdef \forall x \forall y ~.~ \upf(x,y) \rightarrow
    \bigvee_{(t_i,t_j) \in V} \tile_i(x) \land \tile_j(y)$ states that
    tiles, that are next to each other vertically, satisfy the
    vertical matching constraint.
\end{itemize}
As usual, the formul{\ae} $\exists^{\leq1} x ~.~ \varphi$ stand for
$\exists x ~.~ \varphi \rightarrow \forall y ~.~ \varphi[x/y] \rightarrow
y=x$.  The proof will make use of the fact that $\phi$ encodes grids and
non-standard models of grids, which are (disjoint unions of) grid-like
structures.  We will argue the following:

\begin{fact}
  Each model of $\phi$ can be decomposed into (disjoint unions
  of) grids, cylinders, and toruses, where grids have
  $\south,\east,\west,\north$ borders, cylinders have either
  $\south,\north$ or $\east,\west$ borders, and toruses do not
  have any borders and only consist of internal nodes.
\end{fact}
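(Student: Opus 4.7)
The plan is to fix a connected component $C$ of a model $\astruc\models\phi$ (where connectivity is via $\rightf$- and $\upf$-edges, together with their inverses) and show that $C$ has the claimed shape. Since $\astruc$ is a disjoint union of such components, the conclusion will follow.

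First I would set up coordinates on $C$. Pick an arbitrary node $v_0 \in C$ and define a partial map $\kappa:\mathbb{Z}^2 \rightharpoonup C$ by $\kappa(i,j) \isdef \rightf^{\,i}\upf^{\,j}(v_0)$, using $\rightf^{-1}$ and $\upf^{-1}$ for negative exponents. Because of $\psi_1,\psi_2$ each of $\rightf,\upf$ and their inverses is a partial function, and because of $\psi_3$--$\psi_6$ the four relations $\rightf^{\pm1},\upf^{\pm1}$ pairwise commute on their common domain. So the value of $\kappa(i,j)$ does not depend on the order in which one applies the $\rightf$'s and $\upf$'s, and the domain $D \isdef \dom{\kappa}$ is convex in $\mathbb{Z}^2$ along each axis: if $\kappa(i,j)$ and $\kappa(i',j)$ are defined, then so is $\kappa(i'',j)$ for every integer $i''$ between $i$ and $i'$ (and the analogous statement along the vertical axis). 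A short connectivity argument then shows that $\img\kappa=C$.

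Next I would analyse the kernel $K \isdef \{(a,b)\in\mathbb{Z}^2 \mid \kappa(i+a,j+b)=\kappa(i,j)\text{ whenever both sides are defined}\}$. A standard commutative-diagram argument (using again $\psi_3$--$\psi_6$ and the partial-function property) shows that $K$ is a subgroup of $\mathbb{Z}^2$, and that the horizontal and vertical periodicities are independent: if some $(a,0)\in K$ with $a>0$ then the whole component has a horizontal period, and symmetrically for $(0,b)\in K$. The four cases for $K$ (trivial, generated by a nonzero $(a,0)$, generated by a nonzero $(0,b)$, or of rank $2$) correspond exactly to grid, horizontal cylinder, vertical cylinder and torus. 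In each case $C$ is a quotient of a rectangle/strip in $\mathbb{Z}^2$, and the structure on $C$ is completely determined by $\kappa$, up to isomorphism.

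Finally I would check that the boundary labels $\south,\east,\west,\north,\internal$ are forced to agree with this geometric picture. Formul{\ae} $\psi_7$--$\psi_{10}$ literally identify the $\south/\west/\north/\east$ nodes with the nodes that have, respectively, no incoming $\upf$, no incoming $\rightf$, no outgoing $\upf$, no outgoing $\rightf$; $\psi_{11}$ defines $\internal$ as the complement of these; $\psi_{12},\psi_{13}$ guarantee that internal nodes are precisely the nodes that lie in the interior of $D/K$. Hence on a grid all four borders appear, on a horizontal cylinder only $\west,\east$ appear (the $\upf$-direction having been closed into a cycle collapses $\south,\north$), on a vertical cylinder only $\south,\north$ appear, and on a torus none appear, which is exactly the statement of the fact. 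The main obstacle is the rigorous verification that $\kappa$ is well-defined globally on $D$ and that the kernel is a subgroup; both reduce to iterated applications of the commutation axioms $\psi_3$--$\psi_6$ together with the partial-functionality axioms $\psi_1,\psi_2$.
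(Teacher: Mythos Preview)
Your proposal is essentially correct and shares the paper's starting point (decompose into connected components, pick a representative, analyse the $\rightf$- and $\upf$-directions), but the paper's argument is considerably more direct. The paper simply observes that, iterating $\rightf$ from the representative $u_C$, by finiteness together with the partial-functionality of $\rightf$ and $\rightf^{-1}$ one must either reach a $\west$/$\east$ border or loop back to $u_C$; the same dichotomy holds independently for $\upf$ and the $\south$/$\north$ labels. The four combinations of ``border vs.\ loop'' along the two axes immediately give the grid/cylinder/torus trichotomy, and the detailed structural verification is deferred to the subsequent three Facts.

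Your kernel-subgroup formulation is valid but buys you extra bookkeeping rather than extra insight. In particular, when you list ``the four cases for $K$ (trivial, generated by a nonzero $(a,0)$, generated by a nonzero $(0,b)$, or of rank~$2$)'' you silently exclude rank-$1$ subgroups generated by a non-axis-aligned vector; ruling these out needs exactly the per-axis ``border or loop'' argument the paper uses, together with the finiteness of the model, which you never invoke explicitly. It is also worth noting that the paper's notion of ``torus'' is deliberately loose: a later Fact remarks that in a torus component the elements $\rightf^a(\upf^b(u_C))$ need not be pairwise distinct, so the quotient is not required to be a geometric torus, and the argument downstream does not depend on this.
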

\proof{We recall that we are only interested in finite models of
  first-order formul{\ae}.  We note that $\phi$ specifies that $\upf$
  and $\rightf$ are (partial) functions, and, hence, we will use
  functional notation in the following.  We now fix a model
  $(\univ,\struc)$ of $\phi$ -- as usual, we require that
  $\univ\neq\emptyset$. We decompose $(\univ,\struc)$ into its
  maximally connected components, connected via $\upf$, $\rightf$. We
  choose a representative $u_C$, for each component $C$.  We observe
  that either there are $j \le 0 \le i$ such that $\rightf^j(u_C)$ is
  $\west$-labelled and $\rightf^i(u_C)$ is $\east$-labelled, or
  $\rightf^i(u_C) = u_C$, for some $i \ge 0$ (because the universe is
  finite and the functionality of $\rightf$ and $\rightf^{-1}$ ensures
  that the only possible loop returns to $u_C$).  An analogous
  statement holds for $\upf$ as well as the $\north$ and $\south$
  labels.  We now call a component $C$ a \emph{grid}, if $u_C$ reaches
  $\south,\east,\west,\north$ via $\upf$, $\rightf$ and their
  inverses, a \emph{cylinder} if $u_C$ reaches $\south,\north$ or
  $\east,\west$ via $\upf$ and its inverse resp. $\rightf$ and its
  inverse, or a \emph{torus}, otherwise. \qed}

We now justify the naming of these components: 

\begin{fact}
  Consider a grid component $C$ with representative $u_C$ such
  that $\rightf^j(u_C)$ is $\west$-labelled, $\rightf^i(u_C)$ is
  $\east$-labelled, $\upf^k(u_C)$ is $\south$-labelled and
  $\upf^l(u_C)$ is $\north$-labelled, for some $j \le 0 \le i$ and $k
  \le 0 \le l$.  We claim that: \begin{enumerate}
  \item\label{enum:existence} the elements $\rightf^a(\upf^b(u_C))$
    exist, for all $j \le a \le i$ and all $k \le b \le l$,
  \item\label{enum:border} an element $\rightf^a(\upf^b(u_C))$, for $j
    \le a \le i$ and $k \le b \le l$, is $\east$-labelled iff $a=i$; 
    analogous claims hold for the labels $\south, \west, \north$,
  \item\label{enum:internal} the elements $\rightf^a(\upf^b(u_C))$ are
    internal nodes, for all $j < a < i$ and all $k < b < l$,
  \item\label{enum:pairwise-different} the elements
    $\rightf^a(\upf^b(u_C))$ are pairwise different, for all $j \le a
    \le i$ and all $k \le b \le l$,    
  \item\label{enum:reachable} all elements of the component can be
    represented as $\rightf^a(\upf^b(u_C))$, for some $j \le a \le i$
    and $k \le b \le l$, and
  \item\label{enum:grid} the component is isomorphic to a grid.
  \end{enumerate}
\end{fact}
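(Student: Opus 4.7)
My plan is to fix coordinate notation $v_{a,b} \isdef \rightf^a(\upf^b(u_C))$, where negative exponents denote applications of the corresponding partial inverse functions, and to exploit that $\psi_1, \psi_2$ make $\rightf, \upf$ and their inverses into partial functions, while $\psi_3$--$\psi_6$ let me interchange any two of $\rightf, \rightf^{-1}, \upf, \upf^{-1}$ whenever both sides of the commutation exist. I would then prove the six claims in sequence, leaning throughout on the commutation axioms $\psi_3$--$\psi_6$ and the border characterisations $\psi_7$--$\psi_{11}$.

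For (1), I first argue by induction on $|a|$ that $v_{a,0}$ exists for every $j \le a \le i$: since $\rightf^j(u_C)$ is $\west$-labelled and $\rightf^i(u_C)$ is $\east$-labelled, every intermediate $v_{a,0}$ is neither, by $\psi_8$ and $\psi_9$, so it has both an outgoing and an incoming $\rightf$-edge and hence $v_{a\pm 1,0}$ exist. Symmetrically $v_{0,b}$ exists for all $k \le b \le l$. A second induction applying $\psi_3$--$\psi_6$ at every step extends existence to all $v_{a,b}$ in the rectangle $[j,i] \times [k,l]$. For (2), I use the equivalence, granted by $\psi_9$, between $v_{a,b}$ being $\east$-labelled and having no outgoing $\rightf$-edge: when $a < i$ the element $v_{a+1,b}$ from (1) witnesses an outgoing $\rightf$-edge, so $v_{a,b}$ is not $\east$; when $a = i$, an outgoing $\rightf$-edge from $v_{i,b}$ would, by commuting $\rightf$ past $\upf$ or $\upf^{-1}$ (using $\psi_3$ or $\psi_5$) a total of $|b|$ times, yield an outgoing $\rightf$-edge at $v_{i,0} = \rightf^i(u_C)$, contradicting its $\east$ label. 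The $\west, \south, \north$ cases are symmetric, and claim (3) then follows from (2) together with $\psi_{11}$.

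Claim (4) is the step I expect to be the main obstacle, and I would argue it as follows. Suppose $v_{a,b} = v_{a',b'}$ with $a \le a'$. I apply $\rightf$ exactly $i - a'$ times to both sides: this is well-defined, because by (2) no intermediate node is $\east$-labelled before its first coordinate reaches $i$, and $\rightf$ is a partial function. The outcome is $v_{a + (i - a'),\,b} = v_{i,b'}$; by (2) the right-hand side is $\east$-labelled, hence so is the left-hand side, and again by (2) we must have $a + (i - a') = i$, giving $a = a'$. Symmetrically, using $\upf$ and the $\north$ border, $b = b'$. For (5), by the border characterisations $\psi_7$--$\psi_{10}$ and claim (2), no node on the east, west, north or south border of the rectangle has an edge leaving the rectangle, so the set $\{v_{a,b} \mid j \le a \le i,\ k \le b \le l\}$ is closed under all four partial functions $\rightf, \rightf^{-1}, \upf, \upf^{-1}$; since $C$ is a maximal connected component containing $u_C = v_{0,0}$, every element of $C$ lies in the rectangle. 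Claim (6) is then immediate: (4) and (5) exhibit a bijection $(a,b) \mapsto v_{a,b}$ between $[j,i] \times [k,l]$ and the carrier of $C$, by construction this bijection carries grid-adjacent coordinates to $\rightf$- and $\upf$-edges, and (2) together with (3) ensure the correct border and $\internal$ labels.
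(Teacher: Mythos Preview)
Your proof is correct and follows essentially the same approach as the paper's: items (1)--(3) are derived from the commutativity axioms $\psi_3$--$\psi_6$ and the border characterisations $\psi_7$--$\psi_{11}$ (the paper simply asserts they ``directly follow from the commutativity requirements'' while you spell out the inductions), and your argument for (4) --- applying $\rightf^{\,i-a'}$ to push one coordinate to the $\east$ border and invoking (2) --- is exactly the paper's argument, as are your treatments of (5) and (6).
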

\proof{ Items (\ref{enum:existence}), (\ref{enum:border}) and
  (\ref{enum:internal}) directly follow from the commutativity
  requirements.  For~(\ref{enum:pairwise-different}), we consider some
  $j \le a_1,a_2 \le i$ and $k \le b_1,b_2 \le l$.  We will show that
  $a_1 \neq a_2$ resp. $b_1 \neq b_2$ imply that
  $\rightf^{a_1}(\upf^{b_1}(u_C)) \neq
  \rightf^{a_2}(\upf^{b_2}(u_C))$.  We will assume that
  $\rightf^{a_1}(\upf^{b_1}(u_C)) = \rightf^{a_2}(\upf^{b_2}(u_C))$
  and derive a contradiction.  Let us assume that $a_1 > a_2$ (the
  other cases are analogous).  Then, we have
  $\rightf^{a_1+s}(\upf^{b_1}(u_C)) =
  \rightf^{a_2+s}(\upf^{b_2}(u_C))$ for $s=i-a_1 \ge 0$.  However,
  $\rightf^{a_1+s}(\upf^{b_1}(u_C))$ is $\east$-labelled, while
  $\rightf^{a_2+s}(\upf^{b_2}(u_C))$ is not, by (\ref{enum:border}),
  contradiction. For~(\ref{enum:reachable}), we observe that every
  node reachable from $u_C$ connected via $\upf$, $\rightf$ and their
  inverses can be represented as $\rightf^a(\upf^b(u_C))$, because of
  the commutativity requirements; further, we must have $j \le a \le
  i$ and $k \le b \le l$ because the $\south,\east,\west,\north$
  borders do not have outgoing edges.  For~(\ref{enum:grid}), we
  observe that the component is isomorphic to the structure with
  domain $\{(x,y) \mid x,y \in [j,i], y \in [k,l]\}$, where $\rightf$
  is interpreted as $\{((x,y),(x+1,y)) \mid x \in [j,i-1], y \in
  [k,l]\}$ and $\upf$ as $\{((x,y),(x,y+1)) \mid x \in [j,i],y \in
  [k,l-1]\}$. \qed}

\begin{fact}
  Consider a cylinder component $C$ with representative $u_C$
  such that $\rightf^j(u_C)$ is $\west$-labelled, $\rightf^i(u_C)$ is
  $\east$-labelled, and $\upf^k(u_C) = u_C$ for some $j \le 0 \le i$
  and $k \le 0$, where $k$ is the smallest number with this property.
  (The properties stated below hold analogously for $\north, \south$
  cylinders).  We claim that: \begin{enumerate}
  \item\label{enum:existence2} the elements $\rightf^a(\upf^b(u_C))$
    exist, for all $j \le a \le i$ and all $0 \le b < k$,
  \item\label{enum:border2} an element $\rightf^a(\upf^b(u_C))$, for
    $j \le a \le i$ and $0 \le b < k$, is $\east$-labelled iff $a=i$;
    an analogous claim hold for the label $\west$,
  \item\label{enum:internal2} the elements $\rightf^a(\upf^b(u_C))$
    are internal nodes, for all $j < a < i$ and all $0 \le b < k$,
  \item\label{enum:pairwise-different2} all the elements
    $\rightf^a(\upf^b(u_C))$ are pairwise different, for all $j \le a
    \le i$ and all $0 \le b < k$,
  \item\label{enum:reachable2} all nodes of the component can be
    represented as $\rightf^a(\upf^b(u_C))$, for some $j \le a \le i$
    and $0 \le b < k$, and
  \item\label{enum:grid2} the component is isomorphic to a cylinder,
    i.e., a grid for which the north-border connects to south-border.
  \end{enumerate}
\end{fact}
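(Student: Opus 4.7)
The proof plan is to mirror the structure of the preceding fact about grids, exploiting the same commutativity requirements of $\psi_3$--$\psi_6$ and the labeling axioms $\psi_7$--$\psi_{13}$, but adapting to the cylindrical wrap-around along the vertical direction. Items~(\ref{enum:existence2}), (\ref{enum:border2}), (\ref{enum:internal2}) and (\ref{enum:reachable2}) go through by arguments essentially identical to the grid case: existence of $\rightf^a(\upf^b(u_C))$ for $j \le a \le i$ and $0 \le b < k$ follows from $\psi_3$--$\psi_6$ applied inductively, border labels are determined by $\psi_7$--$\psi_{10}$ together with functionality of $\rightf,\upf$, internal labeling is then read off from $\psi_{11}$, and reachability follows from commutativity (every element reachable via $\upf,\rightf$ and their inverses can be rewritten as $\rightf^a(\upf^b(u_C))$ with suitable exponents, and indices outside $[j,i]\times[0,k)$ are ruled out by the border labels together with the loop $\upf^k(u_C)=u_C$).

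The one genuinely new step, and the main obstacle, is~(\ref{enum:pairwise-different2}), because one must distinguish cases where the two indices differ horizontally, vertically, or both, while accounting for the vertical wrap-around. For the horizontal part, I would copy the grid argument: assuming $a_1 > a_2$ with $\rightf^{a_1}(\upf^{b_1}(u_C)) = \rightf^{a_2}(\upf^{b_2}(u_C))$, apply $\rightf^{i-a_1}$ to both sides and derive a contradiction between $\east$-labelling on the left and its absence on the right, via~(\ref{enum:border2}). For the vertical part, assume $b_1 \neq b_2$, say $0 \le b_1 < b_2 < k$, and $\rightf^{a}(\upf^{b_1}(u_C)) = \rightf^{a}(\upf^{b_2}(u_C))$. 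Applying $\rightf^{-a}$ (which exists by functionality of $\rightf^{-1}$ along the cylinder, and is consistent since $a\in[j,i]$) yields $\upf^{b_1}(u_C) = \upf^{b_2}(u_C)$, and then applying $\upf^{-b_1}$ yields $u_C = \upf^{b_2-b_1}(u_C)$ with $0 < b_2-b_1 < k$, contradicting the minimality of $k$.

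Once injectivity is established, item~(\ref{enum:grid2}) is routine: define the map sending $(a,b) \in [j,i]\times[0,k)$ to $\rightf^a(\upf^b(u_C))$, where $[0,k)$ is understood modulo $k$, and check that this is an isomorphism from the cylinder structure with domain $\{(x,y) \mid x\in[j,i],\, y\in\mathbb{Z}/k\mathbb{Z}\}$, with $\rightf$ interpreted as $\{((x,y),(x+1,y)) \mid x\in[j,i-1]\}$ and $\upf$ as $\{((x,y),(x,y+1 \bmod k))\}$, and with the $\east$/$\west$ labels placed at $x=i,j$ and $\internal$ elsewhere. Surjectivity is~(\ref{enum:reachable2}), injectivity is~(\ref{enum:pairwise-different2}), and preservation of $\rightf,\upf$ and all labels follows from~(\ref{enum:existence2}), (\ref{enum:border2}), (\ref{enum:internal2}) together with $u_C = \upf^k(u_C)$.
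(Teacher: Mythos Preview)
Your proposal is correct and follows essentially the same approach as the paper: items (\ref{enum:existence2}), (\ref{enum:border2}), (\ref{enum:internal2}), (\ref{enum:reachable2}) are deferred to the commutativity axioms just as in the grid case, the horizontal part of (\ref{enum:pairwise-different2}) reuses the $\east$-labelling contradiction, the vertical part uses minimality of $k$ exactly as the paper does, and the isomorphism in (\ref{enum:grid2}) is the same map. Your phrasing of the vertical case (explicitly cancelling $\rightf^{a}$ before invoking minimality) and your description of the target cylinder via $\mathbb{Z}/k\mathbb{Z}$ are in fact slightly cleaner than the paper's, but the argument is identical.
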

\proof{ Items (\ref{enum:existence2}), (\ref{enum:border2}) and
  (\ref{enum:internal2}) directly follow from the commutativity
  requirements. For~(\ref{enum:pairwise-different2}), we consider some
  $j \le a_1,a_2 \le i$ and $0\le b_1,b_2 < k$.  We will show that
  $a_1 \neq a_2$ resp. $b_1 \neq b_2$ imply that
  $\rightf^{a_1}(\upf^{b_1}(u_C)) \neq
  \rightf^{a_2}(\upf^{b_2}(u_C))$.  We will assume that
  $\rightf^{a_1}(\upf^{b_1}(u_C)) = \rightf^{a_2}(\upf^{b_2}(u_C))$
  and derive a contradiction.  Let us first assume that $a_1 > a_2$
  (the case $a_1 < a_2$ is symmetric).  Then, we have
  $\rightf^{a_1+s}(\upf^{b_1}(u_C)) =
  \rightf^{a_2+s}(\upf^{b_2}(u_C))$ for $s=i-a_1$.  However,
  $\rightf^{a_1+s}(\upf^{b_1}(u_C))$ is $\east$-labelled, while
  $\rightf^{a_2+s}(\upf^{b_2}(u_C))$ is not, by (\ref{enum:border}),
  contradiction. Now we assume $a_1 = a_2$ and $b_1 > b_2$ (the case
  $b_1 < b_2$ is symmetric).  Then, $\rightf^{a_1}(\upf^{b_1}(u_C)) =
  \rightf^{a_2}(\upf^{b_2}(u_C))$ implies that $\upf^{b_1-b_2}(u_C)) =
  u_C$ with $0 \le b_1-b_2 < k$.  However, this contradicts that $k$
  is the smallest number with this property.
  For~(\ref{enum:reachable2}), we observe that every node reachable
  from $u_C$ connected via $\upf$, $\rightf$ and their inverses can be
  represented as $\rightf^a(\upf^b(u_C))$, because of the
  commutativity requirements; further, we can in fact choose $0 \le b
  < k$ because of commutativity and the assumption that $\upf^k(u_C) =
  u_C$.  Moreover, we must have $j \le a \le i$ because the
  $\east,\west$ borders do not have outgoing edges.
  For~(\ref{enum:grid2}), we observe that the component is isomorphic
  to the structure with domain $\{(x,y) \mid x,y \in [j,i], y \in
  [k,l]\}$, where $\rightf$ is interpreted as $\{((x,y),(x+1,y)) \mid
  x \in [j,i-1], y \in [0,k-1]\}$ and $\upf$ as $\{((x,y),(x,y+1))
  \mid x \in [j,i],y \in [0,k-2]\} \cup \{((x,k),(x,1)) \mid x \in
       [j,i]\}$. \qed}

\begin{fact}
  Consider a torus component $C$, with
  representative $u_C$ such that $\upf^k(u_C) = u_C$ and
  $\rightf^l(u_C) = u_C$ for some $k \le 0$ and $l \le 0$, where
  $k$ and $l$ are the smallest numbers with this property.  We
  claim that: \begin{enumerate}
  \item\label{enum:existence3} the elements exists
    $\rightf^a(\upf^b(u_C))$, for all $0 \le a < k$ and all $0 \le b <
    l$,
  \item\label{enum:internal3} the elements $\rightf^a(\upf^b(u_C))$
    are internal nodes, for all $0 \le a < k$ and all $0 \le b < l$,
  \item\label{enum:reachable3} all nodes of the component can be
    represented as $\rightf^a(\upf^b(u_C))$, for some $0 \le a < k$
    and $0 \le b < l$.
  \end{enumerate}
\end{fact}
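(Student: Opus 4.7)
The plan is to follow the template of the cylinder argument above, observing that the torus case is structurally simpler because the component $C$ contains no boundary-labelled nodes at all. By the definition of a torus component, $u_C$ does not reach any element labelled $\south, \east, \west$, or $\north$ via $\upf$, $\rightf$, and their inverses, and every element of $C$ is reachable from $u_C$ in this way. Hence, by $\psi_{11}$, every element of $C$ satisfies $\internal$, which will immediately give item~(\ref{enum:internal3}) once existence is established; moreover, via $\psi_{12}$ and $\psi_{13}$, each such element has both incoming and outgoing edges for both $\upf$ and $\rightf$.

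For item~(\ref{enum:existence3}), I would proceed by induction on $a+b$ to show that $\rightf^a(\upf^b(u_C))$ is well-defined for all $0 \le a < k$ and $0 \le b < l$. The base case $a=b=0$ is $u_C$ itself. For the inductive step, the element reached so far is internal by the observation above, hence by $\psi_{12}$ and $\psi_{13}$ has well-defined $\rightf$- and $\upf$-successors, so the composition can be extended by one more application in either direction. The commutativity axioms $\psi_3, \psi_4, \psi_5, \psi_6$ then ensure that the order of applications is immaterial, so in particular $\rightf^a(\upf^b(u_C)) = \upf^b(\rightf^a(u_C))$.

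For item~(\ref{enum:reachable3}), I would use commutativity together with the cyclicity identities $\upf^k(u_C) = u_C$ and $\rightf^l(u_C) = u_C$. Any element reachable from $u_C$ via $\upf, \rightf$, and their inverses can be brought to a normal form $\rightf^a(\upf^b(u_C))$ with $a, b \in \mathbb{Z}$ by repeatedly applying the commutativity axioms to slide $\rightf$-steps past $\upf$-steps. The cyclicity identities, propagated pointwise via commutativity, then let one reduce the exponents modulo the respective cycle lengths to obtain representatives in the stated range.

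The main subtlety to handle carefully is that $\upf$ and $\rightf$ are only partial functions (by $\psi_1, \psi_2$), so one must justify that every intermediate element encountered during the induction and the exponent reduction is in fact internal, and therefore equipped with the required successors and predecessors. This is exactly what item~(\ref{enum:internal3}) provides, and the argument should be organized so that internality is established first, as a direct consequence of the torus assumption and $\psi_{11}$, independently of the enumeration of elements. Beyond this caveat, the proof is a direct specialization of the cylinder argument with all boundary bookkeeping removed.
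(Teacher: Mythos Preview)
Your proposal is correct and follows essentially the same route as the paper: existence and internality via commutativity (with your more explicit appeal to $\psi_{11}$--$\psi_{13}$ and the torus assumption making precise what the paper summarizes as ``directly follow from the commutativity requirements''), and reachability via commutativity plus reduction of exponents modulo the cycle lengths. Your explicit handling of the partial-function caveat, by establishing internality of every element of $C$ upfront from the torus assumption and $\psi_{11}$, is a clean way to justify that all successor/predecessor applications needed in the induction are defined; the paper leaves this implicit.
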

\proof{ Items (\ref{enum:existence3}) and (\ref{enum:internal3})
  directly follow from the commutativity requirements.
  For~(\ref{enum:reachable3}), we observe that every node reachable
  from $u_C$ connected via $\upf$, $\rightf$ and their inverses can be
  represented as some $\rightf^a(\upf^b(u_C))$ because of the
  commutativity requirements; further, we can in fact choose $0 \le a
  < k$ and $0 \le b < l$ because of commutativity and the assumptions
  that $\upf^k(u_C) = u_C$ and $\rightf^k(u_C) = u_C$.  We note that
  the $\rightf^a(\upf^b(u_C))$ are in general not pairwise different
  (e.g., we might have $\rightf^a(u_C) = \upf^b(u_C)$ for some $0 \le
  a < k$ and $0 \le b < l$). However, in our below argument we do not
  need to distinguish whether all the elements
  $\rightf^a(\upf^b(u_C))$ of a torus component are pairwise
  different. \qed}

The following claim reduces the treewidth boundedness problem for
first-order logic to the tiling problem for the first quadrant of the
plane:
\begin{fact}
  $\phi$ has models of unbounded treewidth iff there is a tiling of
  the upper-right quadrant of the plane.
\end{fact}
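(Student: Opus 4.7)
The plan is to prove the equivalence in the two natural directions, using the structural decomposition established by the three preceding facts as the pivot of the argument.

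For the easier direction ($\Leftarrow$), I would argue that any tiling $f : \nat \times \nat \to \{t_1,\ldots,t_n\}$ of the upper-right quadrant yields, for every $k \geq 1$, an $k \times k$ grid structure $\astruc_k$ that is a model of $\phi$: its universe is $[1,k]^2$, $\rightf$ and $\upf$ are the horizontal and vertical successor relations, the border labels $\south,\east,\north,\west$ are placed on the appropriate sides, the interior is labeled $\internal$, and each element $(i,j)$ is labeled by $\tile_{f(i,j)}$. Each conjunct $\psi_i$ and $\phi_j$ is immediate to verify from the definition of $\astruc_k$. Since the treewidth of an $k \times k$ grid is $k$, the set $\sem{\phi}$ contains structures of arbitrarily large treewidth.

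For the hard direction ($\Rightarrow$), I rely on the decomposition fact: every model of $\phi$ is a disjoint union of grid, cylinder, and torus components. Since treewidth of a disjoint union is the maximum over components, unbounded treewidth of $\sem{\phi}$ means we can find, for each $k \geq 1$, a single component $C_k$ whose treewidth is at least $k$. In each of the three component shapes, treewidth is $\Theta(\min(m,n))$ where $m,n$ are the two dimensions given by the fact's parameters, so both dimensions of $C_k$ grow with $k$. Using the explicit isomorphism with a grid/cylinder/torus supplied by items (\ref{enum:grid}), (\ref{enum:grid2}) and (\ref{enum:reachable3}), I extract an $k \times k$ \emph{rectangular window} of internal nodes from $C_k$. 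On this window $\rightf$ and $\upf$ behave as honest horizontal and vertical successors, so $\phi_3$ and $\phi_4$ force the tile labels to form a consistent finite tiling of $[1,k] \times [1,k]$ (the potential wrap-around of cylinders/tori is irrelevant on a window strictly smaller than the wrapping dimension).

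Having obtained, for every $k$, a consistent tiling of the $k \times k$ rectangle, I finish by the standard König's-lemma compactness argument: the tree whose depth-$k$ nodes are consistent tilings of $[1,k]^2$ (ordered by restriction) is finitely branching and infinite, hence has an infinite branch, whose union tiles the entire upper-right quadrant. The main obstacle I anticipate is the cylinder/torus case: I must make sure that unbounded treewidth of the component really does force both dimensions to grow (so that a genuine $k\times k$ internal window exists) and that the tile-matching constraints on the window are inherited from $\phi_3,\phi_4$ without interference from the wrapping edges; the bound $\twof{\cdot} = \Theta(\min(m,n))$ for cylinders and tori, together with the fact that $\phi_3,\phi_4$ are universally quantified over all $\rightf$- and $\upf$-edges in the component, takes care of both points.
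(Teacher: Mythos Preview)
Your approach is essentially the paper's: extract arbitrarily large consistent finite tilings from components of large treewidth, then apply compactness (your K\"onig's-lemma argument is exactly the paper's explicit pigeonhole refinement of a sequence $G_1\subseteq G_2\subseteq\cdots$).

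There is, however, a genuine gap in your torus case. You invoke an ``explicit isomorphism with a torus'' from item~(\ref{enum:reachable3}), but that fact does not provide one: it only asserts that every element can be written as $\rightf^a(\upf^b(u_C))$, and the paper explicitly warns that these representatives need not be pairwise distinct (one may have $\rightf^a(u_C)=\upf^b(u_C)$ for some $a,b$). Consequently your claim that such a component has treewidth $\Theta(\min(m,n))$ in the two periods is unjustified, and a $k\times k$ window of \emph{distinct} internal nodes need not exist as a substructure. The paper avoids this by a case split that you have collapsed: if \emph{any} model of $\phi$ contains a torus component $C$, one unrolls it directly---placing the tile of $\rightf^i(\upf^j(u_C))$ at position $(i,j)$ for all $i,j\ge 0$ yields a valid tiling of the entire quadrant in one stroke, with no appeal to treewidth at all. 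Only after disposing of tori does one invoke the precise treewidth formulas $\min\{n,m\}$ for grids and $\min\{2n,m\}$ resp.\ $\min\{n,2m\}$ for cylinders, where the genuine isomorphisms of items~(\ref{enum:grid}) and~(\ref{enum:grid2}) are available, to extract an $i\times i$ sub-grid from any component of treewidth $\ge 2i$. Your uniform treatment can be repaired by the same unrolling shortcut for tori; as written, it relies on structural information the preceding facts do not supply.
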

\proof{ ``$\Leftarrow$'' Let us assume that there is a tiling of the
  upper-right quadrant of the plane.  Then, for every $n \in
  \mathbb{N}$, this tiling induces a square grid $G_n$ of size $n
  \times n$ with $G_n \models \phi$: simply take the tiles at
  positions $(x,y)$, with $x,y \in [1,n]$, from the tiling of the
  upper-right quadrant, and verify that in this way we obtain a model
  of the formula $\phi$.

  \skipnoindent ``$\Rightarrow$'' We now assume that $\phi$
  has models of unbounded treewidth, i.e., for every $i \ge 1$ there
  is a finite model $(\univ,\struc)$ with $\twof{(\univ,\struc)} \ge
  i$.  If any model $(\univ,\struc)$ contains a torus component
  $C$, we immediately obtain a tiling of the upper-right quadrant by
  unrolling the torus: we define the tiling of the upper right
  quadrant by placing the tile of the element $\rightf^i(\upf^j(u_C))$
  at position $(i,j)$. It is then routine to verify that the
  subformula $\bigwedge_{j=1}^4 \phi_j$ of $\phi$ ensures that the
  matching requirements of a tiling are satisfied. Hence, we are left
  with the case that no model of $\phi$ contains a torus
  component.

  We now observe that an $n \times m$ grid has treewidth $\min\{n,m\}$
  and an $n \times m$ cylinder has treewidth $\min\set{2n,m}$
  resp. $\min\set{n,2m}$ for $\east,\west$ resp. $\south,\north$
  cylinders. For the $n \times m$ grid, this follows from the $k$-cops
  and robber game, defined as follows. A position in the game is a
  pair $(\gamma,r)$, where $\gamma \subseteq \interv{1}{n} \times
  \interv{1}{m}$, $\cardof{\gamma}=k$ and $r \in \interv{1}{n} \times
  \interv{1}{m} \setminus \gamma$. The game can move from
  $(\gamma_i,r_i)$ to $(\gamma_{i+1},r_{i+1})$ iff there exists a path
  between $r_i$ and $r_{i+1}$ in the restriction of the grid to
  $\interv{1}{n} \times \interv{1}{m} \setminus (\gamma_i\cap
  \gamma_{i+1})$. We say that $k$ cops catch the robber iff every
  sequence of moves in the game is finite. It is known that, if the
  treewidth of the graph is greater or equal to $k$, then $k+1$ cops
  catch the robber on a graph $\graph$ \cite{SEYMOUR199322}. Since
  $\min\set{n,m}-1$ cops do not catch the robber (which can always
  move to the intersection of a cop-free row and a cop-free column) it
  follows that the treewidth of the grid is greater than
  $\min\set{n,m}-1$. At the same time, there exists a tree
  decomposition of width $\min\set{n,m}$. For the $n\times m$
  $\north$-$\south$ cylinder (the case of the $\east$-$\west$ cylinder
  is analogous), we need extra $n$ cops to prevent the robber escaping
  wrapping around the $\east$-$\west$ axis, thus the treewidth is
  $\min\set{2n,m}$.

  We now consider some $i\ge 0$ and some model $(\univ,\struc)$ with
  $\twof{(\univ,\struc)} \ge 2i$ that does not contain torus
  components.  Then, $N$ decomposes into grid components and cylinder
  components.  Because of our assumption $\twof{(\univ,\struc)} \ge
  2i$ there must be some component $C$ of $N$ with $\twof{C} \ge 2i$.
  Now, we can deduce that $C$ contains some square grid $M$ of size $i
  \times i$ as a substructure (this follows from $2i \le \min\{n,m\}$
  for grids and from $2i \le \min\{n,2m\}$ resp. $2i \le \min\{2n,m\}$
  for cylinders).  Hence, we can restrict our attention to models of
  $\phi$ that are square grids.  Let $M_1, M_2,\ldots$ be a sequence
  of models with $M_n \models \phi$, where each $M_n$ is a square grid
  of size $n \times n$.  We are now going to construct a sequence of
  models $G_1, G_2,\ldots$ such that each $G_n$ is a square grid of
  size $n$ with $G_n \models \phi$, and each $G_n$ is included in
  $G_{n+1}$, where we say a model $I$ of $\phi$ is included in a model
  $J$ of $\phi$ if $I$ resp. $J$ are square grids of size $n \times n$
  resp. $m \times m$, and we have that $n \le m$ and all tiles at
  positions $(x,y)$, with $x,y \in [1,n]$, are the same in both
  models.  We construct the sequence $G_1, G_2,\ldots$ inductively,
  maintaining an infinite sequence of models $M_1^n, M_2^n,\ldots$,
  for each $n\in\mathbb{N}$, such that $G_n$ is included in all
  $M_i^n$: Take $G_1$ to be a model that consists of a single tile,
  which appears infinitely often at position $(1,1)$ in the models
  $M_1, M_2,\ldots$; then we obtain the sequence $M_1^1, M_2^1,\ldots$
  as the restriction of $M_1, M_2,\ldots$ to the models that include
  $G_1$.  Assume we have already defined $G_n$.  Choose some square
  grid $G_{n+1}$ of size $n+1$ that is included infinitely often in
  models of the sequence $M_1^n, M_2^n,\ldots$ (note that such a
  square grid must exist by the pigeonhole principle); then obtain the
  sequence $M_1^{n+1}, M_2^{n+1},\ldots$ by restricting the sequence
  to the $M_1^n, M_2^n,\ldots$ to the models that include $G_{n+1}$.
  With the sequence $G_1, G_2,\ldots$ at hand we now obtain a tiling
  of the plane: For position $(i,j)$, with $i,j\in\mathbb{N}$, simply
  take the tile at this position in $G_{\max\{i,j\}}$.  We now verify
  that the horizontal resp. vertical requirements of a tiling are
  satisfied. We verify only the horizontal requirement (the vertical
  one is symmetric). Consider tiles at positions $(i,j)$ and
  $(i+1,j)$. If $i \neq j$, then both tiles have been defined by
  $G_{\max\{i,j\}}$, and the matching requirement is satisfied because
  $G_{\max\{i,j\}}$ is a model of $\phi$. If $i = j$ then the tile at
  position $(i,i)$ is defined by $G_i$ and the tile at position
  $(i+1,i)$ is defined by $G_{i+1}$.  Now we observe that the tile at
  position $(i,i)$ in $G_i$ is the same as the tile at position
  $(i,i)$ in $G_{i+1}$, because $G_i$ is included in $G_{i+1}$, and
  the matching requirement is satisfied because $G_{i+1}$ is a model
  of $\phi$. \qed}

This concludes the proof of the theorem. 
\end{proof}


\section{Conclusions}

We have presented a decision procedure for the treewidth boundedness
problem in the context of \slr, a generalization of Separation Logic
over relational signatures, interpreted over structures. This
procedure allows to define the precise fragment of \slr\ in which
every formula has a bound on the treewidth of its models. This
fragment is the right candidate for the definition of a fragment of
\slr\ with a decidable entailment problem. Another application is
checking that each graph defined by a treewidth-bounded \slr\ formula
satisfies \mso-definable properties such as, e.g., Hamiltonicity, or
3-Colorability.

\bibliographystyle{alphaurl}
\bibliography{refs}

\appendix


\section{Proofs from \autoref{sec:preliminaries}}
\subsection{Proof of \autoref{lemma:all-sat}}
\label{app:all-sat}

Without loss of generality, we consider that $\asid$ is equality-free
(\autoref{lemma:eq-free}).  We propose a construction using an idea of
Brotherston et al \cite{DBLP:conf/csl/BrotherstonFPG14}, that
characterizes the satisfiability of a predicate by an abstraction
consisting of tuples of parameters occurring in the interpretation of
relation symbols. A similar abstraction has been used to check
satisfiability of \slr\ formul{\ae} \cite{DBLP:conf/cade/BozgaBI22}.

\begin{defi}
  A \emph{base} $\abstruc$ is a mapping $\abstruc : \relations
  \rightarrow \mpow{\vars^+}$ of relation symbols $\arel$ into
  multisets of tuples of variables of length $\arityof{\arel}$ each. A
  base is \emph{satisfiable} iff $\abstruc(\arel)$ is a set, for all
  $\arel \in \relations$. Given a set of variables $X \subseteq
  \vars$, let $\satbasetuplesof{X}$ denote the set of satisfiable
  bases whose images contain only variables from $X$ and let
  $\satbasetuples \isdef \satbasetuplesof{\vars}$.
\end{defi}

We consider three partial operations on $\satbasetuples$. First, the
\emph{composition} is $\abstruc_1 \basecomp \abstruc_2 \isdef
\abstruc_1 \cup \abstruc_2$ if $\abstruc_1 \cup \abstruc_2$ is
satisfiable, and undefined, otherwise. Second, the \emph{substitution}
$\abstruc[x_1/y_1,\ldots,x_n/y_n]$ replaces simultaneously each
occurrence of $x_j$ by $y_j$ in $\abstruc$, for all $j \in
\interv{1}{n}$. Third, given a set $X \subseteq \vars$ of variables,
the \emph{projection} is:
\[
\proj{\abstruc}{X} \isdef \lambda \arel ~.~ \set{\tuple{x_1, \ldots,
    x_n} \in \abstruc(\arel) \mid x_1, \ldots, x_n \in X}
\]
Finally, for a qpf formula $\phi$, we define:
\[
\basepairof{\phi} \isdef \lambda \arel ~.~
\mset{\tuple{x_1,\ldots,x_n} \mid \arel(x_1,\ldots,x_n) \text{ occurs
    in } \phi}
\]
The predicates $\bpred^{\abstruc}$ of the SID $\overline{\asid}$ are
obtained by annotating each predicate $\bpred$ that occurs in $\asid$
with a satisfiable base $\abstruc$. The arity of each predicate
$\bpred^{\abstruc}$ is the arity of $\bpred$. Then $\overline{\asid}$
contains the rules:
\[
  \bpred_0^{\abstruc_0}(x_1,\ldots,x_{\arityof{\bpred_0}}) \leftarrow
  \exists y_1 \ldots \exists y_m ~.~ \psi *
  \Bigstar\nolimits_{i=1}^\ell
  \bpred_i^{\abstruc_i}(z_{i,1},\ldots,z_{i,\arityof{\bpred_i}})
\]
where \(\bpred_0(x_1,\ldots,x_{\arityof{\bpred_0}}) \leftarrow \exists
y_1 \ldots \exists y_m ~.~ \psi * \Bigstar\nolimits_{i=1}^\ell
\bpred_i(z_{i,1},\ldots,z_{i,\arityof{\bpred_i}})\) is a rule of
$\asid$, $\psi$ is the largest qpf formula from its right-hand side
and, moreover, the following condition holds:
\[
\abstruc_0 = \proj{\Big(\basepairof{\psi} \otimes
  \bigotimes\nolimits_{i=1}^\ell \abstruc_i[x_1 / z_{i,1}, \ldots,
    x_{\arityof{\bpred_i}} / z_{i,\arityof{\bpred_i}}]
  \Big)}{\set{x_1,\ldots,x_{\arityof{\bpred_0}}}}
\]
In addition, $\overline{\asid}$ contains a rule $\apred \leftarrow
\apred^{\abstruc}$, for each satisfiable base
$\abstruc\in\satbasetuples$. Note that $\overline{\asid}$ is finite,
because $\asid$ and $\satbasetuples$ are finite. We now prove the
points from the statement of \autoref{lemma:all-sat}:

\skipnoindent ``$\overline{\asid}$ is all-satisfiable for $\apred$''
We prove the following, more general, fact:

\begin{fact}\label{fact:all-sat}
  For each predicate $\bpred^{\abstruc}$ that occurs in
  $\overline{\asid}$, each predicate-free formula $\phi$ such that
  $\bpred^{\abstruc} \step{\overline{\asid}}^* \phi$ and each
  injective store $\store$ over
  $\set{x_1,\ldots,x_{\arityof{\bpred}}}$, there exists a structure
  $\astruc = (\univ,\struc)$ such that $\astruc \models^\store \phi$
  and, for all $\arel\in\relations$ and tuples of variables
  $\tuple{x_{j_1},\ldots,x_{j_{\arityof{\arel}}}}$ with
  $j_k\in\interv{1}{\arityof{\bpred}}$ for all
  $k\in\interv{1}{\arityof{\arel}}$, we have
  $\tuple{x_{j_1},\ldots,x_{j_{\arityof{\arel}}}} \in \abstruc(\arel)$
  if and only if
  $\tuple{\store(x_{j_1}),\ldots,\store(x_{j_{\arityof{\arel}}})} \in
  \struc(\arel)$.
\end{fact}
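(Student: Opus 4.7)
The plan is to prove Fact~\ref{fact:all-sat} by strong induction on the length $n$ of the unfolding sequence $\bpred^{\abstruc} \step{\overline{\asid}}^n \phi$, which is well-founded because $\phi$ is predicate-free and all rule bodies contain only finitely many recursive predicate atoms. A single inductive step will uniformly handle both the base case (rules with no recursive predicate atoms, $\ell = 0$) and the general step ($\ell \geq 1$).

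Fix the first rule applied to $\bpred_0^{\abstruc_0}$, namely
\[
\bpred_0^{\abstruc_0}(x_1, \ldots, x_{\arityof{\bpred_0}}) \leftarrow \exists y_1 \ldots \exists y_m ~.~ \psi * \Bigstar\nolimits_{i=1}^\ell \bpred_i^{\abstruc_i}(z_{i,1}, \ldots, z_{i,\arityof{\bpred_i}}).
\]
By the construction of $\overline{\asid}$, the base $\abstruc_0$ equals the projection onto $\set{x_1, \ldots, x_{\arityof{\bpred_0}}}$ of $\basepairof{\psi} \basecomp \bigotimes_{i=1}^\ell \abstruc_i[x_1/z_{i,1}, \ldots]$, and in particular this $\basecomp$ is \emph{defined}, i.e., the merged multiset of tuples contains no duplicates. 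First I would extend the given injective store $\store$ to a store $\store'$ on $\set{x_1, \ldots, x_{\arityof{\bpred_0}}, y_1, \ldots, y_m}$ by assigning pairwise distinct fresh values to $y_1, \ldots, y_m$. For each $i \in \interv{1}{\ell}$, define an induced store $\store_i(x_k) \isdef \store'(z_{i,k})$, which is injective on $\set{x_1, \ldots, x_{\arityof{\bpred_i}}}$ because the $z_{i,k}$ are pairwise distinct (the rule is equality-free).

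Applying the inductive hypothesis to each sub-unfolding $\bpred_i^{\abstruc_i} \step{\overline{\asid}}^* \phi_i$ yields a structure $\astruc_i = (\univ_i, \struc_i)$ satisfying $\phi_i$ under $\store_i$ with the abstraction property relative to $\abstruc_i$. Using the freedom to pick the universes, I can further arrange that the supports of the $\astruc_i$ are pairwise disjoint outside the parameter values fixed by $\store'$. I then define $\astruc_\psi$ directly by $\struc_\psi(\arel) \isdef \{\tuple{\store'(w_1), \ldots, \store'(w_{\arityof{\arel}})} \mid \arel(w_1, \ldots, w_{\arityof{\arel}}) \text{ occurs in } \psi\}$, which satisfies $\psi$ under $\store'$, and set $\astruc \isdef \astruc_\psi \comp \astruc_1 \comp \ldots \comp \astruc_\ell$, which witnesses $\phi$ under $\store$.

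The main obstacle, and the real content of the argument, is to show that this composition is \emph{defined} (local disjointness on every relation symbol) and that the parameter-projected tuples of $\astruc$ match $\abstruc_0$ exactly. Local disjointness will follow by translating the abstraction property from the IH into a statement about tuples in $\struc_i(\arel)$: two coinciding tuples across factors would yield a duplicate in $\basepairof{\psi} \basecomp \bigotimes_i \abstruc_i[\ldots]$, contradicting the assumption that this base composition is satisfiable, where injectivity of $\store'$ is used to pull duplicate elementwise tuples back to duplicate variable tuples. The projection-matching property is then a bookkeeping argument: tuples of $\struc(\arel)$ whose entries lie entirely in $\store'(\set{x_1, \ldots, x_{\arityof{\bpred_0}}})$ correspond, via the injectivity of $\store'$, exactly to the tuples of variables in $\proj{(\basepairof{\psi} \basecomp \bigotimes_i \abstruc_i[\ldots])}{\set{x_1, \ldots, x_{\arityof{\bpred_0}}}} = \abstruc_0(\arel)$.
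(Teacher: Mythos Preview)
Your proposal is correct and follows essentially the same approach as the paper's proof: induction on the length of the unfolding, extension of the injective store to the existentials, application of the inductive hypothesis to the sub-unfoldings, and verification of local disjointness by pulling a hypothetical collision back through the injective store to a duplicate in the base composition $\basepairof{\psi}\basecomp\bigotimes_i\abstruc_i[\ldots]$. The paper's proof is organized the same way and singles out the same two obligations you identify (definedness of $\astruc_\psi\comp\astruc_1\comp\ldots\comp\astruc_\ell$ and the projection-matching of $\abstruc_0$).
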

\begin{proof}
  We proceed by induction on the length of the unfolding
  $\bpred^{\abstruc} \step{\overline{\asid}}^* \phi$.  The first step
  of the unfolding uses a rule:
  \[
    \bpred^{\abstruc}(x_1,\ldots,x_{\arityof{\bpred}}) \leftarrow
    \exists y_1 \ldots \exists y_m ~.~ \psi *
    \Bigstar\nolimits_{i=1}^\ell
    \bpred_i^{\abstruc_i}(z_{i,1},\ldots,z_{i,\arityof{\bpred_i}})
  \]
  and for all $i\in\interv{1}{\ell}$, we have $\bpred_i^{\abstruc_i}
  \step{\overline{\asid}}^* \phi_i$, such that
  \[
  \phi = \exists y_1 \ldots \exists y_m ~.~ \psi *
  \Bigstar\nolimits_{i=1}^\ell \phi_i [x_1 / z_{i,1}, \ldots,
    x_{\arityof{\bpred_i}} / z_{i,\arityof{\bpred_i}}]
  \]
  modulo a reordering of atoms. Let $\store'$ be an injective store
  that extends $\store$ over $\set{y_1,\ldots,y_m}$.  For all
  $i\in\interv{1}{\ell}$, we define $\store_i$ over
  $\set{x_1,\ldots,x_{\arityof{\bpred_i}}}$ by $\store_i(x_j) \isdef
  \store'(z_{i,j})$ for all $j\in\interv{1}{\arityof{\bpred_i}}$.  By
  the inductive hypothesis applied to the unfolding
  $\bpred_i^{\abstruc_i} \step{\overline{\asid}}^* \phi_i$, there
  exists a structure $\astruc_i = (\univ_i,\struc_i)$ such that
  $\astruc_i \models^{\store_i} \phi_i$ and
  $\tuple{x_{j_1},\ldots,x_{j_{\arityof{\arel}}}} \in
  \abstruc_i(\arel)$ if and only if
  $\tuple{\store_i(x_{j_1}),\ldots,\store_i(x_{j_{\arityof{\arel}}})}
  \in \struc_i(\arel)$.  We can furthermore assume for all $i_1 \neq
  i_2 \in \interv{1}{\ell}$ that $\supp{\struc_{i_1}} \cap
  \supp{\struc_{i_2}} =
  \set{\store'(x_1),\ldots,\store'(x_{\arityof{\bpred}}),\store'(y_1),\ldots,\store'(y_m)}$.
  Since $\psi$ is equality-free, there exists a structure
  $\astruc_\psi$ such that $\astruc_\psi \models^{\store'} \psi$, and
  $\basepairof{\psi}$ is a satisfiable base.  We then prove that the
  structure $\astruc = (\univ,\struc) \isdef \astruc_\psi \comp
  \astruc_1 \comp \ldots \comp \astruc_\ell$ is defined and meets the
  requirements from the statement:
  \begin{itemize}[label=$\triangleright$]
  \item Let $\arel\in\relations$, and suppose, for a contradiction,
    that there exists a tuple $\tuple{u_1,\ldots,u_{\arityof{\arel}}}
    \in \struc_{i_1}(\arel) \cap \struc_{i_2}(\arel)$ for $i_1 \neq
    i_2 \in \interv{1}{\ell}$.  From the induction hypothesis and
    because $\store'$ is injective, there exists
    $\tuple{z_1,\ldots,z_{\arityof{\arel}}} \in \abstruc_{i_1} [x_1 /
      z_{i_1,1}, \ldots, x_{\arityof{\bpred_{i_1}}} /
      z_{i_1,\arityof{\bpred_{i_1}}}] (\arel) ~\cap~ \abstruc_{i_2}
    [x_1 / z_{i_2,1}, \ldots, x_{\arityof{\bpred_{i_2}}} /
      z_{i_2,\arityof{\bpred_{i_2}}}] (\arel)$ with $\store'(z_k) =
    u_k$ for all $k\in\interv{1}{\arityof{\arel}}$. This contradicts
    that the following base composition:
    \[
    \abstruc_{i_1} [x_1 / z_{i_1,1}, \ldots,
      x_{\arityof{\bpred_{i_1}}} / z_{i_1,\arityof{\bpred_{i_1}}}]
    ~\basecomp~ \abstruc_{i_2} [x_1 / z_{i_2,1}, \ldots,
      x_{\arityof{\bpred_{i_2}}} / z_{i_2,\arityof{\bpred_{i_2}}}]
    \]
    is defined. However, this must be the case, for the above
    derivation rule to exist in $\overline{\asid}$. Therefore, the
    composition $\astruc_1 \comp \ldots \comp \astruc_\ell$ is
    defined.  The same type of argument can be used if the tuple
    occurs in the intersection between the interpretation of a
    relation symbol in $\astruc_\psi$ and $\astruc_i$, for some $i \in
    \interv{1}{\ell}$, thus the composition $\astruc_\psi \comp
    \astruc_1 \comp \ldots \comp \astruc_\ell$ is defined.
  \item $\astruc_\psi \comp \astruc_1 \comp \ldots \comp \astruc_\ell
    \models^{\store'} \psi * \Bigstar\nolimits_{i=1}^\ell
    \bpred_i^{\abstruc_i}(z_{i,1},\ldots,z_{i,\arityof{\bpred_i}})$ by
    construction, thus $\astruc \models^{\store} \phi$.
  \item Let $\arel\in\relations$ and
    $j_k\in\interv{1}{\arityof{\bpred}}$ for all
    $k\in\interv{1}{\arityof{\arel}}$.  We omit relations that occur
    in $\psi$ since this is a simple case.  Then
    $\tuple{x_{j_1},\ldots,x_{j_{\arityof{\arel}}}} \in
    \abstruc(\arel)$ if and only if
    $\tuple{x_{j'_1},\ldots,x_{j'_{\arityof{\arel}}}} \in
    \abstruc_i(\arel)$ for some $i\in\interv{1}{\ell}$ and $x_{j_k} =
    z_{i,j'_k}$ for all $k\in\interv{1}{\arityof{\arel}}$, if and only
    if
    $\tuple{\store_i(x_{j'_1}),\ldots,\store_i(x_{j'_{\arityof{\arel}}})}
    \in \struc_i(\arel)$ by induction hypothesis, if and only if
    $\tuple{\store(x_{j_1}),\ldots,\store(x_{j_{\arityof{\arel}}})}
    \in \struc(\arel)$. \qedhere
  \end{itemize}
\end{proof}

\noindent By taking any injective store $\store$ over
$\set{x_1,\ldots,x_{\arityof{\apred}}}$, for every predicate-free
formula $\phi$ such that $\apred^{\abstruc} \step{\overline{\asid}}^*
\phi$, we find a structure $\astruc$ such that $\astruc \models^\store
\phi$, by \autoref{fact:all-sat}. Therefore, $\overline{\asid}$ is
all-satisfiable for $\apred$.

\skipnoindent ``$\sidsem{\apred}{\asid} =
\sidsem{\apred}{\overline{\asid}}$'' The inclusion
$\sidsem{\apred}{\overline{\asid}}\subseteq\sidsem{\apred}{\asid}$ is
immediate by simply removing the base annotations from any derivation
run on $\overline{\asid}$.  For the converse, we prove the following
fact:

\begin{fact}\label{fact:all-sat2}
  For every predicate $\bpred$ of $\asid$, for every structure
  $\astruc \in \sidsem{\bpred}{\asid}$ and store $\store$ such that
  $\astruc \models^\store \bpred(x_1,\ldots,x_{\arityof{\bpred}})$,
  there exists a satisfiable base $\abstruc$, such that $\astruc
  \models^\store \bpred^{\abstruc}(x_1,\ldots,x_{\arityof{\bpred}})$,
  and $\tuple{x_{j_1},\ldots,x_{j_{\arityof{\arel}}}} \in
  \abstruc(\arel)$ if and only if
  $\tuple{\store(x_{j_1}),\ldots,\store(x_{j_{\arityof{\arel}}})} \in
  \struc(\arel)$ for all $\arel\in\relations$ and
  $j_k\in\interv{1}{\arityof{\bpred}}$ for all
  $k\in\interv{1}{\arityof{\arel}}$.
\end{fact}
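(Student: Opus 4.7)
The plan is to prove Fact~\ref{fact:all-sat2} by induction on the length of a complete $\asid$-unfolding of $\bpred(x_1,\ldots,x_{\arityof{\bpred}})$ witnessing the satisfaction $\astruc \models^\store \bpred(x_1,\ldots,x_{\arityof{\bpred}})$, whose existence is guaranteed by \autoref{prop:unfolding}. The induction is structurally dual to that of \autoref{fact:all-sat}: there, an unfolding of $\bpred^\abstruc$ was used to \emph{build} a model realizing the base $\abstruc$; here, a model is given and we \emph{extract} a compatible base $\abstruc$ from the unfolding.

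First, I would pick a rule $\bpred(x_1,\ldots,x_{\arityof{\bpred}}) \leftarrow \exists y_1 \ldots \exists y_m ~.~ \psi * \Bigstar_{i=1}^\ell \bpred_i(z_{i,1},\ldots,z_{i,\arityof{\bpred_i}})$ of $\asid$ used at the first step of the unfolding, together with an extension $\store'$ of $\store$ over $\set{y_1,\ldots,y_m}$ and a decomposition $\astruc = \astruc_\psi \comp \astruc_1 \comp \ldots \comp \astruc_\ell$ into locally disjoint structures with $\astruc_\psi \models^{\store'} \psi$ and $\astruc_i \models^{\store'} \bpred_i(z_{i,1},\ldots,z_{i,\arityof{\bpred_i}})$. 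Setting $\store_i(x_j) \isdef \store'(z_{i,j})$, the inductive hypothesis yields satisfiable bases $\abstruc_i$ with the two required properties, obtained by the shorter sub-unfoldings of the atoms $\bpred_i(z_{i,1},\ldots,z_{i,\arityof{\bpred_i}})$.

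Next I would define $\abstruc \isdef \proj{\big(\basepairof{\psi} \basecomp \Basecomp_{i=1}^\ell \abstruc_i[x_1/z_{i,1}, \ldots, x_{\arityof{\bpred_i}}/z_{i,\arityof{\bpred_i}}]\big)}{\set{x_1,\ldots,x_{\arityof{\bpred}}}}$. Satisfiability of this composition is the crux of the argument and the main obstacle: I must show that no tuple is produced twice when joining the $\basepairof{\psi}$ with the renamed $\abstruc_i$'s. Suppose, for a contradiction, that the same tuple $\tuple{u_1,\ldots,u_{\arityof{\arel}}}$ of variables occurs simultaneously in two summands of the composition. Then by the tuple-correspondence part of the induction hypothesis (applied to the contributing $\abstruc_i$) and by unfolding the definition of $\basepairof{\psi}$, the tuple $\tuple{\store'(u_1),\ldots,\store'(u_{\arityof{\arel}})}$ would lie in $\struc_i(\arel) \cap \struc_j(\arel)$ for some $i\neq j$, or in $\struc_\psi(\arel) \cap \struc_i(\arel)$, contradicting local disjointness of the decomposition. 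Hence the composition is defined and $\abstruc$ is a satisfiable base.

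Finally, by construction $\abstruc$ matches exactly the defining equation from the rules of $\overline{\asid}$, so the annotated rule $\bpred^\abstruc(x_1,\ldots,x_{\arityof{\bpred}}) \leftarrow \exists y_1 \ldots \exists y_m ~.~ \psi * \Bigstar_{i=1}^\ell \bpred_i^{\abstruc_i}(z_{i,1},\ldots,z_{i,\arityof{\bpred_i}})$ belongs to $\overline{\asid}$. Composing the unfolding witnesses of the $\bpred_i^{\abstruc_i}$ with $\astruc_\psi$ and $\store'$ then yields $\astruc \models^\store \bpred^\abstruc(x_1,\ldots,x_{\arityof{\bpred}})$. The tuple-correspondence property for $\abstruc$ follows by tracing variables through the substitutions and projection, using the corresponding property of $\basepairof{\psi}$ and of each $\abstruc_i$ from the inductive hypothesis; for the converse implication, the projection step simply discards tuples whose elements lie outside $\set{x_1,\ldots,x_{\arityof{\bpred}}}$, which is sound precisely because those tuples correspond to elements accessible in $\astruc$ only through the existentially quantified variables $y_1,\ldots,y_m$, and thus do not contribute to any relation tuple involving solely the parameter variables.
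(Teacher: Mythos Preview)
Your proposal is correct and follows essentially the same route as the paper's proof: induction on the derivation (equivalently, on the unfolding length via \autoref{prop:unfolding}), splitting $\astruc$ along the first rule, applying the inductive hypothesis to obtain the $\abstruc_i$, defining $\abstruc$ by the projection formula, and verifying that the composition is satisfiable by deriving a contradiction with the local disjointness of $\astruc_\psi,\astruc_1,\ldots,\astruc_\ell$. The paper's proof is slightly more explicit about why $\basepairof{\psi}$ itself is satisfiable and why substitutions preserve satisfiability (namely because $\asid$ is equality-free, so no predicate atom repeats a variable), but otherwise the arguments coincide.
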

\begin{proof}
  Let $\astruc \in \sidsem{\bpred}{\asid}$ and $\store$ be such that
  $\astruc \models^{\store}
  \bpred(x_1,\ldots,x_{\arityof{\bpred}})$. The proof is by induction
  on the derivation of $\astruc \models^{\store}
  \bpred(x_1,\ldots,x_{\arityof{\bpred}})$. Assume that the first step
  of this derivation uses a rule
  $\bpred(x_1,\ldots,x_{\arityof{\bpred}}) \leftarrow \exists y_1
  \ldots \exists y_m ~.~ \psi * \Bigstar\nolimits_{i=1}^\ell
  \bpred_i(z_{i,1},\ldots,z_{i,\arityof{\bpred_i}})$. Then, one can
  split $\astruc = \astruc_\psi \comp \astruc_1 \comp \ldots \comp
  \astruc_\ell$, such that $\astruc_\psi \models^{\store'} \psi$ and
  $\astruc_i \models^{\store_i}
  \bpred_i(x_1,\ldots,x_{\arityof{\bpred_i}})$, where $\store'$
  extends $\store_0$ and $\store_i(x_j) = \store'(z_{i,j})$ for all
  $i\in\interv{1}{\ell}$ and $j\in\interv{1}{\arityof{\bpred_i}}$.  By
  the induction hypothesis, for each $i\in\interv{1}{\ell}$, there
  exists a satisfiable base $\abstruc_i$, such that $\astruc_i
  \models^{\store_i}
  \bpred_i^{\abstruc_i}(x_1,\ldots,x_{\arityof{\bpred_i}})$, and
  $\tuple{x_{j_1},\ldots,x_{j_{\arityof{\arel}}}} \in
  \abstruc_i(\arel)$ if and only if
  $\tuple{\store(x_{j_1}),\ldots,\store_i(x_{j_{\arityof{\arel}}})}
  \in \struc(\arel)$.  We consider:
  \[
  \abstruc \isdef \proj{\Big(\basepairof{\psi} \otimes
    \bigotimes\nolimits_{i=1}^\ell \abstruc_i[x_1 / z_{i,1}, \ldots,
      x_{\arityof{\bpred_i}} / z_{i,\arityof{\bpred_i}}]
    \Big)}{\set{x_1,\ldots,x_{\arityof{\bpred}}}}
  \]
  and prove that $\abstruc$ is properly defined above:
  \begin{itemize}[label=$\triangleright$]
  \item $\psi$ is satisfiable, thus so is $\basepairof{\psi} =
    \abstruc_\psi$.  Projections over base pairs do not change the
    satisfiability, nor substitutions because $\overline{\asid}$ is
    equality-free.
  \item We check the satisfiability of the composition in the
    definition of $\abstruc$. Suppose, for a contradiction, that there
    exists a tuple $\tuple{z_1,\ldots,z_{\arityof{\arel}}} \in
    \abstruc_\psi (\arel) ~\cap~ \abstruc_{i} [x_1 / z_{i,1}, \ldots,
      x_{\arityof{\bpred_{i}}} / z_{i,\arityof{\bpred_{i}}}] (\arel)$
    for some $i\in\interv{1}{\ell}$ and
    $z_k\in\set{x_1,\ldots,x_{\arityof{\bpred}},y_1,\ldots,y_m}$ for
    all $k\in\interv{1}{\arityof{\arel}}$. Then there exist
    $j_1,\ldots,j_{\arityof{\arel}}\in\interv{1}{\arityof{\bpred_i}}$
    such that $\tuple{x_{j_1},\ldots,x_{j_{\arityof{\arel}}}} \in
    \abstruc_{i}(\arel)$ and $\store_i(x_{j_k}) = \store'(z_k)$ for
    all $k\in\interv{1}{\arityof{\arel}}$.  Thus we obtain
    $\tuple{\store'(z_1),\ldots,\store'(z_{\arityof{\arel}})} \in
    \struc_\psi(\arel) ~\cap~ \struc_i(\arel)$, which contradicts the
    composition $\astruc_\psi ~\comp~ \astruc_i$.  A similar argument
    ensures that no collisions occur between $\astruc_{i_1}$ and
    $\astruc_{i_2}$, for any $i_1 \neq i_2 \in \interv{1}{\ell}$.
    \end{itemize}
    The last part of \autoref{fact:all-sat2} is similar to the proof
    of \autoref{fact:all-sat}.
\end{proof}

\noindent Therefore, any model $\astruc$ of $\sidsem{\apred}{\asid}$
is a model of $\sidsem{\apred^{\abstruc}}{\overline{\asid}}$ with an
appropriate base $\abstruc$. The upper bound on
$\predsno{\overline{\asid}}$ is obtained by noticing that, for each
predicate symbol $\apred$ that occurs in $\asid$, we introduce at most
$\relationsno{\asid} \cdot \arityof{\apred}^{\maxrelarityof{\asid}}
\leq \relationsno{\asid} \cdot
\maxpredarityof{\asid}^{\maxrelarityof{\asid}}$ new predicate
symbols. \qed

\subsection{Proof of \autoref{lemma:qpf-treewidth}}
\label{app:qpf-treewidth}

The proof follows a generic guideline.  First, recall that for any set
of structures $\structures$ we have
\[
\twof{\structures} = \max_{\astruc \in \structures} \twof{\astruc} =
\max_{\astruc \in \structures} \min \{ \width{\tree} \mid \tree \text{
  is a tree decomposition of } \astruc \}.
\]
Therefore, in order to prove an inequality of the form
$\twof{\sem{\exclof{\phi}}} \le \twof{\sem{\exclof{\psi}}} + k$ for
$\phi,\psi$ two qpf formul{\ae}, we make use of the alternating $\max$
and $\min$ by proving the following:

\skipnoindent \emph{\begin{tabbing}xx \= xx \= xx \= xx \= \kill for
  every structure $\astruc$ and store $\store$ such that $\astruc
  \models^\store \phi$ \\ \> there exists a structure $\astruc'$ and a
  store $\store'$ such that $\astruc' \models^{\store'} \psi$ and \\
    \> \> for every tree decomposition $T'$ of $\astruc'$ \\ \> \> \>
    there exists a tree decomposition $T$ of $\astruc$ such that
    $\width{T} \le \width{T'} + k$.
\end{tabbing}}

\skipnoindent (\ref{it1:qpf-treewidth}) The first point is immediate
since $\sem{\exclof{(\phi * \Bigstar_{i=1}^k x_0 = x_i)}} \subseteq
\sem{\exclof{\phi}}$.

\skipnoindent (\ref{it2:qpf-treewidth}) Since $\sem{\exclof{(\phi *
    \Bigstar_{i=1}^k x_0 \neq x_i)}} \subseteq \sem{\exclof{\phi}}$,
we obtain immediately $\twof{\sem{\exclof{(\phi * \Bigstar_{i=1}^k x_0
      \neq x_i)}}} \le \twof{\sem{\exclof{\phi}}}$.  For the reverse
inequality, recall that $\phi * \Bigstar_{i=1}^k x_0 \neq x_i$ is
satisfiable.  Then $\phi$ must also be satisfiable, so let $\astruc =
(\univ,\struc)$ be a model and $\store$ a store such that $\astruc
\models^\store \phi$.  We distinguish two cases:
\begin{itemize}[label=$\triangleright$]
\item if $\store(x_0) \neq \store(x_i)$ for all $i\in\interv{1}{k}$
  then let $\astruc' = \astruc$, $\store' = \store$ hence $\astruc'
  \models^{\store'} \phi * \Bigstar_{i=1}^k x_0 \neq x_i$ and
  $\twof{\astruc} = \twof{\astruc'}$.
\item if $\store(x_0) = \store(x_i)$ for some $i \in \interv{1}{k}$
  then let us consider a new fresh element $e\in\universe$ and define a
  new store $\store'$ by $\store'(y) \isdef e$ if $y \eqof{\phi} x_0$,
  and $\store'(y) \isdef \store(y)$ otherwise. We define the $\astruc'
  = (\univ \cup \set{e},\struc')$ as follows.  For every $\arel \in
  \relations$, for every tuple $\tuple{u_1,\ldots,u_{\arityof{\arel}}}
  \in \struc(\arel)$, there exists a unique relation atom
  $\arel(y_1,\ldots,y_{\arityof{r}})$ occurring in $\phi$ such that
  $\store(y_j) = u_j$ for all $j\in\interv{1}{\arityof{\arel}}$.
  Then, add the tuple
  $(\store'(y_1),\ldots,\store'(y_{\arityof{\arel}}))$ to
  $\struc'(\arel)$.  By construction $\astruc' \models^{\store'} \phi
  * \Bigstar_{i=1}^k x_0 \neq x_i$. Let $T'$ be a tree decomposition
  of $\astruc'$.  We define $T$ by removing the element $e$ from $T'$
  and adding $\store(x_0)$ in every node of $T'$.  $T$ is a tree
  decomposition of $\astruc$ of width at most $\width{T'}+1$.
  Therefore $\width{T} \le \width{T'}+1$, hence the result.
\end{itemize}
In both cases we obtain the expected result
$\twof{\sem{\exclof{\phi}}} \le \twof{\sem{\exclof{(\phi *
      \Bigstar_{i=1}^k x_0 \neq x_i)}}} + 1$.

\skipnoindent (\ref{it3:qpf-treewidth}) We prove
$\twof{\sem{\exclof{(\phi * \arel(x_1,\ldots,x_k))}}} \le
\twof{\sem{\exclof{\phi}}} + k$. Let $\astruc = (\univ,\struc)$ and
$\store$ such that $\astruc \models^\store \phi *
\arel(x_1,\ldots,x_k)$.  We define $\astruc' = (\univ,\struc')$ from
$\astruc$ by removing the tuple
$\tuple{\store(x_1),\ldots,\store(x_k)}$ from $\struc(\arel)$.  Let
$T'$ be a tree decomposition of $\astruc'$.  We define $T$ by adding
the elements $\store(x_1),\ldots,\store(x_k)$ to every node in $T'$.
This construction does not break connectedness of the subtree of $T$
containing any element, $T$ still contains a node with all elements of
any relation in $\struc'$, and moreover (since $T'$ is not empty) $T$
contains a node (in fact all nodes) with elements
$\store(x_1),\ldots,\store(x_k)$ simultaneously. Therefore $T$ is a
tree decomposition of $\astruc$ of width at most $\width{T'}+k$.

We now prove that $\twof{\sem{\exclof{\phi}}} \le
\twof{\sem{\exclof{(\phi * \arel(x_1,\ldots,x_k))}}}+ 1$.  Recall
$\phi * \arel(x_1,\ldots,x_k)$ is satisfiable from the hypothesis.
Let $\astruc = (\univ,\struc) \models^\store \phi$, and we distinguish
two cases:
\begin{itemize}[label=$\triangleright$]
\item If $\tuple{\store(x_1),\ldots,\store(x_k)} \notin
  \struc(\arel)$, then consider $\store' = \store$ and $\astruc'$
  obtained by adding the above tuple to $\struc(\arel)$.  Then
  $\astruc' \models^{\store'} \phi * \arel(x_1,\ldots,x_k)$ and for
  any tree decomposition $T'$ of $\astruc'$ we have $T = T'$ is also a
  tree decomposition for $\astruc$ hence ensuring $\width{T} =
  \width{T'}$.
\item If $\tuple{\store(x_1),\ldots,\store(x_k)} \in \struc(\arel)$
  then, because $\phi * \arel(x_1,\ldots,x_k)$ is satisfiable, there
  must exist variables $x_1',\ldots,x_k'$ and $j\in\interv{1}{k}$
  such that $\arel(x_1',\ldots,x_k')$ occurs in $\phi$, $\store(x_i) =
  \store(x_i')$, for every $i\in\interv{1}{k}$, and moreover $x_j
  \not\eqof{\phi} x_j'$.  Let us consider a new fresh element $e \in
  \universe$ and define a new store $\store'$ by $\store'(y) = e$ if
  $y \eqof{\phi} x'_j$, and $\store'(y) = \store(y)$ otherwise.  Let
  $\astruc' = (\univ \cup \set{e},\struc')$ defined as follows.  For
  every $\arel' \in \relations$, for every tuple
  $\tuple{u_1,\ldots,u_{\arityof{\arel'}}} \in \struc(\arel')$, there
  exists a unique relation atom $\arel'(y_1,\ldots,y_{\arityof{r'}})$
  occurring in $\phi$ such that $\store(y_j) = u_j$ for all
  $j\in\interv{1}{\arityof{\arel'}}$.  Then, add the tuple
  $(\store'(y_1),\ldots,\store'(y_{\arityof{\arel'}}))$ to
  $\struc'(\arel)$.  Finally, add the tuple
  $\tuple{\store'(x_1),\ldots,\store'(x_k)}$ to $\struc'(\arel)$.  By
  construction $\astruc' \models^{\store'} \phi *
  \arel(x_1,\ldots,x_k)$.  Let $T'$ be a tree decomposition of
  $\astruc'$.  We define $T$ by removing the element $e$ from $T'$ and
  adding $\store(x_j')$ in every node of $T'$.  $T$ is a tree
  decomposition of $\astruc$ of width at most $\width{T'}+1$, that is,
  $\width{T} \le \width{T'} + 1$.
\end{itemize}
In both cases we obtain the expected result, that is,
$\twof{\sem{\exclof{\phi}}} \le \twof{\sem{\exclof{(\phi *
      \arel(x_1,\ldots,x_k))}}} + 1$.

\skipnoindent (\ref{it4:qpf-treewidth}) Similar to point
(\ref{it3:qpf-treewidth}), it generalizes the (right) inequality from
formul{\ae} consisting of a single relation atom to formul{\ae}
consisting of arbitrarily many relation atoms. \qed

\subsection{Proof of \autoref{lemma:sep-treewidth}}
\label{app:sep-treewidth}

Note that $\fv{\eta} \subseteq F \subseteq \fv{\phi}$ and henceforth,
$\fv{\phi * \eta} = \fv{\phi}$. We follow a similar strategy as in the
proof of \autoref{lemma:qpf-treewidth}. Consider a structure $\astruc
= (\univ, \struc)$ and a store $\store$, such that $\astruc
\models^{\store} \phi * \eta$. We build a store $\store'$ and a
structure $\astruc'$ such that $\astruc' \models^{\store'} \phi *
\psi$ and $\twof{\astruc} \le \twof{\astruc'}+\cardof{F}$. First,
consider a store $\store''$, that is canonical for $\psi$
(\autoref{def:canonical-model}), and a structure $\astruc'' = (\univ'',
\struc'')$, such that $\astruc'' \models^{\store''} \psi$.  Such store
and structure exist, because $\psi$ is satisfiable.  Assume without
loss of generality that $\astruc''$ and $\astruc$ are disjoint
structures, that is, $\store''(\fv{\psi}) \cap \store(\fv{\phi}) =
\emptyset$.  Second, we define the store $\store'$, as follows:
\[
\store'(y) \isdef \left\{ \begin{array}{rl} \store''(y) & \mbox{if } y
  \in \fv{\psi} \\ \store''(y') & \mbox{if } y \in \fv{\phi} \setminus
  \fv{\psi} \text{ and there exists } y' \in F \text{ such that } y
  \approx_{\phi} y' \\ \store(y) & \mbox{otherwise}
\end{array} \right.
\]
Note that the definition of $\store'$ is consistent.  In particular,
for any $y \in \fv{\phi} \setminus \fv{\psi}$ there exists at most one
variable $y' \in F$, such that $y \approx_{\phi} y'$, because
otherwise, the hypothesis $x \not\approx_{\phi} y$ for all $x, y \in
F$ would not hold. We build now the structure $\astruc' =
(\univ',\sigma')$ where $\univ' \isdef \univ \cup \univ''$ and
$\sigma'(\arel)$ is defined for every relation symbol
$\arel\in\relations$ as follows:
\begin{itemize}[label=$\triangleright$]
\item add each tuple $\tuple{u_1,\ldots,u_{\arityof{r}}} \in
  \struc''(\arel)$ to $\struc'(\arel)$,
\item for every tuple $\tuple{u_1,\ldots,u_{\arityof{r}}} \in
  \struc(\arel)$, there exists a unique relation atom
  $\arel(y_1,\ldots,y_{\arityof{\arel}})$ occurring in $\phi$, such
  that $\store(y_i) = u_i$ for all $i\in\interv{1}{\arityof{r}}$; we
  add the tuple $\tuple{\store'(y_1),\ldots,\store'(y_{\arityof{r}})}$
  to $\struc'(\arel)$.
\item nothing else belongs to $\struc'(\arel)$.
\end{itemize}
This construction guarantees that $\astruc' \models^{\store'} \phi *
\psi$.  Equality and disequality atoms in $\phi * \psi$ are satisfied,
by the definition of $\store'$.  With regard to relation atoms, notice
that no tuple is added twice to $\sigma'(\arel)$ in the definition
above. That is, if some
$\tuple{\store'(y_1),\ldots,\store'(y_{\arityof{r}})}$ obtained from
$\struc(\arel)$ exists also in $\struc''(\arel)$, then $\phi * \psi$
would not be satisfiable. Let $T'$ be a tree decomposition of
$\astruc'$. We define a tree decomposition $T$ by:
\begin{itemize}[label=$\triangleright$]
\item removing $\set{ \store'(y) ~\mid~ y \in \fv{\psi}}$ from every
  bag of $T'$,
\item adding $\set{\store(y) ~\mid~ y \in F}$ to every bag of $T'$.
\end{itemize}
The result is a tree decomposition $T$ of $\astruc$ of width
$\width{T} \le \width{T'} + \cardof{F}$.  Since the choice of $T'$ was
arbitrary, we obtain $\twof{\astruc} \le
\twof{\astruc'}+\cardof{F}$. Since the choice of $\astruc$ was
arbitrary, we obtain $\twof{\sem{\exclof{(\phi * \eta)}}} \le
\twof{\sem{\exclof{(\phi * \psi)}}} + \cardof{F}$. \qed


\section{Proofs from \autoref{sec:expandable-tb}}
\subsection{Proof of \autoref{lemma:split-fusion}}
\label{app:split-fusion}

\skipnoindent (\ref{it1:lemma:split-fusion})
  ``$\reachfusion{\funsplit{\structures}} \subseteq
  \funsplit{\reachfusion{\structures}}$'' By induction on the
derivation of $\astruc \in \reachfusion{\funsplit{\structures}}$ from
$\funsplit{\structures}$.

\skipnoindent {\underline{Base case:} Let $\astruc \in
\funsplit{\structures}$. We have:
\[
  \exists \astruc' \in \structures.~ \astruc \in \funsplit{\astruc'}  \Rightarrow
  \exists \astruc' \in \reachfusion{\structures}.~ \astruc \in \funsplit{\astruc'}  \Rightarrow
  \astruc \in \funsplit{\reachfusion{\structures}}
\]

\skipnoindent \underline{Induction step:} Assume $\astruc = (\astruc_1
\comp \astruc_2)_{/\approx}$ for some $\astruc_1, \astruc_2 \in
\reachfusion{\funsplit{S}}$ and $\approx$ satisfying the conditions of
\autoref{def:external-fusion} for external fusion of $\astruc_1$ and
$\astruc_2$.  Moreover, assume the induction hypothesis $\astruc_1,
\astruc_2 \in \funsplit{\reachfusion{\structures}}$. Then
{\allowdisplaybreaks \begin{align*}
  \astruc_1, \astruc_2 \in \funsplit{\reachfusion{\structures}},~
  \astruc = (\astruc_1 \comp \astruc_2)_{/\approx} & \Rightarrow \\
  (\exists \astruc_1',\astruc_2' \in \reachfusion{\structures}.~
  \astruc_1 \mcsubstruc \astruc_1', ~\astruc_2 \mcsubstruc \astruc_2'),~
  \astruc = (\astruc_1 \comp \astruc_2)_{/\approx} & \Rightarrow \\
  \exists \astruc_1',\astruc_2' \in \reachfusion{\structures}.~
  (\exists \approx'.~\astruc \mcsubstruc (\astruc_1' \comp \astruc_2')_{/\approx'})  & \Rightarrow \\
  \exists \astruc' \in \reachfusion{\structures}.~ \astruc \mcsubstruc \astruc'& \Rightarrow
  \astruc \in \funsplit{\reachfusion{\structures}}
\end{align*}}
In the above, the equivalence $\approx'$ is taken as the extension by
equality of $\approx$ and henceforth it satisfies the conditions of
\autoref{def:external-fusion} for external fusion of $\astruc_1'$,
$\astruc_2'$.

\skipnoindent ``$\funsplit{\reachfusion{\structures}} \subseteq
  \reachfusion{\funsplit{\structures}}$'' By induction on the
derivation of $\astruc' \in \reachfusion{\structures}$ from
$\structures$.

\skipnoindent \underline{Base case:} Let $\astruc \in
\funsplit{\astruc'}$ for some $\astruc' \in \structures$. We have:
\[
  \astruc \in \funsplit{\astruc'} \Rightarrow
  \astruc \in \reachfusion{\funsplit{\astruc'}} \Rightarrow
  \astruc \in \reachfusion{\funsplit{\structures}}
\]

\skipnoindent \underline{Induction step:} Assume $\astruc \in
\funsplit{\astruc'}$ for some $\astruc' = (\astruc_1' \comp
\astruc_2')_{/\approx'}$ for some $\astruc_1', \astruc_2' \in
\reachfusion{\structures}$ and equivalence $\approx'$ satisfying the
conditions of \autoref{def:external-fusion} for external fusion
of $\astruc_1'$, $\astruc_2'$.  Assume the induction
hypothesis, that is, $\funsplit{\astruc_1'}, \funsplit{\astruc_2'}
\subseteq \reachfusion{\funsplit{\structures}}$.  Then:
\begin{align*}
  \funsplit{\astruc_1'}, \funsplit{\astruc_2'} \subseteq \reachfusion{\funsplit{\structures}},~
  \astruc' = (\astruc_1' \comp \astruc_2')_{/\approx'},~ \astruc \in \funsplit{\astruc'} & \Rightarrow \\
  \funsplit{\astruc_1'}, \funsplit{\astruc_2'} \subseteq \reachfusion{\funsplit{\structures}},~
  \astruc' = (\astruc_1' \comp \astruc_2')_{/\approx'},~ \astruc \mcsubstruc \astruc'
\end{align*}
We distinguish two sub-cases:
\begin{itemize}
 \item $\astruc$ is a maximally connected substructure of $\astruc'_1$
   (the case of $\astruc'_2$ is symmetric) not affected by the external
   fusion defined by $\approx'$:
  \begin{align*}
    \funsplit{\astruc_1'} \subseteq \reachfusion{\funsplit{\structures}},~
    (\exists \astruc_1.~ \astruc_1 \mcsubstruc \astruc_1',~ \astruc = \astruc_1)  & \Rightarrow \\
    \funsplit{\astruc_1'} \subseteq \reachfusion{\funsplit{\structures}},~
    \astruc \in \funsplit{\astruc_1'}  & \Rightarrow
    \astruc \in \reachfusion{\funsplit{\structures}}
  \end{align*}
   \item $\astruc$ is a connected structure including several maximally
     connected substructures, at least one from each $\astruc'_i$, for
     $i=1,2$:
   {\allowdisplaybreaks \begin{align*}
   \funsplit{\astruc_1'}, \funsplit{\astruc_2'} \subseteq \reachfusion{\funsplit{\structures}},~\\
   \exists k_1\ge 1.~ \exists \astruc_{1,1} ... \exists \astruc_{1,k_1}.~
   \astruc_{1,i} \mcsubstruc \astruc_1' \mbox{ for all } i \in \interv{1}{k_1},~ \\
   \exists k_2\ge 1.~  \exists \astruc_{2,1} ... \exists \astruc_{2,k_2}.~
   \astruc_{2,j} \mcsubstruc \astruc_2' \mbox{ for all } j \in \interv{1}{k_2},~ \\
   (\exists \approx.~\astruc = (\astruc_{1,1} \comp ... \comp \astruc_{1,k_1} \comp
   \astruc_{2,1} \comp ... \comp \astruc_{2,k_2})_{/\approx}),~ \astruc \mbox{ connected} & \Rightarrow \\
    \funsplit{\astruc_1'}, \funsplit{\astruc_2'} \subseteq \reachfusion{\funsplit{\structures}},~ \\
    \exists k_1\ge 1.~ \exists \astruc_{1,1} ... \exists \astruc_{1,k_1}.~
    \astruc_{1,i} \in \funsplit{\astruc_1'} \mbox{ for all } i \in \interv{1}{k_1},~ \\
    \exists k_2\ge 1.~  \exists \astruc_{2,1} ... \exists \astruc_{2,k_2}.~
    \astruc_{2,j} \in \funsplit{\astruc_2'} \mbox{ for all } j \in \interv{1}{k_2},~ \\
    (\exists \approx.~ \astruc = (\astruc_{1,1} \comp ... \comp \astruc_{1,k_1} \comp
    \astruc_{2,1} \comp ... \comp \astruc_{2,k_2})_{/\approx}),~ \astruc \mbox{ connected} & \Rightarrow \\
    \exists k_1\ge 1.~ \exists \astruc_{1,1} ... \exists \astruc_{1,k_1}.~
    \astruc_{1,i} \in \reachfusion{\funsplit{\structures}} \mbox{ for all } i \in \interv{1}{k_1},~ \\
    \exists k_2\ge 1.~  \exists \astruc_{2,1} ... \exists \astruc_{2,k_2}.~
    \astruc_{2,j} \in \reachfusion{\funsplit{\structures}} \mbox{ for all } j \in \interv{1}{k_2},~ \\
    (\exists \approx.~ \astruc = (\astruc_{1,1} \comp ... \comp \astruc_{1,k_1} \comp
    \astruc_{2,1} \comp ... \comp \astruc_{2,k_2})_{/\approx}),~ \astruc \mbox{ connected} & \Rightarrow
    \astruc \in \reachfusion{\funsplit{\structures}}
  \end{align*}}
\end{itemize}
In the above, the equivalence $\approx$ is the restriction of $\approx'$ to
the substructures included in the composition.  As $\approx'$ is conforming
for external fusion of $\astruc_1'$, $\astruc_2'$ and since the resulting
structure $\astruc$ is connected, it is always possible to obtain $\astruc$
as a sequence of external fusions conforming to
\autoref{def:external-fusion} from the respective substructures.

\skipnoindent (\ref{it2:lemma:split-fusion})
  ``$\ireachfusion{\funsplit{\structures}} \subseteq
  \funsplit{\ireachfusion{\structures}}$'' By induction on the
derivation of $\astruc\in\ireachfusion{\funsplit{\structures}}$ from
$\funsplit{\structures}$. The induction proceeds as for
(\ref{it1:lemma:split-fusion}), with one additional case in the
induction step.

\skipnoindent \underline{Induction step:} Let $\astruc =
(\astruc_1)_{/\approx}$ for some $\astruc_1 \in
\ireachfusion{\funsplit{S}}$ and equivalence relation $\approx$
conforming to internal fusion of $\astruc_1$. Moreover, assume the
induction hypothesis $\astruc_1 \in
\funsplit{\ireachfusion{\structures}}$.  Then
{\allowdisplaybreaks \begin{align*}
  \astruc_1 \in \funsplit{\ireachfusion{\structures}},~ \astruc = (\astruc_1)_{/\approx} & \Rightarrow \\
  (\exists \astruc_1' \in \ireachfusion{\structures}.~\astruc_1 \mcsubstruc \astruc_1'), ~
  \astruc = (\astruc_1)_{/\approx} & \Rightarrow \\
  \exists \astruc_1' \in \ireachfusion{\structures}.~ (\exists \approx'.~
  \astruc \mcsubstruc (\astruc_1')_{/\approx'}) & \Rightarrow \\
  \exists \astruc' \in \ireachfusion{\structures},~ \astruc \mcsubstruc \astruc' & \Rightarrow
  \astruc \in \funsplit{ \ireachfusion{\structures}}
\end{align*}}
In the above, the equivalence $\approx'$ is taken as the extension by equality of
$\approx$ and hence conforming for internal fusion of structure $\astruc_1'$.

\skipnoindent ''$\funsplit{\ireachfusion{\structures}} \subseteq
\ireachfusion{\funsplit{\structures}}$'' By induction on the
derivation of $\astruc'\in\ireachfusion{\structures}$ from
$\structures$. The induction proceeds as for
(\ref{it1:lemma:split-fusion}), with one additional case in the
induction step.

\skipnoindent \underline{Induction step:} Let $\astruc \in
\funsplit{\astruc'}$ for some $\astruc' = (\astruc_1')_{/\approx'}$
for some $\astruc_1' \in \ireachfusion{\structures}$ and equivalence
$\approx'$ conforming for internal fusion of $\astruc_1'$.  Moreover,
assume the induction hypothesis $\funsplit{\astruc_1'} \subseteq
\ireachfusion{\funsplit{\structures}}$. Then
{\allowdisplaybreaks \begin{align*}
  \funsplit{\astruc_1'} \subseteq \ireachfusion{\funsplit{\structures}},~
  \astruc' = (\astruc_1')_{/\approx'},~ \astruc \in \funsplit{\astruc'} & \Rightarrow \\
  \funsplit{\astruc_1'} \subseteq \ireachfusion{\funsplit{\structures}},~
  \astruc' = (\astruc_1')_{/\approx'},~ \astruc \mcsubstruc \astruc' & \Rightarrow \\
  \funsplit{\astruc_1'} \subseteq \ireachfusion{\funsplit{\structures}},~ \\
  \exists k\ge 1.~ \exists \astruc_{1,1}... \exists \astruc_{1,k}.~
  \astruc_{1,i} \mcsubstruc \astruc_1' \mbox{ for all } i\in\interv{1}{k},~ \\
  (\exists \approx.~ \astruc = (\astruc_{1,1} \comp ...  \comp \astruc_{1,k})_{/\approx},~
  \astruc \mbox{ connected}) & \Rightarrow \\
  \funsplit{\astruc_1'} \subseteq \ireachfusion{\funsplit{\structures}},~ \\
  \exists k\ge 1.~ \exists \astruc_{1,1}... \exists \astruc_{1,k}.~
  \astruc_{1,i} \in \funsplit{\astruc_1'} \mbox{ for all } i\in\interv{1}{k},~ \\
  (\exists \approx.~ \astruc = (\astruc_{1,1} \comp ...  \comp \astruc_{1,k})_{/\approx},~
  \astruc \mbox{ connected}) & \Rightarrow \\
  \exists k\ge 1.~ \exists \astruc_{1,1}... \exists \astruc_{1,k}.~
  \astruc_{1,i} \in \ireachfusion{\funsplit{\structures}} \mbox{ for all } i\in\interv{1}{k},~ \\
  (\exists \approx.~ \astruc = (\astruc_{1,1} \comp ...  \comp \astruc_{1,k})_{/\approx},~
  \astruc \mbox{ connected}) & \Rightarrow 
  \astruc \in \ireachfusion{\funsplit{\structures}}
\end{align*}}
In the above, the equivalence $\approx$ is taken as the restriction of
$\approx'$ to the maximal connected substructures included in the
construction of connected $\astruc$.  Henceforth, $\approx$ is conforming
for internal fusion as well.  As the resulting structure $\astruc$ is
connected, it is always possible to construct it in
$\ireachfusion{\funsplit{\structures}}$ i.e., first by using external
fusion conforming to \autoref{def:external-fusion} to connect all the
included substructures and second, by using internal fusion to further
restrict the result if needed. \qed

\subsection{Proof of \autoref{lemma:sid-connected}}
\label{app:maximally-connected-substructures}

``$\funsplit{\csem{\asid}{\apred}} \subseteq \csem{\csid}{\ppred}$''
We prove first the following fact:

\begin{fact}\label{fact:asid-csid}
  Let $\bpred_0(x_1,\ldots,x_{\arityof{\bpred_0}}) \step{\asid}^*
  \exists y_1 \ldots \exists y_n ~.~ \phi$ be a complete
  $\asid$-unfolding, where $\phi$ is a qpf formula, $\store$ a store
  injective over $\set{x_1,\ldots,x_{\arityof{\bpred_0}}} \cup
  \set{y_1,\ldots,y_n}$, $\astruc=(\univ,\struc)$ a structure such
  that $\astruc\models^\store \phi$ and $\astruc'=(\univ',\struc')$ a
  structure, such that $\astruc' \mcsubstruc \astruc$ and
  $\supp{\struc'} \cap \set{\store(x_1), \ldots,
    \store(x_{\arityof{\bpred_0}})} \neq \emptyset$. Then, there exist
  a nonempty set $J_0\subseteq\interv{1}{\arityof{\bpred_0}}$,
  equivalence relation $\xi_0 \subseteq J_0 \times J_0$ and complete
  $\csid$-unfolding $\bpred_0(x_1, \ldots,
  x_{\arityof{\bpred_0}})_{/\xi_0} \step{\csid}^* \exists y_{1} \ldots
  \exists y_{n} ~.~ \phi'$, where $\phi'$ is a qpf formula, such that
  $\astruc' \models^\store \phi'$.
\end{fact}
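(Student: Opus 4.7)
The plan is to prove Fact \ref{fact:asid-csid} by induction on the length of the $\asid$-unfolding, after strengthening the hypothesis: $\astruc'$ is allowed to be any substructure of $\astruc$ that decomposes into a disjoint union of one or more maximally connected substructures of $\astruc$, each touching at least one parameter $\store(x_k)$, with $J_0$ being the set of touched indices and $\xi_0$ the equivalence relation identifying parameters lying in the same connected piece. This strengthening is needed because the natural subpart $\astruc'_i$ of $\astruc'$ inside each $\astruc_i$ in the induction step is only a disjoint union of maximally connected substructures of $\astruc_i$, not necessarily a single one.

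For the base case, the unfolding is a single rule application with no recursive predicate atoms, so $\phi = \psi$ consists only of relation atoms (since $\asid$ is equality- and disequality-free). I would set $\psi'$ to the separating conjunction of atoms of $\psi$ all of whose variables map under $\store$ into $\supp{\struc'}$, and $\psi''$ to the remaining atoms. Maximal connectedness of the pieces of $\astruc'$ guarantees $\fv{\psi'} \cap \fv{\psi''} = \emptyset$: a variable shared between the two parts would provide, via its $\psi''$-atom, a connection in $\astruc$ from $\supp{\struc'}$ to its complement, contradicting maximality. Conditions (\ref{it1:mcsid})--(\ref{it7:mcsid}) are then straightforward, and rule (\ref{eq:mcsid-middle}) provides a one-step $\csid$-derivation with outcome $\psi'$ satisfied by $\astruc'$.

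For the inductive step, let the first applied rule be (\ref{eq:asidrule}), followed by sub-unfoldings $\bpred_i(z_{i,1},\ldots) \step{\asid}^* \exists \bar{y}_i ~.~ \phi_i$, yielding the decomposition $\astruc = \astruc_\psi \comp \astruc_1 \comp \cdots \comp \astruc_\ell$ with $\astruc_\psi \models^\store \psi$ and $\astruc_i \models^\store \phi_i$. Let $\astruc'_i$ denote the substructure of $\astruc_i$ supported on $\supp{\struc'} \cap \supp{\astruc_i}$. The key observation is that $\supp{\astruc_i}$ intersects the rest of $\astruc$ only at $\set{\store(z_{i,1}),\ldots,\store(z_{i,\arityof{\bpred_i}})}$ (by injectivity of $\store$), so every maximally connected piece of $\astruc'_i$ in $\astruc_i$ must touch at least one such parameter (to remain connected within $\astruc$ to the piece of $\astruc'$ it belongs to, or to reach a parameter of $\bpred_0$ if the piece lies entirely inside $\astruc_i$). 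The strengthened induction hypothesis applied to each $\bpred_i$ and $\astruc'_i$ yields $J_i, \xi_i$ and complete $\csid$-unfoldings of $\bpred_i(z_{i,1},\ldots)_{/\xi_i}$ with outcomes $\phi'_i$ such that $\astruc'_i \models^\store \phi'_i$. I would then set $\psi'$ to be the atoms of $\psi$ with variables in $\supp{\struc'}$, and define $\Xi \isdef (\connof{\psi'} \cup \bigcup_i \xi_i(\bpred_i(z_{i,1},\ldots)))^=$, from which $J_0, \xi_0$ follow by restriction to $\set{x_1,\ldots,x_{\arityof{\bpred_0}}}$.

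The main obstacle is verifying the side conditions (\ref{it1:mcsid})--(\ref{it7:mcsid}). The disjointness conditions (\ref{it1:mcsid}), (\ref{it2:mcsid}), (\ref{it4:mcsid}), (\ref{it5:mcsid}) reduce to the same maximal-connectedness argument as in the base case. Condition (\ref{it6:mcsid}) -- requiring every existential variable of $\psi'$ or $J_i$-parameter of $\bpred_i$ to be $\Xi$-connected to some $x \in \fvof{J_0}{\bpred_0(\ldots)}$ -- is the crucial one: it follows because every such variable has $\store$-value in $\supp{\struc'}$ and thus lies in a piece of $\astruc'$ that, by hypothesis, touches a parameter $x_k$ with $k \in J_0$, and the $\Xi$-chain from the variable to $x_k$ can be traced via the atoms of $\psi'$ (traversing the piece within $\astruc_\psi$) and the $\xi_i$-connections (traversing the piece within each $\astruc_i$). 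Conditions (\ref{it3:mcsid}) and (\ref{it7:mcsid}) hold by construction. Combining the sub-derivations via rule (\ref{eq:mcsid-middle}) produces a complete $\csid$-unfolding with outcome $\phi' = \psi' * \Bigstar_{i : J_i \neq \emptyset} \phi'_i$, satisfied by $\astruc'$ as the composition of its piece in $\astruc_\psi$ (satisfying $\psi'$) together with the $\astruc'_i$'s (satisfying $\phi'_i$).
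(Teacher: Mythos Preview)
Your approach is essentially the same inductive strategy as the paper's: induct on the length of the $\asid$-unfolding, decompose $\astruc$ along the first rule into $\astruc_0 \comp \astruc_1 \comp \cdots \comp \astruc_\ell$, apply the hypothesis to the pieces $\astruc'_i$ of $\astruc'$ inside each $\astruc_i$, and assemble the resulting $\csid$-derivations via a rule of type (\ref{eq:mcsid-middle}).

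The notable difference is your explicit strengthening of the induction hypothesis to allow $\astruc'$ to be a disjoint union of maximally connected substructures rather than a single one. This is genuinely needed: the paper asserts that each $\astruc'_{i_h} \mcsubstruc \astruc_{i_h}$, but the restriction of a connected $\astruc'$ to $\astruc_i$ need not be connected in $\astruc_i$ (connectivity in $\astruc$ may detour through $\astruc_0$ or other $\astruc_j$ and re-enter $\astruc_i$ through a different parameter $z_{i,j}$). A simple instance is $\bpred_1(x_1,x_2) \leftarrow \srel(x_1) * \srel(x_2)$ called from a rule whose $\psi$ contains $\arel(y_1,y_2)$: the whole model is connected, but its part in $\astruc_1$ consists of two disconnected singletons. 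The paper's proof therefore has a gap at exactly the point you anticipate, and your strengthened hypothesis is the natural repair; it is also what the structure of the annotations $\bpred^\xi$ (with $\xi$ allowed multiple equivalence classes) is designed to accommodate. Your verification of condition (\ref{it6:mcsid}) via the ``exit-through-parameters'' argument is the right idea, and the remaining conditions are, as you say, handled by the same maximal-connectedness reasoning.
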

\begin{proof}
  By induction on the length of the $\asid$-unfolding. Assume the
  first rule in this unfolding to be of the form (\ref{eq:asidrule}),
  for a qpf formula $\psi_0$. Then, there exist:
  \begin{itemize}[label=$\triangleright$]
  \item unfoldings $\bpred_i(x_{1}, \ldots, x_{\arityof{\bpred_i}})
    \step{\asid}^* \exists y_{j_{i,1}} \ldots \exists y_{j_{i,k_i}}
    ~.~ \phi_i$, where $y_{j_{i,1}}, \ldots, y_{j_{i,k_i}} \in
    \set{y_1,\ldots,y_n}$ and $\phi_i$ are qpf formul{\ae}, for all
    $i \in \interv{1}{\ell}$, and
  \item structures $\astruc_0 = (\univ_0,\struc_0), \ldots,
    \astruc_\ell=(\univ_\ell,\struc_\ell)$, such that $\astruc_0
    \comp \ldots \comp \astruc_\ell = \astruc$, $\astruc_0
    \models^\store \psi_0$ and $\astruc_i \models^\store
    \phi_i\theta_i$, where $\theta_i \isdef [x_1/z_{i,1}, \ldots,
      x_{\arityof{\bpred_i}}/z_{i,\arityof{\bpred_i}}]$, for all $i
    \in \interv{1}{\ell}$.
  \end{itemize}
  Since $\astruc' \mcsubstruc \astruc_0 \comp \ldots \comp
  \astruc_\ell$ is a maximally connected structure, there must exist
  structures $\astruc'_0 = (\univ_0, \struc'_0)$,
  $\astruc'_{i_1}=(\univ_{i_1}, \struc'_{i_1}), \ldots,
  \astruc'_{i_k} =(\univ_{i_k},\struc'_{i_k})$, for $i_1, \ldots,
  i_k \in \interv{1}{\ell}$, such that:
  \begin{itemize}[label=$\triangleright$]
  \item $\astruc' = \astruc'_0 \comp \astruc'_{i_1} \comp \ldots
    \comp \astruc'_{i_k}$,
  \item $\supp{\struc'_{i_h}} \cap\set{\store(z_{i_h,1}), \ldots,
    \store(z_{i_m,\arityof{\bpred_{i_h}}})} \neq \emptyset$ and
    $\astruc'_{i_h} \mcsubstruc \astruc_{i_h}$, for all $h \in
    \interv{1}{k}$,
  \item $\supp{\struc'} \cap \set{\store(z_{i,1}), \ldots,
    \store(z_{i,\arityof{\bpred_i}})} = \emptyset$, for all $i \in
    \interv{1}{\ell} \setminus \set{i_1,\ldots,i_k}$.
  \end{itemize}
  Since $\astruc_{i_h} \models^\store \phi_{i_h} \theta_{i_h}$, we
  have $\astruc_{i_h} \models^{\store \circ \theta_{i_h}^{-1}}
  \phi_{i_h}$, for all $h\in\interv{1}{k}$. By the inductive
  hypothesis, there exist nonempty sets $J_{i_1} \subseteq
  \interv{1}{\arityof{\bpred_{i_1}}}, \ldots, J_{i_k} \subseteq
  \interv{1}{\arityof{\bpred_{i_k}}}$, equivalence relations
  $\xi_{i_1} \subseteq J_{i_1} \times J_{i_1}, \ldots, \xi_{i_k}
  \subseteq J_{i_k} \times J_{i_k}$ and complete $\csid$-unfoldings:
  \[
  \bpred_{i_h}(x_1, \ldots,
  x_{\arityof{\bpred_{i_h}}})_{/\xi_{i_h}} \step{\csid}^* \exists
  y_{j_{i_h,1}} \ldots \exists y_{j_{i_h,k_{i_h}}} ~.~ \phi'_{i_h}
  \]
  such that $\astruc'_{i_h} \models^{\store\circ\theta_{i_h}^{-1}}
  \phi'_{i_h}$, for all $h \in \interv{1}{k}$. Then, we define:
  \begin{itemize}[label=$\triangleright$]
  \item sets $\overline{J}_{i_h} \isdef \interv{1}{\arityof{\bpred_i}}
    \setminus J_{i_h}$, for all $h\in\interv{1}{k}$,
  \item qpf formul{\ae} $\psi'_0$ and $\psi''_0$ satisfying points
    (\ref{it1:mcsid}) and (\ref{it2:mcsid}) from the construction of
    $\csid$,
  \item an equivalence relation $\Xi\isdef\big(\connof{\psi'_0} \cup
    \bigcup\nolimits_{h=1}^k
    \xi_{i_h}(\bpred_{i_h}(z_{i_h,1},\ldots,z_{i_h,\arityof{\bpred_{i_h}}}))
    \big)^=$.
  \end{itemize}
  We argue that the construction of the formul{\ae} $\psi'_0$ and
  $\psi''_0$ is effective. There are no (dis-) equalities in $\asid$,
  i.e., $\psi_0$ consists of relation atoms only. Each atom $\alpha$
  of $\psi_0$, such that $\fv{\alpha}\cap
  \fvof{J_{i_h}}{\bpred_{i_h}(z_{i_h,1},\ldots,z_{i_h,\arityof{\bpred_{i_h}}})}
  \neq \emptyset$, for some $h \in \interv{1}{k}$, is added to
  $\psi'_0$. Moreover, each atom $\alpha$ of $\psi_0$, such that
  $\fv{\alpha} \cap
  \fvof{\overline{J}_{i_h}}{\bpred_{i_h}(z_{i_h,1},\ldots,z_{i_h,\arityof{\bpred_{i_h}}})}
  = \emptyset$, for all $h \in \interv{1}{k}$ is added to
  $\psi''_0$. Note that, each atom can only be added either to
  $\psi'_0$ or $\psi''_0$ but not to both, because $\astruc'_{i_h}
  \mcsubstruc \astruc_{i_h}$ and $\astruc'_{i_h} \models^\store
  \phi'_{i_h}$ imply that no further element can be added to
  $\astruc'_{i_h}$, for all $h \in \interv{1}{k}$. The rest of the
  atoms $\alpha$ from $\psi_0$, i.e., such that $\fv{\alpha}\cap
  \fvof{J_{i_h}}{\bpred_{i_h}(z_{i_h,1},\ldots,z_{i_h,\arityof{\bpred_{i_h}}})}
  = \emptyset$ and $\fv{\alpha} \cap
  \fvof{\overline{J}_{i_h}}{\bpred_{i_h}(z_{i_h,1},\ldots,z_{i_h,\arityof{\bpred_{i_h}}})}
  = \emptyset$, for all $h\in\interv{1}{k}$, are split between
  $\psi'_0$ and $\psi''_0$, by repeating the following steps until a
  fixpoint is reached:
  \begin{itemize}[label=$\triangleright$]
  \item if $\fv{\alpha} \cap \fv{\psi'_0} \neq \emptyset$, then update
    $\psi'_0$ as $\psi'_0 * \alpha$,
  \item else, update $\psi''_0$ as $\psi''_0 * \alpha$.
  \end{itemize}
  By construction, we obtain that $\fv{\psi'_0} \cap \fv{\psi''_0} =
  \emptyset$, as required at point (\ref{it1:mcsid}).
  
  Let $J_0 \isdef \set{j \in \interv{1}{\arityof{\bpred_0}} \mid
    \store(x_j) \in \supp{\struc'}}$. Note that $J_0\neq\emptyset$
  because $\supp{\struc'} \cap \set{\store(x_1), \ldots,
    \store(x_{\arityof{\bpred_0}})} \neq \emptyset$. We define the
  equivalence relation $\xi_0 \subseteq J_0 \times J_0$ as follows:
  \[
  (i,j)\in\xi_0 \iffdef (x_i,x_j) \in \Xi
  \]
  Moreover, one can show that $\astruc'_0 \models^\store \psi'_0$, by
  the construction of $\psi'_0$ and $\psi''_0$, hence $\xi_0$
  satisfies the conditions (\ref{it4:mcsid})--(\ref{it7:mcsid}) from the
  definition of $\csid$, hence $\csid$ contains a rule of the form
  (\ref{eq:mcsid-middle}), with qpf formula $\psi'_0$. Since $\astruc'
  \models^\store \psi'_0 * \Bigstar\nolimits_{h=1}^k \phi'_{i_h} \theta_h$ and
  $\phi' = \psi'_0 * \Bigstar\nolimits_{h=1}^k \phi'_{i_h} \theta_h$ modulo a
  reordering of atoms, we obtain that $\astruc' \models^\store \phi'$.
\end{proof}

\skipnoindent The proof is completed as follows. Let $\astruc' \in
\funsplit{\csem{\apred}{\asid}}$ be a maximally connected substructure
of a canonical model $\astruc \in \csem{\apred}{\asid}$. Then, there
exists a complete $\asid$-unfolding $\apred \step{\asid}^* \exists y_1
\ldots \exists y_m ~.~ \overline{\phi}$ and a store $\store$,
injective over $y_1, \ldots, y_m$, such that $\astruc \models^\store
\overline{\phi}$. Because $\astruc'$ is connected, there exists a
unique (i) predicate atom $\bpred_0(z_1, \ldots,
z_{\arityof{\bpred_0}})$, (ii) subformula $\phi$ of $\overline{\phi}$
and (iii) structure $\astruc'' \substruc \astruc$, such that
$\bpred_0(z_1, \ldots, z_{\arityof{\bpred_0}}) \step{\asid}^* \exists
y_{i_1} \ldots \exists y_{i_n} ~.~ \phi$ is a complete unfolding,
$\supp{\astruc''} \cap \set{\store(z_1), \ldots,
  \store(z_{\arityof{\bpred_0}})}=\emptyset$, $\astruc''
\models^\store \phi$ and $\astruc' \mcsubstruc \astruc''$. Without
losing generality, we assume that the above is the smallest
$\asid$-unfolding with these properties and assume that the first rule
of the $\asid$-unfolding is of the form (\ref{eq:asidrule}), with a
qpf formula $\psi$. Then, because $\astruc'$ is connected, the
right-hand side of this rule contains zero or more predicate atoms
$\bpred_{i_h}(z_{i_h,1}, \ldots, z_{i_h,\arityof{\bpred_{i_h}}})$, for
$i_h \in \set{i_1, \ldots, i_k} \subseteq \interv{1}{\ell}$, such that
$\supp{\struc'} \cap \set{\store(z_{i_h,1}), \ldots,
  \store(z_{i_h,\arityof{\bpred_{i_h}}})} \neq
\emptyset$. Accordingly, we decompose $\astruc''=\astruc''_0 \comp
\ldots \comp \astruc''_k$ such that $\astruc''_0 \models^\store
\psi_0$ and $\astruc''_h \models^\store \phi_h$, where
$\bpred_{i_h}(z_{i_h,1}, \ldots, z_{i_h,\arityof{\bpred_{i_h}}})
\step{\asid}^* \exists y_{j_{h,1}} \ldots \exists y_{j_{h,m_h}} ~.~
\phi_h$ are complete unfoldings, for all $h \in \interv{1}{k}$. This
decomposition of $\astruc''$ induces a decomposition of $\astruc'
\mcsubstruc \astruc''$ as $\astruc'_0 \comp \ldots \comp \astruc'_k =
\astruc'$ such that $\astruc'_h \mcsubstruc \astruc''_h$ and
$\supp{\struc'_h} \cap \set{\store(z_{i_h,1}), \ldots,
  \store(z_{i_h,\arityof{\bpred_{i_h}}})} \neq \emptyset$, for all $h
\in \interv{1}{k}$. Applying \autoref{fact:asid-csid}, we find
nonempty subsets $J_h \subseteq \interv{1}{\arityof{\bpred_{i_h}}}$,
equivalence relations $\xi_h \subseteq J_h \times J_h$ and complete
$\csid$-unfoldings $\bpred_{i_h}(z_{i_h,1}, \ldots,
z_{i_h,\arityof{\bpred_{i_h}}})_{/\xi_h} \step{\csid}^* \exists
y_{p_{h,1}} \ldots \exists y_{p_{h,n_h}} ~.~ \phi'_h$, such that
$\astruc'_h \models^\store \phi'_h$, for all $h \in \interv{1}{k}$.
We define the sets $\overline{J}_h =
\interv{1}{\arityof{\bpred_{i_h}}} \setminus J_h$ and the formul{\ae}
$\psi'_0$ and $\psi''_0$ such that conditions (\ref{it1:mcsid}) and
(\ref{it2:mcsid}) are met. Let $\Xi \isdef \big(\connof{\psi'_0} \cup
\bigcup\nolimits_{h=1}^k
\xi_{i_h}(\bpred_{i_h}(z_{i_h,1},\ldots,z_{i_h,\arityof{\bpred_{i_h}}}))
\big)^=$ be an equivalence relation. Since $\astruc'$ is connected, we
argue that $\Xi$ has a single equivalence class. Moreover, $(z,z)
\not\in \Xi$, for all $z \in \set{z_1, \ldots,
  z_{\arityof{\bpred_0}}}$, since $\supp{\astruc''} \cap
\set{\store(z_1), \ldots, \store(z_{\arityof{\bpred_0}})}=\emptyset$
and $\store$ is injective. Then, by definition, $\csid$ contains a
rule of the form (\ref{eq:mcsid-begin}). This rule and the complete
unfoldings $\bpred_{i_h}(z_{i_h,1}, \ldots,
z_{i_h,\arityof{\bpred_{i_h}}}) \step{\csid}^* \exists y_{j_{h,1}}
\ldots \exists y_{j_{h,m_h}} ~.~ \phi'_h$, for all $h \in
\interv{1}{k}$, are composed to make up a complete $\csid$-unfolding
$\ppred \step{\csid} \exists y_{q_1} \ldots \exists y_{q_r} ~.~
\psi'_0 * \Bigstar\nolimits_{h=1}^k \phi'_h$, such that $\astruc'
\models^\store \psi'_0 * \Bigstar\nolimits_{h=1}^k \phi'_h$. Since
$y_{q_1}, \ldots, y_{q_r} \in \set{y_1, \ldots, y_m}$ and $\store$ is
injective over $y_1, \ldots, y_m$, we obtain that $\astruc' \in
\csem{\ppred}{\csid}$.

\skipnoindent ``$\csem{\csid}{\ppred} \subseteq
\funsplit{\csem{\asid}{\apred}}$'' We prove first two related
facts. First, let
$\bpred_0(x_1,\ldots,x_{\arityof{\bpred_0}})_{/\xi_0} \step{\csid}^*
\exists y_1 \ldots \exists y_n ~.~ \phi$ be a complete
$\csid$-unfolding, where $\phi$ is a qpf formula, $J_0 \subseteq
\interv{1}{\arityof{\bpred_0}}$ a nonempty set and $\xi_0 \subseteq
J_0 \times J_0$ an equivalence relation, $\store$ be a store injective
over $\set{x_1,\ldots,x_{\arityof{\bpred_0}}} \cup
\set{y_1,\ldots,y_n}$ and $\astruc=(\univ,\struc)$ be a structure such
that $\astruc\models^\store\phi$. Given an equivalence class $I
\subseteq J_0$ of $\xi_0$, we define the structure:
\[
\reachof{\astruc}{\store}{I} \isdef (\univ, \lambda \arel ~.~
\{\tuple{u_1, \ldots, u_{\arityof{\arel}}} \in \struc(\arel) \mid
\forall j \in \interv{1}{\arityof{\arel}} ~\exists i \in I ~.~
\store(x_i) \text{ connected to } u_j \text{ in } \astruc\})
\]

\begin{fact}\label{fact:csid-mc}
  $\reachof{\astruc}{\store}{I} \mcsubstruc \astruc$.
\end{fact}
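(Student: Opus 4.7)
My plan is to verify the three defining clauses of $\mcsubstruc$ from \autoref{def:split-connected}: (i) $\reachof{\astruc}{\store}{I} \substruc \astruc$, (ii) $\reachof{\astruc}{\store}{I}$ is connected, and (iii) any connected substructure of $\astruc$ that contains $\reachof{\astruc}{\store}{I}$ is equal to it.

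Clause (i) is essentially immediate from the definition of $\reachof{\astruc}{\store}{I}$. The interpretation keeps exactly those tuples all of whose entries are reachable (in $\astruc$) from some $\store(x_i)$ with $i\in I$; conversely, if such a tuple lies in $\struc(\arel)$ and all its entries already belong to the support of $\reachof{\astruc}{\store}{I}$, then by definition of that support each entry is already reachable from some $\store(x_j)$ with $j\in I$, so the tuple is kept. This matches the closure condition in \autoref{def:substructure}.

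For clause (iii), I would argue by contradiction. Suppose there is a connected substructure $\astruc_1'\substruc\astruc$ with $\reachof{\astruc}{\store}{I} \substruc \astruc_1'$ and $\reachof{\astruc}{\store}{I}\neq\astruc_1'$. Then $\astruc_1'$ contains a tuple $\tuple{u_1,\ldots,u_{\arity{\arel}}}\in\struc(\arel)$ absent from $\reachof{\astruc}{\store}{I}$, so some $u_k$ is not reachable in $\astruc$ from any $\store(x_i)$ with $i\in I$. Since $\reachof{\astruc}{\store}{I}$ is non-empty (because $I\neq\emptyset$ and $\astruc\models^\store\phi$ actually records tuples touching the parameters indexed by $I$; this needs a small check from the rule generating $\xi_0$) and $\astruc_1'$ is connected, $u_k$ is joined in $\astruc_1'$, hence in $\astruc$, to some element of $\supp{\reachof{\astruc}{\store}{I}}$, which is reachable from $\store(x_i)$ for some $i\in I$. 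Composing paths yields reachability of $u_k$ from $\store(x_i)$, contradicting our assumption.

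The main obstacle is clause (ii), connectedness. I will argue it in two steps. First, for every element $u\in\supp{\reachof{\astruc}{\store}{I}}$ there is, by definition, a path in $\astruc$ from $u$ to some $\store(x_i)$ with $i\in I$; because every tuple traversed by such a path has all its entries reachable (via the same path suffix) from $\store(x_i)$, the path actually lies inside $\reachof{\astruc}{\store}{I}$. So it suffices to show that the elements $\{\store(x_i)\mid i\in I\}$ themselves lie in one connected component of $\reachof{\astruc}{\store}{I}$. This is where the structure of $\csid$ enters: the equivalence relation $\xi_0$ on $J_0$ is generated, via the construction around \eqref{eq:mcsid-middle}, by $\connof{\psi'_0}$ together with the restrictions of the $\xi_{i_h}$'s from child predicates. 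I will prove by induction on the length of the complete $\csid$-unfolding that any two indices $i,j$ in the same $\xi_0$-class correspond to store values $\store(x_i)$ and $\store(x_j)$ linked by a path inside $\reachof{\astruc}{\store}{I}$: the base case uses the fact that a single atom $\arel(z_1,\ldots,z_k)$ in $\psi'_0$ places all $\store(z_m)$ together in a tuple of $\reachof{\astruc}{\store}{I}$, and the inductive step folds in the paths produced by the inductive hypothesis on each sub-predicate $\bpred_{i_h}(\ldots)_{/\xi_{i_h}}$, glued at the shared parameters allowed by condition (\ref{it3:mcsid}) of the construction. Transitive closure of these local links is exactly $\xi_0$, completing the argument.
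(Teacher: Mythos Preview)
Your argument is correct and rests on the same inductive idea as the paper's proof: both unwind the definition of $\Xi$ from condition~(\ref{it3:mcsid}) along a complete $\csid$-unfolding, using that $\connof{\psi'_0}$ yields local connectivity and that the child relations $\xi_{i_h}$ supply connectivity via the inductive hypothesis. The paper carries out a single induction that directly concludes $\reachof{\astruc}{\store}{I}\mcsubstruc\astruc$ from $\reachof{\astruc_h}{\store\circ\theta_h^{-1}}{I_h}\mcsubstruc\astruc_h$, leaving the verification of the three clauses of $\mcsubstruc$ implicit in the final sentence.

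Your organization is a genuine, if modest, improvement: by noting that $\reachof{\astruc}{\store}{I}$ is, by construction, the induced substructure on a union of connected components of $\astruc$, you get clauses~(i) and~(iii) for free and localize the induction to the single nontrivial claim, namely that all $\store(x_i)$ with $i\in I$ lie in one component. The paper's approach packages this inside the inductive step without singling it out. What your decomposition buys is clarity about where the combinatorics of $\csid$ actually enter; what the paper's buys is brevity. One small point you correctly flag but should not leave as a ``small check'': the non-emptiness of $\reachof{\astruc}{\store}{I}$ (equivalently, that each $\store(x_i)$ with $i\in I$ lies in $\supp{\struc}$) is itself a consequence of the same induction you run for clause~(ii)(b), so fold it into that argument rather than treating it separately.
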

\begin{proof}
  By induction on the length of the $\csid$-unfolding. Assume that the
  first rule of the unfolding is of the form (\ref{eq:mcsid-middle}),
  with a qpf formula $\psi_0$. Then, there exist nonempty sets
  $J_{i_h} \subseteq \interv{1}{\arityof{\bpred_{i_h}}}$ and
  equivalence relations $\xi_{i_h} \subseteq J_{i_h} \times J_{i_h}$,
  for some $i_1, \ldots, i_k \in \interv{1}{\ell}$ and all $h \in
  \interv{1}{k}$ and an equivalence relation
  $\Xi\subseteq\big(\set{x_1,\ldots,x_{\arityof{\bpred_0}}}\cup\set{y_1,\ldots,y_m}\big)
  \times
  \big(\set{x_1,\ldots,x_{\arityof{\bpred_0}}}\cup\set{y_1,\ldots,y_m}\big)$,
  that satisfy points (\ref{it1:mcsid})--(\ref{it7:mcsid}) from the
  definition of $\csid$. By point (\ref{it7:mcsid}), $\set{x_i \mid i
    \in I}$ is an equivalence class of
  $\proj{\Xi}{x_1,\ldots,x_{\arityof{\bpred_0}}}$ and let $X \subseteq
  \fv{\psi'_0} \cup \bigcup\nolimits_{h=1}^k \set{z_{i_h,1}, \ldots,
    z_{i_h,\arityof{\bpred_{i_h}}}}$ be the unique equivalence class
  of $\Xi$ that contains it. For each $h \in \interv{1}{k}$, let
  $I_h\subseteq\interv{1}{\arityof{\bpred_{i_h}}}$ be the equivalence
  class of $\xi_{i_h}$ used to define $X$ (\ref{it3:mcsid}).

  Let $\bpred_{i_h}(x_{1}, \ldots, x_{\arityof{\bpred_{i_h}}})
  \step{\csid}^* \exists y_{j_{h,1}} \ldots \exists y_{j_{h,m_h}} ~.~
  \phi_h$ be complete $\csid$-unfoldings, such that $\phi = \psi'_0 *
  \Bigstar\nolimits_{h=1}^k \phi_h\theta_h$, where $\theta_h \isdef [
    x_1/z_{i_h,1}, \ldots, x_{\arityof{\bpred_{i_h}}} /
    z_{i_h,\arityof{\bpred_{i_h}}} ]$, for each $h \in
  \interv{1}{k}$. Since $\astruc \models^\store \phi$, there exist
  structures $\astruc_0 = (\univ_0,\struc_0), \ldots,
  \astruc_k=(\univ_k,\struc_k)$ such that $\astruc_0 \models^\store
  \psi'_0$ and $\astruc_h \models^{\store} \phi_h\theta_h$, or
  equivalently, $\astruc_h \models^{\store\circ\theta^{-1}_h} \phi_h$
  for all $h \in \interv{1}{k}$. By the inductive hypothesis, we have
  $\reachof{\astruc_h}{\store\circ\theta_h^{-1}}{I_h} \mcsubstruc
  \astruc_h$, for all $h \in \interv{1}{k}$. Since $X$ is an
  equivalence class of $\Xi$, by point (\ref{it3:mcsid}) of the
  definition of $\csid$, we obtain that $\reachof{\astruc}{\store}{I}
  \mcsubstruc \astruc$.
\end{proof}

\skipnoindent Second, let $\bpred_0(x_1,\ldots,x_{\arityof{\bpred_0}})
\step{\asid}^* \exists y_1 \ldots \exists y_m ~.~ \phi'$ be the
complete $\asid$-unfolding obtained by replacing each rule of the form
(\ref{eq:mcsid-middle}) with its corresponding rule
(\ref{eq:asidrule}) in the above $\csid$-unfolding, such that $\phi'$
is a qpf formula and $y_1, \ldots, y_n \in \set{y_1, \ldots,
  y_m}$. Note that the latter can be assumed w.l.o.g., if necessary,
by a renaming of the quantified variables.

\begin{fact}\label{fact:csid-asid}
  There exists a store $\store'$, that is injective over $y_1, \ldots,
  y_m$ and agrees with $\store$ over $y_{1}, \ldots, y_{n}$, and a
  structure $\astruc'=(\univ, \struc')$, such that $\astruc'
  \models^{\store'} \phi'$ and $\reachof{\astruc}{\store}{I}
  \mcsubstruc \astruc'$, for each equivalence class $I\subseteq J_0$
  of $\xi_0$.
\end{fact}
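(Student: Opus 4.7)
The plan is to proceed by induction on the length of the complete $\csid$-unfolding $\bpred_0(x_1,\ldots,x_{\arityof{\bpred_0}})_{/\xi_0} \step{\csid}^* \exists y_1 \ldots \exists y_n ~.~ \phi$, mirroring the unfolding step by step in $\asid$ while carefully restoring the pieces that the $\csid$-construction dropped. The extra data that needs to be produced at each step is fourfold: (i) a store extension covering the freshly introduced existential variables of the $\asid$-rule that were absent from the $\csid$-rule, (ii) a substructure of $\univ$ realizing the dropped ``disconnected'' part $\psi''$ of the rule, (iii) substructures realizing the pruned predicate atoms $\bpred_i$ with $J_i=\emptyset$, and (iv) substructures realizing, for every surviving predicate atom $\bpred_{i_h}$, the equivalence classes of $\xi_{i_h}$ \emph{other than} the class $I_h$ used by the $\csid$-rule.

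In the induction step, assume the first rule of the $\csid$-unfolding is of the form (\ref{eq:mcsid-middle}), produced from an $\asid$-rule of form (\ref{eq:asidrule}) via the decomposition $\psi=\psi'_0*\psi''_0$, sets $\overline{J}_{i_h}\uplus J_{i_h}=\interv{1}{\arityof{\bpred_{i_h}}}$, and equivalence relations $\xi_{i_h}$, meeting conditions (\ref{it1:mcsid})--(\ref{it7:mcsid}). From $\astruc\models^\store\phi$ we obtain the decomposition $\astruc=\astruc_0\comp \astruc_{i_1}\comp\ldots\comp\astruc_{i_k}$ with $\astruc_0\models^\store\psi'_0$ and $\astruc_{i_h}\models^{\store\circ\theta_{i_h}^{-1}}\phi_h$ for the recursive unfolding $\bpred_{i_h}(x_1,\ldots)_{/\xi_{i_h}}\step{\csid}^*\exists \bar y.~\phi_h$. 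Applying the induction hypothesis to each $\bpred_{i_h}$, for every equivalence class $I'_h$ of $\xi_{i_h}$ (in particular for $I_h$), we get an extended store $\store'_h$ (injective over the full set of $\asid$-quantifiers) and a structure $\astruc'_{i_h}$ on $\univ$ satisfying the full $\asid$-expansion of $\bpred_{i_h}(x_1,\ldots,x_{\arityof{\bpred_{i_h}}})$, such that $\reachof{\astruc_{i_h}}{\store\circ\theta_{i_h}^{-1}}{I'_h}\mcsubstruc\astruc'_{i_h}$.

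For the missing pieces I would invoke all-satisfiability of $\asid$ for $\apred$ (\autoref{def:all-sat}), which guarantees that every qpf outcome of a complete $\asid$-unfolding of $\bpred_i$ (for any pruned $i$) and any outcome of $\psi''_0$ is satisfiable; since $\univ$ is infinite, we can realize each such piece on a fresh set of elements of $\univ$ disjoint from $\supp{\astruc}$ and pairwise disjoint, obtaining structures $\astruc''_0\models^{\store''_0}\psi''_0$, $\astruc'_i$ for each pruned $i$, plus, for each surviving $\bpred_{i_h}$, fresh structures realizing its $\xi_{i_h}$-classes other than $I_h$. The store $\store'$ is built by taking $\store$ on the $\csid$-variables, $\store''_0$ on the variables of $\psi''_0$, and fresh distinct values on all remaining $\asid$-quantified variables; injectivity on $y_1,\ldots,y_m$ is ensured because we chose fresh values in $\univ$ outside the existing supports. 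The target structure is then the composition $\astruc'\isdef \astruc_0\comp\astruc''_0\comp\Comp_{h=1}^k\astruc'_{i_h}\comp\Comp_{i\notin\{i_1,\ldots,i_k\}}\astruc'_i$, which is well-defined because the added structures live on pairwise disjoint fresh portions of $\univ$ (locally disjoint from each other and from $\astruc$), and by construction $\astruc'\models^{\store'}\phi'$.

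The main obstacle is verifying the maximality clause $\reachof{\astruc}{\store}{I}\mcsubstruc\astruc'$; this is also the only nontrivial part of the argument. Inclusion as a substructure follows from Fact~\ref{fact:csid-mc} together with the observation that the added pieces use elements fresh in $\univ$, hence cannot add tuples whose supports lie inside $\supp{\reachof{\astruc}{\store}{I}}$, so \autoref{def:substructure} is preserved. Connectedness transfers from Fact~\ref{fact:csid-mc}. For maximality, I would use condition (\ref{it6:mcsid}) of the $\csid$-construction: every existentially quantified $\asid$-variable of $\psi'_0$ or every $J_{i_h}$-parameter is $\Xi$-connected to some $x\in\fvof{J_0}{\bpred_0(x_1,\ldots,x_{\arityof{\bpred_0}})}$, which together with the analogous property propagated by the induction hypothesis inside each $\astruc'_{i_h}$, ensures that every element of $\astruc'$ connected by a path to $\reachof{\astruc}{\store}{I}$ already lies in $\reachof{\astruc}{\store}{I}$; the freshly added parts are isolated from this class by construction because they touch neither $\fvof{J_0}{\bpred_0(\ldots)}$ nor $\fvof{J_{i_h}}{\bpred_{i_h}(\ldots)}$ with indices inside $I_h$. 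Hence $\reachof{\astruc}{\store}{I}$ remains a maximally connected substructure of $\astruc'$, as required.
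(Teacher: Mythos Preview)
Your plan is workable but takes a markedly different route from the paper's proof, and one item in your outline is confused.

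The paper does \emph{not} build $\astruc'$ by induction on the $\csid$-unfolding. Instead it first fixes the store $\store'$ by extending $\store$ with fresh values on $y_{n+1},\ldots,y_m$, then invokes all-satisfiability \emph{globally}: since $\bpred_0$ occurs along some complete $\asid$-unfolding of $\apred$, the entire formula $(\exists y_1\ldots y_m.\,\phi')[x/z]*\phi''$ is satisfiable, and by isomorphism-closure one may force the witnessing store to agree with $\store'$. The structure $\astruc'$ is then simply read off this witness. The crux is a separate, very short induction (on the $\asid$-unfolding, using only condition~(\ref{it1:mcsid})) showing that $\phi'=\phi*\overline{\phi}$ with $\fv{\phi}\cap\fv{\overline{\phi}}=\emptyset$. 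Because $\store'$ is injective and agrees with $\store$ on $\fv{\phi}$, this forces $\astruc'=\astruc\comp\overline{\astruc}$ with $\supp{\struc}\cap\supp{\overline{\struc}}=\emptyset$, and maximality of $\reachof{\astruc}{\store}{I}$ in $\astruc'$ follows immediately from Fact~\ref{fact:csid-mc} plus the disjoint supports. No appeal to condition~(\ref{it6:mcsid}) is needed.

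Your inductive construction can be made to work, but it hides exactly this disjoint-variable fact inside the bookkeeping of ``fresh elements'', and your maximality argument via~(\ref{it6:mcsid}) is really a roundabout way of recovering it. Also, item~(iv) in your list is spurious: the $\csid$-atom $\bpred_{i_h}(\ldots)_{/\xi_{i_h}}$ already carries \emph{all} $\xi_{i_h}$-classes, and the induction hypothesis hands you a single $\astruc'_{i_h}$ in which every such class is a maximally connected substructure; there is nothing ``other than $I_h$'' left to realize freshly. Indeed, your own formula for $\astruc'$ later omits these pieces. Dropping~(iv) and replacing the maximality paragraph by the observation $\phi'=\phi*\overline{\phi}$ with disjoint free variables would make your argument both correct and essentially identical to the paper's.
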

\begin{proof} The store $\store'$ is defined as:
\begin{itemize}[label=$\triangleright$]
\item $\store'(y_i)=\store(y_i)$, for each $i \in \interv{1}{n}$,
\item $\store'(y_i)$ is chosen from $\univ\setminus\set{\store(y_1),
  \ldots, \store(y_n)}$ such that, moreover, $\store'(y_i) \neq
  \store'(y_j)$, for all $n+1 \le i < j \le m$.
\end{itemize}
Note that $\store'$ can be build, because $\univ$ is infinite.
Because of the assumption that each predicate defined by a rule from
$\asid$ occurs on some complete $\asid$-unfolding of $\apred$, there
exists a complete $\asid$-unfolding:
\[
\apred \step{\asid} \ldots \step{\asid}
\bpred_0(z_1,\ldots,z_{\arityof{\bpred_0}}) * \phi'' \step{\asid}
(\exists y_1 \ldots \exists y_m ~.~ \phi')[x_1/z_1, \ldots,
  x_{\arityof{\bpred_0}}/z_{\arityof{\bpred_0}}] * \phi''
\]
where $\phi''$ is a predicate-free formula, possibly containing
existential quantifiers. Since every complete $\asid$-unfolding of
$\apred$ yields a satisfiable formula, there exists a store $\store''$
that agrees with $\store'\circ[z_1/x_1, \ldots,
  z_{\arityof{\bpred_0}}/x_{\arityof{\bpred_0}}]$ over $y_1, \ldots,
y_m$ and a structure $\astruc''$, such that:
\[
\astruc'' \models^{\store''} (\exists y_1 \ldots \exists y_m ~.~
\phi')[x_1/z_1, \ldots, x_{\arityof{\bpred_0}}/z_{\arityof{\bpred_0}}]
* \phi''
\]
Note that, in the above construction, we have taken $\store''$ to
agree with $\store'$ over $y_1, \ldots, y_m$. This is possible because
there are no (dis-)equalities in $\asid$ and the set of models of a
qpf formula is closed under isomorphism-preserving renaming of
elements.

Let $\astruc'$ and $\astruc'''$ be structures such that
$\astruc''=\astruc' \comp \astruc'''$, $\astruc' \models^{\store''}
\phi'[ x_1 / z_1, \ldots, x_{\arityof{\bpred_0}} /
  z_{\arityof{\bpred_0}} ]$, or equivalently $\astruc'
\models^{\store'} \phi'$, and $\astruc''' \models^{\store''}
\phi''$. By induction on the length of the $\asid$-unfolding, relying
on by point (\ref{it1:mcsid}) of the definition of $\csid$, one can
prove that $\phi' = \phi * \overline{\phi}$, where $\overline{\phi}$
is a qpf formula, such that $\fv{\phi} \cap \fv{\overline{\phi}} =
\emptyset$. Since $\astruc' \models^{\store'} \phi'$, there exists a
structure $\overline{\astruc}=(\univ,\overline{\struc})$, such that
$\astruc'=\astruc \comp \overline{\astruc}$. Moreover, since $\store'$
is injective over $y_1, \ldots, y_m$, by construction, we obtain
$\supp{\struc} \cap \supp{\overline{\struc}} = \emptyset$. Let $I
\subseteq J_0$ be an equivalence class of $\xi_0$. By
\autoref{fact:csid-mc}, we have $\reachof{\astruc}{\store}{I}
\mcsubstruc \astruc$. Since $\astruc'=\astruc \comp
\overline{\astruc}$ and $\supp{\struc} \cap \supp{\overline{\struc}} =
\emptyset$, we obtain $\reachof{\astruc}{\store}{I} \mcsubstruc
\astruc'$.
\end{proof}

\skipnoindent The proof is completed as follows. Let $\astruc \in
\csem{\ppred}{\csid}$ be a canonical $\csid$-model of $\ppred$, i.e.,
there exists a complete $\csid$-unfolding $\ppred \step{\csid}^*
\exists y_1 \ldots \exists y_n ~.~ \phi$, where $\phi$ is a qpf
formula, and a store $\store$ injective over $y_1, \ldots, y_n$ such
that $\astruc \models^\store \phi$.  By the definition of $\csid$, the
first rule of this unfolding is of the form (\ref{eq:mcsid-begin}),
with a qpf formula $\psi_0$. Then there exist $\csid$-unfoldings
$\bpred_i(x_{1}, \ldots, x_{\arityof{\bpred_i}})_{/\xi_i}
\step{\csid}^* \exists y_{j_{i,1}} \ldots \exists y_{j_{i,m_i}} ~.~
\phi_i$, for some sets $J_i \subseteq \interv{1}{\arityof{\bpred_i}}$
and equivalence relations $\xi_i \subseteq J_i \times J_i$, for $i \in
\interv{1}{\ell}$, such that:
\[
\phi = \psi_0 * \Bigstar\nolimits_{i\in\interv{1}{\ell},J_i\neq\emptyset} \phi_i\theta_i
\]
where $\theta_i \isdef [x_1/z_{i,1}, \ldots,
  x_{\arityof{\bpred_i}}/z_{i,\arityof{\bpred_i}}]$, $i \in
\interv{1}{\ell}$. Let $\set{i_1, \ldots, i_k} \isdef \set{i \in
  \interv{1}{\ell} \mid J_i \neq \emptyset}$.  Then, there exist
structures $\astruc_0 \comp \ldots \comp \astruc_k = \astruc$, such
that $\astruc_0 \models^\store \psi_0$ and $\astruc_j \models^\store
\phi_{i_j}\theta_{i_j}$, for $i \in \interv{1}{\ell}$. By the
definition of $\csid$, there exists a complete $\asid$-unfolding:
\begin{align*}
  \bpred_0(x_1,\ldots,x_{\arityof{\bpred_0}})
  \step{\asid} & ~\exists y_1 \ldots \exists y_m ~.~
  \psi'_0 * \Bigstar\nolimits_{i=1}^\ell \bpred_i(z_{i,1},\ldots,z_{i,\arityof{\bpred_i}}) 
  \step{\asid} \ldots \\
  \step{\asid}^* & ~\exists y_1 \ldots \exists y_p ~.~
  \psi'_0 * \Bigstar\nolimits_{j=1}^k \phi'_{i_j}\theta_{i_j} ~*~
  \eta
\end{align*}
for qpf formul{\ae} $\psi'_0, \phi'_{i_1}, \ldots, \phi'_{i_k}$ and
predicate-free formula $\eta$. Consider the equivalence relation $\Xi$
over $\set{x_1,\ldots,x_{\arityof{\bpred_0}}} \cup \set{y_1, \ldots,
  y_m}$ defined as:
\[
\Xi \isdef \big(\connof{\psi_0} \cup \bigcup\nolimits_{j=1}^k
\xi_{i_j}(\bpred_{i_j}(z_{i_j,1},\ldots,z_{i_j,\arityof{\bpred_{i_j}}}))
\big)^=
\]
By point (\ref{it8:mcsid}), $\Xi$ has a single equivalence
class $X$ such that: \begin{itemize}[label=$\triangleright$]
\item $X \cap \set{x_{1},\ldots,x_{\arityof{\bpred_0}}} = \emptyset$, 
\item the sets $I_j \isdef \set{h \in
  \interv{1}{\arityof{\bpred_{i_j}}} \mid z_{i_j,h} \in X}$ are unions
  of equivalence classes of $\xi_{i_j}$, namely $I_j = I_{j,1} \uplus
  \ldots \uplus I_{j,q_j}$, where $I_{j,h}$ are equivalence classes of
  $\xi_i$, for all $j \in \interv{1}{k}$.
\end{itemize}
By \autoref{fact:csid-asid}, there exist a store $\store'$, that is
injective over $y_1, \ldots, y_p$ and agrees with $\store$ over $y_1,
\ldots, y_n$, and structures $\astruc'_{1}, \ldots, \astruc'_{k}$,
such that $\astruc'_{j} \models^{\store'} \phi_{j}$ and
$\reachof{\astruc_j}{\store'}{I_{j,h}} \mcsubstruc \astruc'_j$, for
all $j \in \interv{1}{k}$ and $h \in \interv{1}{q_j}$. We argue that
$\astruc_j = \Comp_{h=1}^{q_j} \reachof{\astruc_j}{\store'}{I_{j,h}}
\mcsubstruc \astruc'_j$. Moreover, since $\store'$ is injective over
$y_1, \ldots, y_n$, one can build a structure $\astruc'_0$, such that
$\astruc'_0 \models^{\store'} \psi'_0$. We define $\astruc' \isdef
\astruc'_0 \comp \Comp_{j=1}^k \astruc'_j$. Thus, we have $\astruc
\mcsubstruc \astruc'$ and we are left with showing that $\astruc'$ can
be embedded in a canonical $\asid$-model of $\apred$.

By the assumption that each predicate defined by $\asid$ occurs on
some complete $\asid$-unfolding of $\apred$, there exists another
complete $\asid$-unfolding:
\begin{align*}
  \apred \step{\asid} & \ldots \step{\asid}
  \bpred_0(z_{0,1},\ldots,z_{0,\arityof{\bpred_0}}) * \zeta \\
  \step{\asid} & ~(\exists y_1 \ldots \exists y_m ~.~
  \psi_0 * \Bigstar\nolimits_{i=1}^\ell \bpred_i(z_{i,1},\ldots,z_{i,\arityof{\bpred_i}}))
      [x_1/z_{0,1},\ldots,x_{\arityof{\bpred_0}}/z_{0,\arityof{\bpred_0}}] * \zeta \\
  \step{\asid} & \ldots \step{\asid} ~(\exists y_1 \ldots \exists y_n ~.~
  \psi'_0 * \Bigstar\nolimits_{i=1}^k \phi'_i\theta_i ~*~ \eta)
       [x_1/z_{0,1},\ldots,x_{\arityof{\bpred_0}}/z_{0,\arityof{\bpred_0}}] * \zeta
\end{align*}
for some predicate-free formula $\zeta\isdef \exists y_{n+1} \ldots
\exists y_p ~.~ \eta$, for some variables $y_{n+1}, \ldots, y_p$, such
that $\set{y_{n+1}, \ldots, y_p} \cap \set{y_1,\ldots,y_n} =
\emptyset$ and a qpf formula $\eta$. Since this latter
$\asid$-unfolding yields a satisfiable formula, there exists a
structure $\astruc''$ and a store $\store''$, injective over $y_1,
\ldots, y_p$, that agrees with $\store'$ over $y_1,\ldots,y_n$, such
that $\astruc'' \models^{\store''} \eta$. Then, $\astruc' \comp
\astruc'' \in \csem{\apred}{\asid}$ and, since $\astruc \mcsubstruc
\astruc'$, we obtain $\astruc \mcsubstruc \astruc' \comp \astruc''$,
leading to $\astruc \in \funsplit{\csem{\apred}{\asid}}$. \qed

\subsection{Proof of \autoref{lemma:sid-k-multiset-abstraction}}
\label{app:sid-k-multiset-abstraction}

Without loss of generality, we can consider that $\csid$ is
equality-free (\autoref{lemma:eq-free}) and all-satisfiable for
$\ppred$ (\autoref{lemma:all-sat}).

\skipnoindent ''$\kmcolabs{k}{(\csem{\ppred}{\csid})} \subseteq
\projrel{\kabssem{k}{\ppred}{\csid}}{3}$'' We prove the following,
more general, property:

\begin{quote}
  \emph{Let $\bpred_0(x_1,\ldots,x_{\arityof{\bpred_0}})
  \step{\csid}^* \exists y_1 \ldots \exists y_n ~.~ \phi$ be a
  complete $\csid$-unfolding such that $\phi$ is a qpf formula,
  $\store$ be a store injective over
  $\set{x_1,\ldots,x_{\arityof{\bpred_0}}} \cup \set{y_1,\ldots,y_n}$,
  $\astruc=(\univ,\struc)$ be a structure such that $\astruc
  \models^\store \phi$ and $D \subseteq \supp{\struc} \setminus
  \set{\store(x_1), \ldots, \store(x_{\arityof{\bpred_0}})}$ be a set
  such that $\cardof{D}\le k$. Then there exists
  $\tuple{\set{x_1,\ldots,x_{\arityof{\bpred_0}}},c,M} \in
  \kabssem{k}{\bpred_0}{\csid}$ such that
  $\funcol{\astruc}(\store(x_i)) = c(x_i)$, for all $i \in
  \interv{1}{\arityof{\bpred_0}}$ and $M=\mset{\funcol{\astruc}(u)
    \mid u \in D}$.}
\end{quote}

\noindent The proof is by induction on the length of the complete
$\csid$-unfolding. Assume w.l.o.g. that the first rule applied in the
unfolding is of the form (\ref{eq:asidrule}), with a qpf formula
$\psi_0$. Then, there exist structures $\astruc_0=(\univ_0,\struc_0),
\ldots, \astruc_\ell=(\univ_\ell,\struc_\ell)$, such
that: \begin{itemize}[label=$\triangleright$]
  \item $\astruc=\astruc_0 \comp \ldots \comp \astruc_\ell$,
  \item $\astruc_0 \models^\store \psi_0$, 
  \item there exists a complete $\csid$-unfolding $\bpred_i(z_{i,1},
    \ldots, z_{i,\arityof{\bpred_i}}) \step{\csid}^* \exists
    y_{j_{i,1}} \ldots \exists y_{j_{i,k_i}} ~.~ \phi_i$, where
    $j_{i,1}, \ldots, j_{i,k_i} \in \interv{1}{n}$ and $\phi_i$ is a
    subformula of $\phi$, such that $\astruc_i \models^\store \phi_i$,
    and the indices $j_{i,m}$ are pairwise distinct, for all $m \in
    \interv{1}{k_i}$, for all $i \in \interv{1}{\ell}$.
\end{itemize}
Let $\store_i$ be the store such that $\store_i(x_j)=\store(z_{i,j})$
for all $j \in \interv{1}{\arityof{\bpred_i}}$ and $\store_i$ agrees
with $\store$ everywhere else, for all $i \in \interv{1}{\ell}$. Then,
there exists a complete $\csid$-unfolding:
\[
 \bpred_i(x_1,\ldots,x_{\arityof{\bpred_i}}) \step{\csid}^* (\exists
 y_{j_{i,1}} \ldots \exists y_{j_{i,k_i}} ~.~ \phi_i)[z_{i,1}/x_1, \ldots,
   z_{i,\arityof{\bpred_i}}/x_{\arityof{\bpred_i}}] = \exists y_{j_{i,1}}
 \ldots \exists y_{j_{i,k_i}} ~.~ \psi_i
\]
such that $\astruc_i \models^{\store_i} \psi_i$, where $\psi_i$ is a
qpf formula, for all $i \in \interv{1}{\ell}$. We define the sets:
\[ \begin{array}{rcl}
  D_0 & \isdef & D \cap \big(\supp{\struc_0} \cup
  \bigcup\nolimits_{i=1}^\ell \set{\store(z_{i,1}), \ldots, \store(z_{i,\arityof{\bpred_i}})}\big) \\
  D_i & \isdef & (\supp{\struc_i} \cap D) \setminus
  \set{\store(z_{i,1}), \ldots, \store(z_{i,\arityof{\bpred_i}})}
  \text{, for each} i \in\interv{1}{\ell}
\end{array} \]
and prove the following fact:
\begin{fact}
  $D = D_0 \uplus D_1 \uplus \ldots \uplus D_\ell$
\end{fact}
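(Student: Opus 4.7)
The plan is to prove the two halves, coverage and pairwise disjointness, separately. The workhorses are: the composition identity $\supp{\struc}=\bigcup_{i=0}^\ell\supp{\struc_i}$, which follows at once from $\astruc=\astruc_0\comp\ldots\comp\astruc_\ell$; and the injectivity of the store $\store$ on $\set{x_1,\ldots,x_{\arityof{\bpred_0}}}\cup\set{y_1,\ldots,y_n}$, which is guaranteed because $\csid$ was assumed equality-free (hence canonical stores are injective) and the statement explicitly takes $\store$ injective.

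For coverage, I would pick any $u\in D$ and choose $i\in\interv{0}{\ell}$ with $u\in\supp{\struc_i}$. If $i=0$, then $u\in\supp{\struc_0}$ immediately places $u\in D_0$. If $i\ge 1$, I split on whether $u\in\set{\store(z_{i,1}),\ldots,\store(z_{i,\arityof{\bpred_i}})}$: in the positive case $u\in D_0$ by definition, and in the negative case $u\in D_i$ by definition.

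The crux of disjointness is that any element in the overlap of two distinct $\supp{\struc_k}$'s must arise from a shared free variable of the corresponding subformulas. Concretely, $\fv{\psi_0}\subseteq\set{x_1,\ldots,x_{\arityof{\bpred_0}}}\cup\set{y_1,\ldots,y_m}$, whereas $\fv{\phi_i}\subseteq\set{z_{i,1},\ldots,z_{i,\arityof{\bpred_i}}}\cup\set{y_{j_{i,1}},\ldots,y_{j_{i,k_i}}}$; by the standard variable-renaming convention for unfoldings and the hypothesis that the indices $j_{i,m}$ are pairwise distinct across $i$, the internal variables $y_{j_{i,\cdot}}$ are disjoint from $\set{y_1,\ldots,y_m}$ and from every $\set{y_{j_{j,\cdot}}}$ with $j\neq i$. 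Hence $\fv{\psi_0}\cap\fv{\phi_i}\subseteq\set{z_{i,1},\ldots,z_{i,\arityof{\bpred_i}}}$ and $\fv{\phi_i}\cap\fv{\phi_j}\subseteq\set{z_{i,1},\ldots,z_{i,\arityof{\bpred_i}}}\cap\set{z_{j,1},\ldots,z_{j,\arityof{\bpred_j}}}$. By injectivity of $\store$, any $u\in\supp{\struc_0}\cap\supp{\struc_i}$, respectively any $u\in\supp{\struc_i}\cap\supp{\struc_j}$, therefore equals $\store(z_{i,k})$ for some $k$. A case split now disposes of every intersection: for $D_i\cap D_j$ with $1\le i<j$ the common element would lie in $\set{\store(z_{i,1}),\ldots,\store(z_{i,\arityof{\bpred_i}})}$, contradicting the definition of $D_i$; for $D_0\cap D_i$ with $i\ge 1$ the element either lies in $\supp{\struc_0}\cap\supp{\struc_i}$ (same argument) or in $\set{\store(z_{k,1}),\ldots,\store(z_{k,\arityof{\bpred_k}})}$ for some $k\neq i$, and then $u\in\supp{\struc_i}$ once again forces $u\in\set{\store(z_{i,1}),\ldots,\store(z_{i,\arityof{\bpred_i}})}$, a contradiction.

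The only subtle point, and thus the main obstacle, is the shared-variable analysis: one has to make precise that, after variable renaming during unfolding and under the hypothesis on the indices $j_{i,\cdot}$, no $y$-variable introduced by unfolding $\bpred_i$ can accidentally occur free in $\psi_0$ or in another $\phi_j$. Once this invariant is spelled out, the rest reduces to bookkeeping, and injectivity of $\store$ mechanically translates variable-level statements into statements about supports.
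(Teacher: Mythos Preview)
Your proposal is correct and follows essentially the same approach as the paper. The paper's proof compresses your free-variable analysis into two stated inclusions---$\big(\supp{\struc_0}\cup\bigcup_{i}\{\store(z_{i,j})\}\big)\cap\supp{\struc_i}\subseteq\{\store(z_{i,j})\}$ and $\supp{\struc_i}\cap\supp{\struc_j}\subseteq\{\store(z_{i,j})\}\cap\{\store(z_{j,j'})\}$---justified in one line by injectivity of $\store$ over $y_1,\ldots,y_n$; your shared-variable argument is exactly what underlies those inclusions, spelled out in more detail.
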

\begin{proof} The sets $D_0, \ldots, D_\ell$ are pairwise disjoint, since:
  \[ \begin{array}{rcl}    
  D_0 & \subseteq & \supp{\struc_0} \cup \bigcup\nolimits_{i=1}^\ell
  \set{\store(z_{i,1}), \ldots, \store(z_{i,\arityof{\bpred_i}})} \\
  D_i & \subseteq & \supp{\struc_i} \setminus \set{\store(z_{i,1}), \ldots,
    \store(z_{i,\arityof{\bpred_i}})} \text{, for all } i \in \interv{1}{\ell}
  \end{array} \]
  and, moreover for all $1 \le i < j \le \ell$:
  \begin{itemize}[label=$\triangleright$]
      \item \(\big(\supp{\struc_0} \cup \bigcup\nolimits_{i=1}^\ell
        \set{\store(z_{i,1}), \ldots,
          \store(z_{i,\arityof{\bpred_i}})}\big) \cap \supp{\struc_i}
        \subseteq \set{\store(z_{i,1}), \ldots,
          \store(z_{i,\arityof{\bpred_i}})} \)
    \item
      \( \supp{\struc_i} \cap \supp{\struc_j} \subseteq
      \set{\store(z_{i,1}), \ldots, \store(z_{i,\arityof{\bpred_i}})} \cap
      \set{\store(z_{j,1}), \ldots, \store(z_{j,\arityof{\bpred_i}})} \)
  \end{itemize}
  because $\store$ is injective over $y_1, \ldots,
  y_n$. ``$\supseteq$'' We have $D_0 \uplus D_1 \uplus \ldots \uplus
  D_\ell \subseteq D$ because $D_i \subseteq D$, for all $i \in
  \interv{0}{\ell}$. ``$\subseteq$'' Let $u \in D$ be an element. By
  the choice of $D$, we have $u \in \supp{\struc} \setminus
  \set{\store(x_1), \ldots, \store(x_{\arityof{\bpred}})}$. Since
  $\astruc=\astruc_0 \comp \ldots \comp \astruc_\ell$, we have
  $\supp{\struc}=\bigcup\nolimits_{i=0}^\ell \supp{\struc_i}$, hence
  $u \in \supp{\struc_i}$, for some $i \in \interv{0}{\ell}$.  If $u
  \in D_0$ we are done. Otherwise, $u \not\in D_0$, hence $u \not\in
  \supp{\struc_0}$ and $u \in \supp{\struc_i}$, for some $i
  \in\interv{1}{\ell}$. Moreover, $u \not\in \set{\store(z_{i,1}),
    \ldots, \store(z_{i,\arityof{\bpred_i}})}$, for all $j \in
  \interv{1}{\ell}$, hence $u \in D_i$.
\end{proof}

\skipnoindent Back to the proof, since $\cardof{D_i} \le \cardof{D}
\le k$, for all $i \in \interv{1}{\ell}$, by the inductive hypothesis,
there exist $\tuple{\set{x_1,\ldots,x_{\arityof{\bpred_i}}},c_i,M_{i}}
\in \kabssem{k}{\bpred_i}{\csid}$, for $i \in \interv{1}{\ell}$, such
that: \begin{itemize}[label=$\triangleright$]
\item $\funcol{\astruc}(\store_i(x_j)) = c(x_j)$, for all $j \in
  \interv{1}{\arityof{\bpred_i}}$,
\item $M_{i} = \mset{\funcol{\astruc_i}(u) \mid u \in D_i}$.
\end{itemize}
Let $\tuple{\fv{\psi_0},c_0,\emptyset}\isdef\colorof{\psi_0}$ be a
color triple. Since $\astruc_0 \models^\store \psi_0$, we have
$\funcol{\astruc_0}(\store(x_j)) = c(x_j)$, for all $j \in
\interv{1}{\arityof{\bpred_0}}$. By definition, there exists a
constraint of the form (\ref{eq:asidconstraint}) for the above rule
(\ref{eq:asidrule}). We prove that the $\kcomp$-composition from the
right-hand side of the constraint is defined. Suppose, for a
contradiction, that $c_i(x) \cap c_j(x) \neq \emptyset$, for some $x
\in \set{z_{i,1},\ldots,z_{i,\arityof{\bpred_i}}} \cap
\set{z_{j,1},\ldots,z_{j,\arityof{\bpred_j}}}$ and $1 \le i < j \le
\ell$. Then $\funcol{\astruc_i}(x) \cap \funcol{\astruc_j}(x) \neq
\emptyset$, contradicting the fact that $\astruc_i \comp \astruc_j$ is
defined. The same reasoning applies if $c_0(x) \cap c_i(x) \neq
\emptyset$, for some $x \in \fv{\psi} \cap
\set{z_{i,1},\ldots,z_{i,\arityof{\bpred_i}}}$ and $i \in
\interv{1}{\ell}$. Then, there exists a color triple:
\[
  \tuple{X',c',M'}\in
  \tuple{\fv{\psi_0},c_0,\emptyset}
  \kcomp \Kcomp\nolimits_{i\in\interv{1}{\ell}}
  \tuple{\set{x_1,\ldots,x_{\arityof{\bpred_i}}},c_i,M_i}[x_1/z_{i,1},\ldots,x_{\arityof{\bpred_i}}/z_{i,\arityof{\bpred_i}}]
\]
W.l.o.g. we can chose the tuple such that $M' = M_1 \cup \ldots \cup
M_\ell$. This choice is possible since $\cardof{M'} = \sum_{i=1}^\ell
\cardof{M_i} = \sum_{i=1}^\ell \cardof{D_i} \le \cardof{D} \le k$. Let
$\tuple{X,c,M}\in\kproj{\tuple{X',c',M'}}{\set{x_1,\ldots,x_{\arityof{\bpred_0}}}}$
be such that $M = M' \cup \mset{\funcol{\astruc_0}(u) \mid u \in
  D_0}$. This choice is possible, since $\cardof{M} \le
\sum_{i=0}^\ell \cardof{D_i} \le k$. We prove the points of the
statement: \begin{itemize}[label=$\triangleright$]
\item Let $i \in \interv{1}{\arityof{\bpred_0}}$ be an index. By the
  definition of the $\kcomp$-composition, we have:
  \[
   \funcol{\astruc}(\store(x_i)) = \funcol{\astruc_0 \comp \ldots
     \comp \astruc_\ell}(\store(x_i)) = \bigcup\nolimits_{j=0}^\ell
   \funcol{\astruc_j}(\store(x_i)) = \biguplus\nolimits_{j=0}^\ell c_j(x_i)
    = c'(x_i) = c(x_i)
  \]
\item $M = \bigcup\nolimits_{i=0}^\ell \mset{\funcol{\astruc_i}(u) \mid u \in
  D_i} = \mset{\funcol{\astruc_0 \comp \ldots \comp \astruc_\ell}(u)
  \mid u \in \biguplus\nolimits_{i=0}^\ell D_i} = \mset{\funcol{\astruc}(u) \mid
  u \in D}$
\end{itemize}

\skipnoindent ''$\projrel{\kabssem{k}{\ppred}{\csid}}{3} \subseteq
\kmcolabs{k}{(\csem{\ppred}{\csid})}$'' We prove the following, more
general, property:

\begin{quote}
  \emph{Let $\tuple{\set{x_1, \ldots, x_{\arityof{\bpred_0}}},c,M} \in
  \kabssem{k}{\bpred_0}{\csid}$ be a color triple. Then there exists a
  complete $\csid$-unfolding
  $\bpred_0(x_1,\ldots,x_{\arityof{\bpred_0}}) \step{\csid}^* \exists
  y_1 \ldots \exists y_n ~.~ \phi$, whose steps belong to a complete
  $\csid$-unfolding of $\ppred$, such that $\phi$ is a qpf formula, a
  store $\store$ injective over $\set{x_1, \ldots,
    x_{\arityof{\bpred_0}}} \cup \set{y_1, \ldots, y_n}$, a structure
  $\astruc = (\univ,\struc)$ such that $\astruc \models^\store \phi$
  and $D \subseteq \supp{\struc} \setminus \set{\store(x_1), \ldots,
    \store(x_{\arityof{\bpred_0}})}$, $\cardof{D} \le k$, such that
  $\funcol{\astruc}(\store(x_i))=c(x_i)$, for all $i \in
  \interv{1}{\arityof{\bpred_0}}$ and $M = \mset{\funcol{\astruc}(u)
    \mid u \in D}$.}
\end{quote}

\noindent The proof is by induction on the length of the finite
fixpoint iteration that produced $\tuple{\set{x_1, \ldots,
    x_{\arityof{\bpred_0}}},c,M}$. Assume that the last step of the
iteration corresponds to a constraint of the form
(\ref{eq:asidconstraint}), with a qpf formula $\psi_0$.  By
definition, there exists a rule of the form (\ref{eq:asidrule}) in
$\csid$, with the same qpf formula $\psi_0$. Then $\psi_0$ is
satisfiable, because each $\csid$-unfolding of $\ppred$ yields a
satisfiable formula. Then there exists a color triple:
\[
  \tuple{X',c',M'} \in \colorof{\psi_0} \kcomp
  \Kcomp\nolimits_{i\in\interv{1}{\ell}} \kabssem{k}{\bpred_i}{\csid}
        [x_1/z_{i,1},\ldots,x_{\arityof{\bpred_i}}/z_{i,\arityof{\bpred_i}}]
\]
such that
\[ \begin{array}{rcl}
  \tuple{\set{x_1, \ldots, x_{\arityof{\bpred_0}}},c,M} & \in &
  \kproj{\tuple{X',c',M'}}{\set{x_1, \ldots, x_{\arityof{\bpred_0}}}} \\
  M & \subseteq & M' \cup \set{c'(x) \mid x \in X'\setminus\set{x_1, \ldots, x_{\arityof{\bpred_0}}}}
  \end{array} \]
Then, there exist
\[ \begin{array}{rcl}
  \tuple{\fv{\psi_0}, c_0, \emptyset} & \isdef & \colorof{\psi_0} \\
  \tuple{\set{z_{i,1},\ldots,z_{i,\arityof{\bpred_i}}},c'_i,M_i} & \in &
\kabssem{k}{\bpred_i}{\csid}[x_1/z_{i,1},\ldots,x_{\arityof{\bpred_i}}/z_{i,\arityof{\bpred_i}}]
\text{, for all } i \in \interv{1}{\ell} \end{array} \]
such that
\[\tuple{X',c',M'} \in
\tuple{\fv{\psi_0}, c_0, \emptyset} \kcomp
\Kcomp\nolimits_{i\in\interv{1}{\ell}}
\tuple{\set{z_{i,1}, \ldots, z_{i,\arityof{\bpred_i}}},c'_i,M_i}\]
Hence, there exist
$\tuple{\set{x_1, \ldots, x_{\arityof{\bpred_i}}},c_i,M_i}\in\kabssem{k}{\bpred_i}{\csid}$
such that
$c'_i=c_i\circ[x_1/z_{i,1},\ldots,x_{\arityof{\bpred_i}}/z_{i,\arityof{\bpred_i}}]$,
for all $i \in \interv{1}{\ell}$. By the inductive hypothesis, for all
$i \in \interv{1}{\ell}$, there
exist: \begin{itemize}[label=$\triangleright$]
\item a complete unfolding $\bpred_i(x_1, \ldots,
  x_{\arityof{\bpred_i}}) \step{\csid}^* \exists y_{j_{i,1}} \ldots
  \exists y_{j_{i,k_i}} ~.~ \psi_i$ such that $\psi_i$ is a qpf
  formula. By applying an $\alpha$-renaming, if necessary, we assume
  w.l.o.g. that the variables $y_{j_{1,1}} \ldots y_{j_{\ell,k_\ell}}$
  are pairwise distinct and, moreover, distinct from $x_1, \ldots,
  x_{\arityof{\bpred_0}}$.
\item a store $\store_i$ that is injective over $\set{x_1, \ldots,
  x_{\arityof{\bpred_i}}} \cup \set{y_{j_{i,1}}, \ldots,
  y_{j_{i,k_i}}}$. We assume w.l.o.g. that
  $\store_i(x_j)=\store_k(x_m)$ iff $z_{i,j}$ and $z_{k,m}$ are the
  same variable in the rule (\ref{eq:asidrule}), for all $1 \le i < k
  \le \ell$, $j \in \interv{1}{\arityof{\bpred_i}}$ and $m \in
  \interv{1}{\arityof{\bpred_k}}$. Note that this assumption does not
  contradict the fact that $\store_i$ is injective over $\set{x_1,
    \ldots, x_{\arityof{\bpred_i}}} \cup \set{y_{j_{i,1}}, \ldots,
    y_{j_{i,k_i}}}$.
\item a structure $\astruc_i = (\univ_i,\struc_i)$ such that
  $\astruc_i \models^{\store_i} \psi_i$. We assume w.l.o.g. that
  $\supp{\struc_i} \cap \supp{\struc_j} \subseteq \set{\store_i(x_1),
  \ldots, \store_i(x_{\arityof{\bpred_i}})} \cap \set{\store_j(x_1),
  \ldots, \store_j(x_{\arityof{\bpred_j}})}$. Note that this is
  possible by the assumption that $\csid$ is equality-free.
\item a set $D_i \subseteq \supp{\struc_i} \setminus
  \set{\store_i(x_1), \ldots, \store_i(x_{\arityof{\bpred_i}})}$, such
  that $\cardof{D_i} \le k$,
  $\funcol{\astruc_i}(\store_i(x_j))=c_i(x_j)$, for all $j \in
  \interv{1}{\arityof{\bpred_i}}$ and $M_i =
  \mset{\funcol{\astruc_i}(u) \mid u \in D_i}$.
\end{itemize}
We prove the points of the statement. Let $\theta_i$ be the
substitution $[x_1/z_{i,1}, \ldots,
  x_{\arityof{\bpred_i}}/z_{i,\arityof{\bpred_i}}]$, for each $i \in
\interv{1}{\ell}$, where $\bpred_i(z_{i,1}, \ldots,
z_{i,\arityof{\bpred_i}})$ is a predicate atom that occurs on the
right-hand side of the rule (\ref{eq:asidrule}). A complete
$\csid$-unfolding $\bpred_0(x_1,\ldots,x_{\arityof{\bpred_0}})
\step{\csid}^* \exists y_1 \ldots \exists y_n ~.~ \psi$ is built from
the rule (\ref{eq:asidrule}) above, with qpf formula $\psi_0$,
followed by $\bpred_i(x_1, \ldots, x_{\arityof{\bpred_i}})\theta_i
\step{\csid}^* \exists y_{j_{i,1}} \ldots \exists y_{j_{i,k_i}} ~.~
\psi_i\theta_i$, for all $i \in \interv{1}{\ell}$. Hence $\psi =
\psi_0 * \Bigstar\nolimits_{i=1}^\ell \psi_i\theta_i$ modulo a
reordering of atoms. Let $\store'_i\isdef\store_i \circ \theta_i$ and
define the store $\store$ as
follows: \begin{itemize}[label=$\triangleright$]
\item $\store(z)\isdef\store'_i(z)$, for each each $z \in
  \fv{\psi_i\theta_i}$,
\item $\store(z) \not\in \bigcup\nolimits_{i=1}^\ell
  \store'_i(\fv{\psi_i\theta_i})$, for each variable $z \in
  \fv{\psi_0} \setminus \bigcup\nolimits_{i=1}^\ell
  \fv{\psi_i\theta_i}$ such that, moreover, $\store$ is injective over
  $\fv{\psi_0}$. Note that this is possible because we assumed $\csid$
  to be equality-free.
\end{itemize}
Then, we consider a structure $\astruc_0=(\univ_0,\struc_0)$ such
that: \begin{itemize}[label=$\triangleright$]
\item $\astruc_0 \models^\store \psi_0$, and
\item $\supp{\struc_0}\cap\supp{\struc_i} \subseteq
  \store(\fv{\psi_0}) \cap \store(\fv{\psi_i\theta_i})$, for all $i
  \in \interv{1}{\ell}$.
\end{itemize}
Since $\psi_0$ is satisfiable, such a structure exists and we can
consider w.l.o.g. that it satisfies the above conditions, because
$\csid$ is equality-free. It is easy to check that the structures
$\astruc_0, \ldots, \astruc_\ell$ are pairwise locally disjoint, hence
$\astruc=(\univ,\struc)\isdef\astruc_0 \comp \ldots \comp
\astruc_\ell$ is defined. Moreover, we have $\astruc\models^\store
\psi$, because $\psi=\psi_0 * \Bigstar\nolimits_{i=1}^\ell
\psi_i\theta_i$, $\astruc_0 \models^\store \psi_0$ and $\astruc_i
\models^\store \psi_i\theta_i$, for all $i \in
\interv{1}{\ell}$. Further, for all $j \in
\interv{1}{\arityof{\bpred_0}}$, we have:
\[
    \funcol{\astruc}(\store(x_j)) =
    \funcol{\astruc_0 \comp \ldots \comp \astruc_\ell}(\store(x_i)) =
    \biguplus\nolimits_{i=0}^\ell c_i(x_j) = c'(x_j) = c(x_j)
\]
We consider the set $D \isdef \set{u \in \supp{\struc} \mid
  \funcol{\astruc}(u) \in M}$. Suppose, for a contradiction, that
$\store(x_i)\in D$, for some $i \in \interv{1}{\arityof{\bpred_0}}$.
Then $\funcol{\astruc}(\store(x_i))\in M$, hence $c(x_i)\in M$. Since
$M \subseteq M' \cup \set{c'(x) \mid x \in X' \setminus
  \set{x_1,\ldots,x_{\arityof{\bpred_0}}}}$, we must have $c(x_i) \in
M' \subseteq \bigcup\nolimits_{j=1}^\ell M_j$ and let $j \in
\interv{1}{\ell}$ be such that $c(x_i) \in
M_j=\mset{\funcol{\astruc_j} \mid u \in D_j}$, by the inductive
hypothesis. Then there exists $k \in \interv{1}{\arityof{\bpred_j}}$
such that $c(x_i)=c_j(x_k)=\funcol{\astruc_i}(\store_j(x_k))$, thus
$\store_j(x_k) \in D_j \subseteq \supp{\struc_j} \setminus
\set{\store_j(x_1), \ldots, \store_j(x_{\arityof{\bpred_j}})}$,
contradiction. We obtained $D \subseteq \supp{\struc} \setminus
\set{\store(x_1), \ldots, \store(x_{\arityof{\bpred_0}})}$ and are
left with proving that $M = \mset{\funcol{\astruc}(u) \mid u \in
  D}$. ``$\supseteq$'' Immediate, by the definition of
$D$. ``$\subseteq$'' Let $C \in M$ be a color. Then either one of the
following holds: \begin{itemize}[label=$\triangleright$]
\item $C = \set{\arel \in \relations \mid \arel(z,\ldots,z) \text{
    occurs in } \psi_0} \cup \bigcup\nolimits_{i=1}^\ell c_i(z)$, for
  some $z \in \fv{\psi_0}$: in this case, $C =
  \funcol{\astruc}(\store(z))$ and $\store(z) \in \supp{\struc}$,
  hence $\store(z)\in D$.
\item $C \in M_i$, for some $i \in \interv{1}{\ell}$: in this case,
  $C=\funcol{\astruc}(u)=\funcol{\astruc_i}(u)$, for some $u \in D_i$,
  by the inductive hypothesis. Then $u \in \supp{\struc_i} \subseteq
  \supp{\struc}$, hence $u \in D$. \qed
\end{itemize}


\section{Proofs from \autoref{sec:general-tb}}
\subsection{Proof of \autoref{lemma:one-transitions}}
\label{app:one-transitions}

Assume w.l.o.g. that $\mathcal{A}$ is rooted and let
$\graphof{\mathcal{A}} = (\nodes,\edges,S_0)$ be the SCC graph of
$\mathcal{A}$. By \autoref{def:choice-free},
$\graphof{\mathcal{A}}$ is a tree and, moreover, $S_0 =
\set{\initstate}$, because $\mathcal{A}$ is rooted. Let $\Lambda :
\nodes\cup\trans\rightarrow\set{1,\infty}$ be the labeling from
\autoref{def:choice-free}. For every SCC $S \in
\nodes\setminus\set{S_0}$, let $\entryof{S}$ be the unique state $q$
such that $\set{q} = \post{\tau}\cap S$, where $\set{\tau}=\pre{S}$,
by point (\ref{it1:def:choice-free}) of
\autoref{def:choice-free}, and $\entryof{S_0}\isdef
\initstate$.  Moreover, each linear SCC $S\in\nodes$ such that
$\Lambda(S)=1$ has a unique transition $\tau$, such that
$\post{S}=\set{\tau}$, by point (\ref{it21:def:choice-free}) of
\autoref{def:choice-free}. We prove first an invariant of
$1$-labeled linear SCCs:

\begin{fact}\label{fact:one-inv}
  Let $p \in \dom{\arun}$ be a position, such that $\arun(p)\in S$,
  for a linear SCC $S \in \nodes$, such that $\Lambda(S)=1$. Then
  there exists a descendant $p'\in\dom{\arun}$ of $p$, such that
  $\arun(p')=s_0$, $t(p')=\beta$ and $\arun(p'i)=s_i$, for all
  $i\in\interv{1}{k}$, where $\post{S}=\set{s_0 \arrow{\beta}{}
    (s_1,\ldots,s_k)}$.
\end{fact}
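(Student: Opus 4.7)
The plan is to prove the fact by induction on the size $|\dom{\arun|_p}|$ of the finite subtree of $\arun$ rooted at $p$, exploiting the fact that a $1$-labeled linear SCC has a unique exit transition.

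First, I would observe that because $S$ is linear and $\Lambda(S)=1$, point (\ref{it21:def:choice-free}) of \autoref{def:choice-free} guarantees $\cardof{\post{S}}=1$. Hence the transition $s_0\arrow{\beta}{}(s_1,\ldots,s_k)$ is the \emph{unique} transition in $\trans$ whose source is in $S$ and whose targets all lie outside $S$. Moreover, because $S$ is linear, every transition $\tau\in\prepost{S}$ satisfies $\cardof{\post{\tau}\cap S}=1$ (the inequality from linearity, combined with $\tau\in\prepost{S}$ meaning $\post{\tau}\cap S\neq\emptyset$).

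Next, for the induction step, consider the transition $\tau\in\trans$ applied at position $p$, namely $\arun(p)\arrow{t(p)}{}(\arun(p1),\ldots,\arun(pn))$. Since $\arun(p)\in S$, we have $\pre{\tau}\in S$, so either $\tau\in\post{S}$ or $\tau\in\prepost{S}$. In the first case, by uniqueness of the exit transition, $\tau$ coincides with $s_0\arrow{\beta}{}(s_1,\ldots,s_k)$, so $\arun(p)=s_0$, $t(p)=\beta$ and $\arun(pi)=s_i$ for all $i\in\interv{1}{k}$; taking $p'=p$ completes the case. In the second case, linearity provides a unique index $j\in\interv{1}{n}$ with $\arun(pj)\in S$, and the subtree $\arun|_{pj}$ is strictly smaller than $\arun|_p$; the inductive hypothesis applied at position $pj$ yields a descendant $p'$ of $pj$ (hence of $p$) with the required properties.

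The base case is subsumed: if $p$ is at the frontier of $\arun$, then $\tau$ has rank $0$, so $\post{\tau}=\emptyset\subseteq\overline{S}$, which forces $\tau\in\post{S}$ and we conclude as in the first case above. The induction terminates because $\dom{\arun}$ is finite. I do not foresee any genuine obstacle: the whole argument is driven by the structural constraints on linear $1$-SCCs given by \autoref{def:choice-free}, and the only point requiring care is the use of linearity to guarantee that at each internal step of the descent there is a unique child whose label stays in $S$, so that the recursive call is well-defined.
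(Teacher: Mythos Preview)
Your proposal is correct and essentially the same as the paper's argument, which proceeds by contradiction: if the exit transition never occurs below $p$, one keeps following a child remaining in $S$ (always available since the transition then lies in $\prepost{S}$) until reaching the frontier, where the rank-$0$ transition is forced into $\post{S}$, contradicting the assumption. Your direct induction is simply the constructive counterpart of this descent; as a minor remark, the \emph{uniqueness} of the child in $S$ that you derive from linearity is correct but not actually needed---existence alone suffices for the recursion.
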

\begin{proof}
  Suppose, for a contradiction, that $s_0 \arrow{\beta}{}
  (s_1,\ldots,s_k)$ never occurs below $p$ in $\arun$. Then every
  transition that occurs at some position below $p$ in $\arun$ must be
  from $\prepost{S}$. This, however, cannot be the case for a
  transition $\arun(p') \arrow{t(p')}{} ()$, such that
  $p'\in\frof{\arun}$. Since, moreover, $\arun$ is an accepting run,
  such a transition must occur on the frontier of $\arun$.
\end{proof}

The following facts prove the existence and uniqueness of a position
labeled with the entry state of each $1$-labeled linear SCC:

\begin{fact}\label{fact:one-exists}
  For each SCC $S\in\nodes$, such that $\Lambda(S)=1$, there exists a
  position $p\in\dom{\arun}$, such that $\arun(p)=\entryof{S}$.
\end{fact}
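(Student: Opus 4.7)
The plan is to prove Fact \ref{fact:one-exists} by induction on the depth of the SCC $S$ in the SCC-tree $\graphof{\mathcal{A}} = (\nodes,\edges,S_0)$.

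For the base case $S = S_0$, since $\mathcal{A}$ is rooted we have $S_0 = \set{\initstate}$, so $\entryof{S_0} = \initstate$, and because $\arun$ is accepting we have $\arun(\epsilon) = \initstate$, so $p = \epsilon$ works.

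For the induction step, I would take $S \neq S_0$ with $\Lambda(S) = 1$. By condition (\ref{it23:def:choice-free}) of \autoref{def:choice-free} there exists $\tau \in \trans$ with $\Lambda(\tau) = 1$ and $\pre{S} = \set{\tau}$. By condition (\ref{it22:def:choice-free}), $\tau \in \post{S'}$ for some linear SCC $S'$ with $\Lambda(S') = 1$. This $S'$ is necessarily the parent of $S$ in the SCC-tree (since the edge $(S',S) \in \edges$ is witnessed by $\tau$). Moreover, by condition (\ref{it21:def:choice-free}), $\post{S'} = \set{\tau}$ is a singleton. By the induction hypothesis applied to $S'$, there exists $p' \in \dom{\arun}$ such that $\arun(p') = \entryof{S'} \in S'$. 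Since $S'$ is linear with $\Lambda(S') = 1$ and $\arun(p')\in S'$, \autoref{fact:one-inv} yields a descendant $p'' \in \dom{\arun}$ of $p'$ where the unique outgoing transition $\tau : s_0 \arrow{\beta}{} (s_1,\ldots,s_k)$ of $S'$ occurs, i.e.\ $\arun(p'')=s_0$ and $\arun(p''i)=s_i$ for all $i\in\interv{1}{k}$. By condition (\ref{it1:def:choice-free}), $\cardof{\post{\tau}\cap S} = 1$, so there is a unique index $i_0 \in \interv{1}{k}$ with $s_{i_0} \in S$, and necessarily $s_{i_0} = \entryof{S}$ (it is the unique state of $S$ reached by an incoming transition). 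Setting $p \isdef p'' i_0$ gives $\arun(p) = \entryof{S}$, as required.

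The argument is essentially routine tree-induction; the only subtle step is identifying the correct parent SCC $S'$ and invoking \autoref{fact:one-inv} to traverse $S'$ until its unique exit transition fires. The hard part — uniqueness of the position, which is really what \autoref{lemma:one-transitions} needs — is the complementary fact not treated here, and would require arguing that once $\arun$ leaves $S'$ via $\tau$ it cannot re-enter $S'$ (which follows from acyclicity of the SCC-tree), so $\tau$ fires exactly once on $\arun$.
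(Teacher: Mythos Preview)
Your proof is correct and follows essentially the same approach as the paper: induction on the depth of $S$ in the SCC-tree, using \autoref{fact:one-inv} in the induction step to traverse the parent SCC $S'$ until its unique outgoing transition $\tau$ fires. If anything, your argument is slightly more explicit than the paper's in justifying that the parent $S'$ is linear with $\Lambda(S')=1$ (you trace through conditions (\ref{it23:def:choice-free}) and (\ref{it22:def:choice-free}), whereas the paper just cites (\ref{it1:def:choice-free}) and (\ref{it23:def:choice-free}) and leaves the linearity of $S'$ implicit).
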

\begin{proof}
  Because $\graphof{\mathcal{A}}$ is a tree with root $S_0$, we have
  that $S$ is reachable from $S_0$ in $\graphof{\mathcal{A}}$ by a
  path of pairs from $\edges$. The proof goes by induction on the
  length $n\geq0$ of this path. For the base case $n=0$ (i.e.,
  $S=S_0$) we take $p=\epsilon$. For the inductive step, let $S'$ be
  the parent of $S$ in $\graphof{\mathcal{A}}$. By points
  (\ref{it1:def:choice-free}) and (\ref{it23:def:choice-free}) of
  \autoref{def:choice-free}, $\pre{S}=\set{\tau}$ for some
  $\tau\in\post{S'}\cap\trans^1$, such that
  $\set{\entryof{S}}=\post{\tau}\cap S$. By the inductive hypothesis,
  there exists a position $p'\in\dom{\arun}$, such that
  $\arun(p')=\entryof{S'}$. By \autoref{fact:one-inv}, there exists a
  descendant $p$ of $p'$, such that $\theta(p)=\entryof{S}$.
\end{proof}

\begin{fact}\label{fact:one-unique}
  For each SCC $S\in\nodes$, such that $\Lambda(S)=1$, there exists at
  most one position $p\in\dom{\arun}$, such that
  $\arun(p)=\entryof{S}$.
\end{fact}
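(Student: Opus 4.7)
The plan is to prove the claim by induction on the distance $n$ from the root $S_0$ to $S$ in the SCC tree $\graphof{\mathcal{A}}$. For the base case $n = 0$, i.e., $S = S_0$, we have $\entryof{S_0} = \initstate$ by definition. Since the proof of \autoref{lemma:one-transitions} begins by assuming $\mathcal{A}$ is rooted, we have $\initstate \notin \post{\trans}$; hence no transition in $\trans$ has $\initstate$ as a target, so the only position $p \in \dom{\arun}$ with $\arun(p) = \initstate$ is $p = \epsilon$, and uniqueness is immediate.

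For the inductive step, let $S \neq S_0$ with parent $S' \in \nodes$ in $\graphof{\mathcal{A}}$. By point~(\ref{it1:def:choice-free}) of \autoref{def:choice-free}, $\pre{S} = \set{\tau}$ for a unique transition $\tau$ and $\set{\entryof{S}} = \post{\tau} \cap S$. By points~(\ref{it22:def:choice-free}) and~(\ref{it23:def:choice-free}), $\tau \in \trans^1$ and $S'$ is linear with $\Lambda(S') = 1$; point~(\ref{it21:def:choice-free}) then gives $\post{S'} = \set{\tau}$. Assume for contradiction that $\arun(p_1) = \arun(p_2) = \entryof{S}$ at distinct positions $p_1, p_2 \in \dom{\arun}$. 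Each $p_i$ is non-root (as $\entryof{S} \neq \initstate$), and its parent position $r_i$ applies a transition whose target includes $\entryof{S}$; such a transition lies in $\pre{S} \cup \prepost{S}$, and point~(\ref{it1:def:choice-free}) ensures the only transition reaching $S$ from outside is $\tau$ itself. I would then trace each $p_i$ backwards along the ancestor chain of positions labeled by states of $S$ until reaching the first ancestor at which $\tau$ is applied from a position in $S'$; call these positions $r_1^\star$ and $r_2^\star$. So $\tau$ is applied at both $r_1^\star$ and $r_2^\star$, giving $\arun(r_1^\star) = \arun(r_2^\star) = \pre{\tau} \in S'$. Applying the inductive hypothesis on $S'$ together with \autoref{fact:one-inv} (to follow the spine from $\entryof{S'}$ down to the unique application of the exit transition $\tau$) pins down $r_1^\star = r_2^\star$; propagating this equality forward along the chain in $S$ yields $p_1 = p_2$, a contradiction.

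The main obstacle is the backward trace step: since $\entryof{S}$ may also be the target of internal transitions in $\prepost{S}$, nothing in the statement rules out multiple reappearances of $\entryof{S}$ along a single descending chain from the unique entry point. Handling this cleanly requires exploiting both the tree structure of $\arun$ and condition~(\ref{it1:def:choice-free}), which guarantees that the set of run-positions labeled by states of $S$ forms a subtree rooted at a single entry, so all occurrences of $\entryof{S}$ descend from one application of $\tau$; the inductive hypothesis on $S'$ then ensures that this application is itself unique.
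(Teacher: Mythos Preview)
Your last paragraph identifies the real obstacle but does not overcome it. You correctly argue, via condition~(\ref{it1:def:choice-free}) and the inductive hypothesis on $S'$, that the set of run-positions labelled by states of $S$ forms a subtree rooted at the single position where $\tau$ is applied. But ``all occurrences of $\entryof{S}$ descend from one application of $\tau$'' is strictly weaker than ``there is only one occurrence of $\entryof{S}$'': nothing prevents $\entryof{S}$ from also being a target of some transition in $\prepost{S}$, so it can reappear further down that same subtree. Your earlier line ``propagating this equality forward along the chain in $S$ yields $p_1 = p_2$'' has the same defect --- even when $S$ is linear and its positions form a single chain, $p_1$ and $p_2$ may simply sit at different depths on that chain.

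In fact the Fact as stated is false. Take $S = \{q_1, q_2\}$ with $\entryof{S} = q_1$, internal transitions $q_1 \arrow{a}{} (q_2, q_3)$ and $q_2 \arrow{b}{} (q_1, q_4)$ (each has exactly one target in $S$, so $S$ is linear), exit transition $q_1 \arrow{c}{} ()$, a root SCC $\{q_0\}$ with $q_0 \arrow{d}{} (q_1)$, and trivial terminating child SCCs $\{q_3\}$ and $\{q_4\}$. This automaton is choice-free with $\Lambda(S) = 1$, yet in the accepting run that fires $q_1 \arrow{a}{} (q_2,q_3)$ at position $1$ and $q_2 \arrow{b}{} (q_1,q_4)$ at position $11$, the state $\entryof{S} = q_1$ occurs at both positions $1$ and $111$. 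The paper's own proof is equally exposed here: its closing sentence asserts a violation of condition~(\ref{it1:def:choice-free}) without argument, and the same example refutes it. What the enclosing proof of \autoref{lemma:one-transitions} actually needs is only that the unique transition in $\post{S}$ fires at exactly one position, and your inductive framework does establish this, since the $S$-positions form a single chain whose last node is where the exit transition is taken.
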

\begin{proof}
  Suppose, for a contradiction, that there exist two positions $p_1,
  p_2 \in \dom{\arun}$, such that
  $\arun(p_1)=\arun(p_2)=\entryof{S}$. By induction of the length of
  $p_i$, we prove the existence of a sequence
  $S_{i,k_i},\tau_{i,k_i},\ldots,S_{i,1},\tau_{i,1},S_{i,0}=S_0$ such
  that $\arun(p_i)\in S_{i,k_i}$, $\pre{S_{i,j}} = \tau_{i,j}$ and
  $\set{\tau_{i,j}} = \post{S_{i,j-1}}$, for all $j\in\interv{1}{k_i}$
  and $i=1,2$. Since $p_1 \neq p_2$, there exists an SCC $S_{1,j_1} =
  S_{2,j_2}$ that violates condition (\ref{it1:def:choice-free}) of
  \autoref{def:choice-free}.
\end{proof}

Let $\tau : q_0 \arrow{\alpha}{} (q_1,\ldots,q_\ell)$ be a transition,
such that $\Lambda(\tau)=1$. By point (\ref{it22:def:choice-free}) of
\autoref{def:choice-free}, we have $\tau\in\post{S}$ for some linear
SCC $S\in\nodes$, such that $\Lambda(S)=1$. By
\autoref{fact:one-exists} and \autoref{fact:one-unique}, there exists
a unique position $p\in\dom{\arun}$, such that
$\arun(p)=\entryof{S}$. By \autoref{fact:one-inv}, there exists a
position $p' \in \dom{\arun}$, such that $\arun(p')=q_0$,
$t(p')=\alpha$ and $\arun(p'i)=q_i$, for all $i \in
\interv{1}{\ell}$. Suppose, for a contradiction, that this position is
not unique, hence there exists another position $p'' \in \dom{\arun}$,
such that $\arun(p'')=q_0$, $t(p'')=\alpha$ and $\arun(p''i)=q_i$, for
all $i \in \interv{1}{\ell}$. Since $\arun(p')=\arun(p'')=q_0 \in S$,
there exists a transition $\tau'$ with $\cardof{\post{\tau'}\cap
  S}\geq2$, in contradiction with the fact that $S$ is linear. This
concludes the proof \qed.

\subsection{Proof of \autoref{lemma:choice-free-property}}
\label{app:choice-free-property}

Let us consider the SCC graph $\graphof{\mathcal{A}} =
(\nodes,\edges,S_0)$ and the mapping $\Lambda : \nodes \cup \trans
\rightarrow \set{1,\infty}$ with the properties stated in
\autoref{def:choice-free} and let $q\in\pre{(\trans^\infty)}$
be a state. W.l.o.g., we consider that
$\langof{}{\mathcal{A}}\neq\emptyset$ and that $\mathcal{A}$ is
trim. Then $q$ is reachable from $S_0 = \set{\initstate}$, i.e., there
exists a partial run $\arun_1$ on $\mathcal{A}$ and a position $p_1$
such that $\arun_1(\epsilon) = \initstate$ and $\arun_1(p_1) = q$. Let
$p_2$ be the longest strict prefix of $p_1$ such that the transition
$\tau: \arun_1(p_2) \arrow{a}{} \tuple{\arun_1(p_2 1), \ldots,
  \arun_1(p_2 \ell)}$ is in $\trans^1$ for some $a\in\alphabet$ and
index $\ell$. This position $p_2$ exists, by
\autoref{def:choice-free}, because $S_0 = \set{\initstate}$ is
linear, $\Lambda(S_0) = 1$ by condition (\ref{it23:def:choice-free}),
$\cardof{\post{S_0}}=1$ by condition (\ref{it21:def:choice-free}), and
the only transition $\tau_0 \in \post{\set{\initstate}}$ is in
$\trans^1$ by condition (\ref{it22:def:choice-free}). This shows that
$\initstate \notin \pre{\trans^\infty}$ hence $q \neq \initstate$ and
$\tau_0$ is a transition in $\trans^1$ on the path from $\initstate$
to $q$ in $\arun_1$, with $\tau$ being the last one.

We decompose $p_1 = p_2 r p_3$ for some index $r\in\interv{1}{\ell}$
and position $p_3$ and define the partial run $\arun_2$ as
$\arun_2(\epsilon) \isdef \arun_1(p_2 r)$ and, for each $u\in\nat^*$
and $i\in\nat$ such that $p_2 rui \in \dom{\arun_1}$ and $p_2 ru$ is a
strict prefix of $p_1$, by $\arun_2(ui) \isdef \arun_1(p_2 r u)$.
Then $\arun_2$ starts from the state $q_0 \isdef \arun_1(p_2 r) \in
\post{\tau}$ and $p_3 \in \frof{\arun_2}$ gives the state
$\arun_2(p_3) = q$. Let $S\in\nodes$ be the SCC in
$\graphof{\mathcal{A}}$ such that $q_0\in S$.  Then $\Lambda(\tau)=1$
and $\tau \in \pre{S}$ (hence $\pre{S} = \set{\tau}$ by condition
(\ref{it1:def:choice-free}) of \autoref{def:choice-free}), thus
$\Lambda(S)=1$ by condition (\ref{it23:def:choice-free}) of
\autoref{def:choice-free}.

We distinguish three cases (see \autoref{fig:choice-free-property}
for an illustration):
\begin{itemize}[label=$\triangleright$]
\item If $S$ is not linear, there exists a transition $\tau'
  \in\prepost{S}$ such that $\cardof{\post{\tau'} \cap S} \geq 2$.
  Let $q' \isdef \pre{\tau'}$ and $q''$, $q'''$ be the states such
  that $\mset{q'', q'''} \subseteq \post{\tau'} \cap S$. Since $q_0,
  q', q'', q'''\in S$, we can construct a partial run $\arun_3 \in
  \runsof{\infty}{q_0}{\mathcal{A}}$ with transitions taken from
  $\prepost{S}$, which reaches $q'$ from $q_0$, then applies $\tau'$
  and reaches $q_0$ from both $q''$ and $q'''$. This gives
  $\arun_3(p_4) = \arun_3(p_5) = q_0$, for two distinct positions
  $p_4, p_5\in \frof{\arun_3}$. We define $\arun_0 \in
  \runsof{\infty}{q_0}{\mathcal{A}}$ as the partial run with domain
  $\dom{\arun_3} \cup \set{p_5 u \mid u\in\dom{\arun_2}}$, that
  extends $\arun_3$ by $\arun_0(p_5 u) \isdef \arun_2(u)$, for all
  $u\in\dom{\arun_2}$. Then $\arun_0$ satisfies point
  \ref{it1:lemma:choice-free-property} of the lemma because
  $\arun_0(p_4) = q_0$ and $\arun_0(p_5 p_3) = q$, with $p_4 \neq p_5
  p_3 \in \frof{\arun_0}$.
\item If $S$ is linear and $q\notin S$, there exists a unique position
  $p_6$ and a transition: \[\tau' : \arun_2(p_6) \arrow{\alpha}{}
  \tuple{\arun_2(p_6 1), \ldots, \arun_2(p_6 k)} \in \trans^\infty\]
  for some alphabet symbol $\alpha\in\alphabet$ and some index
  $k\in\nat$, such that $\arun_2(p_6) \in S$.  Moreover, there exists
  an index $r \in \interv{1}{k}$ such that $\arun_2(p_6 r) \notin S$
  and $p_6 r$ is a prefix of $p_3$. Then $\Lambda(\tau') = \infty$
  and, by condition (\ref{it22:def:choice-free}) of
  \autoref{def:choice-free}, we have $\tau' \notin \post{S}$,
  hence $\tau' \in \prepost{S}$ and $q'\isdef \arun_2(p_6 r') \in S$
  for another index $r' \in \interv{1}{k} \setminus \set{r}$. Then
  there exists a partial run $\arun_4 \in
  \runsof{\infty}{q'}{\mathcal{A}}$ such that $\arun_4(p_7) = q_0$ for
  some position $p_7\in\frof{\arun_4}$. We define the partial run
  $\arun_0 \in \runsof{\infty}{q_0}{\mathcal{A}}$ with domain
  $\dom{\arun_2} \cup \set{p_6 r' u \mid u\in\dom{\arun_4}}$, by
  extending $\arun_2$ with $\arun_0(p_6 r' u) \isdef \arun_4(u)$, for
  all $u\in\dom{\arun_4}$. Then $\arun_0$ satisfies point
  \ref{it1:lemma:choice-free-property} of the lemma because
  $\arun_0(p_6 r' p_7) = q_0$ and $\arun_0(p_3) = q$, with $p_6 r' p_7
  \neq p_3 \in \frof{\arun_0}$.
\item If $S$ is linear and $q\in S$, let $\arun \in
  \runsof{\infty}{q}{\mathcal{A}}$ be a partial run. Then $\post{S}$
  contains only one transition in $\trans^1$, thus for every position
  $u\in\dom{\arun} \setminus \frof{\arun}$, such that $\arun(u)\in S$,
  the transition $\arun(u) \arrow{\alpha}{} \tuple{\arun(u 1), \ldots,
    \arun(u k)}$ belongs to $\prepost{S}$. Then, there exists an index
  $i \in \interv{1}{k}$ such that $\arun(ui) \in S$, and we can find a
  path in $\arun$ which stays in $S$ and reaches the frontier, that is
  $q'\isdef\arun(p_8) \in S$, for some $p_8 \in \frof{\arun}$. Hence,
  there exists a partial run $\arun_5 \in
  \runsof{\infty}{q'}{\mathcal{A}}$ such that $\arun_5(p_9) = q_0$,
  for some position $p_9\in\frof{\arun_5}$. We now can extend $\arun$
  to some partial run $\arun' \in \runsof{\infty}{q}{\mathcal{A}}$
  with domain $\dom{\arun} \cup \set{p_8 u \mid u\in\dom{\arun_5}}$,
  as $\arun'(p_8 u) \isdef \arun_5(u)$ for all
  $u\in\dom{\arun_5}$. The partial run $\arun'$ satisfies point
  \ref{it2:lemma:choice-free-property} of the lemma, because
  $\arun'(p_8 p_9) = q_0$, with $p_8 p_9 \in \frof{\arun'}$. \qed
\end{itemize}

  \begin{figure}[htbp]
    \begin{center}
    \input{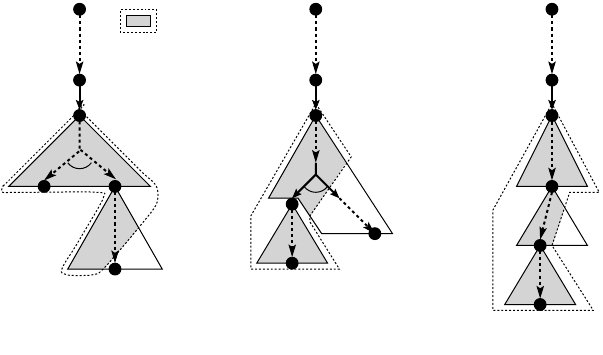_t}
    \caption{\label{fig:choice-free-property} The cases from the proof of
     \autoref{lemma:choice-free-property}}
    \end{center}
  \end{figure}

\subsection{Proof of \autoref{lemma:choice-free}}
\label{app:choice-free}

We assume w.l.o.g. that $\mathcal{A} =
(\alphabet,\states,\initstate,\trans)$ is rooted. Let
$\graphof{\mathcal{A}} = (\nodes,\edges)$ be the SCC graph of
$\mathcal{A}$, where $\nodes = \set{S_1,\ldots,S_M}$ is a topological
ordering of the SCCs i.e., if $(S_i,S_j)\in\edges$ then $i < j$, for
all $i,j \in \interv{1}{M}$. For each $i = 1,\ldots,M$, we iterate the
following transformation of $\mathcal{A}$:
\begin{itemize}[label=$\triangleright$]
\item let $\mathcal{S}_i \isdef \bigcup_{(S_i,S_j)\in\edges^*} S_j$ be
  the set of states from any SCC reachable from $S_i$ in
  $\graphof{\mathcal{A}}$,
\item let $k_i \isdef \sum_{\tau\in\pre{S_i}} \cardof{\post{\tau}\cap
  S_i}$ be the number of edges of $\graphof{\mathcal{A}}$ incoming to
  $S_i$,
\item create $k_i$ copies of the transitions $q_0 \arrow{a}{}
  (q_1,\ldots,q_\ell) \in \trans$ such that
  $\set{q_0,q_{i_1},\ldots,q_{i_j}} = \set{q_0,\ldots,q_\ell} \cap
  \mathcal{S}_i$ i.e., add a transition $(q_0,h) \arrow{a}{} (q_1,
  \ldots, (q_{i_1},h), \ldots, (q_{i_j},h), \ldots, q_\ell)$ for each
  $h \in \interv{1}{k_i}$,
\item connect these new transitions to the rest of the automaton by
  adequately changing the states $q\in\post{\tau}\cap \mathcal{S}_i$
  for $\tau\in\pre{\mathcal{S}_i}$ to their corresponding copies
  $(q,h)$, for all $h \in \interv{1}{k_i}$.
\end{itemize}
It is easy to check that the resulting automaton fulfills condition
(\ref{it1:def:choice-free}) of \autoref{def:choice-free} and
has the same language as $\mathcal{A}$, using
\autoref{lemma:refinement}. We can thus assume w.l.o.g. in the
following that $\graphof{\mathcal{A}}=(\nodes,\edges,S_1)$ is a tree
and let $S_1,\ldots,S_N$ be a topological ordering of its nodes. We
associate a variable $x_i$ (resp. $y_\tau$) ranging over
$\set{0,1,\infty}$ with each SCC $S_i\in\nodes$ (resp. transition
$\tau\in\trans$). Initially, the values of these variables are all
zero. We iterate over the finite sequence $S_1, \ldots, S_N$ as
follows. For each $i \in \interv{1}{N}$, we perform the following
assignments in this order:
\begin{enumerate}[(i)]
\item\label{iter1} let $x_{i} \isdef \left\{\begin{array}{ll}
  1, & \text{if } i=1 \\
  \sum_{\tau \in \pre{S_i}} y_\tau \cdot \cardof{\post{\tau} \cap S_i}, & \text{otherwise}
\end{array}\right.$
\item\label{iter2} for each $\tau \in \prepost{S_i}$,
  let $y_\tau\isdef\left\{\begin{array}{ll}
  \infty, \text{if } x_{i}>0 \\
  0, \text{otherwise}
  \end{array}\right.$
\item\label{iter3} if $x_{i} \in \set{0,\infty}$ or $S_i$ is nonlinear, for each
  $\tau \in \post{S_i}$, let $y_\tau \isdef \left\{\begin{array}{ll}
  \infty, \text{if } x_{i} > 0 \\ 0, \text{otherwise}
\end{array}\right.$
\item\label{iter4} else (i.e., $x_{i} = 1$ and $S_i$ is linear)
  chose for all $\set{y_\tau}_{\tau \in \post{S_i}}$ some values
  from $\set{0,1}$, such that $x_{i} = \sum_{\tau\in\post{S_i}}
  y_\tau$.
\end{enumerate}
Since, for each SCC $S_i \in\nodes$, there is at most one transition
$\tau\in\pre{S_i}$ and $\cardof{\post{\tau}\cap S_i} \le 1$, each
variable $x_i$ is assigned either $0$, $1$ or $\infty$ at
\ref{iter1}. Note that no variable is assigned twice in the above
iteration sequence, because every $x_i$ is assigned exactly once,
every $y_\tau$, for $\tau \in \pre{S_i}$ is assigned before $x_i$ and
every $y_\tau$, for $\tau \in \prepost{S_i} \cup \post{S_i}$ is
assigned after $x_i$. Furthermore, we have $\prepost{S_i} \cap
\prepost{S_j} = \emptyset$ and $\post{S_i} \cap \post{S_j} =
\emptyset$, for all $1 \le i < j \le N$, so that each $y_\tau$, for
$\tau \in \prepost{S_i} \cup \post{S_i}$, is assigned exactly
once. Moreover, since the choice at \ref{iter4} is finite, there are
finitely many outcomes of the above nondeterministic iteration, say
$(\vec{x}_1,\vec{y}_1), \ldots, (\vec{x}_\ell,\vec{y}_\ell)$, where
$\vec{x}_i = \tuple{\overline{x}_{i,j}}_{j\in\interv{1}{N}}$ and
$\vec{y}_i = \tuple{\overline{y}_{i,\tau}}_{\tau\in\trans}$. For each
$i \in \interv{1}{\ell}$, we define the automaton $\mathcal{A}_i =
(\alphabet,\states_i,\initstate,\trans_i)$, where $\states_i \isdef
\bigcup \set{S_j \mid \overline{x}_{i,j} > 0,~ j \in \interv{1}{N}}$
and $\trans_i \isdef \set{\tau \in \trans \mid \overline{y}_{i,\tau} >
  0}$.
\noindent We are left with proving the following facts:

\begin{fact}
  Each automaton $\mathcal{A}_i$ is choice-free, for $i \in
  \interv{1}{N}$.
\end{fact}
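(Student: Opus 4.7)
The plan is to read off the required choice-free labeling $\Lambda$ directly from the values produced by the iteration: for each SCC $S_j$ kept in $\mathcal{A}_i$, set $\Lambda(S_j) \isdef \overline{x}_{i,j}$, and for each transition $\tau \in \trans_i$, set $\Lambda(\tau) \isdef \overline{y}_{i,\tau}$. Note that both values lie in $\set{1,\infty}$, because $S_j$ (resp.\ $\tau$) is retained in $\mathcal{A}_i$ exactly when $\overline{x}_{i,j} > 0$ (resp.\ $\overline{y}_{i,\tau} > 0$), and the rules $(i)$--$(iv)$ produce no values outside $\set{0,1,\infty}$.

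First I would verify condition~(\ref{it1:def:choice-free}) of \autoref{def:choice-free}, namely that the SCC graph of $\mathcal{A}_i$ is a tree rooted at $S_1$ with $\pre{S}$ a singleton and $\cardof{\post{\tau}\cap S}=1$ for every non-root SCC $S$. The key observation is that the SCC graph of $\mathcal{A}$ is already a tree with root $S_1$ (this is ensured by the preliminary duplication transformation), so the SCCs of $\mathcal{A}_i$ form a subset of $\set{S_1,\dots,S_N}$ together with a subset of the original tree edges. I would then argue that this subset is closed under taking parents and contains the root: $\overline{x}_{i,1}=1$ by definition, and for $j>1$, since $\cardof{\pre{S_j}}\leq 1$ and $\cardof{\post{\tau}\cap S_j}=1$ for the unique $\tau\in\pre{S_j}$, the sum in step~$(i)$ collapses to $\overline{y}_{i,\tau}$; hence $\overline{x}_{i,j}>0$ iff $\overline{y}_{i,\tau}>0$, and whenever $\overline{y}_{i,\tau}>0$ the parent SCC of $S_j$ must already satisfy $\overline{x}_{i,\cdot}>0$ (because $\overline{y}_{i,\tau}$ is assigned a positive value only when processing an SCC with positive $x$ at steps $(ii)$--$(iv)$). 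This also shows that the same $\tau$ witnesses $\pre{S_j}=\set{\tau}$ in $\mathcal{A}_i$, giving condition~(\ref{it1:def:choice-free}).

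Next I would check condition~(\ref{it2:def:choice-free}) by verifying the three sub-clauses in turn. For~(\ref{it21:def:choice-free}): if $S_j$ is linear in $\mathcal{A}_i$ with $\Lambda(S_j)=1$, then $S_j$ is already linear in $\mathcal{A}$ (because step~$(ii)$ makes $\mathcal{A}_i$ retain all of $\prepost{S_j}$ whenever $S_j$ is kept), so step~$(iv)$ applies and assigns $y_\tau$ values in $\set{0,1}$ summing to $1$ to $\post{S_j}$; exactly one such $\tau$ survives in $\trans_i$, giving $\cardof{\post{S_j}\cap\trans_i}=1$. Sub-clause~(\ref{it22:def:choice-free}) follows by tracing when $y_\tau$ can be set to $1$: only step~$(iv)$ assigns the value $1$, and that step fires precisely for $\tau\in\post{S_j}$ with $S_j$ linear and $\overline{x}_{i,j}=1$. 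Sub-clause~(\ref{it23:def:choice-free}) is immediate from the collapsed form of step~$(i)$ discussed above, which gives $\overline{x}_{i,j}=1$ iff $j=1$ or the unique $\tau\in\pre{S_j}$ has $\overline{y}_{i,\tau}=1$.

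I expect the main obstacle to be purely bookkeeping: keeping straight the three stages at which each $y_\tau$ can be assigned, and carefully distinguishing $\pre{S_j}$, $\post{S_j}$ and $\prepost{S_j}$ with respect to $\mathcal{A}$ versus $\mathcal{A}_i$. The crucial structural fact making everything work is that the preliminary duplication ensures $\cardof{\pre{S_j}}\le 1$ and $\cardof{\post{\tau}\cap S_j}\le 1$, which is what allows the sum in step~$(i)$ to produce only values in $\set{0,1,\infty}$ and permits the case analysis in step~$(iv)$ to be well-defined.
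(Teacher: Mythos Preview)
Your proposal is correct and follows essentially the same approach as the paper: you define $\Lambda$ by $\Lambda(S_j)=\overline{x}_{i,j}$ and $\Lambda(\tau)=\overline{y}_{i,\tau}$, verify that $\graphof{\mathcal{A}_i}$ is a subtree of the already-tree-shaped $\graphof{\mathcal{A}}$ (your ``closed under parents'' argument is the contrapositive of the paper's trace-back-to-the-root argument), and then check the three sub-clauses of \autoref{def:choice-free}~(\ref{it2:def:choice-free}) by casing on which of the steps $(i)$--$(iv)$ assigned the relevant value. Your treatment is in fact slightly more careful than the paper's in one place: you make explicit that $S_j$ linear in $\mathcal{A}_i$ forces $S_j$ linear in $\mathcal{A}$ via step~$(ii)$, which the paper leaves implicit.
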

\begin{proof}
  We prove below the points of \autoref{def:choice-free}:

  \skipnoindent (\ref{it1:def:choice-free}) Let $S_{j_0}$ be an SCC of
  $\mathcal{A}$, such that $\overline{x}_{i,j_0} > 0$ i.e., $S_{j_0}$
  is a vertex in the SCC graph $\graphof{\mathcal{A}_i}$. Since the
  variable $x_{j_0}$ received its value $\overline{x}_{i,j_0}$ at
  \ref{iter1}, either $j_0=1$ (in which case $\overline{x}_{i,j_0}=1$)
  or there exists an incoming transition $\tau\in\pre{S_{j_0}}$ such
  that $\overline{y}_{i,\tau} > 0$. Let $S_{j_1}$, for some $j_1 <
  j_0$ be the SCC such that $\pre{\tau}\in S_{j_1}$. Then
  $\overline{x}_{i,j_1} > 0$. Repeating the same argument for $j_1$,
  we discover a maximal finite sequence $j_0, \ldots, j_k$ such that
  $(S_{j_{i+1}},S_{j_{i}}) \in \edges$, for all $i \in
  \interv{0}{k-1}$. Moreover, it must be the case that $j_k=1$, or
  else the sequence could be extended, contradicting its
  maximality. Since $\graphof{\mathcal{A}}$ is a tree, the path from
  $S_1$ to $S_{j_0}$ must be unique and, since the choice of $S_{j_0}$
  was arbitrary, $\graphof{\mathcal{A}_i} = (\nodes_i, \edges_i, S_1)$
  is a tree as well. The second point from condition
  (\ref{it1:def:choice-free}) holds already for
  $\graphof{\mathcal{A}}$, hence it must hold for
  $\graphof{\mathcal{A}_i}$.

  \skipnoindent (\ref{it2:def:choice-free}) The mapping $\Lambda_i :
  \nodes_i \cup \trans_i \rightarrow \set{1,\infty}$ is defined as
  $\Lambda_i(S_j)=\overline{x}_{i,j}$ for each $S_j \in \nodes_i$ and
  $\Lambda_i(\tau)=\overline{y}_{i,\tau}$ for each
  $\tau\in\trans_i$. We check that $\Lambda_i$ verifies the conditions
  (\ref{it2:def:choice-free}) from
  \autoref{def:choice-free}: \begin{itemize}[left=.6\parindent]
  \item[(\ref{it21:def:choice-free})] if $S_j\in\nodes_i$ is linear
    and $\overline{x}_{i,j}=1$ then the choice at step \ref{iter4}
    was $\overline{y}_{i,\tau}=1$, for exactly one transition $\tau
    \in \post{S_j}$.
  \item[(\ref{it22:def:choice-free})] $\overline{y}_{i,\tau}=1$ iff
    the value of $y_\tau$ was set at step \ref{iter4} and
    $\tau\in\post{S_j}$ is the unique outgoing transition for which a
    nonzero value was assigned to $y_\tau$, for a linear SCC $S_j$
    with $\overline{x}_{i,j}=1$.
  \item[(\ref{it23:def:choice-free})] $\overline{x}_{i,j}=1$ iff the
    value of $x_j$ was set at step \ref{iter1} and either $j=1$ or
    for all but one transitions $\tau\in\pre{S_j}$ we have
    $\overline{y}_{i,\tau}=1$. \qedhere
  \end{itemize}
\end{proof}

\begin{fact}
  $\langof{}{\mathcal{A}} = \bigcup_{i=1}^\ell \langof{}{\mathcal{A}_i}$.
\end{fact}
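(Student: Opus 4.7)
The plan is to prove the two inclusions separately. The direction $\bigcup_{i=1}^\ell \langof{}{\mathcal{A}_i} \subseteq \langof{}{\mathcal{A}}$ is immediate from the construction: every $\mathcal{A}_i$ has $\trans_i \subseteq \trans$, $\states_i \subseteq \states$, and the initial state $\initstate$ belongs to $\states_i$ because $S_1$ lies in $\states_i$ (by $\overline{x}_{i,1} = 1 > 0$ in every outcome); so any accepting run of $\mathcal{A}_i$ is already an accepting run of $\mathcal{A}$.

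For the converse, I would fix a tree $t \in \langof{}{\mathcal{A}}$ and an accepting run $\arun$ over $t$ and construct a single outcome $(\tilde{\vec{x}}, \tilde{\vec{y}})$ of the nondeterministic iteration such that $\arun$ is an accepting run of the corresponding $\mathcal{A}_i$. The construction mimics the iteration in topological order: for each $S_j$, I compute $\tilde{x}_j$ via rule \ref{iter1}, set $\tilde{y}_\tau$ for $\tau \in \prepost{S_j}$ via rule \ref{iter2}, and for $\tau \in \post{S_j}$ apply rule \ref{iter3} when it is applicable; in the remaining case $\tilde{x}_j = 1$ with $S_j$ linear, I resolve the choice at rule \ref{iter4} by picking $\tilde{y}_\tau = 1$ for the transition in $\post{S_j}$ actually used by $\arun$ and $\tilde{y}_\tau = 0$ for the others. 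By construction the resulting tuple is one of the enumerated outcomes $(\vec{x}_i, \vec{y}_i)$.

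The main obstacle — and the key technical lemma — is showing that the choice at rule \ref{iter4} is well-defined: when $S_j$ is linear and entered exactly once by $\arun$, the run uses exactly one transition from $\post{S_j}$. I would prove this by analysing the positions $\set{p \in \dom{\arun} \mid \arun(p) \in S_j}$. Since the unique entry to $S_j$ yields a single position $p_0$ with $\arun(p_0) = \entryof{S_j}$, and linearity of $S_j$ forces every transition in $\prepost{S_j}$ to have at most one target in $S_j$, these positions form a chain $p_0, p_1, \ldots, p_k$ in the run tree. All intermediate positions fire transitions in $\prepost{S_j}$, while the terminal position $p_k$ must fire a transition whose targets are all outside $S_j$, i.e., a single transition of $\post{S_j}$. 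This justifies the choice and simultaneously proves that $\hat{x}_j = 1$ implies $\arun$ uses exactly one $\post{S_j}$-transition.

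It then remains to check, by induction on the topological order, that every transition used by $\arun$ satisfies $\tilde{y}_\tau > 0$ and hence belongs to $\trans_i$. The inductive invariant is: if $\arun$ visits $S_j$ then $\tilde{x}_j > 0$. For $j = 1$ this holds by rule \ref{iter1}. For $j \neq 1$, if $\arun$ visits $S_j$ it must use the unique incoming transition $\tau \in \pre{S_j}$ with $\pre{\tau} \in S_{\mathit{par}}$ for the parent SCC $S_{\mathit{par}}$, which by induction satisfies $\tilde{x}_{\mathit{par}} > 0$; then either rule \ref{iter3} gives $\tilde{y}_\tau = \infty$, or the choice at rule \ref{iter4} was tailored precisely to the $\tau$ used by $\arun$, yielding $\tilde{y}_\tau = 1$. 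Either way $\tilde{x}_j = \tilde{y}_\tau > 0$, and the same case analysis, applied now to the transitions $\tau$ actually fired by $\arun$, shows $\tilde{y}_\tau > 0$. Hence $\arun$ is an accepting run of $\mathcal{A}_i$ and $t \in \langof{}{\mathcal{A}_i}$, completing the proof.
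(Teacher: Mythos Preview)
Your proof is correct and follows essentially the same approach as the paper: the $\supseteq$ direction is immediate from $\trans_i \subseteq \trans$, and for $\subseteq$ you fix an accepting run and manufacture an outcome of the nondeterministic iteration by resolving each choice at step~\ref{iter4} according to which $\post{S_j}$-transition the run actually fires. The paper organizes the induction over positions of the run (``reverse induction on the size of the subtree''), whereas you induct over SCCs in topological order; both unwind to the same argument. Your chain analysis for linear $1$-labeled SCCs is in fact more explicit than the paper's treatment about why the choice at step~\ref{iter4} is consistent---just be sure to fold the ``entered exactly once'' claim into your inductive invariant rather than assuming it up front, since it is precisely what the induction on the parent SCC delivers.
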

\begin{proof}
  ``$\supseteq$'' Since $\trans_i \subseteq \trans$, we have
  $\langof{}{\mathcal{A}_i} \subseteq \langof{}{\mathcal{A}}$, for
  all $i \in \interv{1}{\ell}$.

  ``$\subseteq$'' Let $t \in \langof{}{\mathcal{A}}$ and $\arun$ be an
  accepting run of $\mathcal{A}$ over $t$. We show that there exists
  an iteration \ref{iter1}--\ref{iter4} leading to the values
  $(\vec{x}_i,\vec{y}_i)$ such that, for each transition $\tau$
  occurring on $\arun$ at some position $p\in\dom{\arun}$ i.e.,
  $\arun(p) = \pre{\tau}$, we have: \begin{itemize}[label=$\triangleright$]
  \item $\overline{x}_{i,j} > 0$, where $S_j$ is the unique SCC of
    $\mathcal{A}$ such that $\pre{\tau}\in S_j$, and
  \item $\overline{y}_{i,\tau} > 0$.
  \end{itemize}
  By the second point above we obtain that $\arun$ is an accepting
  run of $\mathcal{A}_i$. The proof is by reverse induction on the
  size of the subtree of $\arun$ rooted at $p$.

  \skipnoindent \underline{Base case:} If $p = \epsilon$, the variable
  $x_1$ is always assigned the value $1$ at step \ref{iter1}. We
  chose the values for all $\set{y_{\tau'}}_{\tau' \in \post{S_1}}$,
  such that $y_\tau$ is assigned $1$ and $y_{\tau'}$ is assigned $0$,
  for all $\tau' \in \post{S_1} \setminus \set{\tau}$ at step
  \ref{iter4}.

  \skipnoindent \underline{Induction step:} If $p \in \dom{\arun}
  \setminus \set{\epsilon}$, since $j\neq1$, by the inductive
  hypothesis, the variable $y_{\tau'}$ is assigned non-zero values,
  for at least one $\tau' \in \pre{S_j}$, thus we assign $x_j$ the
  value $\sum_{\tau' \in \pre{S_j}} y_{\tau'} \cdot
  \cardof{\post{\tau} \cap S_j} > 0$ at step \ref{iter1}. If $\tau \in
  \prepost{S_j}$, then $y_\tau$ is assigned $\infty$ at step
  \ref{iter2}. Otherwise, it must be the case that
  $\tau\in\post{S_j}$ and we distinguish two cases. If $S_j$ is
  nonlinear, then $y_\tau$ is assigned $\infty$ at
  \ref{iter3}. Else, $S_j$ is linear and we can chose the value $1$
  for $y_\tau$ at step \ref{iter4}, because $x_j$ has been already
  assigned to $1$.
\end{proof}

\noindent Let $i \in \interv{1}{\ell}$. We prove the upper bound on
$\cardof{\trans^1_i}$, as follows. Since the SCC graph of
$\mathcal{A}_i$ is a tree, the number of $1$-transitions in
$\mathcal{A}_i$ equals the number of SCCs in $\mathcal{A}_i$. Due to
the expansion of the first step of the proof, we have
$\cardof{\trans^1_i} \leq \max\set{\rankof{a} \mid a \in \alphabet}^s
\leq \max\set{\rankof{a} \mid a \in \alphabet}^{\cardof{\states}}$,
where $s \leq \cardof{\states}$ denotes the number of SCCs in
$\mathcal{A}$. \qed

\subsection{Proof of \autoref{lemma:sid-ta}}
\label{app:sid-ta}

\skipnoindent (\ref{it1:lemma:sid-ta}) For the first part, we prove
the two directions of the following equivalence: for all structures
$\astruc$ and predicates $\bpred$ of arity $n$, there exists a store
$\store$ such that $\astruc \models^\store_\asid \bpred(x_1, \ldots,
x_n)$ iff there exists a tree $t \in
\langof{q_{\bpred}}{\auto{\asid}{\apred}}$ and a store $\bar{\store}$
such that $\astruc \models^{\bar{\store}} \charform{t}$ and
$\bar{\store}(\atpos{x}{\epsilon}_j) = \store(x_j)$ for all
$j\in\interv{1}{n}$.

\skipnoindent ``$\Rightarrow$'' We proceed by induction on the
definition of $\astruc \models^\store_\asid \bpred(x_1, \ldots, x_n)$.
Then $\asid$ contains a rule $\arule$ of the form \(\bpred(x_1,
\ldots, x_n) \leftarrow \exists y_1 \ldots \exists y_m ~.~ \psi *
\Bigstar_{i=1}^\ell \bpred_i(z_{i,1}, \ldots, z_{i,n_i})\) and we can
decompose the structure $\astruc = \astruc_0 \comp \ldots \comp
\astruc_\ell$, such that $\astruc_0 \models^{\store'} \psi$ and
$\astruc_i \models^{\store'}_\asid \bpred_i(z_{i,1},\ldots,z_{i,n_i})$
for all $i\in\interv{1}{\ell}$, for a store $\store'$ that agrees with
$\store$ over $\set{x_1,\ldots,x_{n}}$. For all
$i\in\interv{1}{\ell}$, we consider a store $\store_i$ such that
$\store_i(x_j) = \store'(z_{i,j})$, for all $j \in \interv{1}{n_i}$.
We have $\astruc_i \models^{\store_i}_\asid
\bpred_i(x_1,\ldots,x_{n_i})$ and, by induction hypothesis, there
exists a tree $t_i \in \langof{q_{\bpred_i}}{\auto{\asid}{\apred}}$
and a store $\bar{\store_i}$ such that $\astruc_i
\models^{\bar{\store_i}} \charform{t_i}$ and
$\bar{\store_i}(\atpos{x}{\epsilon}_j) = \store_i(x_j)$ for all
$j\in\interv{1}{n_i}$. Let $\bar{\store}$ be a store such
that: \begin{itemize}[label=$\triangleright$]
\item $\bar{\store}(\atpos{x}{\epsilon}_j) = \store(x_j)$, for all
  $j\in\interv{1}{n}$,
\item $\bar{\store}(\atpos{y}{\epsilon}_j) = \store'(y_j)$ for all
  $j\in\interv{1}{m}$,
\item $\bar{\store}(\atpos{z}{ip}) \isdef
  \bar{\store_i}(\atpos{z}{p})$, for all $i\in\interv{1}{\ell}$ and
  $\atpos{z}{p}\in\fv{\charform{t_i}}$.
\end{itemize}
Note that $\bar{\store}$ is well defined because
$\bar{\store_i}(\atpos{z}{p}) = \bar{\store_j}(\atpos{z}{p}) =
\store'(\atpos{z}{p})$, for all $\atpos{z}{p} \in \fv{\charform{t_i}}
\cap \fv{\charform{t_j}}$. We have $\astruc_i \models^{\bar{\store}}
\atpos{\charform{t_i}}{i}$ for all $i\in\interv{1}{\ell}$ and
$\astruc_0 \models^{\bar{\store}} \alpha_\arule$, thus $\astruc
\models^{\bar{\store}} \charform{t}$ where $t$ is the tree consisting
of a root labelled by $\alpha_\arule$ and $\ell$ children $t_i$ for
$i\in\interv{1}{\ell}$.  Since $t \in
\langof{q_{\bpred}}{\auto{\asid}{\apred}}$ and
$\bar{\store}(\atpos{x}{\epsilon}_j) = \store(x_j)$, for all
$j\in\interv{1}{n_0}$, by definition, we obtain the result.

\skipnoindent ``$\Leftarrow$'' The reverse implication is proven by
induction on the structure of the tree $t\in
\langof{q_{\bpred}}{\auto{\asid}{\apred}}$ such that $\astruc
\models^{\bar{\store}} \charform{t}$.
Since $t \in \langof{q_{\bpred}}{\auto{\asid}{\apred}}$, there is a
transition $q_{\bpred} \arrow{t(\epsilon)}{}
(q_{\bpred_1},\ldots,q_{\bpred_\ell}) \in \trans_\asid$ such that
$t(\epsilon)=\alpha_\arule$ for some rule $\arule$ of the form above
and $\subtree{t}{i} \in \langof{q_{\bpred_i}}{\auto{\asid}{\bpred}}$
for all $i\in\interv{1}{\ell}$.
Meanwhile $\astruc \models^{\bar{\store}} t(\epsilon) *
\Bigstar_{i=1}^\ell ~\atpos{\charform{\subtree{t}{i}}}{i}$, thus we
can decompose the structure as $\astruc = \astruc_0 \comp \ldots \comp
\astruc_\ell$, such that $\astruc_0 \models^{\bar{\store}}
\alpha_\arule$ and $\astruc_i \models^{\bar{\store}}
\atpos{\charform{\subtree{t}{i}}}{i} * \Bigstar_{j=1}^{n_i}
~\atpos{z}{\epsilon}_{i,j} = \atpos{x}{i}_j$ for all
$i\in\interv{1}{\ell}$. Note that the additional equalities from
$\alpha_\arule$ are necessary to remember the links between the
variables from $\arule$. Let $\bar{\store_i}$ be a store, such that
$\bar{\store_i}(\atpos{z}{p}) = \bar{\store}(\atpos{z}{ip})$, for all
$\atpos{z}{p}\in\fv{\charform{\subtree{t}{i}}}$ and all $i \in
\interv{1}{\ell}$. By the inductive hypothesis on $\subtree{t}{i}$,
there exists a store $\store_i$ such that $\astruc_i
\models^{\store_i}_\asid \bpred_i(x_1,\ldots,x_{n_i})$ and
$\store_i(x_j) = \bar{\store_i}(\atpos{x}{\epsilon}_j)$, for all
$j\in\interv{1}{n_i}$.  We consider a store $\store'$ such that
$\store'(x_j) = \bar{\store}(\atpos{x}{\epsilon}_j)$, for all
$j\in\interv{1}{n}$, and $\store'(y_j) \isdef
\bar{\store}(\atpos{y}{\epsilon}_j)$, for all $j\in\interv{1}{m}$.
For all $i\in\interv{1}{\ell}$ and $j\in\interv{1}{n_i}$ we have
$\store_i(x_j) = \bar{\store_i}(\atpos{x}{\epsilon}_j) =
\bar{\store}(\atpos{x}{i}_j) = \bar{\store}(\atpos{z}{\epsilon}_{i,j})
= \store'(z_{i,j})$, because $\atpos{z}{\epsilon}_{i,j} =
\atpos{x}{i}_j$ holds for $\bar{\store}$ in the empty structure.
Therefore $\astruc_i \models^{\store'}_\asid
\bpred_i(z_{i,1},\ldots,z_{i,n_i})$, for all $i \in \interv{1}{\ell}$.
Moreover $\astruc_0 \models^{\store'} \psi$, and by composing the
structures and using $\arule$, we obtain $\astruc
\models^{\store'}_\asid \bpred(x_1, \ldots, x_{n})$.

\skipnoindent (\ref{it2:lemma:sid-ta}) To show
$\sem{\mathcal{A}} = \sidsem{\apred_\initstate}{\asid_{\mathcal{A}}}$
we prove the following equivalence: for all structures $\astruc$ and
states $q_0\in Q$, there exists a tree $t \in
\langof{q_0}{\mathcal{A}}$ and a store $\bar{\store}$ such that
$\astruc \models^{\bar{\store}} \charform{t}$ iff there exists a store
$\store$ such that $\astruc \models^\store_{\asid_{\mathcal{A}}}
\apred_{q_0}(x_1, \ldots, x_{\arityof{q_0}})$ and $\store(x_j) =
\bar{\store}(\atpos{x}{\epsilon}_j)$, for all
$j\in\interv{1}{\arityof{q_0}}$.

\skipnoindent ``$\Rightarrow$'' We reason by induction on the
structure of the tree $t\in \langof{q_0}{\mathcal{A}}$, such that
$\astruc \models^{\bar{\store}} \charform{t}$.
Since $t \in \langof{q_0}{\mathcal{A}}$, there is a transition $q_0
\arrow{t(\epsilon)}{} (q_1,\ldots,q_\ell) \in \trans$ such that
$\subtree{t}{i} \in \langof{q_i}{\mathcal{A}}$, for all
$i\in\interv{1}{\ell}$.
Meanwhile $\astruc \models^{\bar{\store}} t(\epsilon) *
\Bigstar_{i=1}^\ell ~\atpos{\charform{\subtree{t}{i}}}{i}$ thus, we
can decompose the structure $\astruc = \astruc_0 \comp \ldots \comp
\astruc_\ell$, such that $\astruc_0 \models^{\bar{\store}}
t(\epsilon)$ and $\astruc_i \models^{\bar{\store}}
\atpos{\charform{\subtree{t}{i}}}{i}$.  Let $\bar{\store_i}$ be a
store, such that $\bar{\store_i}(\atpos{z}{p}) =
\bar{\store}(\atpos{z}{ip})$, for all
$\atpos{z}{p}\in\fv{\charform{\subtree{t}{i}}}$ and all $i \in
\interv{1}{\ell}$.  By the inductive hypothesis on $\subtree{t}{i}$,
there exists a store $\store_i$ such that $\astruc_i
\models^{\store_i}_{\asid_{\mathcal{A}}}
\apred_{q_i}(x_1,\ldots,x_{\arityof{q_i}})$ and $\store_i(x_j) =
\bar{\store_i}(\atpos{x}{\epsilon}_j)$ for all
$j\in\interv{1}{\arityof{q_i}}$.
We consider a store $\store'$, such that $\store'(x_j) =
\bar{\store}(\atpos{x}{\epsilon}_j)$, for all
$j\in\interv{1}{\arityof{q_0}}$, $\store'(\atpos{x}{i}_j) =
\store_i(x_j)$, for all $i\in\interv{1}{\ell}$ and all
$j\in\interv{1}{\arityof{q_i}}$, and $\store'(z) = \bar{\store}(z)$,
for all other variables $z\in\fv{t(\epsilon)}$.  Then $\astruc_i
\models^{\store'}_{\asid_{\mathcal{A}}}
\apred_{q_i}(\atpos{x}{i}_1,\ldots,\atpos{x}{i}_{\arityof{q_i}})$ and
$\astruc_0 \models^{\store'} t(\epsilon)[\atpos{x}{\epsilon}_1/x_1,
  \ldots, \atpos{x}{\epsilon}_{\arityof{q_0}}/x_{\arityof{q_0}}]$
thus, by composing the structures, we obtain $\astruc
\models^{\store'}_{\asid_{\mathcal{A}}} \apred_{q_0}(x_1, \ldots,
x_{\arityof{q_0}})$.

\skipnoindent ``$\Leftarrow$'' The reverse is shown by induction on
the definition of $\astruc \models^\store_{\asid_{\mathcal{A}}}
\apred_{q_0}(x_1, \ldots, x_{\arityof{q_0}})$. Then there exists a
rule in $\asid_{\mathcal{A}}$ of the form (\ref{rule:ta-sid}) and we
can decompose the structure $\astruc = \astruc_0 \comp \ldots \comp
\astruc_\ell$ such that $\astruc_0 \models^{\store'}
\alpha[\atpos{x}{\epsilon}_1/x_1, \ldots,
  \atpos{x}{\epsilon}_{\arityof{q_0}}/x_{\arityof{q_0}}]$ and
$\astruc_i \models^{\store'}_{\asid_{\mathcal{A}}}
\apred_{q_i}(\atpos{x}{i}_1,\ldots,\atpos{x}{i}_{\arityof{q_i}})$, for
all $i\in\interv{1}{\ell}$, where $\store'$ is a store that agrees
with $\store$ over $\set{x_1,\ldots,x_{\arityof{q_i}}}$.  For all
$i\in\interv{1}{\ell}$, we consider a store $\store_i$ such that
$\store_i(x_j) = \store'(\atpos{x}{i}_j)$.  We have $\astruc_i
\models^{\store_i}_{\asid_{\mathcal{A}}}
\apred_{q_i}(x_1,\ldots,x_{\arityof{q_0}})$ and, by the inductive
hypothesis, there exists a tree $t_i \in \langof{q_i}{\mathcal{A}}$
and a store $\bar{\store_i}$ such that $\struc_i
\models^{\bar{\store_i}} \charform{t_i}$ and
$\bar{\store_i}(\atpos{x}{\epsilon}_j) = \store_i(x_j)$, for all
$j\in\interv{1}{\arityof{q_i}}$. Let $\bar{\store}$ be a store such
that: \begin{itemize}[label=$\triangleright$]
  \item $\bar{\store}(\atpos{x}{\epsilon}_j) \isdef \store(x_j)$, for
    all $j\in\interv{1}{\arityof{q_0}}$,
  \item $\bar{\store}(\atpos{y}{\epsilon}_j) \isdef
    \store'(\atpos{y}{\epsilon}_j)$, for all $j\in\interv{1}{m}$,
  \item $\bar{\store}(\atpos{z}{ip}) \isdef
    \bar{\store_i}(\atpos{z}{p})$, for all $i\in\interv{1}{\ell}$ and
    all $\atpos{z}{p}\in\fv{\charform{t_i}}$.
  \end{itemize}
Note that $\bar{\store}$ is well defined because
$\bar{\store_i}(\atpos{z}{p}) = \bar{\store_j}(\atpos{z}{p}) =
\store'(\atpos{z}{p})$, for all $\atpos{z}{p} \in \fv{\charform{t_i}}
\cap \fv{\charform{t_j}}$.
We have $\astruc_i \models^{\bar{\store}} \atpos{\charform{t_i}}{i}$,
for all $i\in\interv{1}{\ell}$ and $\astruc_0 \models^{\bar{\store}}
\alpha$, thus $\astruc \models^{\bar{\store}} \charform{t}$, where $t$
is the tree consisting of a root labelled by $\alpha$ and children
$t_i$, for $i\in\interv{1}{\ell}$.  Since $t \in
\langof{q_0}{\mathcal{A}}$ and $\bar{\store}(\atpos{x}{\epsilon}_j) =
\store(x_j)$ for all $j\in\interv{1}{\arityof{q_0}}$ by definition, we
obtain the result.

\skipnoindent To show $\rcsem{\mathcal{A}}{}
\subseteq \rcsem{\apred_\iota}{\asid_{\mathcal{A}}}$, let
$(\astruc,\diseq) \in \csem{\mathcal{A}}{}$, where
$\astruc=(\univ,\struc)$ is a structure and
$\diseq\subseteq\univ\times\univ$ is a symmetric relation. Then there
exists a tree $t \in \langof{}{\mathcal{A}}$ and a store $\store$
canonical for $\charform{t}$, such that $\astruc \models^\store
\charform{t}$ and, for each $(u,v)\in\diseq$, there exist variables $x
\in \store^{-1}(u)$ and $y \in \store^{-1}(v)$ such that the
disequality $x\neq y$ occurs in $\charform{t}$.  Let $\arun$ be the
accepting run of $\mathcal{A}$ over $t$. By a depth-first traversal of
$\arun$, we build a complete unfolding $\apred_\iota
\step{\asid_{\mathcal{A}}}^* \exclof{\charform{t}}$. Since $\astruc
\models^\store \charform{t}$, we obtain $(\astruc,\diseq) \in
\rcsem{\apred_\iota}{\asid_{\mathcal{A}}}$, by
\autoref{def:canonical-model}. Conversely, to show
$\rcsem{\mathcal{A}}{} \supseteq
\rcsem{\apred_\iota}{\asid_{\mathcal{A}}}$, let $(\astruc,\diseq) \in
\rcsem{\apred_\iota}{\asid_{\mathcal{A}}}$, where
$\astruc=(\univ,\struc)$ is a structure and
$\diseq\subseteq\univ\times\univ$ is a symmetric relation. Then, there
exists a complete unfolding $\apred_\iota \step{\asid_{\mathcal{A}}}^*
\exists x_1 \ldots \exists x_n ~.~ \psi$, where $\psi$ is a qpf
formula, and a store $\store$ canonical for $\psi$, such that $\astruc
\models^\store \psi$ and, for all $(u,v)\in\diseq$ there exist
variables $x\in\store^{-1}(u)$ and $y\in\store^{-1}(v)$, such that the
disequality $x\neq y$ occurs in $\psi$. By induction on the length of
the unfolding, one can build an accepting run $\arun$ of
$\mathcal{A}$, that recognizes a tree $t \in \langof{}{\mathcal{A}}$,
such that $\charform{t}$ differs from $\psi$ by an $\alpha$-renaming
and permutation of atoms via commutativity and associativity of the
separating conjunction. Hence $(\astruc,\diseq) \in
\rcsem{\exclof{\charform{t}}}{}$, thus $(\astruc,\diseq) \in
\rcsem{\mathcal{A}}{}$. \qed

\subsection{Proof of \autoref{lemma:remove-relations}}
\label{app:remove-relations}

\skipnoindent (\ref{it1:lemma:remove-relations}) Let $t' \in
\langof{}{\auto{\asid}{\apred}^{I}}$ be a tree. Since
$\auto{\asid}{\apred}^{I}$ was obtained from $\auto{\asid}{\apred}$ by
removing relation and disequality atoms from the labels of its
$1$-transitions, there exists a tree $t \in
\langof{}{\auto{\asid}{\apred}}$, such that $\dom{t} = \dom{t'}$ and
$\charform{t} = \charform{t'} * \Bigstar\nolimits_{i=1}^n
\arel_i(z_{i,1}, \ldots, z_{i,k_i}) * \Bigstar\nolimits_{j=1}^m
y_{j,1} \neq y_{j,2}$ modulo reordering of atoms, for some relation
symbols $\arel_i$ and variables $z_{i,1}, \ldots, z_{i,k_i}$,
$y_{j,1}, y_{j,2}$. By \autoref{lemma:sid-ta}, $\charform{t}$ is
satisfiable, hence there exists a structure $(\univ,\struc)$ and a
store $\store$, such that $(\univ,\struc) \models^\store
\charform{t}$. We define the interpretation $\struc'(\arel) =
\struc(\arel) \setminus \{\tuple{\store(z_{i,1}), \ldots,
  \store(z_{i,k_i})} \mid i \in \interv{1}{n},~ \arel_i = \arel\}$,
for all $\arel\in\relations$. It is easy to check that
$(\univ,\struc') \models^\store \charform{t'}$, hence $\charform{t'}$
is satisfiable. Since the choice of $t'$ was arbitrary, we obtain that
$\auto{\asid}{\apred}^{I}$ is all-satisfiable.

\skipnoindent (\ref{it2:lemma:remove-relations}) Let $t' \in
\langof{}{\auto{\asid}{\apred}^{I}}$ be a tree. By the construction of
$\auto{\asid}{\apred}^{I}$ from $\auto{\asid}{\apred}$, there exists a
tree $t \in \langof{}{\auto{\asid}{\apred}}$, such that
$\dom{t}=\dom{t'}$ and $\charform{t} = \charform{t'} *
\Bigstar\nolimits_{i=1}^n \arel_i(z_{i,1}, \ldots, z_{i,k_i}) *
\Bigstar\nolimits_{j=1}^m y_{j,1} \neq y_{j,2}$ modulo reordering of
atoms, for some relation symbols $\arel_i$ and variables $z_{i,1},
\ldots, z_{i,k_i}$, $y_{j,1}, y_{j,2}$.  Let $Y \isdef \bigcup_{j=1}^m
\set{y_{j,1}, y_{j,2}}$ be the set of variables occurring in
disequality atoms.  By \autoref{lemma:one-transitions}, each
$1$-transition of $\auto{\asid}{\apred}$ occurs exactly once in each
accepting run, hence $n \le \cardof{\trans^1_\asid} \cdot
\maxrelatominruleof{\asid}$, $ \cardof{Y} \le \cardof{\trans^1_\asid}
\cdot \maxvarinruleof{\asid}$.  By \autoref{lemma:qpf-treewidth}
(\ref{it2:qpf-treewidth}, \ref{it3:qpf-treewidth}),
$\twof{\exclof{\charform{t'}}} \le \twof{\exclof{\charform{t}}} + n +
\cardof{Y} \le \twof{\exclof{\charform{t}}} + \cardof{\trans^1_\asid}
\cdot (\maxrelatominruleof{\asid} + \maxvarinruleof{\asid})$. Since
the choice of $t'$ was arbitrary, we obtain that
$\sem{\auto{\asid}{\apred}^{I}}$ is treewidth bounded, more precisely
$\twof{\sem{\auto{\asid}{\apred}^{I}}} \le
\twof{\sem{\auto{\asid}{\apred}}} + \cardof{\trans^1_\asid} \cdot
(\maxrelatominruleof{\asid} + \maxvarinruleof{\asid})$.

\skipnoindent (\ref{it3:lemma:remove-relations}) Let $t \in
\langof{}{\auto{\asid}{\apred}}$ be a tree.  By the construction of
$\auto{\asid}{\apred}^{I}$ from $\auto{\asid}{\apred}$, there exists a
tree $t' \in \langof{}{\auto{\asid}{\apred}^{I}}$, such that
$\dom{t'}=\dom{t}$ and $\charform{t} = \charform{t'} *
\Bigstar\nolimits_{i=1}^n \arel_i(z_{i,1}, \ldots, z_{i,k_i}) *
\Bigstar\nolimits_{j=1}^m y_{j,1} \neq y_{j,2}$ modulo reordering of
atoms, for some relation symbols $\arel_i$ and variables $z_{i,1},
\ldots, z_{i,k_i}$, $y_{j,1}, y_{j,2}$.  Let $Y \isdef \bigcup_{i=1}^n
\set{z_{i,1},\ldots z_{i,k_i}}$ be the set of variables occurring in
relation atoms.  By \autoref{lemma:one-transitions}, each
$1$-transition of $\auto{\asid}{\apred}$ occurs exactly once in each
accepting run, hence $\cardof{Y} \le \cardof{\trans^1_\asid} \cdot
\maxvarinruleof{\asid}$.  By \autoref{lemma:qpf-treewidth}
(\ref{it4:qpf-treewidth}) $\twof{\exclof{\charform{t}}} \le
\twof{\exclof{\charform{t'}}} + \cardof{Y} \le
\twof{\exclof{\charform{t'}}} + \cardof{\trans^1_\asid} \cdot
\maxvarinruleof{\asid}$.  Since the choice of $t$ was arbitrary, we
obtain that $\sem{\auto{\asid}{\apred}}$ is treewidth bounded, more
precisely $\twof{\sem{\auto{\asid}{\apred}}} \le
\twof{\sem{\auto{\asid}{\apred}^{I}}} + \cardof{\trans^1_\asid} \cdot
\maxvarinruleof{\asid}$. \qed

\subsection{Proof of \autoref{lemma:remove-equalities}}
\label{app:remove-equalities}

\skipnoindent (\ref{it1:lemma:remove-equalities}) Let $t' \in
\langof{}{\auto{\asid}{\apred}^{II}}$ be a tree. Since
$\auto{\asid}{\apred}^{II}$ was obtained from
$\auto{\asid}{\apred}^{I}$ by removing equality atoms from the labels
of its $1$-transitions (\ref{step2:remove-equalities}), there exists a
tree $t \in \langof{}{\auto{\asid}{\apred}^{I}}$, such that $\dom{t} =
\dom{t'}$ and $\charform{t} = \charform{t'} * \psi$ where $\psi$ is a
conjunction of equality atoms.  By \autoref{lemma:remove-relations}
(\ref{it1:lemma:remove-relations}), $\charform{t}$ is satisfiable,
hence there exists a structure $\astruc$ and a store $\store$ such
that $\astruc \models^\store \charform{t}$.  We immediately obtain
$\astruc \models^\store \charform{t'}$, hence $\charform{t'}$ is
satisfiable. Since the choice of $t'$ was arbitrary, we obtain that
$\auto{\asid}{\apred}^{II}$ is all-satisfiable.

\skipnoindent (\ref{it2:lemma:remove-equalities}) Let
$t'\in\langof{}{\auto{\asid}{\apred}^{II}}$ be a tree and $\arun'$ be
an accepting run over $t'$.  We shall build a tree $t \in
\langof{}{\auto{\asid}{\apred}^{I}}$ related to $t'$ and $\arun'$ and
show that $\twof{\sem{\exclof{\charform{t'}}}} \le
\twof{\sem{\exclof{\charform{t}}}} + K$, where $K$ is constant that
does not depend on the choice of $t'$.  The idea of the construction
of $t$ from $t'$ is to add resets before and after each $1$-transition
in the run $\arun'$, so that the equalities removed by the
transformation from $\trans^{I}_\asid$ to $\trans^{II}_\asid$ can be
added back, without changing the set of models of
$\exclof{\charform{t'}}$. To avoid unnecessary complications, we
consider each $1$-transition separately (recall that there are
finitely many 1-transitions in $\auto{\asid}{\apred}^{II}$).

We refer to \autoref{fig:remove-equalities} for an illustration of
this construction. For a given position $p\in\dom{t'}$ such that
$\arun'(p) \arrow{t'(p)}{} (\arun'(p1),\ldots,\arun'(p\ell)) \in
{(\trans^{II}_\asid)}^1 $, we separate the run $\arun'$ into a context
$\arun_{p \leftarrow \arun(p)}^\text{init}$, before the
$1$-transition, and $\ell\isdef\rankof{t'(p)}$ runs
$\arun_1,\ldots,\arun_\ell$, after the $1$-transition, i.e., $\arun_{p
  \leftarrow \arun(p)}^\text{init}(r) \isdef \arun'(r)$ for every
$r\in\dom{t'}$ that is not a suffix of $p$, and $\arun_i(r) \isdef
\arun'(pir)$ for every $i\in\interv{1}{\ell}$ and position $r$ with
$pir\in\dom{t'}$.

  \begin{figure}[htbp]
    \begin{center}
    \input{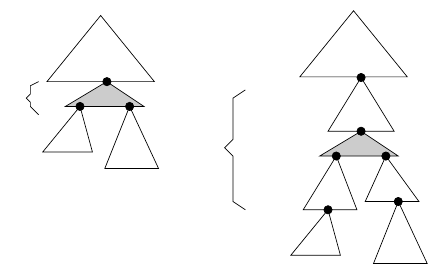_t}
    \caption{\label{fig:remove-equalities} Construction
  of $t$ from $t'$ (\autoref{lemma:remove-equalities})}
    \end{center}
  \end{figure}

Then, we build $t$ and the associated run $\arun$ by combining
successive partial runs: \begin{itemize}[label=$\triangleright$]
\item start $\arun$ with $\arun_{p \leftarrow \arun(p)}^\text{init}$,
  i.e., no change from $\arun'$ above position $p$,
\item add a $\arun'(p)$-reset $\arun_{u \leftarrow \arun'(p)} \in
  \runsof{\infty}{\arun'(p)}{\auto{\asid}{\apred}^{I}}$ at position
  $p$; such a reset exists by \autoref{lemma:reset} because $\arun'(p)
  \in \pre{({(\trans^{II}_\asid)}^1)} \cap
  \pre{({(\trans^{II}_\asid)}^\infty)}$, since $\arun'(p)$ belong to a
  non-trivial SCC,
\item pursue at position $pu$ with the transition $\arun'(p)
  \arrow{\alpha_p}{} (\arun'(p1),\ldots,\arun'(p\ell)) \in
        {(\trans^{I}_\asid)}^1$ corresponding to the original
        $1$-transition from $\auto{\asid}{\apred}^{I}$, after adding
        back the equalities removed by the transformation,
  \item for every $i\in\interv{1}{\ell}$, introduce at position $pui$
    a $\arun'(pi)$-reset $\arun_{v_i \leftarrow \arun'(pi)} \in
    \runsof{\infty}{\arun'(pi)}{\auto{\asid}{\apred}^{II}}$; such a
    reset exists by \autoref{lemma:reset} since $\arun'(pi) \in
    \post{({(\trans^{II}_\asid)}^1)} \cap
    \pre{({(\trans^{II}_\asid)}^\infty)}$ is a pivot state,
  \item continue with $\arun_i$ at position $puiv_i$, for every
    $i\in\interv{1}{\ell}$.
\end{itemize}

\begin{fact}\label{fact:tw_remove_equalities}
  $\twof{\sem{\exclof{\charform{t'}}}} \le
  \twof{\sem{\exclof{\charform{t}}}} + K_1$, where $K_1 \le
  \maxvarinruleof{\asid}$ is the maximal number of $i$-variables,
  $i\in\nat\cup\set{\epsilon}$, in any 1-transition of
  $\auto{\asid}{\apred}^{II}$.
\end{fact}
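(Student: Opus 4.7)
The plan is to follow the standard guideline used throughout Appendix~\ref{app:qpf-treewidth}: fix an arbitrary model $\astruc \models^\store \charform{t'}$, construct a model $\astruc' \models^{\store'} \charform{t}$ of the enriched formula, then show that each tree decomposition of $\astruc'$ can be reshaped into a tree decomposition of $\astruc$ whose width increases by at most $K_1$. The crucial observation is that, by construction, $\charform{t}$ differs from $\charform{t'}$ only by the atoms contributed by the two inserted resets (both relation and equality atoms) and by the equalities removed in going from ${(\trans^{I}_\asid)}^1$ to ${(\trans^{II}_\asid)}^1$ which are re-added to the label $\alpha_p$ of the $1$-transition at position $p$.

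Second, I would build $\astruc'$ and $\store'$ as follows. Since $\auto{\asid}{\apred}^{I}$ is all-satisfiable by Lemma~\ref{lemma:remove-relations}(\ref{it1:lemma:remove-relations}) (and likewise $\auto{\asid}{\apred}^{II}$), the characteristic formulae of the inserted resets are satisfiable, so each one admits a canonical model over a fresh disjoint piece of universe. I would extend $\store$ to a store $\store'$ whose domain covers the new variables introduced by the reset positions: for each persistent variable, $\store'$ inherits its value from $\store$ at the reset endpoints (which is consistent by Definition~\ref{def:profile}, since persistent variables propagate their values through $\infty$-transitions); for the non-persistent variables linked via the re-added equalities of $\alpha_p$, I would choose the witnesses of the canonical reset models so that the identifications required by $\alpha_p$ between the $\atpos{x}{pu}_j$, $\atpos{y}{pu}$, and $\atpos{x}{pui}_k$ are satisfied. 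Composing $\astruc$ with these disjoint reset witnesses, modulo the required identifications, yields $\astruc'$ with $\astruc' \models^{\store'} \charform{t}$.

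Third, for the tree decomposition argument, let $T'$ be any tree decomposition of $\astruc'$ of width $w$. First restrict every bag of $T'$ to $\supp{\astruc}$, i.e., delete the elements introduced purely by the reset witnesses. The restriction still covers every tuple of $\astruc$ because every relation atom of $\charform{t'}$ occurs in $\charform{t}$ as well. Connectedness of the resulting labeling can fail only for elements of $\supp{\astruc}$ whose occurrences in $\astruc'$ were linked across the reset region via the identifications induced by the equalities of $\alpha_p$; these are precisely the interpretations of the $i$-variables occurring in $\alpha_p$, whose number is at most $K_1$. Restore connectedness by propagating each such element through the path of $T'$ between its now-disconnected fragments; this adds at most $K_1$ elements to every bag on that path, producing a valid tree decomposition $T$ of $\astruc$ with $\width{T} \le \width{T'} + K_1$. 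Since $\astruc'$ was chosen to be an arbitrary witness and its tree decomposition was arbitrary, this gives the claimed inequality.

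The hard part will be paragraph two: making sure the canonical reset witnesses can be chosen so that the re-added equalities of $\alpha_p$, which link fresh non-persistent variables of the pre- and post-resets to actual values dictated by $\astruc$ and $\store$, are simultaneously consistent with the local freshness requirements of the canonical stores. This is the precise point at which the all-satisfiability of $\auto{\asid}{\apred}^{I}$ and the persistence-based reset construction interact, and I expect the proof to handle it by picking the reset witnesses over a sufficiently large fresh universe and then quotienting by the minimal equivalence induced by the $\alpha_p$ equalities; verifying that this quotient preserves the reset characteristic formul{\ae} is the main technical check.
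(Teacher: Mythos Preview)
The paper's proof is a one-line application of \autoref{lemma:sep-treewidth}: write $\charform{t}=\phi*\psi$, where $\phi$ is the (position-renamed) characteristic formula of $t'$ with the $1$-transition label at $p$ removed, and $\psi$ is the inserted block (pre-reset, the restored $\alpha_p$, post-resets). The boundary $F=\fv{\phi}\cap\fv{\psi}$ is exactly the set of $i$-variables of the $1$-transition, so $\cardof{F}\le K_1$. The resets guarantee that the only equalities $\psi$ forces on $F$ are between persistent variables, and those coincide with the equalities of $t'(p)$; hence $\phi*\eta=\charform{t'}$ up to renaming, and \autoref{lemma:sep-treewidth} finishes.

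Your operational route has a genuine gap in step~2. You want $\store'$ to agree with $\store$ on the boundary $F$ and to realise the resets over a fresh universe, then quotient. But the resets are $\infty$-runs whose labels contain \emph{relation} atoms, and persistence forces every persistent variable \emph{throughout} each reset---not just at the endpoints---to take its value from $\store(F)$. Consider two reset atoms $\arel(z,w_1)$ and $\arel(z,w_2)$ sharing a non-persistent argument $z$ and differing only in persistent arguments $w_1\not\eqof{}w_2$ (such pairs exist, e.g., from a rule $\bpred(x_1,x_2,x_3)\leftarrow\exists y.\,\arel(y,x_2)*\arel(y,x_3)*\bpred(y,x_2,x_3)$ with $2,3$ persistent). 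A canonical store separates $w_1$ from $w_2$, so the reset formula is satisfiable; but for a non-canonical $\store$ that happens to send the two persistent indices of $F$ tracked by $w_1,w_2$ to the \emph{same} element, your construction collapses both atoms to the same tuple and no model $\astruc'$ with the required $\store'$ exists. Your ``hard part'' targets only the $\alpha_p$ equalities; the real obstruction is this interaction between the forced persistent values and the reset relation atoms. This is precisely why \autoref{lemma:sep-treewidth} sends all of $F$ into the fresh $\psi$-universe (at the price of $+\cardof{F}$) rather than attempting to embed $\astruc$ into $\astruc'$.

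Step~3 is a symptom of the same issue. If step~2 \emph{did} yield a model $\astruc'$ containing $\astruc$, then merely deleting the fresh elements from each bag of $T'$ would already be a valid tree decomposition of $\astruc$: the set of bags containing any fixed $u\in\supp{\struc}$ is unchanged by deleting other elements, so connectedness cannot break, and no $+K_1$ would be needed. That this tighter bound is unavailable is another sign that the embedding of step~2 cannot be carried out for arbitrary (non-canonical) stores.
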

\proof{Let $\phi \isdef \Bigstar_{\!\!p \text{ not a prefix of } r}
  ~\atpos{t'(r)}{r} * \Bigstar_{\!\!pir \in\dom{t'}}
    ~\atpos{t'(pir)}{puiv_ir}$, which corresponds to the
    characteristic formula $\charform{t'}$ without the 1-transition at
    position $p$, and with the new position labels in $t$.  Now
    $\charform{t} = \phi * \psi$ where $\psi$ is the separating
    conjunction of all $\atpos{t(pr)}{pr}$ with $pr\in\dom{t}$ and
    $uiv_i$ not the prefix of $r$ for any $i\in\interv{1}{\ell}$
    (position $pr$ is either the start of the 1-transition or part of
    one of the reset contexts).  Let $F \isdef \fv{\phi} \cap
    \fv{\psi}$ and $\eta \isdef \Bigstar \set { x = y \mid x, y \in
      V_f, x \eqof{\psi} y}$. Then, $F$ contains the variables at the
    extremity of the added part from $t'$ to $t$, i.e., $F =
    \set{\atpos{x}{p}_k \mid k\in\interv{1}{\arityof{t(p)}}} \cup
    \set{\atpos{x}{puiv_i}_j \mid i\in\interv{1}{\ell},
      j\in\interv{1}{\arityof{t(puiv_i)}}}$. These variables exactly
    correspond to the parameters appearing in the 1-transition at
    position $p$ of $t'$, i.e., $F=\set{\atpos{x}{p}_k \mid
      k\in\interv{1}{\arityof{t'(p)}}} \cup \set{\atpos{x}{pi}_j \mid
      i\in\interv{1}{\ell}, j\in\interv{1}{\arityof{t'(pi)}}}$, hence
    $\cardof{F} \le K_1 \le \maxvarinruleof{\asid}$. By \autoref{def:alpha-sid}
    (\ref{it2:alpha-sid}), $\phi$ does not induce equalities between
    variables of $F$. On the other hand, $\psi$ only induces
    equalities between persistent variables of $F$, thanks to the
    introduced reset paths. These equalities $\eta$ correspond exactly
    to those occurring in $t'(p)$, thus $\phi * \eta$ is equal to
    $\charform{t'}$, modulo a renaming of the variables.
    Now $\phi * \psi$ is satisfiable (since $\auto{\asid}{\apred}^{I}$ is all-satisfiable),
    thus by \autoref{lemma:sep-treewidth}, we obtain
    $\twof{\sem{\exclof{\charform{t'}}}} = \twof{\sem{\exclof{(\phi * \eta)}}}
    \le \twof{\sem{\exclof{(\phi * \psi)}}} + K_1 = \twof{\sem{\exclof{\charform{t}}}} + K_1$.
    \qed}

After doing this transformation ($t'$ to $t$) for all 1-transitions,
the final tree $t$ satisfies $t\in\langof{}{\auto{\asid}{\apred}^{I}}$
since all the added transitions (1-transition or reset) appear in
$\trans^{I}_\asid$.  With the inequality at each step
(\autoref{fact:tw_remove_equalities}) and since each 1-transition of
$\trans^{II}_\asid$ occurs exactly once in the initial tree $t'$
(\autoref{lemma:one-transitions}), we get
$\twof{\sem{\exclof{\charform{t'}}}} \le
\twof{\sem{\exclof{\charform{t}}}} + K$, where $K =
\cardof{{(\trans^{II}_\asid)}^1} \cdot K_1 \le
\cardof{{(\trans^{II}_\asid)}^1} \cdot \maxvarinruleof{\asid}$.  As
$t'$ has been chosen arbitrary and $\cardof{{(\trans^{II}_\asid)}^1} =
\cardof{{(\trans^{I}_\asid)}^1}$, we conclude that
$\twof{\sem{\auto{\asid}{\apred}^{II}}} \le
\twof{\sem{\auto{\asid}{\apred}^{I}}} +
\cardof{{(\trans^{I}_\asid)}^1}\cdot \maxvarinruleof{\asid}$.

\skipnoindent (\ref{it3:lemma:remove-equalities}) Let
$t\in\langof{}{\auto{\asid}{\apred}^{I}}$.  By the construction of
$\auto{\asid}{\apred}^{II}$ from $\auto{\asid}{\apred}^{I}$, there
exists a tree $t'\in\langof{}{\auto{\asid}{\apred}^{II}}$ such that
$\dom{t}=\dom{t'}$ and $\charform{t} = \charform{t'} * \psi$ where
$\psi$ is a conjunction of equalities.  $\charform{t'}$ is a qpf
formula thus by \autoref{lemma:qpf-treewidth}
(\ref{it1:qpf-treewidth}), we get $\twof{\sem{\exclof{\charform{t}}}}
= \twof{\sem{\exclof{(\charform{t'} * \psi)}}} \le
\twof{\sem{\exclof{\charform{t'}}}}$.  As $t$ has been chosen
arbitrary, we obtain that $\twof{\sem{\auto{\asid}{\apred}^{I}}} \le
\twof{\sem{\auto{\asid}{\apred}^{II}}}$.

\subsection{Proof of \autoref{lemma:remove-persistent:B}}
\label{app:remove-persistent:B}

\skipnoindent (\ref{it1:lemma:remove-persistent:B}) By
\autoref{lemma:remove-persistent:A},
$\langof{}{\widetilde{\auto{\asid}{\apred}}^{II}} =
\langof{}{\auto{\asid}{\apred}^{II}}$, hence
$\widetilde{\auto{\asid}{\apred}}^{II}$ is all-satisfiable, because
$\auto{\asid}{\apred}^{II}$ is all-satisfiable, by
\autoref{lemma:remove-equalities} (\ref{it1:lemma:remove-equalities}).
Since $\widetilde{\mathcal{B}}_i$ is obtained by removing zero or more
transitions from $\widetilde{\auto{\asid}{\apred}}^{II}$, we have
$\langof{}{\widetilde{\mathcal{B}}_i} \subseteq
\langof{}{\widetilde{\auto{\asid}{\apred}}^{II}}$, hence
$\mathcal{B}_i$ is all-satisfiable.

Let ${\widetilde{\trans}}^1$ be any of the sets
${\widetilde{\trans}}^1_1, \ldots, {\widetilde{\trans}}^1_m$. We prove
that $\widetilde{\mathcal{B}} \isdef (\Sigma,
\widetilde{\states}_\asid^{I}, (q_\apred,\emptyset),
          {\widetilde{\trans}}^1 \uplus {\widetilde{\trans}}^\infty)$
          is choice-free. To this end, we define the following
          labeling of transitions:
  \[\Lambda(\tau) \isdef \left\{\begin{array}{ll}
  1 & \text{if } \tau \in {\widetilde{\trans}}^1 \\
  \infty & \text{if } \tau \in {\widetilde{\trans}}^\infty
  \end{array}\right.\]
  Let $h : \widetilde{\states}_\asid^{I} \rightarrow
  \states_\asid^{I}$ be the function defined as $h((q,a))\isdef q$,
  for all $(q,a) \in \widetilde{\states}_\asid^{I}$. For each SCC
  $\widetilde{S}$ of $\widetilde{\mathcal{B}}$, we have
  $h(\widetilde{S}) \subseteq S$, for an SCC $S$ of
  $\auto{\asid}{\apred}^{II}$, by the definitions of
  $\widetilde{\auto{\asid}{\apred}}^{II}$ and
  $\widetilde{\mathcal{B}}$. Since $\auto{\asid}{\apred}^{II}$ is
  choice-free, by \autoref{lemma:remove-equalities}
  (\ref{it1:lemma:remove-equalities}), we can extend the labeling
  $\Lambda$ to the SCCs of $\widetilde{\mathcal{B}}$, as follows:
  \[\Lambda(\widetilde{S}) \isdef \left\{\begin{array}{ll}
  1 & \text{if } h(\widetilde{S}) \subseteq S \text{, for some $1$-SCC $S$ of $\auto{\asid}{\apred}^{II}$} \\
  \infty & \text{if } h(\widetilde{S}) \subseteq S \text{, for some $\infty$-SCC $S$ of $\auto{\asid}{\apred}^{II}$}
  \end{array}\right.\]
  We prove first the following fact:

  \begin{fact}\label{fact:remove-persistent:B}
    Let $\widetilde{S}$ be an SCC of $\widetilde{\mathcal{B}}$, such
    that $\Lambda(\widetilde{S})=1$. Then $h(\widetilde{S})$ is a
    $1$-SCC of $\auto{\asid}{\apred}^{II}$. Moreover,
    $h(\widetilde{S})$ is linear if $\widetilde{S}$ is linear.
  \end{fact}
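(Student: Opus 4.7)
The plan is to prove $h(\widetilde{S})=S$ as sets, which is what makes $h(\widetilde{S})$ a $1$-SCC of $\auto{\asid}{\apred}^{II}$, and then to derive linearity by contraposition. The inclusion $h(\widetilde{S})\subseteq S$ holds by definition of $\Lambda(\widetilde{S})=1$, so the real content is the reverse inclusion and the transfer of (non-)linearity.

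First, I would fix a witness state $(\entryof{S},a_0)\in\widetilde{S}$: if $S=S_0$, the only state is $(q_\apred,\emptyset)$; otherwise, $\pre{S}=\{\tau\}$ for a unique $1$-transition of $\auto{\asid}{\apred}^{II}$ with $\Lambda(\tau)=1$, and by construction of $\widetilde{\trans}^1$ there is exactly one annotated lift $\widetilde{\tau}\in\widetilde{\trans}^1$ of $\tau$, whose unique target in $h^{-1}(S)$ supplies the annotation $a_0$. Given any $q\in S$, since $S$ is a non-trivial SCC (trivial SCCs have been eliminated) and $1$-transitions do not stay inside $S$ by conditions~(\ref{it21:def:choice-free}) and~(\ref{it22:def:choice-free}) of Definition~\ref{def:choice-free}, there is a path of $\infty$-transitions from $\entryof{S}$ to $q$ inside $S$; lifting it in $\widetilde{\mathcal{B}}$ using the uniquely determined annotation updates produces a state $(q,a)\in h^{-1}(S)$ reachable from $(\entryof{S},a_0)$.

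The main obstacle is showing $(q,a)$ can return to $(\entryof{S},a_0)$, i.e.\ $(q,a)\in\widetilde{S}$. For this I would apply Lemma~\ref{lemma:choice-free-property} to a pivot state of $S$ to get a partial $\infty$-run of $\auto{\asid}{\apred}^{II}$ that visits that pivot with a return path through $q$. Lifted to $\widetilde{\mathcal{B}}$, iterating this loop induces a partial function on the finite set of annotations of the pivot; by the pigeonhole principle, some iterate is the identity on the orbit of $a_0$, yielding a cycle in $\widetilde{\mathcal{B}}$ through $(q,a)$ and $(\entryof{S},a_0)$. Hence $(q,a)\in\widetilde{S}$ and $h(\widetilde{S})=S$.

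For linearity, I would argue contrapositively: if $S$ is non-linear, pick $\tau:q_0\xrightarrow{\alpha}(q_1,\ldots,q_\ell)\in\prepost{S}$ with $q_i,q_j\in S$ for distinct $i,j$. By the argument above, $\tau$ must be an $\infty$-transition (a $1$-transition in $\prepost{S}$ would force $S=S_0$-like contradictions with conditions~(\ref{it21:def:choice-free})--(\ref{it22:def:choice-free})), so every annotated lift of $\tau$ belongs to $\widetilde{\trans}^\infty$. Using $h(\widetilde{S})=S$, select $(q_0,a_0)\in\widetilde{S}$, giving the annotated transition $(q_0,a_0)\xrightarrow{\alpha}((q_1,a_1),\ldots,(q_\ell,a_\ell))$ in $\widetilde{\trans}^\infty$. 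Applying the first part of the proof to $(q_i,a_i)$ and $(q_j,a_j)$ (each is reachable from $(q_0,a_0)$ and admits a return path via the same pigeonhole argument) places both in $\widetilde{S}$, so $|\post{\widetilde{\tau}}\cap\widetilde{S}|\ge 2$, contradicting linearity of $\widetilde{S}$. Thus $\widetilde{S}$ linear implies $S=h(\widetilde{S})$ linear, completing the proof.
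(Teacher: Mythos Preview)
Your overall strategy---iterate a loop in $S$ and use finiteness of the annotation space to obtain a return path---is exactly the paper's. The difference is in the anchor. The paper starts from an \emph{arbitrary} $(q',a')\in\widetilde{S}$, takes a loop $q'\reach^* q\reach^* q'$ inside $S$ (which exists simply because $S$ is an SCC; no appeal to Lemma~\ref{lemma:choice-free-property} is needed), lifts it to obtain $(q',a')\reach^*(q,a)\reach^*(q',a_1)\reach^*(q',a_2)\reach^*\cdots$, and then observes that each $a_{i+1}$ is obtained from $a_i$ by composing with the same permutation of $\profile{\auto{\asid}{\apred}^{II}}(q')$, so eventually $a_k=a'$ and $(q,a)\in\widetilde{S}$.

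Your proof instead anchors at $(\entryof{S},a_0)$, where $a_0$ comes from the unique lift $\widetilde{\tau}\in\widetilde{\trans}^1$ of the entering $1$-transition. The gap is that you assert $(\entryof{S},a_0)\in\widetilde{S}$ without justification. Nothing in the hypotheses forces this: $\widetilde{\mathcal{B}}$ has the full state space $\widetilde{\states}_\asid^{I}$, and within $h^{-1}(S)$ there can be several SCCs (one for each orbit of the annotation-update map), only one of which contains the target of $\widetilde{\tau}$. The Fact is stated for an arbitrary SCC $\widetilde{S}$ with $\Lambda(\widetilde{S})=1$, so your argument as written only covers the ``canonical'' one. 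The fix is exactly what the paper does: start from an arbitrary $(q',a')\in\widetilde{S}$ rather than from $(\entryof{S},a_0)$; your pigeonhole argument then goes through unchanged, and the detour via Lemma~\ref{lemma:choice-free-property} becomes unnecessary. The linearity part of your proof is essentially the paper's.
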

  \proof{ By the definition of $\Lambda$, we have that
    $h(\widetilde{S}) \subseteq S$, for a $1$-SCC of
    $\auto{\asid}{\apred}^{II}$, hence, for the first point, it
    suffices to prove that $h(\widetilde{S})=S$. Suppose, for a
    contradiction, that there exists a state $q \in S \setminus
    h(\widetilde{S})$. Let $q' \in h(\widetilde{S})$ be a state. Since
    $S$ is an SCC of $\auto{\asid}{\apred}^{II}$, we have $q \reach^*
    q' \reach^* q$ in $\auto{\asid}{\apred}^{II}$. Since $q' \in
    h(\widetilde{S})$, there exists an injective partial mapping $a' :
    \interv{1}{\arityof{q'}} \rightarrow \interv{1}{\mathcal{M}}$ such
    that $(q',a') \in \widetilde{S}$. By the construction of
    $\widetilde{\auto{\asid}{\apred}}^{II}$, we obtain $(q',a')
    \reach^* (q,a)$ in $\widetilde{\auto{\asid}{\apred}}^{II}$, for
    some injective partial mapping $a : \interv{1}{\arityof{q}}
    \rightarrow \interv{1}{\mathcal{M}}$. To show a contradiction, we
    prove that $(q,a) \reach^* (q',a')$ in
    $\widetilde{\auto{\asid}{\apred}}^{II}$, hence $q' \in
    h(\widetilde{S})$. By the construction of
    $\widetilde{\auto{\asid}{\apred}}^{II}$, there exist injective
    partial mappings $a_1, a_2, \ldots : \interv{1}{\arityof{q'}}
    \rightarrow \interv{1}{\mathcal{M}}$, such that $(q,a) \reach^*
    (q',a_1) \reach^* (q',a_2) \reach^* \ldots$ Because the underlying
    transitions of $\auto{\asid}{\apred}^{II}$ along this path are
    $\infty$-transitions, the mappings $a_1,a_2,\ldots$ are undefined
    everywhere except for $\profile{\auto{\asid}{\apred}^{II}}(q')$,
    where they are defined. Because these transitions are from
    $\prepost{S}$, each of these mappings can be obtained from the
    previous one by composition with a permutation over
    $\profile{\auto{\asid}{\apred}^{II}}(q')$. Since all such
    permutations can be enumerated in this was, this leads to
    $(q',a'_1) \reach^* (q',a')$ in
    $\widetilde{\auto{\asid}{\apred}}^{II}$, i.e., contradiction.

    For the second point, suppose, for a contradiction, that
    $h(\widetilde{S})$ is not linear. Then, there exists a transition
    $q_0 \arrow{\alpha}{} (q_1, \ldots, q_\ell)$ such that
    $q_0,q_i,q_j \in h(\widetilde{S})$, for some indices $1 \le i < j
    \le \ell$. Since $h(\widetilde{S})$ is a connected component of
    $\auto{\asid}{\apred}^{II}$, we have $q_i \reach^* q_0$ and $q_j
    \reach^* q_0$ in $\auto{\asid}{\apred}^{II}$. Let $a_0, \ldots,
    a_\ell$ be injective partial mappings, such that $(q_0,a_0)
    \arrow{\alpha}{} ((q_1,a_1), \ldots, (q_\ell,a_\ell))$ is a
    transition of $\widetilde{\mathcal{B}}$. By the above argument, we
    obtain $(q_i,a_i) \reach^* (q_0,a_0)$ and $(q_j,a_j) \reach^*
    (q_0,a_0)$, hence $\widetilde{S}$ is non-linear,
    contradiction. \qed}

  \skipnoindent Back to the main proof, we check
  the points of \autoref{def:choice-free} below:

  \skipnoindent (\ref{it1:def:choice-free}) The
  SCC graph of $\widetilde{\mathcal{B}}$ is a tree, by the choice of
  the set $\widetilde{\trans}^1$ corresponding to
  $\widetilde{\mathcal{B}}$. Moreover, each non-root SCC of
  $\widetilde{\mathcal{B}}$ is entered by exactly one branch, since
  this is already the case for the choice-free automaton
  $\auto{\asid}{\apred}^{II}$.

  \skipnoindent (\ref{it21:def:choice-free}) Let
  $\widetilde{S}$ be a linear SCC of $\widetilde{\mathcal{B}}$, such
  that $\Lambda(\widetilde{S})=1$. By
  \autoref{fact:remove-persistent:B}, $h(\widetilde{S})$ is a linear $1$-SCC
  of $\auto{\asid}{\apred}^{II}$, hence
  $\cardof{\post{h(\widetilde{S})}}=1$. Then, we obtain
  $\cardof{\post{\widetilde{S}}}=1$.

  \skipnoindent (\ref{it22:def:choice-free}) Let $\widetilde{\tau} =
  (q_0,a_0) \arrow{\alpha}{} ((q_1,a_1), \ldots, (q_\ell,a_\ell))$ be
  a transition of $\widetilde{\mathcal{B}}$. Then
  $\Lambda(\widetilde{\tau})=1$ iff $\widetilde{\tau} \in
  {\widetilde{\trans}}^1$, by the definition of
  $\Lambda$. ``$\Rightarrow$'' Assume that $\widetilde{\tau} \in
  {\widetilde{\trans}}^1$. Then, the underlying transition $\tau = q_0
  \arrow{\alpha}{} (q_1,\ldots,q_\ell)$ is a $1$-transition of
  $\auto{\asid}{\apred}^{II}$. Since $\auto{\asid}{\apred}^{II}$ is
  choice-free, we have that $\tau \in \post{S}$, for some linear
  $1$-SCC $S$ of $\auto{\asid}{\apred}^{II}$. Let $\widetilde{S}$ be
  an SCC of $\widetilde{\mathcal{B}}$ such that $h(\widetilde{S})
  \subseteq S$. Then $\Lambda(\widetilde{S})=1$, by the definition of
  $\Lambda$. Suppose, for a contradiction that $\widetilde{S}$ is
  non-linear. Then, $h(\widetilde{S})$ is non-linear, hence $S$ is
  non-linear, contradiction. Finally, $\widetilde{\tau} \in
  \post{\widetilde{S}}$ follows from $\tau \in
  \post{S}$. ``$\Leftarrow$'' Assume that
  $\widetilde{\tau}\in\post{\widetilde{S}}$, for a linear SCC
  $\widetilde{S}$ of $\widetilde{\mathcal{B}}$, such that
  $\Lambda(\widetilde{S})=1$. Then $h(\widetilde{S})$ is a linear
  $1$-SCC of $\auto{\asid}{\apred}^{II}$, by
  \autoref{fact:remove-persistent:B}. Moreover,
  $\tau\in\post{h(\widetilde{S})}$, where $\tau$ is the underlying
  transition of $\widetilde{\tau}$ from
  $\auto{\asid}{\apred}^{II}$. Hence $\tau$ is a $1$-transition of
  $\auto{\asid}{\apred}^{II}$ and $\widetilde{\tau} \in
  {\widetilde{\trans}}^1$, by the choice of ${\widetilde{\trans}}^1$.

  \skipnoindent (\ref{it23:def:choice-free}) By
  \autoref{fact:remove-persistent:B}, $\Lambda(\widetilde{S})=1$ iff
  $h(\widetilde{S})$ is a $1$-SCC of $\auto{\asid}{\apred}^{II}$, for
  each SCC $\widetilde{S}$ of
  $\widetilde{\mathcal{B}}$. ``$\Rightarrow$'' Assume that
  $h(\widetilde{S})$ is a $1$-SCC of
  $\auto{\asid}{\apred}^{II}$. Since $\auto{\asid}{\apred}^{II}$ is
  choice-free, then $h(\widetilde{S})$ is the root of the SCC tree of
  $\auto{\asid}{\apred}^{II}$ or
  $\pre{\widetilde{S}}=\set{\widetilde{\tau}}$, for some transition
  $\widetilde{\tau}$ of $\widetilde{\mathcal{B}}$, such that
  $\Lambda(\widetilde{\tau})=1$. In the first case, $h(\widetilde{S})$
  has no incoming transition, hence $\widetilde{S}$ is the root of the
  SCC tree of $\widetilde{\mathcal{B}}$. In the second case,
  $\pre{h(\widetilde{S})}=\set{\tau}$, for a $1$-transition $\tau$ of
  $\auto{\asid}{\apred}^{II}$. Then,
  $\pre{\widetilde{S}}=\set{\widetilde{\tau}}$, for a transition
  $\widetilde{\tau}$, such that $\Lambda(\widetilde{\tau})=1$, by the
  definition of $\Lambda$. ``$\Leftarrow$'' If $h(\widetilde{S})$ is
  the root of the SCC tree of $\auto{\asid}{\apred}^{II}$, then
  $\widetilde{S}$ has no incoming transitions, hence it is the root of
  the SCC tree of $\widetilde{\mathcal{B}}$. If
  $\pre{h(\widetilde{S})}=\set{\tau}$, for a $1$-transition $\tau$ of
  $\auto{\asid}{\apred}^{II}$, we obtain
  $\pre{\widetilde{S}}=\set{\widetilde{\tau}}$, for a transition
  $\widetilde{\tau}$ of $\widetilde{\mathcal{B}}$, such that
  $\Lambda(\widetilde{\tau})=1$, by the definition of $\Lambda$.

  \skipnoindent (\ref{it2:lemma:remove-persistent:B})
  Since $\langof{}{\widetilde{\mathcal{B}}_i} \subseteq
  \langof{}{\auto{\asid}{\apred}^{II}}$, for each $i \in
  \interv{1}{m}$, we have $\bigcup_{i=1}^m
  \langof{}{\widetilde{\mathcal{B}}_i} \subseteq
  \langof{}{\auto{\asid}{\apred}^{II}}$. By the definition of
  $\widetilde{\mathcal{B}}_1, \ldots, \widetilde{\mathcal{B}}_m$, for
  each accepting run $\arun$ of
  $\widetilde{\auto{\asid}{\apred}}^{II}$, there exists $i \in
  \interv{1}{m}$, such that $\arun$ is an accepting run of
  $\widetilde{\mathcal{B}}_i$, thus
  $\langof{}{\auto{\asid}{\apred}^{II}} \subseteq \bigcup_{i=1}^m
  \langof{}{\widetilde{\mathcal{B}}_i}$. \qed

\subsection{Proof of \autoref{lemma:remove-persistent}}
\label{app:remove-persistent}

First, we prove the following fact:
\begin{fact}\label{fact:remove-persistent}
  Let $t \in \langof{}{\widetilde{\mathcal{B}}}$ be a tree, $\arun$ be
  an accepting run of $\widetilde{\mathcal{B}}$ over $t$ and $p \in
  \dom{t}$ be a position such that $\arun(p)=(q,a)$. Then,
  $\atpos{x}{p}_i \eqof{\charform{t}} \atpos{y}{r_i}_{a(i)}$, for each
  $i \in \dom{a}$, such that $r_i$ is the unique position where a
  variable $\atpos{y}{\epsilon}_{a(i)} \in \mathcal{Y}$ occurs in
  $\charform{t}$.
\end{fact}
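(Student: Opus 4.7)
The plan is to proceed by induction on the length of the position $p \in \dom{t}$. The base case $p = \epsilon$ is vacuous, since the initial state is $(q_\apred, \emptyset)$ with $\arityof{q_\apred} = 0$, hence $\dom{\emptyset} = \emptyset$ and there is nothing to prove. For the inductive step, I would consider a position $pk$ whose parent is $p$, with $\arun(p) = (q_0, a_0)$ and $\arun(pk) = (q_k, a_k)$, where the transition $(q_0, a_0) \arrow{\alpha}{} ((q_1, a_1), \ldots, (q_\ell, a_\ell))$ is applied at position $p$.

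I would split the inductive step according to whether this transition is a $1$-transition or an $\infty$-transition. In the $1$-transition case, each $i \in \dom{a_k} = \profile{\auto{\asid}{\apred}^{II}}(q_k)$ falls under one of the two sub-cases of the construction in step~\textbf{\ref{it1:remove-persistent}}. If $a_k(i) = a_0(j)$ because $\atpos{x}{k}_i \eqof{\alpha} \atpos{x}{\epsilon}_j$ with $j \in \dom{a_0}$ (the latter being guaranteed since step~II kept only equalities between persistent variables in $1$-transition labels), then the relabeling of $\alpha$ at position $p$ yields $\atpos{x}{pk}_i \eqof{\charform{t}} \atpos{x}{p}_j$, and combining this with the inductive hypothesis $\atpos{x}{p}_j \eqof{\charform{t}} \atpos{y}{r_j}_{a_0(j)}$ I would obtain $\atpos{x}{pk}_i \eqof{\charform{t}} \atpos{y}{r_j}_{a_k(i)}$, taking $r_i = r_j$ as the witnessing position. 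If instead $a_k(i) = m$ because $\atpos{x}{k}_i \eqof{\alpha} \atpos{y}{\epsilon}_m$ for some $\atpos{y}{\epsilon}_m \in \mathcal{Y}$, then the unique occurrence of $\atpos{y}{\epsilon}_m$ in $\charform{t}$ is at position $p$, so $r_i = p$ and the equality $\atpos{x}{pk}_i \eqof{\charform{t}} \atpos{y}{p}_m$ reads directly off~$\alpha$. The $\infty$-transition case reduces to the first sub-case above: since $i \in \dom{a_k}$ forces $a_k(i) = a_0(j)$ with $j \in \dom{a_0}$ and $\atpos{x}{k}_i \eqof{\alpha} \atpos{x}{\epsilon}_j$, transitivity with the inductive hypothesis closes the argument. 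The definedness of $a_0(j)$ here is ensured by \autoref{def:profile}, which guarantees that some $j \in \profile{\auto{\asid}{\apred}^{II}}(q_0) = \dom{a_0}$ witnesses the equality for any $i \in \profile{\auto{\asid}{\apred}^{II}}(q_k)$.

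The only delicate point will be establishing the uniqueness of the position $r_i$, which rests on two independent observations. First, by the renaming performed just before step~\textbf{\ref{it3:remove-persistent}}, each variable $\atpos{y}{\epsilon}_m \in \mathcal{Y}$ has a distinct name across all $1$-transitions of $\auto{\asid}{\apred}^{II}$, so it appears in the label of at most one such transition. Second, by \autoref{lemma:one-transitions}, every $1$-transition of the choice-free automaton $\widetilde{\mathcal{B}}$ occurs at exactly one position of any accepting run. Combining these two facts, each variable of $\mathcal{Y}$ contributes a single occurrence $\atpos{y}{r_i}_{a(i)}$ to $\charform{t}$, so the position $r_i$ is pinned down unambiguously.
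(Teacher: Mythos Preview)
Your proposal is correct and follows essentially the same approach as the paper, which merely states ``By induction on the structure of $t$'' without further detail; your induction on the length of $p$ is the natural way to unfold that one-liner, and your case analysis on $1$- versus $\infty$-transitions mirrors exactly the two clauses in the construction of step~\textbf{\ref{it1:remove-persistent}}. One minor remark: in the $\infty$-transition case, the fact that $j \in \dom{a_0}$ follows directly from the construction (since $a_k(i) = a_0(j)$ being defined forces $a_0(j)$ to be defined), so the appeal to \autoref{def:profile} is not strictly needed there, though it does explain why the construction is consistent with the profile.
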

\begin{proof} By induction on the structure of $t$. \end{proof}

\skipnoindent (\ref{it1:lemma:remove-persistent}) Let $\overline{t}
\in \langof{}{\overline{\mathcal{B}}}$ be a tree. By the construction
of $\overline{\mathcal{B}}$, the accepting run $\overline{\arun}$ of
$\overline{\mathcal{B}}$ over $\overline{t}$ can be transformed into
an accepting run $\arun$ of $\widetilde{\mathcal{B}}$ over a tree $t$,
such that $\dom{t} = \dom{\overline{t}}$, by changing the labels
$\overline{\alpha}$ back to the original labels $\alpha$. Since
$\widetilde{\mathcal{B}}$ is all-satisfiable, by
\autoref{lemma:remove-persistent:B}
(\ref{it1:lemma:remove-persistent:B}), the formula $\charform{t}$ is
satisfiable, and let $(\univ,\struc)$ be a structure and $\store$ be a
store such that $(\univ,\struc) \models^{\store}
\charform{t}$. However, $\charform{\overline{t}}$ is obtained from
$\charform{t}$ by removing several (dis-)equalities from the labels of
$1$-transitions and by changing each relation atom
$\arel(\atpos{z}{p}_{1}, \ldots, \atpos{z}{p}_{\arityof{\arel}})$ into
a relation atom $\arel_g(\zeta_\tau(\atpos{z}{p}_{i_1}), \ldots,
\zeta_\tau(\atpos{z}{p}_{i_k}))$, according to the construction of
$\overline{\mathcal{B}}$. Suppose, for a contradiction, that there
exist two distinct positions $p_1,p_2\in\dom{\overline{t}}$ such that
the relation atoms $\arel_g(\zeta_{\tau_1}(\atpos{z}{p_1}_{i_1}),
\ldots, \zeta_{\tau_1}(\atpos{z}{p_1}_{i_k}))$ and
$\arel_g(\zeta_{\tau_2}(\atpos{z}{p_2}_{i_1}), \ldots,
\zeta_{\tau_2}(\atpos{z}{p_2}_{i_k}))$ occur in
$\charform{\overline{t}}$ and $\zeta_{\tau_1}(\atpos{z}{p_1}_{i_j})
\eqof{\charform{\overline{t}}} \zeta_{\tau_2}(\atpos{z}{p_2}_{i_j})$,
for all $j \in \interv{1}{k}$. Then, there exist relation atoms
$\arel(\atpos{z}{p_1}_1, \ldots, \atpos{z}{p_1}_{\arityof{\arel}})$
and $\arel(\atpos{z}{p_2}_1, \ldots,
\atpos{z}{p_2}_{\arityof{\arel}})$ that occur in $\charform{t}$, such
that $i_1, \ldots, i_k \in \interv{1}{\arityof{\arel}}$ and, by
\autoref{fact:remove-persistent} and properties of the renaming, we
obtain that $\atpos{z}{p_1}_j \eqof{\charform{t}} \atpos{z}{p_2}_j$,
for all $j \in \interv{1}{\arityof{\arel}}$. This however contradicts
the satisfiability of $\charform{t}$, hence such relation atoms cannot
occur in $\charform{\overline{t}}$.  Similarly, we obtain a
contradiction if we suppose a disequality atom $\zeta_{\tau}(x) \neq
\zeta_{\tau}(y)$ occurs in $\charform{\overline{t}}$ and
$\zeta_{\tau}(x) \eqof{\charform{\overline{t}}} \zeta_{\tau}(y)$.
Then, $\charform{\overline{t}}$ is satisfiable and, since the choice
of $\overline{t}$ was arbitrary, $\overline{\mathcal{B}}$ is
all-satisfiable.

\skipnoindent (\ref{it2:lemma:remove-persistent}) Let
$\overline{\astruc} = (\univ, \overline{\struc}) \in
\sem{\overline{\mathcal{B}}}$ be a structure. Then, there exists a
tree $\overline{t} \in \langof{}{\overline{\mathcal{B}}}$ and a store
$\overline{\store}$ such that $\overline{\astruc}
\models^{\overline\store} \charform{\overline{t}}$. Let
$\widetilde{t}$ be the tree obtained from $\overline{t}$ by replacing
back each label $\overline{\alpha}$ with the original label $\alpha$,
according to the construction of $\overline{\mathcal{B}}$ from
$\widetilde{\mathcal{B}}$.  Note that $\dom{\widetilde{t}} =
\dom{\overline{t}}$ and let $\overline{\arun}$, $\widetilde{\arun}$ be
two related accepting runs of $\overline{\mathcal{B}}$,
$\widetilde{\mathcal{B}}$ over $\overline{t}$, $\widetilde{t}$
respectively.  We define a store $\widetilde{\store}$ over
$\fv{\charform{\widetilde{t}}}$ as follows.  Let $p$ be an arbitrary
position in $\dom{\widetilde{t}}$. Let $(q_0,a_0)$ be the state
assigned to $p$ and $\tau ~:~ (q_0,a_0) \arrow{\alpha}{} ((q_1,a_1),
\ldots, (q_\ell,a_\ell))$ be the transition taken at $p$ in the run
$\widetilde{\arun}$.  Let $\zeta_\tau$ be the renaming of variables as
defined in the transformation.  The store $\widetilde{\store}$ is then
defined by: \begin{itemize}[label=$\triangleright$]
\item $\widetilde{\store}(\atpos{y}{p}_i) \isdef
  \left\{ \begin{array}{rl}
    \overline{\store}(\atpos{y}{p}_i) & \mbox{if } \tau \not\in \widetilde{\trans}^1 \\
    \mathit{u_i} & \mbox{if } \tau \in \widetilde{\trans}^1
    \mbox{ for some distinct values } u_i \not\in \overline{\store}(\fv{\charform{\overline{t}}}),~ i\in\interv{1}{\mathcal{M}}
  \end{array}\right.$
\item $\widetilde{\store}(\atpos{x}{p}_i) \isdef
  \left\{ \begin{array}{rl}
    \overline{\store}(\atpos{x}{p}_j) &
    \mbox{if } i \not\in \dom{a_0} \mbox{ and } \zeta_\tau(\atpos{x}{\epsilon}_i) = \atpos{x}{\epsilon}_j \\
    u_{a_0(i)} & \mbox{if } i \in\dom{a_0}
  \end{array} \right.$
\end{itemize}
That is, the store $\widetilde{\store}$ allocates fresh distinct
values for all the persistent variables and re-uses the already given
values for the non-persistent ones in $\overline{\store}$, while
taking into account their renaming by $\zeta_\tau$.  We now consider
the structure $\widetilde{\astruc} = (\univ, \widetilde{\struc})$,
where $\widetilde{\struc}$ is the interpretation that assigns to each
relation symbol $\arel$ the set of tuples
$\tuple{\widetilde{\store}(\xi_1), \ldots,
  \widetilde{\store}(\xi_{\arityof{\arel}})}$ for every relation atom
$\arel(\xi_1, \ldots, \xi_\arityof{\arel})$ in
$\charform{\widetilde{t}}$. We can now check that $(\univ,
\widetilde{\struc}) \models^{\widetilde{\store}}
\charform{\widetilde{t}}$.  Any two tuples from
$\widetilde{\struc}(\arel)$ defined as above are necessarily distinct,
unless the corresponding tuples restricted to non-persistent variables
from $\overline{\struc}(\arel_g)$ are not distinct.  Similarly,
equalities on disequality atoms on non-persistent variables only hold
in $\charform{\widetilde{t}}$ as they already hold in the renamed form
in $\charform{\overline{t}}$.  Finally, equalities and disequalities
in $\charform{\widetilde{t}}$ involving persistent variables hold by
the construction of $\widetilde{\store}$.  But now, every tree
decomposition of $\widetilde{S}$ is a tree decomposition for
$\overline{S}$.  Hence, $\twof{\overline{S}} \le \twof{\widetilde{S}}$
and consequently, since the choice of $\overline{S}$ was arbitrary we
have $\twof{\sem{\overline{\mathcal{B}}}} \le
\twof{\sem{\widetilde{\mathcal{B}}}}$.

\skipnoindent (\ref{it3:lemma:remove-persistent}) Let
$\widetilde{\astruc} = (\univ, \widetilde{\struc}) \in
\sem{\widetilde{\mathcal{B}}}$ be a structure. Then, there exists a
tree $\widetilde{t} \in \langof{}{\widetilde{\mathcal{B}}}$ and a
store $\widetilde{\store}$, such that $\widetilde{\astruc}
\models^{\widetilde\store} \charform{\widetilde{t}}$.  Let
$\overline{t} \in \langof{}{\overline{\mathcal{B}}}$ be the tree
obtained by changing each label $\alpha$ of $\widetilde{t}$ into
$\overline{\alpha}$, according to the construction of
$\overline{\mathcal{B}}$. We consider the structure
$\overline{\astruc} \isdef (\univ, \overline{\struc})$, where
$\overline{\struc}$ interprets each relation symbol $\arel_g$ by the
set of tuples
$\tuple{\widetilde{\store}(\zeta_\tau(\atpos{z}{p}_{i_1})), \ldots,
  \widetilde{\store}(\zeta_\tau(\atpos{z}{p}_{i_k}))}$, such that
$\arel_g(\zeta_\tau(\atpos{z}{p}_{i_1}), \ldots,
\zeta_\tau(\atpos{z}{p}_{i_k}))$ occurs in
$\charform{\overline{t}}$. Let $\overline{T}$ be a tree decomposition
of $\overline{\astruc}$. We consider the tree decomposition
$\widetilde{T}$ obtained by adding the values
$\widetilde{\store}(\atpos{y}{r_i}_i)$, where $r_i$ is the unique
position where a $\epsilon$-variable $\atpos{y}{\epsilon}_i \in
\mathcal{Y}$ occurs in $\charform{\overline{t}}$, to the label of each
node in $\overline{T}$. Then, $\widetilde{T}$ is a tree decomposition
of $\widetilde{\astruc}$ and moreover $\width{\widetilde{T}} \le
\width{\overline{T}}+\mathcal{M}$. Since the choice of $\overline{T}$
was arbitrary we obtain $\twof{\widetilde{\astruc}} \le
\twof{\overline{\astruc}} + \mathcal{M}$. Consequently, since the
choice of $\widetilde{\astruc}$ was arbitrary and $\mathcal{M} \le
\cardof{\widetilde{\trans}^1} \cdot \maxvarinruleof{\asid}$, we obtain
that $\twof{\sem{\widetilde{\mathcal{B}}}} \le
\twof{\sem{\overline{\mathcal{B}}}} + \cardof{\widetilde{\trans}^1}
\cdot \maxvarinruleof{\asid}$. \qed

\subsection{Proof of \autoref{lemma:exp}}
\label{app:exp}

By hypothesis, $\overline{\mathcal{B}}$ has no persistent variables
and since the relabeling of $1$-transitions introduces no existential
quantifiers, then $\mathcal{B}$ has no persistent variables either.

\skipnoindent (\ref{it1:lemma:exp}) Let $t \in \langof{}{\mathcal{B}}$
be a tree and let $\arun$ be an accepting run of $\mathcal{B}$ over
$t$.  Let us consider the run $\arun'$ of $\overline{\mathcal{B}}$
obtained by replacing with $\emp$ the labels of the $1$-transitions
$\expof{\tau} : q_0 \arrow{\alpha}{} (q_1,\ldots,q_\ell) \in
\trans^1$.  We then define the run $\arun''$ of
$\overline{\mathcal{B}}$ by replacing each occurrence of a
1-transition $\tau: q_0 \arrow{\emp}{} (q_1,\ldots,q_\ell)$ in
$\arun'$ by the partial run $\theta_\tau$ constructed by extending
$\tau$ with the resets used in the definition of the transformation,
namely: \begin{itemize}[label=$\triangleright$]
\item a pre- $q_0$-reset
  $\theta^0_{p_0 \leftarrow q_0} \in \runsof{\infty}{q_0}{{\overline{\mathcal{B}}}}$
  if $q_0 \in \pre{(\trans^\infty)}$ and,
\item a post- $q_i$-reset
  $\theta^i_{p_i \leftarrow q_i} \in \runsof{\infty}{q_i}{{\overline{\mathcal{B}}}}$
  for each $i \in \interv{1}{\ell}$ such that $q_i \in \pre{(\trans^\infty)}$.
\end{itemize}

  \begin{figure}[htbp]
    \begin{center}
    \input{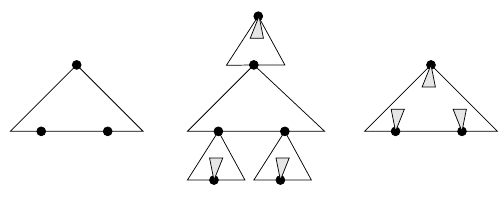_t}
    \caption{\label{fig:exp-1-transitions} Expansion of 1-transitions}
    \end{center}
  \end{figure}

The relation between $\tau$ from $\theta_{\tau}$ is illustrated in
\autoref{fig:exp-1-transitions}~(a,b).  Note that the resets exist
according to \autoref{lemma:reset}.  Then, it is easy to check that
$\arun''$ is indeed a run of $\overline{\mathcal{B}}$. Let $t'' \in
\langof{}{\overline{\mathcal{B}}}$ be the tree accepted by
$\arun''$. Since $\overline{\mathcal{B}}$ is all-satisfiable, there
exists a store $\store$ and a structure $\astruc''$ such that
$\astruc'' \models^\store \charform{t''}$.  Now, by the definition of
$\mathcal{B}$, the label $\alpha$ of each transition $\expof{\tau}:
q_0 \arrow{\alpha}{} (q_1,\ldots,q_\ell) \in \trans^1$ is actually
$\omega_0 * \Bigstar_{j=1}^{\ell} \omega_j$, where:
\[
\begin{array}{rcl}
  \omega_0 & = & \Bigstar_{i_1,\ldots,i_k \in \interv{1}{\arityof{q_0}}} \allstof{t_0}{\epsilon}{\epsilon}{x_{i_1}, \ldots, x_{i_k}} \\
  \omega_j & = & \Bigstar_{i_1, \ldots, i_k \in \interv{1}{\arityof{q_j}}} \allstof{t_j}{p_j}{j}{x_{i_1}, \ldots, x_{i_k}} \text{, for } j\in\interv{1}{\ell}
\end{array}
\]
and $t_0$, $t_1$, $\ldots$, $t_\ell$ are the $\Sigma$-labelled trees
of the corresponding resets, respectively $\theta_0$, $\theta_1$,
$\dots$, $\theta_\ell$.  This construction is illustrated in
\autoref{fig:exp-1-transitions}(c).  Hence, there exists a
substructure $\astruc \substruc \astruc''$ such that $\astruc
\models^\store \charform{t}$.  Actually, $\astruc$ is simply obtained
from $\astruc''$ by removing all but the tuples introduced by the
relation atoms of $\omega_0$, $\ldots$, $\omega_\ell$, defined along
the partial runs $\arun_\tau$ of $\arun''$. Since the choice of $t$
was arbitrary, $\mathcal{B}$ is all-satisfiable.

\skipnoindent (\ref{it2:lemma:exp}) Let $t' \in
\langof{}{\mathcal{B}}$ be a tree.  Let $t \in
\langof{}{\overline{\mathcal{B}}}$ be the tree obtained by $t'$ be
removing relational atoms from the labels of 1-transitions, that is,
reversing the transformation.  As both $\mathcal{B}$ and
$\overline{\mathcal{B}}$ are all-satisfiable, the characteristic
formul{\ae} $\charform{t'}$ and $\charform{t}$ are satisfiable.  They
differ, moreover, only by finitely many relational atoms.  Henceforth,
by using \autoref{lemma:qpf-treewidth} (\ref{it4:qpf-treewidth}), we
obtain that the difference between the treewidth of their models is
bounded by the number of free variables occurring on these atoms, that
is, at most $\cardof{\overline{\trans}^1} \cdot
\maxvarinruleof{\asid}$.  Then $\twof{\sem{\exclof{\charform{t'}}}}
\le \twof{\sem{\exclof{\charform{t}}}} + \cardof{\overline{\trans}^1}
\cdot \maxvarinruleof{\asid}$ and as the choice of $t'$ was arbitrary,
we obtain $\twof{\sem{\mathcal{B}}} \le
\twof{\sem{\overline{\mathcal{B}}}} + \cardof{\overline{\trans}^1}
\cdot \maxvarinruleof{\asid}$.

\skipnoindent (\ref{it3:lemma:exp}) Let $t \in
\langof{}{\overline{\mathcal{B}}}$ be a tree. Let $t' \in
\langof{}{\mathcal{B}}$ be the tree obtained from $t$ by replacing the
$\emp$-labels of 1-transitions by their corresponding formul{\ae}
according to the transformation.  As both $\overline{\mathcal{B}}$ and
$\mathcal{B}$ are all-satisfiable, the characteristic formul{\ae}
$\charform{t}$ and $\charform{t'}$ are satisfiable. These formul{\ae}
differ only by a finite number of relation atoms, that is, the ones
inserted in the labels of the modified $1$-transitions. Henceforth, by
\autoref{lemma:qpf-treewidth} (\ref{it3:qpf-treewidth}), we obtain
that the difference between the treewidth of their models is bounded
by the number $K$ of these relation atoms.  As the choice of $t$ was
arbitrary, we obtain that $\twof{\overline{\mathcal{B}}} \le
\twof{\mathcal{B}} + K$.

We can further obtain an upper bound on $K$ as follows.  First, note
that for a relation symbol $\arel \in \relations$ and $m$ distinct
variables there exists at most distinct $m^{\arityof{\arel}}$ relation
atoms.  Note that, in our context $m \le \maxvarinruleof{\asid}$ and
$\arityof{\arel} \le \maxrelarityof{\asid}$.  Second, such atoms could
be added for every 1-transition, for every one of its states $q_i$.
That is, overall we obtain $K \le \cardof{\overline{\trans}^1} \cdot
(1 + \maxrulearityof{\asid}) \cdot \relationsno{\asid} \cdot
\maxpredarityof{\asid}^{\maxrelarityof{\asid}}$. \qed

\end{document}